\newcommand{\cH}{\mathcal{H}}
\DeclareMathOperator{\tr}{tr}
\DeclareMathOperator{\supp}{supp}
\DeclareMathOperator{\sgn}{sgn}
\theoremstyle{definition}
\newtheorem{example}{\protect\examplename}
\theoremstyle{plain}
\newtheorem{assumption}{\protect\assumptionname}
\theoremstyle{definition}
\newtheorem*{example*}{\protect\examplename}
\theoremstyle{plain}
\newtheorem{lem}{\protect\lemmaname}
\theoremstyle{plain}
\newtheorem{thm}{\protect\theoremname}
\theoremstyle{plain}
\newtheorem{cor}{\protect\corollaryname}
\theoremstyle{plain}
\newtheorem{prop}{\protect\propositionname}
\newtheorem{rem}{\protect\remarkname}
\theoremstyle{plain}
\newcommand{\R}{\mathbb{R}}
\newcommand{\E}{\mathbb{E}}
\newcommand{\Var}{\mathrm{Var}}
\newcommand{\X}{\mathcal{X}}
\newcommand{\dHd}{d\mathcal{H}^{d-1}}
\newcommand{\norm}[1]{\left\| #1 \right\|}
\newcommand{\abs}[1]{\left| #1 \right|}
\newcommand{\customlabel}[2]{%
  \protected@write \@auxout {}{\string \newlabel {#1}{{#2}{\thepage}{#2}{#1}{}} }%
  \hypertarget{#1}{}
}
\providecommand{\assumptionname}{Assumption}
\providecommand{\examplename}{Example}
\providecommand{\lemmaname}{Lemma}
\providecommand{\theoremname}{Theorem}
\providecommand{\remarkname}{Remark}
\providecommand{\assumptionname}{Assumption}
\providecommand{\corollaryname}{Corollary}
\providecommand{\examplename}{Example}
\providecommand{\lemmaname}{Lemma}
\providecommand{\propositionname}{Proposition}
\providecommand{\theoremname}{Theorem}
\begin{document}
\global\long\def\a{\alpha}%
\global\long\def\b{\beta}%
\global\long\def\g{\gamma}%
\global\long\def\d{\delta}%
\global\long\def\e{\epsilon}%
\global\long\def\l{\lambda}%
\global\long\def\t{\theta}%
\global\long\def\o{\omega}%
\global\long\def\s{\sigma}%
\global\long\def\G{\Gamma}%
\global\long\def\D{\Delta}%
\global\long\def\L{\Lambda}%
\global\long\def\T{\Theta}%
\global\long\def\O{\Omega}%
\global\long\def\H{\mathbb{H}}%
\global\long\def\R{\mathbb{R}}%
\global\long\def\N{\mathbb{N}}%
\global\long\def\Q{\mathbb{Q}}%
\global\long\def\I{\mathbb{I}}%
\global\long\def\P{\mathbb{P}}%
\global\long\def\E{\mathbb{E}}%
\global\long\def\B{\mathbb{B}}%
\global\long\def\S{\mathbb{S}}%
\global\long\def\V{\mathbb{V}\text{ar}}%
\global\long\def\GG{\mathbb{G}}%
\global\long\def\TT{\mathbb{T}}%
\global\long\def\X{{\bf X}}%
\global\long\def\cX{\mathscr{X}}%
\global\long\def\cY{\mathscr{Y}}%
\global\long\def\cA{\mathscr{A}}%
\global\long\def\cB{\mathscr{B}}%
\global\long\def\cF{\mathscr{F}}%
\global\long\def\cM{\mathcal{M}}%
\global\long\def\cN{\mathcal{N}}%
\global\long\def\cG{\mathcal{G}}%
\global\long\def\cC{\mathcal{C}}%
\global\long\def\cT{\mathcal{T}}%
\global\long\def\sp{\,}%
\global\long\def\es{\emptyset}%
\global\long\def\mc#1{\mathscr{#1}}%
\global\long\def\ind{\mathbb 1}%
\global\long\def\indep{\perp}%
\global\long\def\any{\forall}%
\global\long\def\ex{\exists}%
\global\long\def\p{\partial}%
\global\long\def\cd{\cdot}%
\global\long\def\Dif{\nabla}%
\global\long\def\imp{\Rightarrow}%
\global\long\def\iff{\Leftrightarrow}%
\global\long\def\up{\uparrow}%
\global\long\def\down{\downarrow}%
\global\long\def\arrow{\rightarrow}%
\global\long\def\rlarrow{\leftrightarrow}%
\global\long\def\lrarrow{\leftrightarrow}%
\global\long\def\abs#1{\left|#1\right|}%
\global\long\def\norm#1{\left\Vert #1\right\Vert }%
\global\long\def\rest#1{\left.#1\right|}%
\global\long\def\bracket#1#2{\left\langle #1\middle\vert#2\right\rangle }%
\global\long\def\sandvich#1#2#3{\left\langle #1\middle\vert#2\middle\vert#3\right\rangle }%
\global\long\def\third#1{\frac{#1}{3}}%
\global\long\def\ellipsis{\textellipsis}%
\global\long\def\sand#1{\left\lceil #1\right\vert }%
\global\long\def\wich#1{\left\vert #1\right\rfloor }%
\global\long\def\sandwich#1#2#3{\left\lceil #1\middle\vert#2\middle\vert#3\right\rfloor }%
\global\long\def\inprod#1{\left\langle #1\right\rangle }%
\global\long\def\ol#1{\overline{#1}}%
\global\long\def\ul#1{\underline{#1}}%
\global\long\def\td#1{\tilde{#1}}%
\global\long\def\bs#1{\boldsymbol{#1}}%
\global\long\def\upto{\nearrow}%
\global\long\def\downto{\searrow}%
\global\long\def\pto{\overset{p}{\longrightarrow}}%
\global\long\def\dto{\overset{d}{\longrightarrow}}%
\global\long\def\asto{\overset{a.s.}{\longrightarrow}}%
\setlength{\abovedisplayskip}{6pt} \setlength{\belowdisplayskip}{6pt}

\sloppy

\title{
Thin Sets Are Not Equally Thin:\\ Minimax Learning of Submanifold Integrals\footnote{First version: arXiv:2507.12673v1. We acknowledge helpful comments from T. Armstrong, F. Bugni, T. Cai, X. Cheng, B. Deaner, Y. Fan, B. Honore, S. Khan, P. Kline, S. Kwon, O. Linton, M. Masten, D. Pouzo, J. Powell, A. Rosen, Y. Sun, and numerous conference and seminar participants.}}
\author{Xiaohong Chen\footnote{Chen: Department of Economics and Cowles Foundation for Research in Economics,
    Yale University, USA.
  xiaohong.chen@yale.edu. Chen thanks Cowles Foundation for research support.}$\ \ $and Wayne Yuan Gao\footnote{Gao: Department of Economics, University of Pennsylvania, USA.
waynegao@upenn.edu.}}
\maketitle

\begin{abstract}
  \noindent Many economic parameters are identified by ``thin sets'' (submanifolds with Lebesgue measure zero)  and hence difficult to recover from data in an ambient space. This paper provides a unified theory for estimation and inference of such ``thin-set'' identified functionals. We show that thin sets are \emph{not} equally thin: their intrinsic dimensionality $m$ matters in a precise manner. For a nonparametric regression $h_0$ with H\"{o}lder smoothness $s$ and $d$-dimensional covariates in the ambient space, we show that $n^{-\frac{s}{2s+d-m}}$ is the minimax optimal rate of estimating linear and nonlinear (e.g., quadratic, upper contour) integrals of $h_0$ on an $m$-dimensional submanifold ($0\leq m < d$), which is the fastest possible attainable rate among all estimators. The minimax lower bound rate result is generalized to estimating submanifold integrals when $h_0$ is a nonparametric density and a nonparametric instrumental variable function. The asymptotic normality of t statistics is established via sieve Riesz representation, and the corresponding inference is computed using Sobol points. \medskip
  \\
  \textbf{Keywords:} submanifold, contour set, level set, Hausdorff measure, differential geometry, minimax rate, sieve Riesz representer, asymptotic normality.

\end{abstract}

\section{\label{sec:Intro}Introduction}

Many parameters of interest in economics are identified by information contained in lower-dimensional \emph{thin sets}, i.e., subsets of the covariate space that have Lebesgue measure zero yet may still carry economic meanings. \citet*{khan2010irregular} coined the term \emph{thin-set identification} to describe settings in which identifying information is concentrated on such measure-zero sets, and showed that the resulting parameters are \emph{irregular} in the sense that they cannot be estimated at the parametric $n^{-1/2}$-rate, where $n$ denotes the sample size.

In this paper, we provide a more nuanced and, in some sense, refined view about \emph{thin sets}. While all thin sets possess Lebesgue measure zero in their ambient space, we show that they may differ substantively in terms of their intrinsic dimensions and geometric structures. These refined differences lead to quantitatively different convergence rate of estimation and different forms of first-order expansions for inference on aggregate parameters (integrals) defined on such thin sets.

Specifically, this paper considers semiparametric estimation and inference for general integral functionals on submanifolds of the following form:
\begin{equation}
  \G(h_0):=\int_{{\cal M}}\phi(h_{0}(x),x)w(x)d{\cal H}^{m}\left(x\right),\label{eq:SM_nonlinear}
\end{equation}
where $\phi:\R \times \mathcal{X} \mapsto \R$ is a known transformation of an unknown function $h_0$ that can be estimated nonparametrically from data, $w:\mathcal{X} \mapsto \R$ is a known (weight) function, and $\mathcal{X}$ is a known closed convex subset (with positive Lebesgue measure) in $\R^d$. Here ${\cal H}^{m}$
denotes the $m$-dimensional Hausdorff measure in $\R^{d}$, which coincides with the Lebesgue measure in $\R^m$ (and hence has Lebesgue measure zero in $\R^d$ whenever $m<d$).\footnote{See Section \ref{sec:Examples} for a precise definition of ${\cal H}^{m}$. For $m<d$, the $m$-dimensional Hausdorff measure ${\cal H}^{m}$ can be thought as a generalization of a variety of ``uniform'' measures for lower-dimensional subsets in $\R^d$ such as point, line, area, surface, and volume measures.} In this paper,
\begin{equation}\label{eq:M-defn}
  {\cal M}:=\left\{ x\in{\cal X}:g\left(x\right)={\bf 0}\right\}
\end{equation}
denotes an $m$-dimensional ($0\leq m <d$) submanifold, where $g:\mathcal{X} \mapsto \R^{d-m}$ may be known, or unknown but can be estimated parametrically, semiparametrically or nonparametrically from data. By definition,
the $m$-dimensional submanifold ${\cal M}$ has positive $m$-dimensional Hausdorff volume but zero $d$-dimensional Lebesgue volume, and hence is a thin set in the covariate ambient space $\mathcal{X}\subset \R^{d}$.

Submanifold integrals emerge naturally in many economic  settings where we take derivatives of a standard Lebesgue integral ($dx$ in $\R^d$),
such as an expectation, with respect to a changing domain of integration:
\begin{equation}
  \rest{\frac{d}{dt}\int_{\O_{t}}\omega_{t}\left(x\right)dx}_{t=0}=\rest{\frac{d}{dt}\int_{\R^d}\ind\left\{ x\in \O_{t}\right\}\omega_{t}\left(x\right)dx}_{t=0}.\label{eq:time_deriv}
\end{equation}
Mathematically, the derivative of an integral with respect
to its domain $\O_{t}$ translates into an integral over the \emph{boundary} $\p\O_t$ of its domain, and the boundary is often a lower-dimensional submanifold of the original domain. As we show in Section \ref{sec:Examples}, many parameters of interest are identified by solutions to first-order condition for optimization over subpopulation, which often take the form \eqref{eq:time_deriv}.  

Differentiation of an integral with respect to its domain also shows
up in asymptotic analysis. For example, consider estimation of the following integral functional on upper contour set of the unknown $h_0$
\begin{equation}
  V\left(h_{0}\right):=\int \ind\left\{ h_{0}(x)\geq0\right\}w\left(x\right)dx=\int_{\left\{x\in \mathcal{X}:~ h_{0}(x)\geq0\right\} }w\left(x\right)dx,\label{eq:WFB}
\end{equation}
where $w(x)$ is the marginal density of $x$ with support $\mathcal{X}\subset \R^d$. When $h_{0}\left(x\right)=\text{CATE}\left(x\right):=\E\left[\rest{Y_{i}\left(1\right)-Y_{i}\left(0\right)}X_{i}=x\right]$
is the conditional average treatment effect (CATE) for type $x \in \mathcal{X}$, we call $V\left(h_{0}\right)$ a value functional, which is a policy parameter of interest in many applications.
If $h_{0}$ is nonparametrically estimated by $\hat{h}$, the impact of the estimation error
on the plug-in estimation of $V(h_{0})$ can be analyzed using the
pathwise derivative of $V(h)-V(h_{0})$ with respect to $h_0$ in the direction $[h-h_{0}]$, which
becomes a submanifold integral of the form:
\[
  DV(h_0)[h-h_0]:=\rest{\frac{dV\left(h_0 +t[h-h_0]\right)}{dt}}_{t=0}=\int_{\mathcal{M}_0}\frac{h\left(x\right)-h_{0}\left(x\right)}{\norm{\Dif_{x}h_{0}\left(x\right)}}w\left(x\right)d{\cal H}^{d-1}\left(x\right),
\]
where the level set\footnote{In this paper we use ${\cal M}_0$ to highlight the level set when it is an unknown function $h_0$.} ${\cal M}_0:=\{x\in \mathcal{X}:~ h_{0}\left(x\right)=0\}$ is a
$m=(d-1)$-dimensional submanifold in $\R^{d}$.

In this paper, we assume that a data set $\mathcal{D}_n:=\{(Y_i,X_i)\}_{i=1}^n$ is a random sample of size $n$ drawn from an unknown probability distribution of $(Y,X)$, in which the unknown density of $X$ is bounded away from zero and infinity on its support $\mathcal{X}$ in $\R^d$. We also assume that the unknown function $h_0:\mathcal{X} \mapsto \R$ and/or the unknown submanifold mapping $g:\mathcal{X} \mapsto \R^{d-m}$ can be identified and estimated from the data $\mathcal{D}_n$ in the ambient space.\footnote{Our framework is different from the literature that assumes the data is sampled directly from lower dimensional submanifolds.}  Our parameters of interest are the $m$-dimensional submanifold integrals $\G(h_0)$ and the upper contour integrals $V(h_0)$. In Section \ref{sec:Examples} we show that many economic functionals of interest can be represented as integrals of the forms $\G(h_0)$ and $V(h_0)$.


We first establish the minimax lower bound rates of estimation for the linear integral functional on a submanifold:
\begin{equation}
  L(h_0):=\int_{{\cal M}}h_{0}\left(x\right)w\left(x\right)d{\cal H}^{m}\left(x\right),\label{eq:SM_Int}
\end{equation}
a core special case of $\G(h_0)$ in \eqref{eq:SM_nonlinear} with $\phi=h_0$ and ${\cal M}$ a known $m$-dimensional submanifold ($m < d$). For the sake of concreteness we assume that the unknown function $h_0:\mathcal{X} \mapsto \R$ belongs to a H\"{o}lder class of finite smoothness $s>0$. When $h_{0}$ is respectively a nonparametric regression $\E[Y|X]$, a nonparametric density of $X$ and a nonparametric instrumental variables (NPIV) regression $\E[Y-h_0 (X)|Z]=0$, we establish the minimax lower bound rates for estimating $L(h_0)$, which are the \textit{fastest} possible convergence rates among all estimators for $L(h_0)$. These lower bound rates are all \textit{slower} than the parametric convergence rate of $n^{-1/2}$ whenever $m<d$, confirming that $L(h_0)$ is an \textit{irregular} functional whenever $m<d$. Specifically, we show that $r^*_n:=n^{-\frac{s}{2s+d-m}}$ is the minimax lower bound rate for estimating $L(h_0)$ when $h_0$ is a nonparametric regression and a density. Interestingly, this rate coincides with the famous lower bound rate of \cite{stone1980optimal} for the pointwise estimation of a nonparametric regression with $(d-m)$-dimensional covariates. In Subsection \ref{subsec:Minimax_NPIV-LB} we also establish that $r_{NPIV,n}$ is the minimax lower bound rate for estimating $L(h_0)$ when $h_0$ satisfies a NPIV restriction but does not need to be point-identified.
Importantly, the rate $r_{NPIV,n}$ for $L(h_0)$ coincides with the previously established minimax lower bound rate for estimating a $(d-m)$-dimensional point-identified NPIV function (\cite{chen2011rate}).

Under some mild regularity conditions, we show that the minimax lower bound rates for $L(h_0)$ are also the minimax lower bounds rates for nonlinear integrals $\G(h_0)$ with possibly \textit{unknown} submanifolds, upper contour integrals $V(h_0)$, and surface integrals of the form
\begin{equation}
  S(h_0):=\int_{\left\{x\in \mathcal{X}:~ h_{0}\left(x\right)=c\right\} } w\left(x\right)d{\cal H}^{d-1}\left(x\right).\label{eq:S}
\end{equation}
For example, when $h_{0}$ is a nonparametric regression or a density, the rate $r^*_n=n^{-\frac{s}{2s+d-m}}$ is the minimax lower bound rate for estimating nonlinear integral $\G(h_0)$ on $m$-dimensional possibly unknown submanifolds. In particular, when $m=d-1$ the minimax rate $r^*_n$ becomes $n^{-\frac{s}{2s+1}}$, which is the \textit{fastest} possible convergence rate for estimating $V(h_0)$, $S(h_0)$ and $\G(h_0)$ on $m=(d-1)$-dimensional $\mathcal{M}$ among all estimators. This result recovers the minimax lower bound of \cite{horowitz1993et} for smooth maximum score estimation that corresponds to an $m=(d-1)$-dimensional parametric submanifold $\{x\in\mathcal{X}:~x'\beta_0=0\}$.\footnote{Our minimax lower bound proof follows the nonparametric literature such as \cite{Tsybakov2009}, which differs from the proof of \cite{horowitz1993et} for his smooth maximum score model.}

We then show that the above lower bound rates are \textit{attainable}, and hence \textit{minimax-optimal}, by presenting sieve-based estimators for $\t_0=L(h_0),~\G(h_0)$ and $V(h_0)$ for the case when $h_0$ is a nonparametric regression with H\"{o}lder smoothness $s$ and $d$-dimensional covariates. For the linear integral $L(h_0)$, we consider the plug-in sieve estimator. For the nonlinear integrals $\G(h_0)$ and $V(h_0)$, we consider plug-in, split-sample and leave-one-out sieve estimators. We provide low level sufficient conditions under which they achieve the optimal rate $r^*_n=n^{-\frac{s}{2s+d-m}}$, with the smoothness requirement for the split-sample and leave-one-out debiased estimators weaker than that for the plug-in estimators for $\G(h_0)$ and $V(h_0)$.


Given the irregularity of submanifold integral functionals, they do not admit well-defined Riesz representers; however, the \textit{sieve} Riesz representers remain well-defined and computable in closed form. Following \citet*{chen2014sieve,chen2014sieveM} and \cite*{chenpouzo2015sieve}, we construct valid confidence intervals via sieve student-$t$ statistics. By exploiting the submanifold structure, we characterize the growth rate of the sieve Riesz representer norm and obtain tighter control of nonlinear remainders. For the upper contour integral $V(h_0)$, its pathwise derivatives are computed using the calculus of moving submanifolds.

Monte Carlo simulations confirm our theoretical results: the sieve estimators produce reasonably small RMSEs that shrink with sample sizes, and the realized confidence interval coverage is close to the nominal 95\% level. The submanifold integrals in the simulations are numerically computed using Sobol quasi-random sequences \citep{sobol1967distribution} for its better numerical performance than uniform random sampling.  

In a companion paper \cite*{CCG2025}, we apply the theory developed here to inference on value functionals of the CATE under first-best nonparametric treatment assignment. Using the Job Training Partnership Act (JTPA) data set, we compute confidence intervals for the nonparametric  first-best welfare and the treatment share for the JTPA job training program, two parameters estimated in \cite{kitagawa2018should} without reported confidence intervals.

\subsection*{Related Literature}\label{sec:review}

Our paper contributes to the literature on semiparametric
estimation and inference on irregular integral functionals.
To our best knowledge, our paper is the first to
provide a unified theory on general
submanifold integral functionals of an unknown function $h_0$ with H\"{o}lder smoothness $s>0$. We establish the minimax-optimal estimation rate for linear and nonlinear submanifold integral functionals of $h_0$ and for integrals of upper contour set on $h_0$, in which $h_0$ could be a regression, a density, and a NPIV function. In addition, we provide simple asymptotic normality based confidence intervals for these irregular integral functionals using sieve Riesz representation.

Our minimax lower bound estimation rate results can be viewed as quantitative refinements of the famous singular semiparametric information bound results of \cite{chamberlain1986asymptotic} and \citet*{khan2010irregular} on ``thin-set'' identified parameters. In Section \ref{sec:Examples} we show many semiparametric ``thin-set'' identified parameters can be viewed as integral functionals on $m$-dimensional submanifolds (for $m<d$). Previously, \citet*{kim1990cube} shows that the first-order condition for the maximum score criterion of \citet*{manski1975maximum} is a submanifold integral with a $m=(d-1)$-dimensional hyperplane, and derives the famous $n^{-1/3}$ convergence rate using empirical process theory.\footnote{\cite{horowitz1992smoothed,horowitz1993et} obtains the upper and lower bound rates of $n^{-s/(2s+1)}$ for the smoothed maximum score estimator without mentioning $(d-1)$-dimensional submanifold.} \citet*{sasaki2015quantile} also notes that differentiation
with respect to the domain of integration produces a $m=(d-1)$-dimensional
submanifold integrals in his identification paper on quantile functions in nonseparable structural models.

We view our minimax lower bound results for the large class of submanifold functionals $L(h_0)$, $\G (h_0)$, $V(h_0)$ and $S(h_0)$ \textit{important}, as they provide a minimax informational criterion to compare many different machine learning estimators. Our paper presents various sieve estimators for $L(h_0)$, $\G(h_0)$ and $V(h_0)$ and establishes that they can attain the minimax lower bound rate of $r^*_n=n^{-\frac{s}{2s+d-m}}$ when $h_0$ is a nonparametric regression. Other estimators for these functionals can also be presented and be verified if they are minimax rate optimal. For example, \citet*{qiao2021nonparametric} proposes a kernel plug-in density estimator of the surface integral $S(h_0)$ of the form \eqref{eq:S} and establishes a convergence rate $n^{-s/(2s+1)}$ when $s\geq d+1$. \citet*{qiao2021nonparametric} concludes his paper by stating that ``Another open problem is the minimax rates of estimating the surface integrals on level sets''. Notice that the surface integral $S(h_0)$ is a $m=(d-1)$-dimensional nonlinear submanifold functional, his kernel estimator matches our minimax lower bound rate $r^*_n$ provided $s\geq d+1$.
In works that are concurrent to ours, \citet*{cattaneo2025dist,cattaneo2025loc} consider estimation of linear integrals over $m=1$-dimensional known submanifolds that arise in the boundary discontinuity designs using local polynomial regressions and obtain a minimax optimal convergence rate of $n^{-\frac{s}{2s+d-1}}$ in their contexts (they mostly consider $d=2,m=1$). Due to the lack of space, we leave it to future work to compare finite sample performance of different minimax rate-optimal estimators for these submanifold integrals.

Technically, \citet*{qiao2021nonparametric}, \citet*{cattaneo2025dist,cattaneo2025loc} and our paper all utilize some mathematical tools in differential geometry and geometric measure theory to establish the asymptotic properties of different estimators of different \textit{irregular} submanifold integrals. Differential geometry tools have also been used in asymptotic analysis of \textit{regular} functionals (i.e., the ones that can be estimated at the parametric convergence rate of $n^{-1/2}$). \citet*{chernozhukov2018sorted} uses integrals on level sets to study sorted partial effects in heterogeneous coefficient models and establishes the convergence rate of $n^{-1/2}$ for their regular functionals. \citet*{feng2024statistical} shows how Hausdorff integrals can be used in the analysis of regular integral functionals.

\paragraph{Organization of the Paper}
Section \ref{sec:Examples} provides motivating examples for submanifold integrals in econometrics. Section \ref{sec:Minimax-LB} establishes minimax lower bounds rates for estimating linear and nonlinear submanifold integrals $L(h_0),~\G(h_0)$ when $h_0$ is a nonparametric regression, a nonparametric density and a NPIV function respectively. Section  \ref{sec:Minimax-UB} shows that the minimax lower bound rate is attainable by sieve estimators for estimating $L(h_0),~\G(h_0)$ and $V(h_0)$ when $h_0$ is a nonparametric regression. Section \ref{sec:Nonlinear} provides the asymptotic normality of the sieve estimators proposed in Section \ref{sec:Minimax-UB}, along with consistency sieve variance estimators, for inference on both linear and nonlinear submanifold integrals.
Section \ref{sec:Simulation} presents Monte Carlo simulation results.
The Appendix contains sections on mathematical tools in differential geometry used in this paper, technical lemmas, as well as all the proofs.

\section{\label{sec:Examples} Setup and Examples}

\subsection{Setup}\label{subsec:Setup}

Throughout this paper, we let $\left\{Y_{i},X_{i}\right\}_{i=1}^{n}$ be a random sample of size $n$ drawn from a unknown joint probability distribution $P_{\left(Y,X\right)}$, where $Y_{i}$ is
a scalar-valued outcome variable, and $X_{i}$ is a vector of observed
covariates with a convex and compact support ${\cal X}\subseteq\R^{d}$. Let $h_{0}:{\cal X} \to \R$
be a nonparametric function\footnote{More generally, $h_0$ may be a vector of nonparametric functions.} that is directly identified
from the data and can be estimated using standard nonparametric estimation methods. A leading example of $h_0$ is the conditional expectation (nonparametric regression) function $h_0(x) = \E[\rest{Y_i}X_i=x]$, which will be our main focus in the paper. That said, $h_{0}$ may also take the form of density functions, conditional quantiles and structural regression functions in NPIV models.

We consider submanifolds ${\cal M}:=\left\{ x\in{\cal X}:g\left(x\right)={\bf 0}\right\}$ that take the form of level sets of functions $g$. Depending on the problem setup, $g$ may be known or unknown, parametric or nonparametric, and it may be taken to be different from or the same as $h_0$, which will be illustrated in the examples below and treated in subsequent sections.
We maintain the following standard regularity condition in this paper:

\begin{assumption}[Regular Level Set]
  \label{assu:RegLevelSet} (i) $g:{\cal X}\to\R^{d-m}$ is a continuously differentiable
  function with ${\bf 0}\in\text{int}\left(g\left({\cal X}\right)\right)\subseteq\R^{d-m}$; (ii) $\Dif_{x}g\left(x\right)$ has full rank $d-m$ for every $x\in{\cal M}:=\left\{ x\in{\cal X}:g\left(x\right)={\bf 0}\right\}$.\footnote{Note that ${\bf 0}$ may be replaced with any other constant vector
  ${\bf c} \in\text{int}\left(g\left({\cal X}\right)\right)\subseteq\R^{d-m}$
without affecting the results in this paper.}
\end{assumption}

Let $\mathcal{J}_g\left(x\right)$
denote the Jacobian of $g:\R^{d}\to\R^{d-m}$ defined by
\[
  \mathcal{J}_g\left(x\right):=\sqrt{\sum_{B(x)}\text{det}\left(B(x)\right)^{2}}=\sqrt{\det(\Dif_x g(x)\Dif_x g(x)')},
\]
where $B$ indexes all $\left(d-m\right)\times\left(d-m\right)$ minors
of $\Dif_x g\left(x\right)$. Under Assumption \ref{assu:RegLevelSet} and compactness of ${\cal X}$, we have $\mathcal{J}_g\left(x\right)\geq Const.>0$ for all $x\in\mathcal{M}$.

Under Assumption \ref{assu:RegLevelSet}, the set ${\cal M}$ defined in \eqref{eq:M-defn} is an $m$-dimensional submanifold of $\R^{d}$ (e.g. by Theorem 12.1 of \citealp{loomis2014advanced}). We note that ${\cal M}$ has zero Lebesgue measure on $\R^{d}$ for all $0\leq m <d$, although it has positive $m$-dimensional Hausdorff measure. In this paper we use ${\cal H}^{m}\left(x\right)$
to denote the $m$-dimensional Hausdorff measure on $\R^{d}$, which is defined as follows: For a set $A\subseteq \R^d$, define ${\cal H}^m(A) := \lim_{\d \to 0} {\cal H}^m_\d$, where, for any $\d\in(0,\infty)$, $${\cal H}^m_\d(A) := \inf\left\{\sum_{j=1}^\infty \alpha(m) \left(\frac{\text{diam}(C_j)}{2}\right)^m: A \subseteq \cup_{j=1}^\infty C_j, \text{diam}(C_j) \leq \d\right\},$$ with $\alpha(m)=\frac{\pi^{m/2}}{\text{Gamma}((m/2)+1)}=\frac{\pi^{m/2}}{\int_{0}^{\infty} e^{-x} x^{m/2}dx}$. The $m$-dimensional Hausdorff measure becomes the standard $m$-dimensional Lebesgue measure in the lower dimensional $\R^m$; see, for example, \cite{evans2015measure} for more details on the Hausdorff measure.

\subsection{Motivating Examples}\label{subsec:Examples}

In this subsection, we provide some econometric examples for submanifold integrals, 
which roughly belong to two categories.

\paragraph{Category 1:} Examples in which researchers are interested in some aggregate parameters
of \textbf{estimated or optimized subpopulations}. Then, either the
first-order expansion in asymptotic analysis (of estimators), or the
first-order condition for optimality, often takes the form of the
time derivative of an integral with a changing region of integration
$\O_{t}$, which produces a submanifold integral term by the generalized
Leibniz rule:\footnote{See, for example, Theorem 4.2 in Chapter 9 of \cite{delfour2001shapes}.}  under regularity conditions, \eqref{eq:time_deriv} can be expressed as
\begin{align}
  \rest{\frac{d}{dt}\int_{\O_{t}}\omega_{t}\left(x\right)dx}_{t=0} & =\underset{({\bf I})}{\underbrace{\rest{\int_{\O_{t}}\frac{\p}{\p t}\omega_{t}\left(x\right)dx}_{t=0}}}+\underset{\left({\bf II}\right)}{\underbrace{\rest{\int_{\p\O_{t}}\left\langle {\bf n}_{t}\left(x\right),{\bf v}_{t}\left(x\right)\right\rangle \omega_{t}\left(x\right)d\mathcal{H}^{d-1}\left(x\right)}_{t=0}}}\label{eq:Reynolds}
\end{align}
where term (I) captures the effect of the change in the integrand
$w_{t}\left(x\right)$ with the region of integration $\O_{t}$ held
fixed, while term (II) captures the effect of the change in the region
of integration $\O_{t}$ with the integrand $w_{t}\left(x\right)$
held fixed. Importantly, $\p\O_{t}$, the boundary of $\O_{t}$, is often a submanifold of dimension $d-1$, and thus term (II) takes the form of an integral over the submanifold $\p\O_t$ with respect to the $(d-1)$-dimensional Hausdorff measure, which can also be viewed as the surface measure on the boundary submanifold $\p\O_t$. For the integrand terms,
${\bf n}_{t}\left(x\right)$ is the outward-pointing unit normal vector,
${\bf v}_{t}\left(x\right)$ is the velocity vector associated with
the time movement in the $\p\O_{t}$. For example,
consider a simple domain $\O_{t}=\left[a_{t},b_{t}\right]\subset\R$, its boundary $\p\O_t$ is a set of two points $\{a_t,b_t\}$, which is the $0$-dimensional submanifold, and the $0$-dimensional Hausdorff measure is simply the point counting measure. When $x$ is one-dimensional, \eqref{eq:Reynolds} becomes the standard Leibniz rule for univariate calculus:
\[
  \rest{\frac{d}{dt}\int_{a_{t}}^{b_{t}}\omega_{t}\left(x\right)dx}_{t=0}=\rest{\int_{a_{t}}^{b_{t}}\frac{\p}{\p t}\omega_{t}\left(x\right)dx}_{t=0}+\rest{\left(\omega_{t}\left(b_{t}\right)\frac{d}{dt}b_{t}-\omega_{t}\left(a_{t}\right)\frac{d}{dt}a_{t}\right)}_{t=0}.
\]
Throughout this paper we focus on situations where term (II) in \eqref{eq:Reynolds} is not vanishing.
\paragraph{Category 2:} Researchers are interested (for some other reasons than above) in some aggregate parameters of certain \textbf{boundary or marginal} \textbf{subpopulations}  with the boundary or margin characterized by a lower-dimensional submanifold.

~

Roughly speaking, Examples \ref{exa: MaxScore}-\ref{exa:RMS} are of Category 1, Examples \ref{exa:EY_gX}-\ref{exa:AvgDer} are of Category 2, while Examples \ref{exa:NPIV}-\ref{exa:NPqt} can be of both categories. We emphasize that we do not intend the categorization above to be exact nor exhaustive, but more to provide a high-level summary of the origins of submanifold integrals.

\begin{example}[Maximum Score Estimation of Binary Choice Models]
  \label{exa: MaxScore} Consider any model that satisfies the following
  sign alignment restriction
  \begin{equation}
    h_{0}\left(x\right)\gtrless0\ \iff\ x^{'}\b_{0}\gtrless0\label{eq:SignAlign}
  \end{equation}
  where $h_{0}$ is a nonparametrically identified and estimable function
  of $x$ and $\b_{0}$ is a $d$-dimensional parameter normalized to
  lie on the unit sphere, i.e., $\b_{0}\in\S^{d-1}:=\left\{ \b\in\R^{d}:\norm{\b}=1\right\} $.
  For example, in the following binary choice model with a conditional
  median independence restriction as in \citet*{manski1975maximum},
  \[
    y_{i}=\ind\left\{ X_{i}^{'}\b_{0}+\e_{i}\geq0\right\} ,\quad\text{med}\left(\rest{\e_{i}}X_{i}\right)=0,
  \]
  the sign alignment restriction \eqref{eq:SignAlign} is satisfied
  with $h_{0}\left(x\right)=\E\left[\rest{Y_{i}}X_{i}=x\right]-\frac{1}{2}$.
  The population criterion function for maximum score estimator can
  then be written as
  \begin{align}
    W\left(\b\right) & :=\int h_{0}\left(x\right)\ind\left\{ x^{'}\b\geq0\right\} p\left(x\right)dx,\label{eq:Q_MS}
  \end{align}
  where $p\left(\cdot\right)$ denotes the density of $X_{i}$. Under
  appropriate conditions, $\b_{0}$ can be point identified under a scale
  normalization,
  $\b_{0}=\arg\max_{\b\in\S^{d-1}}W\left(\b\right).$
  By the generalized Leibniz rule, the first order condition (FOC) for the optimality of $\b_{0}$ is given by
  \begin{equation}
    {\bf 0}=\Dif_{\b}W\left(\b_{0}\right):=\int_{\mathcal{M}_0} h_{0}\left(x\right)xp\left(x\right)d{\cal H}^{d-1}\left(x\right),\label{eq:FOC_MS}
  \end{equation}
  where $\mathcal{M}=\left\{x\in\mathcal{X}:~ x^{'}\b_{0}=0\right\}$ is a $m=\left(d-1\right)$-dimensional
  hyperplane (see, e.g.,  \citet*{kim1990cube}). We note that the hyperplane $\mathcal{M}_0$ has Lebesgue measure $0$ in $\R^{d}$, which is why the identification of $\b_{0}$ is referred
  to as a type of ``thin-set identification'' in \citet*{khan2010irregular}.
  Consequently, the right-hand side of \eqref{eq:FOC_MS} cannot
  be represented by a regular Lebesgue integral, but instead by a Hausdorff
  integral over a $m=\left(d-1\right)$-dimensional manifold (hyperplane) in $\R^{d}$.
\end{example}
\begin{example}[Optimal Linear Treatment Assignment]\label{exa:LinTreat}
  It has been recognized, say in \citet*{kitagawa2018should}, that
  optimal linear treatment assignment problem shares some similarity with maximum
  score estimation. Specifically, consider the problem of optimizing
  over a parametric family of treatment assignment rules that assigns the
  treatment status $0/1$ according to $\ind\left\{ x^{'}\b\geq0\right\} $
  for a given observed type $x$, where $\b$ is a $d$-dimensional
  choice parameter. Then the welfare function of the assignment rule
  parameter $\b$ is given by
  \begin{equation}
    W\left(\b\right):=\int\ind\left\{ x^{'}\b\geq0\right\} h_{0}\left(x\right)p\left(x\right)dx,\label{eq:Treat_Welfare}
  \end{equation}
  where $h_{0}\left(x\right)=\E\left[\rest{Y_{i}\left(1\right)-Y_{i}\left(0\right)}X_{i}=x\right]$
  is the conditional average treatment effect (CATE) for type $x$.
  Hence \eqref{eq:Treat_Welfare} is of exactly the same form as the
  maximum score criterion function \eqref{eq:Q_MS}. As a result, the
  FOC of \eqref{eq:Treat_Welfare} for the optimal $\b_{0}$ (under scale normalization) is again
  given by the submanifold integral \eqref{eq:FOC_MS}. Often times,
  researchers conduct (costly) experiments on and estimate CATE
  from a sample of moderate sample size, but the target population on
  which the treatment in question might be implemented can be of a much
  larger scale. In such settings $p(\cdot)$ may be known or estimable
  using a much larger sample size than that used to estimate $h_{0}$,
  and thus we may focus on the estimation error for $h_{0}$ in the
  optimization of $W\left(\b\right)$.
\end{example}
\begin{example}[Aggregate Parameter over Estimated Subpopulation]
  \label{exa:EstSubpop} Consider the estimation of
  the following general welfare or value functional parameter
  \[
    W\left(h_{0},f_{0}\right):=\int\ind\left\{ h_{0}\left(x\right)\geq0\right\}f_{0}\left(x\right)p\left(x\right)dx,
  \]
  where $h_{0}$ and $f_{0}$ may be both unknown but nonparametrically
  estimable. For example, if we set $h_{0}(x)=\text{CATE}(x)$ and $f_{0}(x)=1$, then $W\left(h_{0},f_{0}\right)$ becomes the value functional $V(h_0)$ defined in \eqref{eq:WFB}. If we set $h_{0}(x)=\text{CATE}(x)$ and $f_{0}(x)=\lambda^{'}x$ for some fixed $\lambda \neq \textbf{0}$, then $W\left(h_{0},f_{0}\right)$
  becomes the average characteristics of the subpopulation with nonnegative
  CATE. If we set $h_{0}\left(x\right)=f_{0}\left(x\right)=\text{CATE}\left(x\right)$,
  then $W\left(h_{0},f_{0}\right)$ becomes $W\left(\text{CATE},\text{CATE}\right)$---the welfare under the ``first-best''
  treatment assignment. Alternatively, we may take $f_{0}\left(x\right)$
  to be any other value/cost function associated with type $x$ (see \citep*{CCG2025}). When $h_{0}$ and $f_{0}$ are real-valued functions and are nonparametrically estimated by $\hat{h}$ and $\hat{f}$,
  the pathwise derivative of $W\left(h_{0},f_{0}\right)$ with respect
  to $\left(h_{0},f_0\right)$ is a key object in the characterization
  of the asymptotic behaviors of the plug-in estimator $W\left(\hat{h},\hat{f}\right)$.
  Generally, the pathwise derivative of $W$ in the direction of $\left[h-h_{0},f-f_{0}\right]$
  is given by
  \begin{align}
    & DW\left(h_{0},f_{0}\right)\left[h-h_{0},f-f_{0}\right]\nonumber \\
    :=\  & \lim_{t\downto0}\frac{W\left(h_{0}+t\left(h-h_{0}\right),f_{0}+t\left(f-f_{0}\right)\right)-W\left(h_{0},f_{0}\right)}{t}\nonumber \\
    =\  & \int_{\left\{ h_{0}\left(x\right)\geq0\right\} }\left(f\left(x\right)-f_{0}\left(x\right)\right)p\left(x\right)dx+\int_{\left\{ h_{0}\left(x\right)=0\right\} }\left(h\left(x\right)-h_{0}\left(x\right)\right)\frac{f_{0}\left(x\right)p\left(x\right)}{\norm{\Dif_{x}h_{0}\left(x\right)}}d{\cal H}^{d-1}\left(x\right)\label{eq:NPcontour-1}
  \end{align}
  which consists of two terms: the first is the perturbation of the
  integrand $f_{0}$ in the direction of $[f-f_{0}]$ over the true region
  of integral $\left\{ h_{0}\left(x\right)\geq0\right\} $, while the
  second is the perturbation of (the boundary) region of integration
  induced by the perturbation of $h_{0}$ in the direction of $[h-h_{0}]$.
  While the first term is standard in the semiparametric estimation literature,
  the second term takes the nonstandard form of a Hausdorff integral over
  the $m=\left(d-1\right)$-dimensional submanifold ${\cal M}_0:=\left\{x\in \mathcal{X}:~ h_{0}\left(x\right)=0\right\}$.\footnote{In particular, $\norm{\Dif_{x}h_{0}\left(x\right)}$, which measures
    the ``thinness'' of the level set, enters into the derivative formula
    explicitly here. In the previous examples with hyperplane boundaries,
    we took $g(x)=x^{'}\b$ with $\b\in\S^{d-1}$, and
    thus $\norm{\Dif_{x}g(x)}=\norm{\b}=1$, which is
  why this term becomes implicit in formula \eqref{eq:FOC_MS}.} Hence,
  as long as $f_0(x)p(x)$ does not vanish on the submanifold ${\cal M}_0:=\left\{x\in \mathcal{X}:~ h_{0}\left(x\right)=0\right\}$, the second term in \eqref{eq:NPcontour-1}
  will remain as the leading term in the asymptotic
  behavior of $W\left(\hat{h},\hat{f}\right)-W(h_0,f_0)$.
  See our companion paper \citep*{CCG2025} for a detailed analysis of this example.
\end{example}

\begin{example}[Average Treatment Effects under Propensity Score or Density Trimming]\label{exa:TrimATE}\footnote{We thank Tim Armstrong for suggesting this example and for sending us his note on this.}
  \citet*[CHIM thereafter]{crump2009dealing}
  proposes as a systematic approach to deal with limited overlap problems
  in the estimation of ATEs, and shows that the optimal subpopulation
  that minimizes the asymptotic variance of ATE under homoskedastic errors
  takes the form of propensity score trimming:
  \[
    \text{ATE}_{p\text{-trimmed}}:=\E\left[\rest{\text{CATE}\left(X_{i}\right)}\a\leq p_{0}\left(X_{i}\right)\leq1-\a\right].
  \]
  In practice, the propensity score $p_{0}\left(x\right)$ might require
  nonparametric estimation. CHIM did not provide theoretical results
  on the asymptotic distribution of their proposed estimators with the
  first-stage nonparametric estimation error of $p_{0}\left(x\right)$
  taken into account. Clearly $\text{ATE}_{p\text{-trimmed}}$ is a
  two-sided version of Example \ref{exa:EstSubpop} with the region
  of integration defined by the two-sided inequality $\a\leq p_{0}\left(X_{i}\right)\leq1-\a$
  on the propensity score function, and the directional derivative of
  $\text{ATE}_{p\text{-trimmed}}$ with respect to $p_{0}$ will again
  features submanifold integrals on the level sets of $\left\{ x:p_{0}\left(x\right)=\a\right\} $
  and $\left\{ x:p_{0}\left(x\right)=1-\a\right\} $.
\end{example}

\begin{example}[ReLU-Based Generalized Maximum Score Estimator under Multi-Index Single-Crossing Conditions]\label{exa:RMS} In a follow-up paper \citep*{CGW2025}, we investigate a ReLU-based generalization of the maximum score estimator for multi-index single-crossing condition (MISC) models \citep{gao2019robust}.
  Formally, consider a random sample $\left(Y_{i},X_{i}\right)_{i=1}^{n}$
  where $Y_{i}$ is an outcome with support ${\cal Y}\subseteq\R^{d_{y}}$
  and $X_{i}:=\left(X_{i1},...,X_{iJ}\right)\in\R^{d\times J}$ is a $d\times J$ random matrix with support ${\cal X}\subset\R^{d\times J}$.
  Let $h_{0}:{\cal X}\to\R$ be a real-valued functional of the conditional
  distribution of $Y_{i}$ given $X_{i}$ that is directly identified
  and nonparametrically estimable from the data, and $\b_{0}\in \S^{d-1}$ be a finite-dimensional parameter.
  We say that $\left(h_{0},\b_{0}\right)$ satisfies the (\emph{multi-index
  single-crossing}) MISC condition if, for all $x=\left(x_{1},...,x_{J}\right)\in{\cal X}$,
  \begin{align}
    x_{j}^{'}\b_{0}>0,\ \forall j=1,...,J & \quad\imp\quad h_{0}\left(x\right)\geq0,\nonumber \\
    x_{j}^{'}\b_{0}<0,\ \forall j=1,...,J & \quad\imp\quad h_{0}\left(x\right)\leq0.\label{eq:MISC}
  \end{align}
  The condition is said to be \emph{strict} if the inequalities on the
  right-hand side of \eqref{eq:MISC} are strict, i.e., $h_{0}\left(x\right)>0$ whenever $x_{j}^{'}\b_{0}>0$
  for all $j$, and $h_{0}\left(x\right)<0$ whenever $x_{j}^{'}\b_{0}<0$
  for all $j$. Note that, when $J=1$, \eqref{eq:MISC} reduces exactly to the sign-alignment restriction in the original maximum score formulation \citep{manski1975maximum,manski1985semiparametric}. However, the maximum score criterion function in \citep{manski1975maximum,manski1985semiparametric} does not generalize to multi-index settings with $J\geq2$.

  \cite*{CGW2025} proposes a novel criterion function the encodes the sign restrictions in the MISC condition framework using ReLU functions (instead of indicator functions as Manski's maximum score criterion function). Specifically, given $h:{\cal X}\to\R$ and $\b\in \S^{d-1}$, define
  \begin{align}
    g_{+,\b,h}\left(x\right) & :=\left[h\left(x\right)-\min_{1\le j\le J}\left(-x_{j}^{'}\b\right)_{+}\right]_{+},\nonumber\\
    g_{-,\b,h}\left(x\right) & :=\left[-h\left(x\right)-\min_{1\le j\le J}\left(x_{j}^{'}\b\right)_{+}\right]_{+}.\label{eq:def_gminus_MISC}
  \end{align}
  \cite*{CGW2025} then constructs the following criterion function
  \[
    W\left(\b\right):=W_{+}\left(\b\right)+W_{-}\left(\b\right),\qquad W_{\pm}\left(\b\right):=\E\left[g_{\pm,\b,h_{0}}\left(X_{i}\right)\right].
  \]
  which is shown to ensure $\b_{0}\in\arg\max_{\b\in \S^{d-1}}W\left(\b\right)$ under the MISC condition. \cite*{CGW2025} then considers two estimation procedures: a two-step procedure in which a first-stage nonparametric estimator of $h_0$ is plugged into \eqref{eq:def_gminus_MISC} as well as a joint machine learning procedure by formulating the criterion $W$ as a special layer in a deep neural network (DNN) training problem. On a theoretical level, \cite*{CGW2025} shows that the MISC criterion $W(\b)$ also features ``thin-set'' identification with information concentrated on a $(d-1)$-dimensional submanifold that piecewisely takes the form of a hyperplane defined by $\{x_j'\b_0=0\}$ across $j=1,...,J$. Then the submanifold integral results in this paper are utilized to establish the convergence rate and asymptotic normality of the ReLU-based generalized maximum score estimator in \cite*{CGW2025}.

\end{example}

\begin{example}[Generalized Partial Means and Nonparametric Regression with Generated Covariates]
  \label{exa:EY_gX} Suppose that we are interested in the conditional
  mean of $\psi\left(Y_{i},X_{i}\right)$ given  $g\left(X_{i}\right)=c\in\R^{d-m}$,
  where $\psi$ is some known transformation of $Y_{i}$ and $X_{i}$:
  \[
    \E\left[\rest{\psi\left(Y_{i},X_{i}\right)}g\left(X_{i}\right)=c\right]=\int_{\left\{ g\left(x\right)=c\right\} }h_{0}\left(x\right)w_{0}\left(x\right)d{\cal H}^{m}\left(x\right)
  \]
  with $h_{0}\left(x\right):=\E\left[\rest{\psi\left(Y_{i},X_{i}\right)}X_{i}=x\right]$,
  $w_{0}\left(x\right):=p_{0}\left(\rest xg\left(x\right)=c\right)=\frac{p_{0}\left(x\right)}{\mathcal{J}_g(x)p_{g\left(X\right)}\left(c\right)}$,
  $p_{0}\left(x\right)$ being the density of $X_{i}$, and $p_{g\left(X\right)}$
  is the density of $g(X_i)$. In particular, when $g$ extracts
  a subvector of $x\in\R^{d}$, say, $g\left(x\right)=\left(x_{1},...,x_{d-m}\right)$
  for some $1\leq m<d$, the above specializes to the partial mean functional
  of the form $\E\left[\rest{Y_{i}}X_{i,1}=c_{1},...,X_{i,d-m}=c_{d-m}\right]$
  as studied in \citet*{newey1994kernel}. The weighted boundary average treatment effect considered in \cite*{cattaneo2025dist} is another example with the level set function $g$ given by a known distance function. When $g$ is nonparametrically estimated, the above can be also viewed as the nonparametric regression function with nonparametrically generated covariates as considered in \cite*{mammen2012nonparametric}.\footnote{\cite*{mammen2012nonparametric} develops the asymptotic theory for generated covariates using kernel first-stage estimators, and establishes that, under appropriate conditions, the estimation errors from the first-stage are ``smoothed out'' partially, which can be viewed as a special case of the rate acceleration result developed in our paper.}
\end{example}

\begin{example}[Marginal Treatment Effects and Policy Relevant Treatment Effects]
  \label{exa:MTE} \citet*{heckman2005structural,heckman2007econometric}
  proposes the marginal treatment effect (MTE) as a unifying concept
  that underlies a wide variety of treatment effects studied in causal
  inference and program evaluations. It is well-known that MTE can be
  written as a derivative with respect to the propensity score:
  \begin{align}
    \text{MTE}\left(x,p\right) & :=\E\left[\rest{Y_{i}\left(1\right)-Y_{i}\left(0\right)}X_{i}=x,p_{0}\left(Z_{i}\right)=p\right],\nonumber \\
    & = \frac{\p}{\p p}\E\left[\rest{Y_i} X_i= x, p_0(Z_i)=p\right]
    \label{eq:MTE}
  \end{align}
  where $p_{0}\left(z\right):=\E\left[\rest{D_{i}=1}Z_{i}=z\right]$
  is the propensity score (for the binary treatment $D\in\{0,1\}$) given the instruments $Z_{i}$, $f_{0}\left(z\right|x)$
  is the conditional density of $Z_{i}$ given the observed covariates $X_i=x$, and $Y_i$ denotes the observed outcome.
  It is clear from \eqref{eq:MTE} that MTE takes the form of the derivative of a submanifold integral,
  where the submanifold is given by the propensity score level set $p_{0}\left(z\right)=p$.
  Here, both the $\text{CATE}$ function and the propensity score function
  can be nonparametrically estimated.

  As argued in \citet*{carneiro2010evaluating}, in many scenarios it
  is useful to consider incremental policy reforms and focus on the
  analysis of marginal policy changes, whose effects are usually concentrated
  on individuals ``at the margin''. The average marginal treatment
  effect (AMTE) summarizes the mean benefit of the treatment for the
  subpopulation who is indifferent between participation in the treatment
  and nonparticipation. Formally, this ``at-the-margin'' population
  $\left\{ \left(z,u\right)\in\R^{d_{z}+1}:p_{0}\left(z\right)-u=0\right\} $ is a submanifold defined by the level set of the propensity score function $p$,
  and the average over this subpopulation can again be represented by
  a submanifold integral in the form of
  \[
    \text{AMTE}\left(x\right):=\int_{\left\{ p_{0}\left(z\right)=u\right\} }\text{MTE}\left(x,u\right)f_{0}\left(\rest zx\right)d{\cal H}_{\left(z,u\right)}^{d_{z}}.
  \]

  In addition, MTE is also used to define a wide range of causal parameters,
  notably the policy-relevant treatment effect (PRTE) proposed in \citet*{heckman2001policy}. Let $F_{P}$ denote the CDF of $P_{i}:=p_{0}\left(Z_{i}\right)$
  and suppose that a proposed policy changes this CDF from $F_{P}$
  to $\tilde{F}$. Then the PRTE is defined by
  \[
    \text{PRTE}\left(x,\tilde{F}\right):=\int_{0}^{1}\text{MTE}\left(x,u\right)\o_{PRTE}\left(u,\tilde{F}\right)du,\quad\o_{PRTE}\left(u,\tilde{F}\right):=\frac{F_{P}\left(u\right)-\tilde{F}\left(u\right)}{\E_{\tilde{F}}\left(P_{i}\right)-\E_{F_{P}}\left(P_{i}\right)}.
  \]
  Since $\text{MTE}$ is defined as a derivative, PRTE takes the form
  of an average derivative parameter, for which a parametric estimation rate of $n^{-1/2}$
  is possible under a certain ``vanishing-on-boundary'' condition (see Example \ref{exa:AvgDer} for a related discussion).
  \citet*{carneiro2010evaluating}
  pointed out that this type of ``vanishing-on-boundary'' condition
  may be violated in plausible economic policy scenarios in the analysis of PRTE, which
  makes it not estimable at a parametric rate $n^{-1/2}$ in general. Our theoretical results can be useful
  in scenarios where ``vanishing-on-boundary'' conditions are not
  imposed in estimation and testing PRTE in various applications.
\end{example}

\begin{example}[Weighted Average Derivatives]
  \label{exa:AvgDer} Related to Example \ref{exa:MTE} above, suppose
  that we are interested in the following weighted average derivative
  of $h_0$:
  \[
    \text{WAD}\left(h_{0}\right):=\int\Dif_x h_{0}\left(x\right)w\left(x\right)dx.
  \]
  For example, \citet*[PSS thereafter]{powell1989} considers a density-weighted
  average derivative, which is a special case of the above
  with $w\left(x\right):=p^{2}\left(x\right)$, with $p\left(x\right)$
  denoting the marginal density of $x$ on its support ${\cal X}$.
  The standard analysis of $\text{WAD}\left(\hat{h}\right)$
  exploits the following integration-by-parts formula
  (a special case of the Divergence Theorem):
  \begin{equation}
    \int_{{\cal X}}\Dif_x h_{0}\left(x\right)w\left(x\right)dx=\int_{\p{\cal X}}\overrightarrow{{\bf n}}\left(x\right) h_{0}\left(x\right)w\left(x\right)d{\cal H}^{d-1}\left(x\right)-\int_{{\cal X}}h_{0}\left(x\right)\Dif_x w\left(x\right)dx,\label{eq:IntByParts_AD}
  \end{equation}
  where the first term is a Hausdorff integral over the submanifold
  defined by the support boundary $\p{\cal X}$. The standard
  analysis of $\text{WAD}\left(h_{0}\right)$, such as in PSS and \citet*{newey1993efficiency},
  exploits a ``vanishing-on-boundary'' assumption, say, $h_{0}\left(x\right)w\left(x\right)=0$
  for all $x\in{\cal \p X}$, so that the first term in \eqref{eq:IntByParts_AD} degenerates to
  $0$.
  For example, PSS assumes that $p\left(x\right)=0$ on $\p{\cal X}$
  (Assumption 2 in their paper), and hence $\text{WAD}\left(h_{0}\right)=-2\int h_{0}\left(x\right)\Dif_x p\left(x\right)p\left(x\right)dx$. PSS then proceeds to establish that $\sqrt{n} \left(\text{WAD}\left(\hat{h}\right)-\text{WAD}\left(h_{0}\right)\right)=O_p (1)$ and is asymptotic normal. As clear
  from \eqref{eq:IntByParts_AD}, the parametric convergence rate of $n^{-1/2}$ relies
  crucially on the ``vanishing on boundary'' assumption, without which
  the leading asymptotic term in $\text{WAD}\left(\hat{h}\right)-\text{WAD}\left(h_{0}\right)$ would be the first term (the submanifold integral) that converges at a slower-than-$n^{-1/2}$ rate, as established
  in our paper.
\end{example}

\begin{example}[Structural Functions in NPIV Regression]
  \label{exa:NPIV} The previous examples focus on \emph{exogenous}
  nonparametric regression functions. Here we present a canonical example
  of a nonparametric function with endogeneity: the structural functions
  in the nonparametric instrumental variables (NPIV) model, i.e., the
  $h_{0}$ function below:
  \[
    Y_{i}=h_{0}\left(X_{i}\right)+\e_{i},\quad\E\left[\rest{\e_{i}}Z_{i}\right]=0.
  \]
  Previous work by, for example,
  \citet*{chenpouzo2015sieve}, \citet*{chen2018optimal}, and \citet*{chen2025adaptive},
  provides theoretical results on the estimation and inference on $h_{0}$,
  as well as various (families of) linear and nonlinear functionals
  of $h_{0}$, including point evaluation, average derivatives/elasticities,
  and consumer surplus/welfare functionals. In principle, the $h_{0}$
  function in all the submanifold functionals defined in Examples \ref{exa: MaxScore}-\ref{exa:EY_gX}
  above could be replaced by the structural function in the NPIV model
  here, and the theory about submanifold integrals developed
  in the current paper would still be relevant.
\end{example}
\begin{example}[Nonparametric Quantile]
  \label{exa:NPqt} Similarly, the nonparametric function $h_{0}$
  need not be restricted as a conditional expectation function; it
  can be broadly defined as any estimable nonparametric function, for
  example, nonparametric quantile and density functions. These
  general nonparametric estimation problems (without endogeneity) have
  been studied widely in statistics, econometrics, and beyond in various
  settings. In the presence of endogeneity issues, \citet*{HL07econometrica}, \citet*{chenpouzo2015sieve}, and \citet*{chen2019penalized}
  have also provided theoretical results for the nonparametric quantile
  IV regression. The established
  theoretical results from these previous studies can be combined with
  the theory of submanifold integrals developed here for the study of
  new functional parameters of interest.
\end{example}

In summary, we hope that the above examples illustrate the need for estimation and inference on general submanifold integrals of unknown functions, which has not been systematically studied in the existing literature.
In the rest of the paper, 
we shall focus on the case where $h_0$ is a nonparametric regression function, and provide a general analysis of the estimation and inference for linear submanifold integrals \eqref{eq:SM_Int}, nonlinear submanifold integrals \eqref{eq:SM_nonlinear} whose first-order linear approximation takes the form of \eqref{eq:SM_Int}, as well as integrals on upper contour set of $h_0$ \eqref{eq:WFB}. We present additional results when $h_0$ is a nonparametric density function and a nonparametric instrumental variables regression in the Appendix.

\section{\label{sec:Minimax-LB}Minimax Rate of Estimation: Lower Bound}

We first establish minimax lower bound rates for estimation of linear ($L(h_0)$) and nonlinear ($\G(h_0)$) submanifold integrals when $h_0$ is a nonparametric regression $\E[Y|X]$ or a nonparametric density of $X$ in Section \ref{subsec:Minimax_Reg-LB}, and establish similar lower bound results when $h_0$ is an NPIV structural function in Subsection \ref{subsec:Minimax_NPIV-LB}.

Throughout this paper we assume that $h_0$ belongs to H\"{o}lder class of functions. We now recall the definition of H\"{o}lder class of functions. Let $\mathcal{X}=\mathcal{X}_{1}\times...\times\mathcal{X}_{d}$
be the Cartesian product of compact intervals $\mathcal{X}_{1},\dots,\mathcal{X}_{d}$,
say, ${\cal X}=\left[0,1\right]^{d}$ for simplicity. A real-valued
function $h$ on $\mathcal{X}$ is said to satisfy a H\"{o}lder  condition
with exponent $\gamma\in(0,1]$ if there is a positive number $c$
such that $\left|h(x)-h(y)\right|\leq c\norm{x-y}^{\gamma}$ for all
$x,y\in\mathcal{X}$; here $\norm{x-y}=\bigl(\sum_{l=1}^{d}x_{l}^{2}\bigr)^{1/2}$
is the Euclidean norm of $x=\left(x_{1},\dots,x_{d}\right)\in\mathcal{X}$.
Given a $d$-tuple $\alpha=\left(\alpha_{1},\dots,\alpha_{d}\right)$
of nonnegative integers, set $\left[\alpha\right]=\alpha_{1}+\dots+\alpha_{d}$
and let $\Dif^{\alpha}$ denote the differential operator defined by
\[
  \Dif^{\alpha}:=\Dif_x^{\alpha}=\frac{\partial^{\left[\alpha\right]}}{\partial x_{1}^{\alpha_{1}}\dots\partial x_{d}^{\alpha_{d}}}.
\]
Let $\lfloor s\rfloor$ be a nonnegative integer that is smaller than
$s$, and set $s=\lfloor s\rfloor+\gamma$ for some $\g\in(0,1]$.
A real-valued function $h$ on $\mathcal{X}$ is said to be $s$-$smooth$
if it is $\lfloor s\rfloor$ times continuously differentiable on
$\mathcal{X}$ and $\Dif^{\alpha}h$ satisfies a H\"{o}lder  condition with
exponent $\gamma$ for all $\alpha$ with $\left[\alpha\right]=\lfloor s\rfloor$.
Denote the class of all $s$-smooth real-valued functions on $\mathcal{X}$
by $\Lambda^{s}(\mathcal{X})$ (called a H\"{o}lder  class), and the space
of all $\lfloor s\rfloor$-times continuously differentiable real-valued
functions on $\mathcal{X}$ by $C^{\lfloor s\rfloor}(\mathcal{X})$.
Define a H\"{o}lder ball with smoothness $s=\lfloor s\rfloor+\gamma$
as
\[
  \Lambda_{c}^{s}\left(\mathcal{X}\right)=\left\{ h\in C^{\lfloor s\rfloor}(\mathcal{X}):\sup_{[\alpha]\leq\lfloor s\rfloor}\sup_{x\in\mathcal{X}}\left|\Dif^{\alpha}h(x)\right|\leq c,\sup_{[\alpha]=\lfloor s\rfloor}\sup_{x,y\in\mathcal{X},x\neq y}\frac{\left|\Dif^{\alpha}h\left(x\right)-\Dif^{\alpha}h\left(y\right)\right|}{\norm{x-y}^{\gamma}}\leq c\right\} .
\]

\begin{rem}
  If $h_0\in \Lambda^s$ (for H\"older smooth $s>0$), then $h_0|_{\mathcal{M}}$ is pointwise well-defined and hence the submanifold integrals $L(h_0)$ and $\Gamma(h_0)$ are well-defined. If $h_0$ is only assumed to belong to a Sobolev/Besov space of smoothness $s>0$ but not sup-norm bounded, then one needs to impose a Sobolev/Besov trace condition $s>(d-m)/2$ to ensure that $L(h_0)$ and $\Gamma(h_0)$ are well defined. This is why we assume that $h_0\in  \Lambda^s$ for $s>0$ instead of a Sobolev/Besov space of smoothness $s>(d-m)/2$ in this paper.
\end{rem}

In the proofs of the minimax lower bound rate results, especially in the proofs of lower bounds for nonlinear submanifold integrals with possibly unknown submanifolds that could depend on $h_0$, we assume that the submanifold function $g$ is H\"{o}lder smooth.
\begin{assumption}[Smoothness of Submanifolds]
  \label{assu:Jacob}  $g\in \L^{s_1}({\cal X})$ for some $s_1 \geq \max\{1,s\}$.
\end{assumption}

We wish to stress that the minimax lower bound rates established in our paper are still valid lower bound rates without imposing the extra smoothness Assumption \ref{assu:Jacob}. Nevertheless, Assumption \ref{assu:Jacob} is satisfied in typical economics applications with unknown submanifolds.

\subsection{\label{subsec:Minimax_Reg-LB} Case 1: $h_0$ is a Regression or a Density}

In Subsection \ref{subsubsec:Minimax_Lin-LB}, we first establish a minimax rate lower bound for estimating \emph{linear} integrals on submanifolds. In Subsection \ref{subsubsec:Minimax_NonLin-LB} we then present a general rate lower bound for estimating general  \emph{nonlinear} integrals over submanifolds whose first derivatives take the form of the linear integrals on submanifolds.

\subsubsection{\label{subsubsec:Minimax_Lin-LB}For Linear Functional $\t_0=L(h_0)$}

In this subsection we present the minimax lower bound rates for estimation of $\t_0=L(h_0)$ when $h_{0} (\cdot)$ is the unknown regression function $\E\left[\rest{Y_{i}}X_{i}=\cdot\right]$ (a nonparametric regression) or the unknown density of $X$.

Recall that $L:\Lambda^s \mapsto \R$ is the linear submanifold integral functional
\begin{equation}
  L\left(h\right):=\int_{{\cal M}}h\left(x\right)w\left(x\right)d{\cal H}^{m}\left(x\right),\label{eq:CoreFunc}
\end{equation}
where both the level set function
$g$ (consequently the manifold ${\cal M}$) and the weight function $w$ are assumed to be known.
We first establish lower bounds for the minimax convergence rates for estimating $\t_{0}=L\left(h_{0}\right)$
when $h_{0}$ is assumed to belong to the H\"{o}lder class of functions with smoothness $s>0$.
The established lower bound rates hold for any possible consistent estimators of $L(h_0)$.
%
%
%

We impose the following standard assumptions on the density of the covariates, the known weight function and the regression error term.

\begin{assumption}[Density on ${\cal X}$]
  \label{assu:density_x} ${\cal X}=\left[0,1\right]^{d}$, and the
  density $p_0$ of $X_{i}$ on ${\cal X}$ is uniformly bounded away from
  zero and infinity.
\end{assumption}
\begin{assumption}[Weight Regularity]
  \label{assu:w}
  The weight function $w:\mathcal{X} \mapsto \R$ is uniformly bounded on $\cM$ and not identically zero on $\cM$.
\end{assumption}
\begin{assumption}[Nondegeneracy]
  \label{assu:e2_below} $\inf_{x\in{\cal X}}\E\left[\rest{\e_{i}^{2}}X_{i}=x\right]>0$
  with $\e_{i}:=Y_{i}-h_{0}\left(X_{i}\right)$.
\end{assumption}

\begin{thm}[Lower Bound Rate for $\t_0=L(h_0)$ When $h_0$ is a Regression]
  \label{thm:LB-reg}
  Under Assumptions \ref{assu:RegLevelSet}, \ref{assu:Jacob}, \ref{assu:density_x}, \ref{assu:w} and \ref{assu:e2_below}, the rate
  $r^*_{n}=n^{-\frac{s}{2s+d-m}}$ is the minimax rate lower bound for the estimation of  $\t_{0}\left(P,w\right):=\int_{{\cal M}}h_{0}\left(x\right)w\left(x\right)d{\cal H}^{m}\left(x\right)$, i.e.
  \[
    \liminf_{n\longrightarrow\infty}\inf_{\tilde{\t}}\sup_{P,w}\E_{P}\left[n^{\frac{2s}{2s+d-m}}\left(\tilde{\t}-\t_{0}\left(P,w\right)\right)^{2}\right]\geq c,\quad\text{for some constant }c>0.
  \]
  where $P$ is any joint probability distribution of $\left(X_{i},Y_{i}\right)$
  that satisfies $h_{0}\left(\cd\right):=\E_{P}\left[Y_{i}|X_{i}=\cdot\right]\in\Lambda_{c}^{s}\left(\mathcal{X}\right)$
  along with Assumptions \ref{assu:density_x} and \ref{assu:e2_below}, $w$ is any weight function satisfying Assumption \ref{assu:w},
  and $\tilde{\t}$ is any estimator of $\t_{0}:=\t_{0}\left(P,w\right)$.
\end{thm}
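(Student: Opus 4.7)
The strategy is to apply Le Cam's two-point method. I fix the auxiliary components of the distribution and vary only the regression function between two carefully designed hypotheses, one of which is a ``bump array'' anchored along the submanifold $\cM$. Specifically, take $P_X$ to be uniform on $\mathcal{X} = [0,1]^d$, noise $\e_i \mid X_i \sim N(0,1)$, and $w \equiv 1$. The two candidate regression functions are $h^{(0)} \equiv 0$ and
\[
h^{(1)}(x) := c_n \sum_{j=1}^{N} \phi\!\left(\frac{x - x_j}{h_n}\right),
\]
where $\phi \in C_c^\infty(\R^d)$ is a fixed nonnegative bump supported in the unit ball with $\int \phi > 0$, the centers $\{x_j\}_{j=1}^N \subset \cM$ form an $h_n$-separated packing of a compact piece of $\cM$, and $c_n, h_n \downarrow 0$ are to be calibrated.

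The argument rests on four quantitative facts. \emph{(i) Packing count.} Since $\cM$ is a compact $m$-dimensional submanifold (Assumptions 1--3) with the Jacobian of $g$ bounded away from zero on $\cM$ (Assumption 2), a local straightening of $\cM$ via the implicit function theorem yields a packing with $N \asymp h_n^{-m}$ centers whose bumps have pairwise disjoint supports. \emph{(ii) H\"{o}lder regularity.} Because of disjoint supports, $h^{(1)} \in \Lambda_c^s(\mathcal{X})$ as soon as $c_n \leq C h_n^s$, with $C$ depending only on $\phi$, $c$, and $d$. \emph{(iii) Functional separation.} Straightening $\cM$ near each $x_j$ and applying the area formula with $u = (x - x_j)/h_n$ gives
\[
\int_{\cM} \phi\!\left(\frac{x - x_j}{h_n}\right) d\mathcal{H}^m(x) = h_n^m\bigl(\kappa_\phi + O(h_n)\bigr), \qquad \kappa_\phi := \int_{\R^m} \phi(u,0)\,du > 0,
\]
and hence $|\G(h^{(1)}) - \G(h^{(0)})| \asymp c_n N h_n^m \asymp c_n$. \emph{(iv) KL indistinguishability.} With Gaussian noise and bounded marginal density,
\[
\mathrm{KL}\bigl(P^{(1),n} \,\big\|\, P^{(0),n}\bigr) = \tfrac{n}{2}\,\E_X\!\bigl[\bigl(h^{(1)}(X)\bigr)^2\bigr] \asymp n N c_n^2 h_n^d \asymp n\, c_n^2 h_n^{d-m},
\]
where the second equality uses disjointness of supports to eliminate cross terms.

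Balancing the H\"{o}lder constraint $c_n \asymp h_n^s$ against the indistinguishability constraint $n c_n^2 h_n^{d-m} \lesssim 1$ at equality yields $h_n \asymp n^{-1/(2s+d-m)}$ and $c_n \asymp n^{-s/(2s+d-m)}$. Under this calibration the total variation between $P^{(0),n}$ and $P^{(1),n}$ is bounded strictly below $1$ (by Pinsker), and the two-point Le Cam lemma gives
\[
\inf_{\tilde\t} \max_{P \in \{P^{(0)}, P^{(1)}\}} \E_P\!\left[(\tilde\t - \t_0(P,w))^2\right] \gtrsim \bigl(\G(h^{(1)}) - \G(h^{(0)})\bigr)^2 \asymp c_n^2 \asymp n^{-2s/(2s+d-m)}.
\]
Multiplying by $n^{2s/(2s+d-m)}$ and taking $\liminf$ delivers the theorem.

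The main technical obstacle is the local-straightening step: one must verify both that $N \asymp h_n^{-m}$ disjoint bump neighborhoods can be packed along $\cM$, and that each bump contributes exactly $\asymp c_n h_n^m$ to the submanifold integral. The Jacobian lower bound in Assumption 2 together with the smoothness of $g$ in Assumption 3 ensure that, in a tubular neighborhood of each $x_j$, $\cM$ is a graph over its tangent plane with controlled curvature, so that the $O(h_n)$ correction in the area formula is negligible uniformly in $j$; the remaining steps are standard.
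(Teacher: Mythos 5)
Your proposal is correct, and it follows the same overall strategy as the paper -- a two-point Le Cam argument with Gaussian noise, uniform design, $w\equiv 1$, a KL bound of order $n\,b^{2s+d-m}$, and a functional separation of order $b^{s}$ -- but the alternative hypothesis is constructed differently. The paper perturbs $h_0$ by a single anisotropic ``tube'' function $b_n^{s}K_{d-m}\bigl(g(x)/b_n\bigr)$, which is constant along $\cM$ and localized only in the $d-m$ normal directions: the separation is then immediate (the perturbation equals the constant $b_n^{s}K_{d-m}(\mathbf{0})$ on $\cM$, so no area formula or straightening is needed), and the KL bound reduces to showing $\P\bigl(\lvert g(X_i)\rvert\le b_n\bigr)\lesssim b_n^{d-m}$ via the boundedness of the density of $g(X_i)$ (which is where Assumption 2 enters). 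Your construction instead packs $N\asymp h_n^{-m}$ disjoint isotropic bumps along $\cM$, which requires the extra packing lemma and a per-bump area-formula computation $\int_{\cM}\phi\bigl((x-x_j)/h_n\bigr)\,d\mathcal{H}^m \asymp h_n^{m}$, but arrives at the identical calibration $h_n\asymp n^{-1/(2s+d-m)}$. The trade-off: the paper's tube is shorter and avoids all local geometry in the separation step, but it composes the bump with $g$ and therefore leans on $g\in\Lambda_c^{s}$ (Assumption 3) to keep $h_1$ in the H\"older ball; your bump array verifies membership in $\Lambda_c^{s}$ trivially from disjoint supports and needs only enough regularity of $\cM$ to control the $O(h_n)$ curvature correction, and it is the construction that would generalize most directly to a Fano/Assouad-type multi-hypothesis argument. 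Two small points to tidy up: (i) in step (iii) you should either take $\phi$ radially symmetric or insert the rotation aligning the tangent space $T_{x_j}\cM$ with the first $m$ coordinates before writing $\kappa_\phi=\int_{\R^m}\phi(u,0)\,du$, so that $\kappa_\phi$ is the same strictly positive constant for every $j$; and (ii) note explicitly that all $N$ contributions to $\G(h^{(1)})-\G(h^{(0)})$ are nonnegative (since $\phi\ge 0$ and $w\equiv 1$), so they cannot cancel -- this is what lets you sum the per-bump lower bounds.
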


Note that the lower bound rate $r^*_{n}=n^{-\frac{s}{2s+d-m}}$ reproduces
several well-known results in the literature as special cases:
When $m=d$, the lower bound rate becomes $r^*_{n}=n^{-\frac{1}{2}}$,
reproducing the standard parametric convergence rate for regular (full-dimensional)
integral functionals.\footnote{When $m=d$, the Hausdorff measure ${\cal H}^{d}$ coincides with Lebesgue
measure in $\R^{d}$.} When $m=0$, the lower bound becomes $r^*_{n}=n^{-\frac{s}{2s+d}}$,
reproducing the well-known \citep{stone1980optimal} minimax optimal
rate for point evaluation functionals of the $d$-dimensional nonparametric
regression.\footnote{When $m=0$, the Hausdorff measure ${\cal H}^{0}$ becomes a point
counting measure.}

Theorem \ref{thm:LB-reg} can be easily adapted to establish a minimax lower bound for $L(h_0)$ for other nonparametric object $h_0$ such as nonparametric density function of $X$ or nonparametric quantile regression of $Y$ on $X$. The following result shows that $r^*_n$ is also a minimax lower bound rate when $h_0$ is a nonparametric density function of $X$.

\begin{cor}[Lower Bound Rate for $\t_0=L(h_0)$ When $h_0$ is a Density]\label{cor:density} Under Assumptions   \ref{assu:RegLevelSet}, \ref{assu:density_x} and \ref{assu:w}, the rate
  $r^*_{n}=n^{-\frac{s}{2s+d-m}}$ is the minimax rate lower bound for the estimation of  $\t_{0}\left(P,w\right):=\int_{{\cal M}}h_{0}\left(x\right)w\left(x\right)d{\cal H}^{m}\left(x\right)$, i.e.
  \[
    \liminf_{n\longrightarrow\infty}\inf_{\tilde{\t}_n}\sup_{P,w}\E_{P}\left[n^{\frac{2s}{2s+d-m}}\left(\tilde{\t}-\t_{0}\left(P,w\right)\right)^{2}\right]\geq c,\quad\text{for some constant }c>0.
  \]
  where $P$ is any probability distribution of $X_i$ with its density satisfying
  $h_0\in\Lambda_{c}^{s}\left(\mathcal{X}\right)$ and Assumption \ref{assu:density_x},
  $w$ is any uniformly bounded weight function,
  and $\tilde{\t}_n$ is any estimator of $\t_{0}:=\t_{0}\left(P,w\right)$.
\end{cor}

\subsubsection{For Nonlinear Functional $\t_0=\G(h_0)$} \label{subsubsec:Minimax_NonLin-LB}

In the previous subsection we focus on the linear integral functionals over a known submanifold, but the key minimax rate $r^*_n = n^{-\frac{s}{2s+d-m}}$ continues to be relevant for a general class of nonlinear submanifold integral functionals.

Specifically, we consider the class of nonlinear integral functionals $\G\left(h_{0}\right)$ whose pathwise derivative takes the form of a linear submanifold integral of form \eqref{eq:CoreFunc}.
\begin{assumption}[Linearization]
  \label{assu:DGamma_linear} (i) $\G:\H\to\R$ is pathwise differentiable with
  \begin{equation}
    D\G\left(h_{0}\right)\left[v\right] :=\rest{\frac{\p}{\p t} \G(h_0+tv)}_{t=0} =\int_{\left\{ x:g\left(x\right)={\bf 0}\right\} }v\left(x\right)\ol{w}\left(x\right){\cal H}^{m}\left(x\right),\ \forall v\in{\cal V}:=\H-\left\{ h_{0}\right\} ,\label{eq:PathD}
  \end{equation}
  for some uniformly bounded function $\ol{w}$ with $\int_{\{x:g(x)={\bf 0}\}}\ol{w}^2(x)d\mathcal{H}^m(x)\in(0,\infty)$,  some level set function $g:{\cal X}\to\R_{d-m}$ satisfying Assumption \ref{assu:RegLevelSet} uniformly in $h_0\in\Lambda^s_c$, where both $\ol{w}$ and $g$ may depend on the true (and unknown) $h_{0}$;
  \newline
  (ii) There is a constant $0<C_r<1$ such that for any $u\in\Lambda^s_c$ with small $\norm{u}_{\infty}$,
  \[
    \abs{\Gamma(h_0+u)-\Gamma(h_0)-D\Gamma(h_0)[u]}\le C_r\abs{D\Gamma(h_0)[u]}.
  \]
\end{assumption}

Assumption \ref{assu:DGamma_linear}(ii) is used to prevent the linearization remainder from \emph{canceling} the first-order separation in the lower bound proof. It is easily satisfied by a Taylor series expansion to second order. It is also satisfied if $\G$ is Fr\'echet differentiable at $h_0$ with respect to the sup-norm, i.e., there exists a function $\omega:(0,\infty)\to(0,\infty)$ with $\omega(t)\to 0$ as $t\downarrow 0$
such that for all $u\in\Lambda^s_c$ with $\norm{u}_\infty\le t$,
\[
  \abs{\Gamma(h_0+u)-\Gamma(h_0)-D\Gamma(h_0)[u]}
  \;\le\; \omega(t)\,\norm{u}_\infty.
\]

Assumption \ref{assu:DGamma_linear} for nonlinear functional $\G(h_0)$ replaces the role of Assumption \ref{assu:w} for the linear functional $L(h_0)$. We emphasize that the level set function $g$ (and thus the submanifold) is \emph{not} necessarily known in Assumption \ref{assu:DGamma_linear}. For instance, Examples 3-6 in Section \ref{subsec:Examples} feature unknown and estimated submanifolds. Also, the upper contour integral $V(h_0)$ of the form \eqref{eq:WFB} and the surface integral $S(h_0)$ of the form \eqref{eq:S} are examples in which $g=h_0$ (nonparametrically defined).

Under Assumption \ref{assu:DGamma_linear}, the minimax rate lower bound established in Theorem \ref{thm:LB-reg} for linear submanifold integrals continues to be valid for nonlinear submanifold functionals $\G$ as well. This is because the pathwise derivative $D\G(h_0)[\cd]$ is a linear submanifold integral with minimax convergence rate given by $r^*_n$ in Theorem \ref{thm:LB-reg}, and the linearization remainder can only (potentially) slow down the rate. We summarize this observation in the following Corollary.

\begin{cor}[Lower Bound Rate for $\t_0=\G(h_0)$ When $h_0$ is a Regression or a Density]\label{cor:NL_rate_LB}
  Let $\Gamma$ be a potentially nonlinear functional satisfying Assumption \ref{assu:DGamma_linear}, and $h_0$ be a nonparametric regression or a density of $X$. Let Assumptions \ref{assu:Jacob}, \ref{assu:density_x}, and \ref{assu:e2_below} (when $h_0$ is a regression) hold.
  Then, a minimax rate lower bound for estimating $\t_0:=\Gamma(h_0)$ is given by $r^*_n = n^{-s/(2s + d - m)}$, i.e.,
  \[
    \liminf_{n\longrightarrow\infty}\inf_{\tilde{\t}}\sup_{P}\E_{P}\left[n^{\frac{2s}{2s+d-m}}\left(\tilde{\t}-\t_{0}\left(P\right)\right)^{2}\right]\geq c,\quad\text{for some constant }c>0.
  \]
\end{cor}

When $m=d-1$, the lower bound in Corollary \ref{cor:NL_rate_LB} becomes $r^*_{n}=n^{-\frac{s}{2s+1}}$ for nonlinear submanifold integrals. This could be viewed as an extension of \citet{horowitz1993et}'s result on the minimax lower bound rate for smoothed maximum score estimation.

\subsection{Case 2: When $h_0$ is an NPIV Structural Function} \label{subsec:Minimax_NPIV-LB}

\subsubsection{For Linear Functional $\t_0=L(h_0)$}

We now consider the rate lower bound for estimating the linear $\t_0=L(h_0)$ when
$h_0$ is the structural function in the nonparametric instrumental variable (NPIV) model:
\begin{align}
  Y_i = h_0(X_i) + \epsilon_i, \quad \E[\epsilon_i | Z_i] = 0,\label{eq:npreg}
\end{align}
where $X_i \in \mathcal{X} \subset \R^d$ are endogenous regressors and $Z_i \in \mathcal{Z} \subset \R^{d_z}$ are instruments.

Let $T: L^2(\mathcal{X}) \rightarrow L^2(\mathcal{Z})$ be the conditional expectation operator $(Th)(z)=\E[h(X)|Z=z]$. We now introduce some basic conditions on the NPIV model (\ref{eq:npreg}).

\begin{assumption} \label{a-NPIV-data}
  (i) $X_i$ has compact rectangular support $\mathcal{X} \subset \R^d$ with nonempty interior and the density of $X_i$ is uniformly bounded away from $0$ and $\infty$ on $\mathcal{X}$; (ii) $Z_i$ has compact rectangular support $\mathcal{Z} \subset \R^{d_z}$ and the density of $Z_i$ is uniformly bounded away from $0$ and $\infty$ on $\mathcal{Z}$; (iii) $\inf_{z\in \mathcal{Z}} \E[\epsilon_i^2|Z_i = z] \geq \underline \sigma^2 > 0$ with $\epsilon_i:= Y_i - h_0(X_i)$ satisfying NPIV model \eqref{eq:npreg}; (iv) $h_0 \in \Lambda^s_c$.
\end{assumption}

\begin{assumption} \label{a-NPIV-ID}
  For any $h_1,h_2\in \Lambda_c^s$, $Th_1 = Th_2 \ \text{in } L^2(\mathcal{Z})$ implies $L(h_1)=L(h_2)$.
\end{assumption}

To establish a lower bound, we require a \emph{link condition} that relates smoothness of $T$ to the parameter space for $h_0 \in \Lambda^s_c$. For simplicity we use the condition similar to that used in \cite{chen2011rate} and \cite{chen2018optimal}. Let $\{\psi_{\tilde{\jmath},k,\tilde{G}}\}$ denote a tensor-product CDV wavelet basis for  $L^2(\mathcal{X})\cap L^{\infty}(\mathcal{X})$ of regularity strictly larger than $s$. See, e.g., Appendix A of \cite{chen2018optimal} for details on the construction and properties of this basis.

\begin{assumption}[NPIV Link Condition]
  \label{ass:link}
  There is a positive decreasing function $\nu:(0, \infty)\to(0, \infty)$ such that
  \begin{equation}
    \label{eq:link-condition}
    \|T h\|_{L^2(\mathcal{Z})}^2
    \;\lesssim\;
    \sum_{\tilde{\jmath},\tilde{G},k}
    \nu(2^{\tilde{\jmath}})^2\,
    \bigl\langle h, \psi_{\tilde{\jmath},k,\tilde{G}}\bigr\rangle_{L^2(\mathcal{X})}^2,
    \qquad
    \forall h \in  \Lambda_c^s(\mathcal{X}).
  \end{equation}
\end{assumption}

In the NPIV literature (see \cite{HallHorowitz} and \cite{chen2011rate}), the mildly ill-posed case corresponds to $\nu(t) = t^{-\varsigma}$, roughly saying that the  operator $T$ converts $s$-smooth functions of $X$ into $(\varsigma+s)$-smooth functions of $Z$. The severely ill-posed case, which corresponds to choosing $\nu(t) = \exp(-\frac{1}{2}t^{\varsigma})$ and roughly says that $T$ maps smooth functions of $X$ into ``supersmooth'' functions of $Z$.

\begin{thm}[Lower Bound Rate for $\t_0=L(h_0)$ when $h_0$ is a NPIV]\label{thm:LB_NPIV}
  Let Assumptions \ref{assu:RegLevelSet}, \ref{a-NPIV-data}, \ref{a-NPIV-ID} and \ref{ass:link} hold for the NPIV model. We have
  \begin{align*}
    \liminf_{n \to \infty}  \inf_{\tilde{\theta}_n} \sup_{h \in \Lambda_c^s(\mathcal{X})} \E_{P_{h}} \left[r^{-2}_{NPIV,n} (\tilde{\theta}_n - L(h))^2 \right] \gtrsim  C>0
  \end{align*}
  where
  \begin{equation}
    r_{NPIV,n}  :=
    \begin{cases}
      n^{-\frac{s}{2(s+\varsigma) + d - m} }&  \text{ if mildly ill-posed,} \\
      (\log n)^{-\frac{s}{\varsigma}} &   \text{ if severely ill-posed,}
    \end{cases}
  \end{equation}
  and $\inf_{\tilde{\theta}_n}$ denotes the infimum over all estimators of $L(h)$ based on the sample of size $n$, $\sup_{h \in \Lambda_c^s(\mathcal{X})} \E_{P_{h}}$ denotes the sup over $h \in \Lambda_c^s(\mathcal{X})$  and probability distributions of $(X_i,Z_i,u_i)$ that satisfy Assumptions \ref{a-NPIV-data} and \ref{ass:link} with fixed $\nu$, and the finite constant $C>0$ does not depend on $n$.
\end{thm}

According to Theorem \ref{thm:LB_NPIV}, for severely ill-posed NPIV problems, the minimax lower bound for estimating linear submanifold integral $L(h_0)$ coincides with those in $L^2(X)$ (\cite{chen2011rate}) and in sup-norm (\cite{chen2018optimal}) estimation of the NPIV function $h_0$ itself, which is also the same as estimating a NPIV function in $L^2(X)$ estimation of a NPIV function with $(d-m)$ dimensional endogenous regressors.
For mildly ill-posed NPIV problems, the minimax lower bound for estimating $L(h_0)$ coincides with those in $L^2(X)$ estimation of a NPIV function with $(d-m)$ dimensional endogenous regressors.

\begin{rem}[Bounded completeness is not needed for the minimax lower bound for estimating $L(h_0)$]
  \label{rem:BC_not_needed_LB} Note that Assumption \ref{a-NPIV-ID} is equivalent to
  \[
    L(h)=0\  \text{for all } h\in \ker(T)\cap \Lambda_c^s(\mathcal{X}),
    \  \text{with~} \ker(T):=\{h\in L^2(\mathcal{X}): Th=0 \text{ in } L^2(\mathcal{Z})\},
  \]
  which is \emph{strictly weaker} than bounded completeness
  of $T$. Indeed, bounded completeness implies $\ker(T)=\{0\}$ and hence implies Assumption \ref{a-NPIV-ID},
  but the proof of Theorem \ref{thm:LB_NPIV} does not use injectivity of $T$:
  it only constructs two alternatives $h_0,h_1$ such that (a) the induced distributions are close in
  Kullback--Leibler distance (through small $\|T(h_1-h_0)\|_{L^2(\mathcal{Z})}$), while (b)
  $|L(h_1)-L(h_0)|$ is separated at the stated rate.

  Some identification is nevertheless needed to make estimation of $L(h_0)$ nontrivial.
  If Assumption \ref{a-NPIV-ID} fails, then there exists $h^\dagger\in \ker(T)\cap \Lambda_c^s(\mathcal{X})$ with
  $L(h^\dagger)\neq 0$. For any $h_0\in\Lambda_c^s(\mathcal{X})$ and sufficiently small $t\neq 0$
  such that $h_1:=h_0+t h^\dagger\in\Lambda_c^s(\mathcal{X})$, we have $Th_1=Th_0$ and hence the
  models indexed by $h_0$ and $h_1$ are observationally indistinguishable, but
  $L(h_1)-L(h_0)=t\,L(h^\dagger)\neq 0$. Hence, for any estimator $\tilde\theta_n$,
  \[
    \sup_{h\in\{h_0,h_1\}} \E_{P_h}\!\left[(\tilde\theta_n-L(h))^2\right]
    \;\ge\; \frac{(L(h_1)-L(h_0))^2}{4}
    \;=\; \frac{t^2\,L(h^\dagger)^2}{4},
  \]
  so no consistent estimation of $L(h_0)$ is possible. Thus,
  Assumption \ref{a-NPIV-ID} is a
  natural minimal identification condition for $L(h_0)$, whereas bounded
  completeness of $T$ is unnecessary for establishing the lower bound rate in Theorem \ref{thm:LB_NPIV}.
\end{rem}

\subsubsection{For Nonlinear Functional $\t_0=\G(h_0)$}

\begin{cor}[Lower Bound Rate for $\t_0=\G(h_0)$ when $h_0$ is a NPIV]\label{cor:NL_NPIV_LB}
  Let $\Gamma$ be a potentially nonlinear functional satisfying Assumption \ref{assu:DGamma_linear}. Let Assumptions \ref{a-NPIV-data}, \ref{a-NPIV-ID} (for $D\G(h_0)[\cdot]$) and \ref{ass:link} hold for the NPIV model. We have
  \begin{align*}
    \liminf_{n \to \infty}  \inf_{\tilde{\theta}_n} \sup_{h \in \Lambda_c^s(\mathcal{X})} \E_{P_{h}} \left[r^{-2}_{NPIV,n} (\tilde{\theta}_n - \G(h))^2 \right] \gtrsim  C>0
  \end{align*}
  where $r_{NPIV,n} $ is the same minimax lower bound rate given in Theorem \ref{thm:LB_NPIV}.
\end{cor}

We note that Assumption \ref{assu:DGamma_linear}(ii) and Assumption \ref{a-NPIV-ID} (for $D\G(h_0)[\cdot]$) together implies local identification of $\G(h_0)$ only. We could impose a global identification assumption of $\G(h_0)$ as follows:
\begin{assumption}[Nonlinear analogue of Assumption \ref{a-NPIV-ID}: global reduced-form identification]
  For any $h_1,h_2\in\Lambda^s_c$, if $T h_1 = T h_2$ in $L^2(P_Z)$ then $\Gamma(h_1)=\Gamma(h_2)$.
  \label{ass:NPIV-ID-global}
\end{assumption}
Assumption~\ref{ass:NPIV-ID-global} states that $\Gamma:\Lambda^s_c \mapsto \R$ is constant on each observational equivalence class
$[h]=h+\ker(T):=\{h'\in L^2(\mathcal{X})\cap \Lambda^s_c: Th'=Th \text{ in } L^2(\mathcal{Z})\}$, so $\Gamma(h)$ is point-identified from the reduced-form regression model. This Assumption \ref{ass:NPIV-ID-global} is automatically satisfied if the conditional expectation operator $T$ is bounded complete. It is interesting to note that, for the minimax lower bound rate calculation, it suffices to have the local identification of $\G(h_0)$ in a sup-norm neighborhood of $h_0\in\Lambda^s_c$.


\section{\label{sec:Minimax-UB}Rate-Optimal Estimation}

In this section we present simple sieve estimators that achieve the minimax lower bound rate of $r^*_{n}=n^{-\frac{s}{2s+d-m}}$ for $\t_0 =L(h_0),~\G(h_0),~V(h_0)$ when $h_0 \in\Lambda^s$ is a nonparametric regression function.\footnote{We could also present additional estimators to achieve the minimax lower bound rates for $\t_0$ when $h_0$ is a nonparametric density of $X$ or a nonparametric instrumental variables regression. We skip those due to the lack of space.} We impose the following regularity condition on a nonparametric regression model in this and the next sections.

\begin{assumption}[Bounded Variance of Errors]
  \label{assu:e2_moment} $\sup_{x\in{\cal X}}\E\left[\rest{\e_{i}^{2}}X_{i}=x\right]<\infty$.
  %
\end{assumption}

For any $h\in\Lambda^s$ we let  $\norm{h}_{\infty}:=\sup_{x\in\mathcal{X}} |h(x)|$ denote the sup-norm and $\norm{h}_2:=\sqrt{\E[h(X)^2]}$ denote the $L_2(X)$-norm. Here $L_2(X):=L_2 (P_X )$ denote the space of square integrable functions, which is a Hilbert space under the $L_2(X)$-norm with inner product $\inprod{h_1,h_2}_2\equiv\E\left[h_1\left(X_{i}\right)h_2\left(X_{i}\right)\right]$.

\subsection{\label{subsec:Minimax_Lin-UB}Minimax Rate-Optimal Estimation of $L(h_0)$}

We show that the lower bound $r^*_{n}=n^{-\frac{s}{2s+d-m}}$ established in Theorem \ref{thm:LB-reg} for
$\t_0=L(h_0)$ can be attained by the following plug-in sieve estimator:
\begin{equation}
  \hat{\t}:=L\left(\hat{h}\right)=\int_{{\cal M}}\hat{h}\left(x\right)w\left(x\right)d{\cal H}^{m}\left(x\right).\label{eq:IntLevel_hhat}
\end{equation}
Hence we conclude that the rate $r^*_n$ is minimax rate optimal, and plug-in estimator with sieve first stage attains this optimal rate.\footnote{In Appendix \ref{subsec:Kernel}, we show that this optimal rate is also attained by a plug-in kernel estimator.}

For the sake of concreteness, let ${\H}_{K}$ denote the closed linear span (clsp) in $L_2(X)$-norm of a sieve basis functions $b^{K}(x):=\left(b_{1}^{K}(x),...,b_{K}^{K}(x)\right)^{'}$. By default we use a tensor product basis (see, e.g., \cite{chen2007sieve}): $b^{K}(x)=\text{vec}\left(\otimes_{\ell=1}^{d}\left(b_{1}^{K}\left(x_{\ell}\right),...,b_{J_{n}}^{K}\left(x_{\ell}\right)\right)\right)$ so $K=J^d$. The sieve dimension is chosen to increase with $n$, with $J\equiv J_{n}\upto\infty$ and thus $K \equiv K_{n}\upto\infty$ as $n\to\infty$, though we will suppress
the subscript $n$ in $J$ and $K$ for notational simplicity. Let $G:=\E\left[b^{K}\left(X_{i}\right)b^{K}\left(X_{i}\right)^{'}\right]$ denote the (population) Gram matrix with its minimum eigenvalue $\lambda_{min} (G)>0$. Let $\ol b^{K}\left(x\right):=G^{-1/2} b^K(x)$ denote the orthonormalized vector of basis functions of $b^K$, then ${\H}_{K}=clsp\left (b^K\right )=clsp \left(\ol b^K \right)$. The sieve LS estimator $\hat{h}$ for $h_0=\E[Y|X]$ is given by
\begin{equation}\label{eq:seriesLS}
  \hat{h}\left(x\right):=b^{K_{n}}\left(x\right)^{'}\left[\hat{G}\right]^{-1}\frac{1}{n}\sum_{i=1}^{n}  b^{K_{n}}\left(X_{i}\right)Y_{i}=\ol b^{K_{n}}\left(x\right)^{'}\left[{\ol G}_n\right]^{-1}\frac{1}{n}\sum_{i=1}^{n} \ol b^{K_{n}}\left(X_{i}\right)Y_{i},
\end{equation}
where $\hat{G}:=\frac{1}{n}\sum_{i=1}^{n} b^{K_{n}}\left(X_{i}\right) b^{K_{n}}\left(X_{i}\right)^{'}$ and ${\ol G}_n:=\frac{1}{n}\sum_{i=1}^{n} \ol b^{K_{n}}\left(X_{i}\right) \ol b^{K_{n}}\left(X_{i}\right)^{'}$ denote the empirical Gram matrices. Let
\[
  h_{0,n}:=\arg\min_{h\in \H_{K_{n}}}\norm{h-h_0}_2 = \ol b^{K}\left(x\right)^{'}\E[\ol b^{K}\left(X_i\right)h_0(X_i)]= \ol b^{K}\left(x\right)^{'}\E[\ol b^{K}\left(X_i\right)Y_i]
\]
be the population LS projection of $h_0$ onto the sieve space ${\H}_{K}=clsp \left(\ol b^K \right)$. Following \citet*{chen2014sieve}, we define a \textit{sieve Riesz representer} $v_{K_n}^{*}\in{\H}_{K}$ as follows:\footnote{Since the minimax lower bound rate $r^*_n$ for estimating $\t_0=L(h_0)$ is slower than $n^{-1/2}$, there does not exist a Riesz representer for $\t_0=L(h_0)$ in $L^2(X)$. Following \citet*{chen2014sieve}, a sieve Riesz representer for a linear functional is always well-defined and has a closed-form expression in any finite dimensional linear sieve space ${\cal V}_{K_n}$. The sieve Riesz representer is not unique as it depends on the choice of the linear sieve space, and is used for the variance characterization for $L(\hat{h})$ as well as the sieve influence function. Since we are presenting a linear sieve plug-in estimator $L(\hat{h})$ for $\t_0=L(h_0)$ in this Section, we simply use the same linear sieve space ${\cal V}_{K_n}={\H}_{K_n}$ for our sieve variance characterization.}
\begin{equation}\label{eq:sieve_riesz}
  L\left[\nu\right]:=\rest{\frac{dL\left(h_0 +t\nu\right)}{dt}}_{t=0}=\inprod{v_{K_n}^{*},\nu}_2\equiv\E\left[v_{K_{n}}^{*}\left(X_{i}\right)\nu\left(X_{i}\right)\right],~~\text{ for all }~\nu\in {\H}_{K_n}.
\end{equation}
Moreover, it can be solved in closed form as:
\begin{align}
  v_{K_{n}}^{*}\left(\cd\right) & = b^{K_{n}}\left(\cd\right)^{'}G^{-1}L\left[ b^{K_{n}}\right]=\ol b^{K_{n}}\left(\cd\right)^{'}L\left[\ol b^{K_{n}}\right],\label{Sieve-Riesz}
\end{align}
where $L\left[\ol b^{K_{n}}\right]:=\left(L\left[\ol b_1^{K_{n}}\right],..., L\left[\ol b_{K_n}^{K_{n}}\right]\right)^{'}$. Define the sieve variance as
\begin{equation}
  \norm{v^*_{K_n}}_{sd}^2:=\E\left[\left[ v^*_{K_n}(X_i)(Y_i-h_0(X_i))\right]^2\right]\equiv \E\left[\left[ v^*_{K_n}(X_i)\e_i\right]^2\right].\label{Sieve-sd}
\end{equation}

The following lemma establishes the rates at which $\norm{v_{K_{n}}^{*}}_{2}$ and $\norm{v_{K_{n}}^{*}}_{sd}$ grow to infinity. Its proof utilizes the decomposition \eqref{eq:PieceLebInt}  in Appendix \ref{subsec:Decomp}, which converts the $m$-dimensional Hausdorff integral to a finite sum of Lebesgue integrals on $\R^{m}$.

\begin{lem}[Growth Rate of Sieve Riesz Representer for $L(h_0)$]
  \label{lem:NormRate_d-m}Under Assumptions \ref{assu:RegLevelSet}, \ref{assu:density_x}, \ref{assu:w} and $\lambda_{min} (G) >0$ for each $K$, we have:
  \begin{equation}
    \norm{v_{K_{n}}^{*}}_{2}^{2}=\norm{L\left[\ol b^{K_{n}}\right]}^{2}\equiv\sum_{k=1}^{K_{n}}\left(L\left[\ol b_{k}^{K_{n}}\right]\right)^2\asymp K_{n}^{\frac{d-m}{d}}=J_{n}^{d-m}.\label{eq:NormRate}
  \end{equation}
  Further, under Assumption \ref{assu:e2_below} and \ref{assu:e2_moment}, we have:
  \[
    \norm{v^*_{K_n}}_{sd}^2\asymp\norm{v_{K_{n}}^{*}}_{2}^{2}\asymp K_{n}^{\frac{d-m}{d}}.
  \]
\end{lem}

\noindent Lemma \ref{lem:NormRate_d-m} is the core result that demonstrates the rate acceleration provided by
the submanifold integral. It shows that the growth rates of the norm
of the sieve Riesz representer and of the sieve variance are asymptotically proportional to $J^{(d-m)}$,
where the exponent $(d-m)$ is the codimension of the $m$-dimensional
manifold, but, importantly, not $d$, the dimension of the ambient
space $\R^{d}$ in which the manifold is embedded. In other words,
even though the dimensionality of the first-stage estimation
of $h_{0}$ is $d$, it is reduced by the integration over the $m$-dimensional
submanifold.

For any $h\in \Lambda^s(\mathcal{X})$, we let $P_{K,n}h$ be the empirical LS projection of $h$ onto the sieve space ${\H}_{K}$, which is given by
$P_{K,n}h:= b^{K}\left(x\right)^{\prime}\hat{G}^{-1}\frac{1}{n}\sum_{i=1}^{n} b^{K}\left(X_{i}\right)h\left(X_{i}\right)$
with the sup operator norm defined by
$\norm{P_{K,n}}_{\infty}:=\sup_{h:0<\norm h_{\infty}<\infty}\frac{\norm{P_{K,n}h}_{\infty}}{\norm h_{\infty}}$.
Let $\zeta_{K}:=\sup_{x\in{\cal X}}\norm{ b^{K}\left(x\right)}$. We next impose some mild conditions on the linear sieve $b^{K}$ that are satisfied by some commonly used sieve bases such as B-splines and wavelets; see, e.g., \cite{chen2015optimal}.

\begin{assumption}[Conditions on Linear Sieves]
  \label{assu:Cond_Sieve} The linear sieve $b^{K}$ satisfies
  (i) $\lambda_{min} (G)\geq const. >0$ uniformly in $K$, (ii) $\zeta_{K}=O\left(\sqrt{K}\right)$, (iii) $K\log K/n=o(1)$,
  (iv) $\norm{P_{K,n}}_{\infty}=O_{p}\left(1\right)$,
  and (v) $\inf_{h\in {\H}_{K}}\norm{h-h_{0}}_{\infty}\leq K^{-s/d}$
  for $h_{0}\in\Lambda^{s}\left(\mathcal{X}\right)$.
\end{assumption}

\begin{thm}[Convergence Rate under Sieve First Stage]
  \label{thm:Rate_Sieve} Let Assumptions \ref{assu:RegLevelSet}, \ref{assu:density_x}, \ref{assu:w}, \ref{assu:e2_below}, \ref{assu:e2_moment} and \ref{assu:Cond_Sieve}(i)(ii)(iii) hold. Then: (1)
  \[
    \frac{\sqrt{n}L\left(\hat{h}-P_{K_n,n}h_0\right)}{\norm{v_{K_{n}}^{*}}_{sd}} = \frac{1}{\sqrt{n}}\sum_{i=1}^n \frac{v_{K_n}^*(X_i)}{\norm{v_{K_{n}}^{*}}_{sd}}\e_i +o_p(1)=O_p (1),~~~\text{with}~~~\norm{v^*_{K_n}}_{sd}\asymp \sqrt{K_{n}^{\frac{d-m}{d}}}.
  \]
  (2) In addition, if Assumption \ref{assu:Cond_Sieve}(iv)(v) holds, then:
  \[
    \hat{\t}-\t_{0}\equiv L\left(\hat{h}-h_{0}\right)=O_{p}\left(\sqrt{K_{n}^{\frac{d-m}{d}}/n}+K_{n}^{-s/d}\right).
  \]
  Further, if $s>m/2$, then by setting $K_{n}^{*}\asymp n^{\frac{d}{2s+d-m}}$, we obtain
  \[
    \hat{\t}-\t_{0}=O_{p}\left(n^{-\frac{s}{2s+d-m}}\right)=O_{p}\left(r^*_{n}\right),
  \]
  which attains the lower bound rate $r^*_{n}$ in Theorem \ref{thm:LB-reg}
  and is thus minimax rate-optimal.
\end{thm}

Theorems \ref{thm:LB-reg} and \ref{thm:Rate_Sieve} together demonstrate that the problem of estimating linear integrals over an $m$-dimensional submanifold is akin to the $(d-m)$-dimensional nonparametric regression problem in terms of convergence rates. Intuitively, the integration over the $m$-dimensional submanifold effectively ``aggregates out'' those $m$ dimensions, leaving only $d-m$ effective dimensions in the nonparametric estimation problem.

In many economic applications as discussed in Section \ref{sec:Examples}, the level set function $g$ is scalar-valued. Correspondingly the level set submanifold is of dimension $m=d-1$ and co-dimension $d-m = 1$. Here all but one dimensions are ``aggregated out'' and the minimax optimal convergence rate becomes $L(h_0)$ is $n^{-s/(2s+1)}$,  the same rate as a 1-dimensional nonparametric regression problem. This illustrates the (potentially) significant ``dimension reduction'' achieved through integration.

%

\begin{rem}
  In the proof of Theorem \ref{thm:LB-reg} on the lower bound $r^*_n$ fo $L(h_0)$, it suffices to assume that the density $p_0$ of $X\in \mathcal{X}\subset \R^d$ is known, and also that level set function is smooth $g\in \L^{s_1}({\cal X})$ for some $s_1 \geq \max\{1,s\}$ so that Assumption \ref{assu:RegLevelSet} is satisfied. Theorem \ref{thm:Rate_Sieve} provides a feasible sieve estimator for $L(h_0)$ that attains the minimax lower bound rate of $r^*_n$ when $p_0$ is unknown and without imposing the extra smoothness condition $g\in \L^{s_1}({\cal X})$. In Appendix Subsection \ref{subsubsec:oracle_known_fx} we present an oracle estimator that uses the extra information of a known density $p_0$ and the extra smooth submanifold condition $g\in \L^{s_1}({\cal X})$. It is interesting that the oracle estimator attains the same lower bound rate of $r^*_n$ as our feasible sieve estimator presented here. This means that in terms of minimax optimal rate of estimating $L(h_0)$, the extra information of knowing/using the marginal density of $X$ as well as the smoothness of submanifold function $g$ does not improve the speed of convergence. This is consistent with the fact that the optimal rate $r^*_n=n^{-s/(2s+d-m)}$ only depends on the smoothness of $h_0$ and the codimension $c:=d-m$ of the submanifold $\mathcal{M}$.
\end{rem}

\subsection{Minimax Rate-Optimal Estimation of $\G(h_0)$}\label{subsec:Nlh-UB}

In general, whether the lower bound rate $r_n^*=n^{-s/(2s + d - m)}$ can be attained by any consistent estimator for the nonlinear integral functional $\G(h_0)$ depends on the specific functional form of $\G$. Here we consider a specific but still quite general class of submanifold integrals of nonlinear transformations of $h_0$. We verify Assumption \ref{assu:DGamma_linear} in this context, and provide lower-level conditions for the attainability of the lower bound rate $r_n^*$, which then implies minimax optimality of the rate $r_n^*$ for the estimation of $\G(h_0)$.
Specifically, consider $\G$ defined by
\begin{equation}
  \G\left(h\right):=\int_{{\cal M}}\phi\left(h\left(x\right),x\right)w\left(x\right)d{\cal H}^{m}\left(x\right),\label{eq:Int_NLh}
\end{equation}
where  $\phi\left(t,x\right)$ is a known nonlinear transformation that is $L$-Lipschitz in $t$ (so that
  its partial derivative $\phi_{1}\left(t,x\right):=\frac{\p}{\p t}\phi\left(t,x\right)$
is well-defined almost everywhere and uniformly bounded by the Lipschitz constant $L$ whenever well-defined). We assume that $\phi_{11}\left(t,x\right):=\frac{\p^2}{\p t^2}\phi\left(t,x\right)$ is well-defined.

We now provide an umbrella theorem for $\G(h)$, which relates it to the linear submanifold integral we studied earlier, and establishes the attainability of the linear minimax optimal rate $r_n^*$ under appropriate smoothness requirement.

\begin{assumption}[Regularity Conditions for Submanifold Integral of Nonlinear Transformations]\label{assu:RegCond_Nlh}
  (a) $\phi(\cdot,x)$ is Lipschitz uniformly in $x$,
  (b) $\phi_{1}\left(t,x\right)$ is Lipschitz, and
  (c) Assumption \ref{assu:w} holds.
\end{assumption}
\begin{lem}[Pathwise Derivative of $\G$]\label{lem:NLh-DG}
  Let $\t_0=\G(h_0)$ be defined in \eqref{eq:Int_NLh} with a known $m$-dimensional submanifold ${\cal M}$ satisfying Assumption \ref{assu:RegLevelSet}. If Assumption \ref{assu:RegCond_Nlh}(a) holds, then Assumption~\ref{assu:DGamma_linear} holds with
  \begin{equation}\label{eq:PathD_NL}
    D\G(h_0)[v] = \int_{\cM} v(x)\, \phi_1(h_0(x),x)\, w(x)\, d\cH^m(x).
  \end{equation}
\end{lem}

\begin{thm}[Minimax Rate Optimal Estimation of $\G(h_0)$]
  \label{thm:Int_NLh_rate} Let $\t_0=\G(h_0)$ be defined in \eqref{eq:Int_NLh} with a known $m$-dimensional submanifold ${\cal M}$. Let Assumptions \ref{assu:RegLevelSet}, \ref{assu:density_x}, \ref{assu:e2_below}, \ref{assu:e2_moment}, \ref{assu:Cond_Sieve} and \ref{assu:RegCond_Nlh} hold.
  \begin{itemize}
    \item[(1)] If $s > m/2$, then $\hat{\theta}_{SS}-\t_0=O_p(r_n^*)$, where $\hat{\theta}_{SS}$ is a Split-Sample estimator:
      \begin{align}
        \hat{\theta}_{SS} := \G\left(\bar{h}\right) - \frac{1}{8} \int_{\mathcal{M}} \phi_{11}\left(\bar{h}(x),x\right) \left(\hat{h}_1(x) - \hat{h}_2(x)\right)^2 w(x) d\cH^m(x),
        \label{eq:split_estimator}
      \end{align}
      where $\hat{h}_1$ and $\hat{h}_2$ are split-sample sieve estimators with sieve dimension $K_{n}^{*}\asymp n^{\frac{d}{2s+d-m}}$, and $\bar{h} := (\hat{h}_1 + \hat{h}_2)/2$.
    \item[(2)] If $s > m/2$, then $\hat{\theta}_{LOO}-\t_0=O_p(r_n^*)$, where $\hat{\theta}_{LOO}$ is a Leave-One-Out estimator:
      \begin{equation}\label{eq:Gamma_LOO_estimator}
        \hat\t_{\mathrm{LOO}}
        := \G(\hat h)
        -\frac{1}{2n^2}\sum_{i=1}^n\int_{\cM}
        \phi_{11}\!\left(\hat h(x),x\right)\,\hat s_i(x)^2\,\big(\hat\e_i^{(-i)}\big)^2\,
        w(x)\,d\cH^m(x),
      \end{equation}
      where $\hat h$ is the sieve LS estimator \eqref{eq:seriesLS} with sieve dimension $K_{n}^{*}\asymp n^{\frac{d}{2s+d-m}}$; for $i$-th observation, $\hat s_i(x):= b^{K_n}(x)^\prime \hat G^{-1} b^{K_n}(X_i)$,
      and $\hat\e_i^{(-i)} := Y_i - \hat h^{(-i)}(X_i)$ is the LOO residual with the LOO sieve LS estimator $\hat h^{(-i)}$ computed using the sample $\{(Y_j,X_j):j\neq i\}_{j=1}^n$.
    \item[(3)] If $s\geq m$, then $\hat{\theta}-\t_0=O_p(r_n^*)$, where $\hat{\theta}=\G(\hat{h})$ is a plug-in sieve LS estimator with sieve dimension $K_{n}^{*}\asymp n^{\frac{d}{2s+d-m}}$.
  \end{itemize}
  In either (1) or (2) or (3), the rate $r_n^*=n^{-s/(2s+d-m)}$ is minimax-optimal.
\end{thm}

Both the SS and the LOO sieve debiased estimators are constructed by subtracting the diagonal parts of the quadratic remainder in the Taylor expansion of $\G(\hat h)$ at $h_0$. Both estimators are shown to attain the minimax lower bound rate $r_n^*$.
We note that the split-sample debiased sieve estimator $\hat{\t}_{SS}$ for $\G(h_0)$ is used to attain an upper bound rate that matches the lower bound rate $r_n^*$ under the mild smoothness condition $s > m/2$.
The plug-in sieve estimator $\hat{\theta}=\G\left(\hat{h}\right)$ is extremely easy to compute, but attains the minimax optimal rate only under stronger smoothness condition $s\geq m$.

Theorem \ref{thm:Int_NLh_rate} applies to quadratic submanifold functional $Q$ as a special case, where
\begin{align}
  \G(h)= Q(h) := \int_{\cM} h^2(x) w(x) d\mathcal{H}^m(x),
\end{align}
and the split-sample debiased estimator for $Q(h_0)$ becomes
$\hat{\t}^Q_{SS}=\int_{{\cal M}}\hat{h}_{1}\left(x\right)\hat{h}_{2}\left(x\right)w\left(x\right)d{\cal H}^{m}\left(x\right)$.
The study of $Q(h_0)$ is of conceptual importance given that it represents the simplest type of nonlinear integral functionals: when the standard Taylor expansion of a nonlinear function applies, the quadratic term in the expansion often constitutes the first-order approximation of the form of nonlinearity. For this reason, (full-dimensional) quadratic functionals have long been studied in statistics and econometrics as a leading example of nonlinear integral functionals: for a few examples out of many, see \cite{bickel1973some}, \cite{bickel1988estimating}, \cite{fan1991estimation}, \cite{efromovich1996optimal}, \cite{cai2005nonquadratic} for earlier work in statistics on quadratic functionals of nonparametric density and regression functions, as well as \cite{chen2018optimal} and \cite{breunig2019simple} for more recent work on quadratic functionals of NPIV structural functions in econometrics. It has also been noted that the estimation of a quadratic functional is tightly related to hypothesis testing; see, for example, \cite{cai2005nonquadratic,cai2006adaptive} in the statistics literature, and \cite{breunig2024adaptive} in the adaptive minimax testing for NPIV models.

\subsection{Minimax Rate-Optimal Estimation of $V(h_0)$}\label{subsec:Uch_rate-UB}

We now turn to another specific type of nonlinear integral functional, defined as an integral on the upper contour set of $h_0$, i.e.
\begin{equation}
  V\left(h_0\right):=\int_{\left\{x\in\mathcal{X}:~ h_0\left(x\right)\geq0\right\} }w\left(x\right)dx,\label{eq:Int_UCh}
\end{equation}
where $h_{0}:\R^{d}\to\R$ is an unknown (scalar-valued) function
and $w\left(x\right)$ is known.

Even though here $V$ is a full-dimensional (Lebesgue) integral per se, its pathwise derivative w.r.t. $h$ becomes a submanifold integral over the level set $\{x\in\mathcal{X}:~h_0(x)=0\}$ as we show below. Hence, in this case the ``level set function''
$g=h_{0}$ is taken to be unknown and needs to be nonparametrically estimated. Notice that such ``upper contour integrals'' show up in several examples in Section \ref{sec:Examples} that feature subpopulations defined through inequalities, such as the welfare/value under optimal treatment assignment. We provide a corresponding theorem for upper contour integrals below.

\begin{assumption}[Regularity Conditions for $V(h_0)$]\label{assu:RegCond_Uch} (a) $h_{0}$ is continuously differentiable and $\norm{\Dif_{x}h_{0}\left(x\right)}\geq\ul{\e}>0$ on the level set  ${\cM}_0:=\left\{ x\in{\cal X}:h_{0}\left(x\right)=0\right\}$, (b) $w$, $\Dif_{x} w$, and the second-order (partial) derivatives of $h_{0}$ are all uniformly bounded.
\end{assumption}
\begin{lem}[Pathwise Derivative of $V$]\label{lem:UCh-DV}
  Let $\t_0=V(h_0)$ be defined in \eqref{eq:Int_UCh}. If Assumption \ref{assu:RegCond_Uch}(a) holds, then ${\cal M}_0$ is an $m=(d-1)$ dimensional submanifold satisfying Assumption \ref{assu:RegLevelSet}, and Assumption \ref{assu:DGamma_linear} holds with
  \begin{equation}
    DV(h_0)\left[v\right]=\int_{\cM_0}v\left(x\right)\frac{w\left(x\right)}{\norm{\Dif_{x}h_{0}\left(x\right)}}d{\cal H}^{d-1}\left(x\right).\label{eq:PathD_Contour}
  \end{equation}
\end{lem}
We note that the calculation of the pathwise derivative \eqref{eq:PathD_Contour}
is nonstandard, since it involves differentiation w.r.t. the changing boundary of the region of integration.

\begin{thm}[Minimax Rate Optimal Estimation of $V(h_0)$]
  \label{thm:Int_UCh_rate} Let $\t_0=V(h_0)$ be defined in \eqref{eq:Int_UCh}. Let Assumptions \ref{assu:density_x}, \ref{assu:e2_below}, \ref{assu:e2_moment}, \ref{assu:Cond_Sieve} and \ref{assu:RegCond_Uch} hold.
  \begin{itemize}
    \item[(1)]  If $s\geq m/2 +1=(d+1)/2$, then $\hat{\theta}_{SS}-\t_0=O_p(r_n^*)$, where $\hat{\theta}_{SS}$ is a split-sample debiased estimator
      \begin{align}
        \hat{\theta}_{SS} := V\left(\bar{h}\right) - \frac{1}{8} D^2V(\bar{h})\left[\hat{h}_1 - \hat{h}_2, \hat{h}_1 - \hat{h}_2\right],
      \end{align}
      where $D^2V$ is the second functional derivative defined in Lemma \ref{lem:PathD_Level} in the Appendix, $\hat{h}_1$ and $\hat{h}_2$ are split-sample sieve estimators with sieve dimension $K_{n}^{*}\asymp n^{\frac{d}{2s+1}}$, and $\bar{h} := (\hat{h}_1 + \hat{h}_2)/2$.
    \item[(2)] If $s\geq m/2 +1=(d+1)/2$, then $\hat{\theta}_{LOO}-\t_0=O_p(r_n^*)$, where $\hat{\theta}_{LOO}$ is a Leave-One-Out debiased estimator:
      \begin{equation}\label{eq:V_LOO_estimator}
        \hat\t_{\mathrm{LOO}}
        := V(\hat h)
        -\frac{1}{2n^2}\sum_{i=1}^n D^2V(\hat h)\big[\hat s_i,\hat s_i\big]\cdot \big(\hat\e_i^{(-i)}\big)^2,
      \end{equation}
      where $\hat{h}$, $\hat s_i$ and $\hat\e_i^{(-i)}$ are defined in Theorem \ref{thm:Int_NLh_rate}(2) with sieve dim $K_{n}^{*}\asymp n^{\frac{d}{2s+1}}$.
    \item[(3)] If $s\geq m+1=d$, then $\hat{\theta}-\t_0=O_p(r_n^*)$, where $\hat{\theta}=V(\hat{h})$ is a plug-in sieve estimator with sieve dimension $K_{n}^{*}\asymp n^{\frac{d}{2s+1}}$.
  \end{itemize}
  In either (1) or (2) or (3), the rate $r_n^*=n^{-s/(2s+d-m)}=n^{-s/(2s+1)}$ is minimax optimal.
\end{thm}

\section{\label{sec:Nonlinear} Sieve Inference on Submanifold Integrals}

The minimax lower bound results from Section \ref{sec:Minimax-LB} imply that our functionals of interest, $\t_0=L(h_0),~\G(h_0),~V(h_0)$ are all \textit{irregular} functionals of the unknown function $h_0$, where $h_0$ could be a nonparametric regression, a nonparametric density and a NPIV function. Consequently, the well-known $L^2(P_X)$-Riesz representers and the efficient influence functions \textit{do not exist} for these functionals. Nevertheless, applying the general theories in \citet*{chen2014sieve,chen2014sieveM} for sieve M-estimation and in \cite*{chenpouzo2015sieve} for penalized sieve MD-estimation,\footnote{M-estimation includes nonparametric regression, quantile regression and density estimation as special cases. MD-estimation includes M-estimation, nonparametric IV regression and nonparametric quantile IV as special cases.} the sieve Riesz representers and sieve influence functions are still well-defined for these irregular functionals. The sieve Riesz representer approach enables us to show that the sieve student t statistics for these irregular functionals are still asymptotically standard normal, which can be used to construct simple confidence intervals for $\t_0$.

For the sake of simplicity, in this section we specialize their general theory to our functionals of interest when $h_0$ is a nonparametric regression. In this section we denote $\t_0:=\Phi(h):\Lambda^s\mapsto\R$, where $\Phi(h_0)=L(h_0),~\G(h_0),~V(h_0)$. By exploring the submanifold structures, we can characterize the
growth rate of the norm of the sieve Riesz representer for the first-order expansions of $\G(h_0)$ and $V(h_0)$, provide a tighter control of nonlinear remainder terms, and establish the asymptotic normality under mild sufficient conditions.

\subsection{Asymptotic Normality via Sieve Riesz Representers}

Let $\H_{K_{n}}:=\left\{ h\in{\cal B}_{K_{n}}:\norm{h-h_{0}}_{\infty}<\ul{\e}\right\} $. Let $\hat{h}_n \in \H_{K_n}$ be any sup-norm consistent sieve estimator of the regression function $h_0\in\Lambda^s(\mathcal{X})$. Let ${\cal V}_{K_n}$ denote a $K_n$-dimensional Hilbert space that contains $h_{0,n}:=\arg\min_{h\in \H_{K_{n}}}\norm{h-h_0}_2$, and can be equivalently expressed as $clsp\left(\ol \psi^{K_n}\right):=clsp\{\psi_k:k=1,...,K_n\}$, where $\{\psi_k:k\geq 1\}$ is an orthonormal basis for $L_2(P_X)$. Following \citet*{chen2014sieve,chen2014sieveM}, we define the sieve Riesz representer $v_{K_n}^{*}$
and the sieve variance $\norm{v^*_{K_n}}_{sd}^2$ in the same way as in \eqref{eq:sieve_riesz} and \eqref{Sieve-sd}.

The following Remark follows from Lemma \ref{lem:NormRate_d-m}, which allows us to control the growth rate of the sieve Riesz representer and the sieve variance for the nonlinear submanifold functionals $\Phi=\G$ and $V$.

\begin{rem}[Growth Rate of Sieve Riesz Representer for $D\Phi(h_0)$]
  \label{rem:NormRate_nonlinear}Under Assumptions \ref{assu:RegLevelSet}, \ref{assu:density_x}, \ref{assu:DGamma_linear} and $\lambda_{min} \left( R_{K}\right) >0$ for each $K$, we have:
  \begin{equation}
    \norm{v_{K_{n}}^{*}}_{2}^{2}=\norm{D\Phi(h_0)\left(\ol \psi^{K_{n}}\right)}^{2}\asymp K_{n}^{\frac{d-m}{d}}=J_{n}^{d-m}.\label{eq:NormRate}
  \end{equation}
  Further, if Assumptions \ref{assu:e2_below} and \ref{assu:e2_moment} hold, then: $\norm{v_{K_{n}}^{*}}_{sd}^2\asymp \norm{v_{K_{n}}^{*}}_{2}^{2} \asymp K_{n}^{\frac{d-m}{d}}$.
\end{rem}

The following is a standard Lindeberg condition for CLT.
\begin{assumption}[Lindeberg]\label{assu:Lindeberg} $\sup_{x\in{\cal X}}\E\left[\rest{\e_{i}^{2}\left\{ \left|\e_{i}\right|>c\right\} }X_{i}=x\right]\to0$
  as $c\to\infty$.
\end{assumption}

In the following, we let $\hat{\t}=\Phi(\hat{h})$ be the plug-in sieve estimator of $\t_0=\Phi(h_0)$. We let $\hat{\t}_{SS}$ and $\hat{\t}_{LOO}$ respectively denote the Split-Sample sieve estimator and Leave-One-Out sieve estimator of $\t_0=\Phi (h_0)$ when $\Phi=\G,~V$.

\begin{prop}[Asymptotic Normality]\label{prop:Plug-in_NL} Let $\hat{h}$ be the sieve LS estimator of $h_0$. Let Assumptions \ref{assu:density_x}, \ref{assu:e2_below}, \ref{assu:e2_moment}, \ref{assu:Cond_Sieve} and \ref{assu:Lindeberg} hold. Let the sieve dimension $K_{n}$ satisfy $K_{n}\asymp K^*_n [\log(n)]^{\kappa}$ for some $\kappa \in (0,1)$ and $K_{n}^{*}\asymp n^{\frac{d}{2s+d-m}}$.
  Then:
  \[
    \frac{\sqrt{n}\left(\tilde{\Phi}-\Phi(h_0)\right)}{\norm{v_{K_{n}}^{*}}_{sd}} = \frac{1}{\sqrt{n}}\sum_{i=1}^n \frac{v_{K_n}^*(X_i)}{\norm{v_{K_{n}}^{*}}_{sd}}\e_i +o_p(1)\dto\cN\left(0,1\right)\quad\text{with }\norm{v_{K_{n}}^{*}}_{sd}\asymp\sqrt{K_{n}^{\frac{d-m}{d}}},
  \]
  provided the following conditions hold for the specific functional forms of $\Phi:\Lambda^s \mapsto \R$:
  \begin{itemize}
    \item For $\Phi=L$: Assumptions \ref{assu:RegLevelSet} and \ref{assu:w} hold with $s>m/2$ for $\tilde{\Phi}=L(\hat{h})$.
    \item For $\Phi=\G$: Assumptions \ref{assu:RegLevelSet} and \ref{assu:RegCond_Nlh} hold, $s>m/2$ for $\tilde{\Phi}=\hat{\t}_{SS},~\hat{\t}_{LOO}$; and $s>m$ for $\tilde{\Phi}=\G(\hat{h})$.
    \item For $\Phi =V$: Assumption \ref{assu:RegCond_Uch} holds, $s>(d+1)/2$ for $\tilde{\Phi}=\hat{\t}_{SS},~\hat{\t}_{LOO}$; and $s>d$ for $\tilde{\Phi}=V(\hat{h})$.
  \end{itemize}
\end{prop}

\subsection{Inference based on Estimated Sieve Riesz Representers}

We note that $v_{K_n}^{*}\left(\cd\right)$ can be estimated by
\[
  \hat{v}_{K_{n}}^{*}\left(\cd\right):=\ol \psi^{K_{n}}\left(\cd\right)'(\ol {R}_{K_{n}})^{-1}D\Phi\left(\hat{h}\right)\left[\ol \psi^{K_{n}}\right],~~~\text{with}~~\ol {R}_{K_n} :=\frac{1}{n} \sum_i{\ol \psi}^{K_{n}}(X_i){\ol  \psi}^{K_{n}}(X_i)'.
\]
Denote $\sigma^2_{*,K_{n}} :=\norm{v_{K_{n}}^{*}}_{sd}^2$ and $\widehat{\e}_i:=Y_{i}-\hat{h}\left(X_{i}\right)$. We can estimate the sieve variance as\footnote{In small samples,it might be more accurate to estimate the sieve variance using $\widehat{\sigma}^2_{*,K_{n}}=\frac{1}{n}\sum_{i=1}^n \left[ \hat{v}_{K_{n}}^{*}(X_{i})\widehat{\e}_i - \frac{1}{n-1}\sum_{j\neq i,j=1}^n\hat{v}_{K_{n}}^{*}(X_{j})\widehat{\e}_j\right]^{2}$.}
\begin{align}\label{sieveVar-hat}
  \widehat{\sigma}^2_{*,K_{n}}  =\frac{1}{n}\sum_{i=1}^n \left[ \hat{v}_{K_{n}}^{*}(X_{i})\widehat{\e}_i\right]^{2}=\left(D\Phi\left(\hat{h}\right)\left[\psi^{\left(K\right)}\right]\right)'\widehat{\O}_{K_n}\left(D\Phi\left(\hat{h}\right)\left[\psi^{\left(K\right)}\right]\right)
\end{align}
with
$\widehat{\O}_{K_n}:= \ol {R}_{K_n}^{-1}\left(\frac{1}{n}\sum_{i=1}^{n}(\widehat{\e}_i)^{2}{\ol \psi}^{K_{n}}(X_i){\ol  \psi}^{K_{n}}(X_i)'\right) \ol {R}_{K_n}^{-1}$.
The consistency of the sieve variance estimator $\widehat{\sigma}^2_{*,K_{n}}$ can be easily established under the following assumption.

\begin{assumption}[Higher Error Moments]\label{assu:e2d_moment} $\E\left[\abs{\e_i}^{2+\d}\right] < \infty$ for some $\d>2d/(2s-m)$.
\end{assumption}

\begin{prop}[Sieve t statistics]\label{prop:VarConsis}
  Let Assumptions for Proposition \ref{prop:Plug-in_NL} and Assumption \ref{assu:e2d_moment} hold. Let $K_{n}\asymp K^*_n [\log(n)]^{\kappa}$ for some $\kappa \in (0,1)$ and $K_{n}^{*}\asymp n^{\frac{d}{2s+d-m}}$. Then:
  \[
    \abs{\frac{\widehat{\sigma}_{*,K_{n}}}{\norm{v_{K_{n}}^{*}}_{sd} } - 1} =o_p(1);~~~
    \frac{\sqrt{n}\left(\tilde{\Phi}-\t_0\right)}{\widehat{\sigma}_{*,K_{n}}} = \frac{1}{\sqrt{n}}\sum_{i=1}^n \frac{v_{K_n}^*(X_i)}{\norm{v_{K_{n}}^{*}}_{sd}}\e_i +o_p(1)\dto\cN\left(0,1\right),
  \]
  where $\tilde{\Phi}$ is the estimator specified in Proposition~\ref{prop:Plug-in_NL} (i.e., $L(\hat{h})$, $\hat{\t}_{SS}$, $\hat{\t}_{LOO}$, or $\Phi(\hat{h})$, depending on the case and smoothness regime).
\end{prop}

As an application of Proposition \ref{prop:VarConsis}, we can construct valid $.95$ confidence
interval using normal critical value for $\t_0=\Phi(h_0)$ as follows:
\[
\text{CI}:=\left[\tilde{\Phi}-1.96\hat{\s}_{\t}~,~\tilde{\Phi}+1.96\hat{\s}_{\t}\right]~,~~\text{with}~~\hat{\s}_{\t}^{2}=\widehat{\sigma}^2_{*,K_{n}}/n
\]
provided that the sieve variance is computed using undersmoothed sieve dimension $K_n$, say $K_{n}\asymp K^*_n [\log(n)]^{1/2}$. In practice, we implement an adaptive inference procedure that (i) selects the sieve dimension in a data-driven way (bootstrap--Lepski), with an undersmoothed choice used for valid confidence interval; and (ii) uses a multiplier bootstrap to approximate the distribution of the studentized sieve estimator. See Appendix \ref{subsec:MB_K_LOO} for details.

\section{\label{sec:Simulation} Monte Carlo Simulations}

We present two Monte Carlo studies to check finite sample performance of the sieve estimation and inference of linear and nonlinear submanifold integrals. Subsection \ref{subsec:Sim1_Circle} considers a linear submanifold integral $\t_0=L(h_0)$, and Subsection \ref{subsec:Sim2_Disk} considers an upper contour integral $\t_0=V(h_0)$. In both examples, $h_0 (x)=\E[Y|X=x]$ is a nonparametric regression, and is estimated using a tensor-product B-spline sieve LS estimator $\hat{h}$ based on a random sample $\{(Y_i,X_i)\}_{i=1}^n$, drawn from the joint distribution of $(Y,X)$ satisfying:
\[
Y_i=h_0(X_i) + \e_{i},~~~\text{with}~~\e_{i}\sim\cN\left(0,1\right),~~X_{i1}\sim_{i.i.d.}X_{i2}\sim_{i.i.d.}Uniform\left[-2,2\right],
\]
and $\e_{i}$ is independent of the covariates $X_i=(X_{i1},X_{i2})'\in \mathcal{X}=[-2,2]^2$ with $d=2$. 

In both Monte Carlo studies, we implement multiplier bootstrap Lepski data-driven choice of sieve dimensions $K$ as described in Appendix \ref{subsec:MB_K_LOO}, with $B_{Lepski}=500$ multiplier bootstrap draws. We report simulation results over $B=1000$ Monte Carlo simulation replications, using performance metrics in terms of Monte Carlo root-mean-squared error (RMSE), bias and standard deviation (SD), as well as coverage rates and coverage length. We report the simulation results for different sample sizes $n=500, 1000, 2000, 4000, 8000$.

\subsection{\label{subsec:Sim1_Circle}Integral on Known Submanifold}

We first report simulations results for the estimation of an integral functional
of $h_0$ over a known submanifold defined by the unit
circle, i.e., $\t_{0}=L\left(h_{0}\right)$ with $L\left(h\right)=\int_{\S^{1}}h\left(x\right)d{\cal H}^{1}\left(x\right)$ and
$h_{0}\left(x\right)=x_{1}^{2}+2\sin\left(x_{1}\right)x_{2}$ with $x=(x_1,x_2)^{\prime} \in{\cal X}=\left[-2,2\right]^{2}$.
Under the transformation $x_{1}=\cos\left(\b\right)$ and
$x_{2}=\sin\left(\b\right)$, we obtain the true value
\[
  \t_{0}=L\left(h_{0}\right)=\int_{0}^{2\pi}\left(\text{cos}\left(\b\right)^{2}+2\sin\left(\cos\left(\b\right)\right)\sin\left(\b\right)\right)d\b=\pi.
\]
 In this exercise, $L(h_0)$ is a linear functional of $h_0$ on a known $m=1$ -dimensional submanifold (i.e., the unit circle $\S^{1}$). Theorem \ref{thm:Rate_Sieve} is directly applicable, and the optimal convergence rate of a spline sieve plug-in estimator $L(\hat{h}) - L(h_0)$ is given by $n^{-s/(2s+1)}$ with the optimal sieve dimension $K^* \asymp n^{2/(2s+1)}$ and $d=2,m=1$.

Given the spline LS estimator $\hat{h}$, the plug-in estimator $\hat{\t}=L(\hat{h})=\int_{\S^{1}} \hat{h}\left(x\right)d{\cal H}^{1}\left(x\right)$ is numerically computed using the sample
average over $M=5000$ Sobol sequence points\footnote{The Sobol sequence sampling, proposed by \cite{sobol1967distribution}, is a well-known quasi-random Monte Carlo sampling method that generates a deterministic sequence of points, whose distribution asymptotically converges to the uniform distribution, but achieves better finite-sample approximation of the population expectation (integral) by the sample average over $M$ Sobol points.} in the angle space $\left[0,2\pi\right]$.
We obtain the plug-in estimator $\hat{\t}$ and construct the confidence
interval $\text{CI}:=\left[\hat{\t}\pm1.96\hat{\s}_{\t}\right]$ with $\hat{\s}_{\t}^{2}=\widehat{\sigma}^2_{*,K_{n}}/n$ as in \eqref{sieveVar-hat},
where the pathwise derivative $D\Phi\left(\hat{h}\right)\left[v\right]=L(v)=\int_{\S^{1}}v\left(x\right)d{\cal H}^{1}\left(x\right)$
is computed numerically in the same manner as described above.

Table \ref{tab:Sim1_Circle_RateOpt} reports the RMSE, bias, standard deviation, and the average selected sieve dimension $\bar{\hat{K}}$ for the plug-in estimator using the Lepski sieve dimension choice $\hat{K}$. Note that the rate-optimal $\hat{K}$ is chosen to minimize estimation risk, and does not involve the undersmoothing that is needed for valid confidence interval construction.

\begin{table}[h]
  \centering
  \caption{Integral on Known Manifold: Rate-Optimal $\hat{K}$}
  \label{tab:Sim1_Circle_RateOpt}
  \begin{threeparttable}
    \begin{tabular}{rcccc}
      \toprule
      $n$ & RMSE & Bias & SD & $\bar{\hat{K}}$ \\
      \midrule
      500  & 0.4515 & -0.0133 & 0.4515 & 17.1 \\
      1000 & 0.3175 & -0.0097 & 0.3175 & 16.8 \\
      2000 & 0.2208 &  0.0059 & 0.2208 & 17.1 \\
      4000 & 0.1626 &  0.0061 & 0.1626 & 17.2 \\
      8000 & 0.1138 & -0.0042 & 0.1138 & 17.4 \\
      \bottomrule
    \end{tabular}
    \begin{tablenotes}
      \footnotesize
    \item Rate-optimal $\hat{K}$ selected by bootstrap-Lepski without undersmoothing.
    \end{tablenotes}
  \end{threeparttable}
\end{table}

Next, we turn to the inference results. Table \ref{tab:Sim1_Circle} reports the finite-sample performance of the plug-in estimator and the corresponding 95\% confidence interval, constructed with bootstrap-Lepski adaptive choice of the \emph{undersmoothed} sieve dimension $\tilde{K}$ and normal critical values. The average $\bar{\tilde{K}} \approx 51$ is substantially larger than the rate-optimal $\bar{\hat{K}} \approx 17$ in Table \ref{tab:Sim1_Circle_RateOpt}, reflecting the undersmoothing needed for valid inference.

\begin{table}[!h]
  \centering
  \caption{Integral over Known Manifold: Adapt+Normal}
  \label{tab:Sim1_Circle}
  \begin{threeparttable}
    \begin{tabular}{lcccccccc}
      \toprule
      $n$ & RMSE & Bias & SD & CI\_L & CI\_U & U--L & Coverage & $\tilde{K}$\\
      \midrule
      500  & 0.6198 & -0.0320 & 0.6193 & 1.9364 & 4.2828 & 2.3465 & 93.80 & 50.6 \\
      1000 & 0.4232 & -0.0169 & 0.4231 & 2.2952 & 3.9541 & 1.6589 & 95.20 & 50.7 \\
      2000 & 0.3050 &  0.0062 & 0.3051 & 2.5580 & 3.7376 & 1.1796 & 95.00 & 51.4 \\
      4000 & 0.2249 &  0.0078 & 0.2249 & 2.7325 & 3.5663 & 0.8338 & 94.40 & 51.4 \\
      8000 & 0.1496 & -0.0102 & 0.1493 & 2.8363 & 3.4264 & 0.5901 & 95.80 & 51.8 \\
      \bottomrule
    \end{tabular}
    \begin{tablenotes}
      \footnotesize
    \item ``Adapt+Normal'': bootstrap-Lepski sieve dimension choice and normal critical values.
    \end{tablenotes}
  \end{threeparttable}
\end{table}

In Table~\ref{tab:Sim1_Circle} and subsequent tables, we report the square root of the mean squared error (RMSE), the bias (Bias), the standard deviation (SD)\footnote{The standard deviation is calculated using the standard $\frac{1}{B-1} \sum_{b=1}^B$ formula. For this technical reason, ``SD'' can be larger than ``RMSE'', which is calculated based on the $\frac{1}{B} \sum_{b=1}^B$ formula.} of the estimator, the average lower and upper bounds of the confidence interval (CI\_L and CI\_U), the average length of the confidence interval (U-L), and the realized coverage probability of the CI (Coverage). Overall, the plug-in estimator and the corresponding CI perform very well under all five sample sizes: the RMSE shrinks (almost at $\sqrt{n}$ rate) as the sample size increases, the bias is of negligible order relative to the standard deviation, and the realized coverage probability is close to the nominal 95\% level.

Table \ref{tab:Sim1_Circle_Comp} reports the estimator RMSE and the CI coverage depending on: (i) whether the bootstrap-Lepski adaptive choice of sieve dimension or a fixed sieve dimension is used, and (ii) whether normal or bootstrap quantile critical values\footnote{The bootstrap quantiles are also computed based on $B_{Lepski} = 500$ draws.} are used, when constructing the CI. The first column ``Adapt+Normal'' is identical to that in Table \ref{tab:Sim1_Circle}, while the remaining three columns contain results about the three alternative CI constructions. We observe that all four versions produce coverage close to the 95\% nominal level.

\begin{table}[h]
  \centering
  \caption{Integral on Known Manifold: RMSE and Coverage Comparison}
  \label{tab:Sim1_Circle_Comp}
  \begin{threeparttable}
    \begin{tabular}{r|cc|cccc}
      \toprule
      \multicolumn{1}{c}{$n$ }  & \multicolumn{2}{c}{RMSE}   & \multicolumn{4}{c}{CI Coverage} \\
      \midrule
      & Adapt & Fixed & Adapt+Norm  & Fixed+Norm  & Adapt+Boot & Fixed+Boot \\
      \midrule
      500  & 0.6198 & 0.4463 & 93.8\% & 94.2\% & 93.8\% & 94.3\%\\
      1000 & 0.4232 & 0.3097 & 95.2\% & 94.7\% & 95.0\% & 94.2\%\\
      2000 & 0.3050 & 0.2160 & 95.0\% & 94.9\% & 94.7\% & 94.9\%\\
      4000 & 0.2249 & 0.1600 & 94.4\% & 94.3\% & 93.9\% & 94.2\% \\
      8000 & 0.1496 & 0.1091 & 95.8\% & 94.9\% & 95.6\% & 94.6\%\\
      \bottomrule
    \end{tabular}
    \begin{tablenotes}
      \footnotesize
    \item ``Adapt'': bootstrap-Lepski sieve dimension choice; ``Fixed'': fixed sieve dimension $K=36$.
    \item ``Norm'': normal critical values; ``Boot'': bootstrap quantile critical values.
    \end{tablenotes}
  \end{threeparttable}
\end{table}

\subsection{\label{subsec:Sim2_Disk} Integral on Estimated Upper Contour Set}

In this subsection, we analyze the estimation of an integral functional
of a nonparametric function over the (estimated) unit
disk: $\t_{0}:=V\left(h_{0}\right)$ with
\begin{align*}
  V\left(h_0 \right) & :=\int\ind\left\{ x\in{\cal X}:h_{0}\left(x\right)\geq0\right\} dx,~~{\cal X}=\left[-2,2\right]^{2}\\
  h_{0}\left(x\right) & :=\left(1-\norm x^{2}\right)\left(4+\sin\left(x_{1}\right)x_{2}+\cos\left(x_{2}\right)\right).
\end{align*}
Under the above construction, $h_{0}\left(x\right)\geq0$ if and only
if $\norm x\leq1$, and thus
$\t_{0}=\int\ind\left\{ \norm x\leq1\right\} dx=\pi$.
We compute the spline nonparametric regression estimator $\hat{h}\left(x\right)$ for $h_0(x)= \E[Y_i|X_i=x]$, and the plug-in estimator $V\left(\hat{h}\right)=\int\ind\left\{ \hat{h}\left(x\right)\geq0\right\} dx$
using the sample average over $M=50,000$ Sobol sequence points, and
$\text{CI}:=\left[\hat{\t}\pm1.96\hat{\s}_{\t}\right]$ with the sieve variance $\hat{\s}_{\t}^{2}=\widehat{\sigma}^2_{*,K_{n}}/n$ as in \eqref{sieveVar-hat},
where the pathwise derivative $DV\left(\hat{h}\right)\left[v\right]$
now takes the following form:
\[
  DV(h)\left[v\right]=\int_{\left\{x\in{\cal X}:~ h\left(x\right)=0\right\} }\frac{v\left(x\right)}{\norm{\Dif_{x}h\left(x\right)}}d{\cal H}^{d-1}\left(x\right),
\]
which we numerically approximate via
$\hat{D}V\left(h\right)\left[v\right]=\frac{1}{2\e}\int_{\left\{x\in{\cal X}:~ -\e<h\left(x\right)<\e\right\} }v\left(x\right)dx$ based on the mathematical result\footnote{See Theorem 3.13.(iii) of \citet*{evans2015measure}.}
that
\[
  \lim_{\e\downto0}\frac{1}{2\e}\int_{\left\{x\in{\cal X}:~ -\e<h\left(x\right)<\e\right\} }v\left(x\right)dx=\int_{\left\{x\in{\cal X}:~ h\left(x\right)=0\right\} }\frac{v\left(x\right)}{\norm{\Dif_{x}h\left(x\right)}}d{\cal H}^{d-1}\left(x\right).
\]
We set $\e=0.01$ in our simulation, and, given that $\left\{x\in{\cal X}:~ -\e<h\left(x\right)<\e\right\} $
may occur infrequently for small $\e$, we use the sample average
from $M=50,000$ Sobol sequence points to approximate $\hat{D}V\left(h\right)\left[v\right]$.

Table \ref{tab:Sim2_Disk_RateOpt} reports the rate-optimal estimation results for the upper contour set integral. The average rate-optimal $\bar{\hat{K}}$ grows with $n$ (from approximately 17 at $n=500$ to 36 at $n=8000$), in contrast to the stable $\bar{\hat{K}} \approx 17$ in Simulation 1. The plug-in and LOO-debiased estimators display similar RMSE at the rate-optimal $\hat{K}$ for $n\geq 1000$.

\begin{table}[h]
  \centering
  \caption{Integral on Estimated Upper Contour Set: Rate-Optimal $\hat{K}$}
  \label{tab:Sim2_Disk_RateOpt}
  \begin{threeparttable}
    \begin{tabular}{r|cccc|cccc}
      \toprule
      & \multicolumn{4}{c|}{(A) Plug-In} & \multicolumn{4}{c}{(B) Leave-One-Out} \\
      \midrule
      $n$ & RMSE & Bias & SD & $\bar{\hat{K}}$ & RMSE & Bias & SD & $\bar{\hat{K}}$ \\
      \midrule
      500  & 0.0652 & 0.0351 & 0.0550 & 17.3 & 1.0175 & 0.0148 & 1.0179 & 17.1 \\
      1000 & 0.0483 & 0.0270 & 0.0402 & 18.0 & 0.0488 & 0.0275 & 0.0403 & 17.8 \\
      2000 & 0.0358 & 0.0182 & 0.0309 & 21.7 & 0.0364 & 0.0192 & 0.0310 & 21.4 \\
      4000 & 0.0253 & 0.0104 & 0.0231 & 28.2 & 0.0254 & 0.0106 & 0.0230 & 28.3 \\
      8000 & 0.0212 & 0.0086 & 0.0193 & 35.9 & 0.0212 & 0.0087 & 0.0193 & 35.9 \\
      \bottomrule
    \end{tabular}
    \begin{tablenotes}
      \footnotesize
    \item Rate-optimal $\hat{K}$ selected by bootstrap-Lepski without undersmoothing.
    \end{tablenotes}
  \end{threeparttable}
\end{table}

\begin{table}[h!]
  \centering
  \caption{\label{tab:Sim2_Disk}Integral on Estimated Upper Contour Set: Adaptive+Normal}
  \begin{threeparttable}
    \begin{tabular}{rrrrrrrrr}
      \toprule
      $n$ & RMSE & Bias & SD & CI\_L & CI\_U & U--L & Coverage & $\tilde{K}$ \\
      \midrule
      \multicolumn{9}{c}{(A) Plug-In} \\
      \midrule
      500  & 0.0694 &  0.0072 & 0.0690 & 3.0159 & 3.2818 & 0.2659 & 94.10\% & 51.1 \\
      1000  & 0.0470 &  0.0011 & 0.0470 & 3.0476 & 3.2378 & 0.1902 & 95.20\% & 55.3 \\
      2000  & 0.0335 &  0.0009 & 0.0335 & 3.0739 & 3.2110 & 0.1370 & 95.00\% & 65.7 \\
      4000  & 0.0237 & -0.0004 & 0.0237 & 3.0916 & 3.1907 & 0.0991 & 95.40\% & 81.9 \\
      8000  & 0.0180 &  0.0009 & 0.0180 & 3.1065 & 3.1786 & 0.0721 & 95.50\% & 94.4 \\
      \midrule
      \multicolumn{9}{c}{(B) Leave-One-Out} \\
      \midrule
      500  & 1.0379 & -0.0199 & 1.0383 & 2.9560 & 3.2873 & 0.3314 & 94.80\% & 50.5 \\
      1000  & 0.0471 &  0.0018 & 0.0471 & 3.0451 & 3.2417 & 0.1966 & 96.20\% & 54.9 \\
      2000  & 0.0335 &  0.0012 & 0.0335 & 3.0731 & 3.2125 & 0.1394 & 95.30\% & 64.9 \\
      4000  & 0.0238 & -0.0001 & 0.0238 & 3.0913 & 3.1917 & 0.1004 & 95.40\% & 82.0 \\
      8000  & 0.0181 &  0.0012 & 0.0180 & 3.1065 & 3.1791 & 0.0726 & 95.60\% & 94.4 \\
      \bottomrule
    \end{tabular}
    \begin{tablenotes}
      \footnotesize
    \item ``Adapt+Normal'': bootstrap-Lepski sieve dimension choice and normal critical values.
    \end{tablenotes}
  \end{threeparttable}
\end{table}

Next, we turn to the inference results. Table \ref{tab:Sim2_Disk} reports the finite-sample performance of the plug-in estimator as well as the Leave-One-Out (LOO) estimator, along with the CIs constructed using bootstrap-Lepski adaptive choice of sieve dimension and normal critical values. The results are again based on $B= 1000$ Monte Carlo replications and (within each replication) $B_{Lepski} = 500$ bootstrap draws when applicable. The results in Table \ref{tab:Sim2_Disk} display a similar pattern as those in Table \ref{tab:Sim1_Circle}: Again, the RMSE shrinks (almost at $\sqrt{n}$ rate) as the sample size increases, the bias is of smaller order relative to the standard deviation, and the realized CI coverage is close to the nominal 95\% level. One noticeable difference between the two tables, however, lies in that the RMSEs in Table \ref{tab:Sim2_Disk} are substantially smaller than those in Table \ref{tab:Sim1_Circle}. Heuristically, this might have been due to the fact that the upper contour set integral being estimated in Table \ref{tab:Sim2_Disk} is itself a full-dimensional integral: even though its asymptotic behavior is theoretically driven by the lower-dimensional submanifold, in finite sample the full-dimensional nature of the integral may have made it overall easier to estimate. Lastly, for $n\geq 1000$, the LOO-debiased estimator does result in (slightly but noticeably) smaller biases relative to the plug-in estimator, yielding comparable or better CI coverage as anticipated.\footnote{At $n=500$ the LOO estimator exhibits a large RMSE outlier, likely reflecting instability the leave-out procedure at this small sample size.}

Table \ref{tab:Sim2_Disk_Comp} again reports the RMSE and coverage comparison under the two sieve dimension choice methods and the two critical value choice methods. We observe that all four versions of CI constructions produce coverage close to the 95\% nominal level, and the improvement of the LOO estimator over the plug-in one in terms of size control is consistent across all four versions of CI construction procedures.

\begin{table}[h!]
  \centering
  \caption{Integral on Estimated Upper Contour Set: RMSE and Coverage Comparison}
  \label{tab:Sim2_Disk_Comp}
  \begin{threeparttable}
    \begin{tabular}{r|cc|cccc}
      \toprule
      \multicolumn{1}{c}{$n$ }  & \multicolumn{2}{c}{RMSE}   & \multicolumn{4}{c}{CI Coverage} \\
      \midrule
      & Adapt & Fixed & Adapt+Norm  & Fixed+Norm  & Adapt+Boot & Fixed+Boot \\
      \midrule
      \multicolumn{7}{c}{(A) Plug-In} \\
      \midrule
      500  & 0.0694 & 0.0671 & 94.1\% & 93.3\% & 93.4\% & 93.2\%\\
      1000 & 0.0470 & 0.0444 & 95.2\% & 96.3\% & 94.8\% & 96.1\%\\
      2000 & 0.0335 & 0.0316 & 95.0\% & 94.3\% & 94.7\% & 94.5\%\\
      4000 & 0.0237 & 0.0222 & 95.4\% & 96.1\% & 95.0\% & 95.6\% \\
      8000 & 0.0180 & 0.0165 & 95.5\% & 95.2\% & 95.2\% & 94.4\%\\
      \midrule
      \multicolumn{7}{c}{(B) Leave-One-Out} \\
      \midrule
      500  & 1.0379 & 0.0743 & 94.8\% & 95.7\%  & 94.4\% & 95.4\%\\
      1000 & 0.0471 & 0.0445 & 96.2\% & 97.0\%  & 95.4\% & 96.5\%\\
      2000 & 0.0335 & 0.0316 & 95.3\% & 94.8\%  & 94.9\% & 94.6\%\\
      4000 & 0.0238 & 0.0223 & 95.4\% & 96.1\%  & 95.1\% & 95.8\%\\
      8000 & 0.0181 & 0.0165 & 95.6\% & 95.3\%  & 95.7\% & 95.4\%\\
      \bottomrule
    \end{tabular}
    \begin{tablenotes}
      \footnotesize
    \item ``Adapt'': bootstrap-Lepski sieve dimension choice; ``Fixed'': fixed sieve dimension $K=64$.
    \item ``Norm'': normal critical values; ``Boot'': bootstrap quantile critical values.
    \end{tablenotes}
  \end{threeparttable}
\end{table}

\bibliographystyle{ecca}
\bibliography{Submanifold}

\begin{thebibliography}{56}
\providecommand{\natexlab}[1]{#1}

\bibitem[{Armstrong and Koles{\'a}r(2018)}]{armstrong2018optimal}
\textsc{Armstrong, T.~B.} and \textsc{Koles{\'a}r, M.} (2018). Optimal inference in a class of regression models. \textit{Econometrica}, \textbf{86}~(2), 655--683.

\bibitem[{Armstrong and Koles{\'a}r(2020)}]{armstrong2020simple}
\textsc{---} and \textsc{Koles{\'a}r, M.} (2020). Simple and honest confidence intervals in nonparametric regression. \textit{Quantitative Economics}, \textbf{11}~(1), 1--39.

\bibitem[{Bickel and Ritov(1988)}]{bickel1988estimating}
\textsc{Bickel, P.~J.} and \textsc{Ritov, Y.} (1988). Estimating integrated squared density derivatives: sharp best order of convergence estimates. \textit{Sankhy{\=a}: The Indian Journal of Statistics, Series A}, pp. 381--393.

\bibitem[{Bickel and Rosenblatt(1973)}]{bickel1973some}
\textsc{---} and \textsc{Rosenblatt, M.} (1973). On some global measures of the deviations of density function estimates. \textit{The Annals of Statistics}, pp. 1071--1095.

\bibitem[{Breunig and Chen(2019)}]{breunig2019simple}
\textsc{Breunig, C.} and \textsc{Chen, X.} (2019). Simple adaptive estimation of quadratic functionals in nonparametric iv models. In \textit{Foundations of Modern Statistics}, Springer, pp. 459--482.

\bibitem[{Breunig and Chen(2024)}]{breunig2024adaptive}
\textsc{---} and \textsc{---} (2024). Adaptive, rate-optimal hypothesis testing in nonparametric iv models. \textit{Econometrica}, \textbf{92}~(6), 2027--2067.

\bibitem[{Cai and Low(2005)}]{cai2005nonquadratic}
\textsc{Cai, T.~T.} and \textsc{Low, M.~G.} (2005). Nonquadratic estimators of a quadratic functional. \textit{The Annals of Statistics}, pp. 2930--2956.

\bibitem[{Cai and Low(2006)}]{cai2006adaptive}
\textsc{---} and \textsc{---} (2006). Adaptive confidence balls. \textit{The Annals of Statistics}, \textbf{34}~(1), 202--228.

\bibitem[{Carneiro \textit{et~al.}(2010)Carneiro, Heckman and Vytlacil}]{carneiro2010evaluating}
\textsc{Carneiro, P.}, \textsc{Heckman, J.~J.} and \textsc{Vytlacil, E.} (2010). Evaluating marginal policy changes and the average effect of treatment for individuals at the margin. \textit{Econometrica}, \textbf{78}~(1), 377--394.

\bibitem[{Cattaneo \textit{et~al.}(2025{\natexlab{a}})Cattaneo, Titiunik and Yu}]{cattaneo2025dist}
\textsc{Cattaneo, M.~D.}, \textsc{Titiunik, R.} and \textsc{Yu, R.~R.} (2025{\natexlab{a}}). Estimation and inference in boundary discontinuity designs: Distance-based methods. \textit{arXiv preprint arXiv:2505.05670}.

\bibitem[{Cattaneo \textit{et~al.}(2025{\natexlab{b}})Cattaneo, Titiunik and Yu}]{cattaneo2025loc}
\textsc{---}, \textsc{---} and \textsc{---} (2025{\natexlab{b}}). Estimation and inference in boundary discontinuity designs: Location-based methods. \textit{arXiv preprint arXiv:2505.05670}.

\bibitem[{Chamberlain(1986)}]{chamberlain1986asymptotic}
\textsc{Chamberlain, G.} (1986). Asymptotic efficiency in semi-parametric models with censoring. \textit{journal of Econometrics}, \textbf{32}~(2), 189--218.

\bibitem[{Chen(2007)}]{chen2007sieve}
\textsc{Chen, X.} (2007). Large sample sieve estimation of semi-nonparametric models. In \textit{Handbook of Econometrics}, vol.~6B, Elsevier B.V.

\bibitem[{Chen \textit{et~al.}(2025{\natexlab{a}})Chen, Chen and Gao}]{CCG2025}
\textsc{---}, \textsc{Chen, Z.} and \textsc{Gao, W.~Y.} (2025{\natexlab{a}}). Inference on welfare and value functionals under optimal treatment assignment. \textit{arXiv preprint arXiv:2510.25607}.

\bibitem[{Chen \textit{et~al.}(2025{\natexlab{b}})Chen, Christensen and Kankanala}]{chen2025adaptive}
\textsc{---}, \textsc{Christensen, T.} and \textsc{Kankanala, S.} (2025{\natexlab{b}}). Adaptive estimation and uniform confidence bands for nonparametric structural functions and elasticities. \textit{Review of Economic Studies}, \textbf{92}~(1), 162--196.

\bibitem[{Chen and Christensen(2015)}]{chen2015optimal}
\textsc{---} and \textsc{Christensen, T.~M.} (2015). Optimal uniform convergence rates and asymptotic normality for series estimators under weak dependence and weak conditions. \textit{Journal of Econometrics}, \textbf{188}~(2), 447--465.

\bibitem[{Chen and Christensen(2018)}]{chen2018optimal}
\textsc{---} and \textsc{---} (2018). Optimal sup-norm rates and uniform inference on nonlinear functionals of nonparametric iv regression. \textit{Quantitative Economics}, \textbf{9}~(1), 39--84.

\bibitem[{Chen \textit{et~al.}(2025{\natexlab{c}})Chen, Gao and Wen}]{CGW2025}
\textsc{---}, \textsc{Gao, W.~Y.} and \textsc{Wen, L.} (2025{\natexlab{c}}). Relu-based and dnn-based generalized maximum score estimators. \textit{arXiv preprint arXiv:2511.19121}.

\bibitem[{Chen and Liao(2014)}]{chen2014sieveM}
\textsc{---} and \textsc{Liao, Z.} (2014). Sieve m inference on irregular parameters. \textit{Journal of Econometrics}, \textbf{182}~(1), 70--86.

\bibitem[{Chen \textit{et~al.}(2014)Chen, Liao and Sun}]{chen2014sieve}
\textsc{---}, \textsc{---} and \textsc{Sun, Y.} (2014). Sieve inference on possibly misspecified semi-nonparametric time series models. \textit{Journal of Econometrics}, \textbf{178}, 639--658.

\bibitem[{Chen and Pouzo(2015)}]{chenpouzo2015sieve}
\textsc{---} and \textsc{Pouzo, D.} (2015). Sieve wald and qlr inferences on semi/nonparametric conditional moment models. \textit{Econometrica}, \textbf{83}~(3), 1013--1079.

\bibitem[{Chen \textit{et~al.}(2019)Chen, Pouzo and Powell}]{chen2019penalized}
\textsc{---}, \textsc{---} and \textsc{Powell, J.~L.} (2019). Penalized sieve gel for weighted average derivatives of nonparametric quantile iv regressions. \textit{Journal of Econometrics}, \textbf{213}~(1), 30--53.

\bibitem[{Chen and Reiss(2011)}]{chen2011rate}
\textsc{---} and \textsc{Reiss, M.} (2011). On rate optimality for ill-posed inverse problems in econometrics. \textit{Econometric Theory}, \textbf{27}~(3), 497--521.

\bibitem[{Chernozhukov \textit{et~al.}(2018)Chernozhukov, Fern{\'a}ndez-Val and Luo}]{chernozhukov2018sorted}
\textsc{Chernozhukov, V.}, \textsc{Fern{\'a}ndez-Val, I.} and \textsc{Luo, Y.} (2018). The sorted effects method: Discovering heterogeneous effects beyond their averages. \textit{Econometrica}, \textbf{86}~(6), 1911--1938.

\bibitem[{Crump \textit{et~al.}(2009)Crump, Hotz, Imbens and Mitnik}]{crump2009dealing}
\textsc{Crump, R.~K.}, \textsc{Hotz, V.~J.}, \textsc{Imbens, G.~W.} and \textsc{Mitnik, O.~A.} (2009). Dealing with limited overlap in estimation of average treatment effects. \textit{Biometrika}, \textbf{96}~(1), 187--199.

\bibitem[{Delfour and Zol{\'e}sio(2011)}]{delfour2001shapes}
\textsc{Delfour, M.~C.} and \textsc{Zol{\'e}sio, J.-P.} (2011). \textit{Shapes and geometries: metrics, analysis, differential calculus, and optimization (2nd edition)}. SIAM.

\bibitem[{Efromovich and Low(1996)}]{efromovich1996optimal}
\textsc{Efromovich, S.} and \textsc{Low, M.} (1996). On optimal adaptive estimation of a quadratic functional. \textit{The Annals of Statistics}, \textbf{24}~(3), 1106--1125.

\bibitem[{Evans and Gariepy(2015)}]{evans2015measure}
\textsc{Evans, L.~C.} and \textsc{Gariepy, R.~F.} (2015). \textit{Measure Theory and Fine Properties of Functions}. CRC Press.

\bibitem[{Fan(1991)}]{fan1991estimation}
\textsc{Fan, J.} (1991). On the estimation of quadratic functionals. \textit{The Annals of Statistics}, pp. 1273--1294.

\bibitem[{Federer(1996)}]{federer1996geometric}
\textsc{Federer, H.} (1996). Geometric measure theory. \textit{Classics in Mathematics}.

\bibitem[{Feng \textit{et~al.}(2025)Feng, Hong and Nekipelov}]{feng2024statistical}
\textsc{Feng, K.}, \textsc{Hong, H.} and \textsc{Nekipelov, D.} (2025). Statistical inference of optimal allocations i: Regularities and their implications. \textit{arXiv preprint arXiv:2403.18248}.

\bibitem[{Gao and Li(2024)}]{gao2019robust}
\textsc{Gao, W.~Y.} and \textsc{Li, M.} (2024). Identification of semiparametric panel multinomial choice models with infinite-dimensional fixed effects. \textit{SSRN Working Paper 3282293}.

\bibitem[{Hall and Horowitz(2005)}]{HallHorowitz}
\textsc{Hall, P.} and \textsc{Horowitz, J.~L.} (2005). Nonparametric methods for inference in the presence of instrumental variables. \textit{The Annals of Statistics}, pp. 2904--2929.

\bibitem[{Heckman and Vytlacil(2001)}]{heckman2001policy}
\textsc{Heckman, J.~J.} and \textsc{Vytlacil, E.} (2001). Policy-relevant treatment effects. \textit{American Economic Review}, \textbf{91}~(2), 107--111.

\bibitem[{Heckman and Vytlacil(2005)}]{heckman2005structural}
\textsc{---} and \textsc{---} (2005). Structural equations, treatment effects, and econometric policy evaluation 1. \textit{Econometrica}, \textbf{73}~(3), 669--738.

\bibitem[{Heckman and Vytlacil(2007)}]{heckman2007econometric}
\textsc{---} and \textsc{Vytlacil, E.~J.} (2007). Econometric evaluation of social programs, part ii: Using the marginal treatment effect to organize alternative econometric estimators to evaluate social programs, and to forecast their effects in new environments. \textit{Handbook of econometrics}, \textbf{6}, 4875--5143.

\bibitem[{Horowitz(1992)}]{horowitz1992smoothed}
\textsc{Horowitz, J.~L.} (1992). A smoothed maximum score estimator for the binary response model. \textit{Econometrica: journal of the Econometric Society}, pp. 505--531.

\bibitem[{Horowitz(1993)}]{horowitz1993et}
\textsc{---} (1993). Optimal rates of convergence of parameter estimators in the binary response model with weak distributional assumptions. \textit{Econometric Theory}, pp. 1--18.

\bibitem[{Horowitz and Lee(2007)}]{HL07econometrica}
\textsc{---} and \textsc{Lee, S.} (2007). Nonparametric instrumental variables estimation of a quantile regression model. \textit{Econometrica}, \textbf{75}, 1191--1208.

\bibitem[{Khan and Tamer(2010)}]{khan2010irregular}
\textsc{Khan, S.} and \textsc{Tamer, E.} (2010). Irregular identification, support conditions, and inverse weight estimation. \textit{Econometrica}, \textbf{78}~(6), 2021--2042.

\bibitem[{Kim and Pollard(1990)}]{kim1990cube}
\textsc{Kim, J.} and \textsc{Pollard, D.} (1990). Cube root asymptotics. \textit{The Annals of Statistics}, pp. 191--219.

\bibitem[{Kitagawa and Tetenov(2018)}]{kitagawa2018should}
\textsc{Kitagawa, T.} and \textsc{Tetenov, A.} (2018). Who should be treated? empirical welfare maximization methods for treatment choice. \textit{Econometrica}, \textbf{86}~(2), 591--616.

\bibitem[{Lang(2002)}]{lang2002introduction}
\textsc{Lang, S.} (2002). \textit{Introduction to differentiable manifolds}, vol.~32. Springer.

\bibitem[{Loomis and Sternberg(2014)}]{loomis2014advanced}
\textsc{Loomis, L.~H.} and \textsc{Sternberg, S.~Z.} (2014). \textit{Advanced Calculus (Revised Edition)}. World Scientific Publishing Company.

\bibitem[{Mammen \textit{et~al.}(2012)Mammen, Rothe and Schienle}]{mammen2012nonparametric}
\textsc{Mammen, E.}, \textsc{Rothe, C.} and \textsc{Schienle, M.} (2012). Nonparametric regression with nonparametrically generated covariates. \textit{The Annals of Statistics}, \textbf{40}~(2), 1132--1170.

\bibitem[{Manski(1975)}]{manski1975maximum}
\textsc{Manski, C.~F.} (1975). Maximum score estimation of the stochastic utility model of choice. \textit{Journal of econometrics}, \textbf{3}~(3), 205--228.

\bibitem[{Manski(1985)}]{manski1985semiparametric}
\textsc{---} (1985). Semiparametric analysis of discrete response: Asymptotic properties of the maximum score estimator. \textit{Journal of econometrics}, \textbf{27}~(3), 313--333.

\bibitem[{Munkres(1991)}]{munkres1991analysis}
\textsc{Munkres, J.~R.} (1991). \textit{Analysis On Manifolds}. Westview Press.

\bibitem[{Newey(1994)}]{newey1994kernel}
\textsc{Newey, W.~K.} (1994). Kernel estimation of partial means and a general variance estimator. \textit{Econometric Theory}, \textbf{10}~(2), 1--21.

\bibitem[{Newey and Stoker(1993)}]{newey1993efficiency}
\textsc{---} and \textsc{Stoker, T.~M.} (1993). Efficiency of weighted average derivative estimators and index models. \textit{Econometrica: Journal of the Econometric Society}, pp. 1199--1223.

\bibitem[{Powell \textit{et~al.}(1989)Powell, Stock and Stoker}]{powell1989}
\textsc{Powell, J.~L.}, \textsc{Stock, J.~H.} and \textsc{Stoker, T.~M.} (1989). Semiparametric estimation of index coefficients. \textit{Econometrica: Journal of the Econometric Society}, pp. 1403--1430.

\bibitem[{Qiao(2021)}]{qiao2021nonparametric}
\textsc{Qiao, W.} (2021). Nonparametric estimation of surface integrals on level sets. \textit{Bernoulli}, \textbf{27}~(1), 155--191.

\bibitem[{Sasaki(2015)}]{sasaki2015quantile}
\textsc{Sasaki, Y.} (2015). What do quantile regressions identify for general structural functions? \textit{Econometric Theory}, \textbf{31}~(5), 1102--1116.

\bibitem[{Sobol(1967)}]{sobol1967distribution}
\textsc{Sobol, I.~M.} (1967). The distribution of points in a cube and the approximate evaluation of integrals. \textit{USSR Computational mathematics and mathematical physics}, \textbf{7}, 86--112.

\bibitem[{Stone(1980)}]{stone1980optimal}
\textsc{Stone, C.~J.} (1980). Optimal rates of convergence for nonparametric estimators. \textit{The annals of statistics}, \textbf{8}~(6), 1348--1360.

\bibitem[{Tsybakov(2009)}]{Tsybakov2009}
\textsc{Tsybakov, A.~B.} (2009). \textit{Introduction to Nonparametric Estimation}. Springer.

\end{thebibliography}

\newpage
\appendix
\section*{Online Appendix}
This Appendix consists of eight sections. Appendix \ref{subsec:Decomp} provides an instrumental transformation of submanifold integrals wrt Hausdorff measures into sums of lower-dimensional Lebesgue integrals using basic differential geometry tools. Appendix \ref{sec:MainProofs-LB} provides proofs of the lower bound rate results in Section \ref{sec:Minimax-LB}.  Appendix \ref{sec:MainProofs-UB} provides proofs of the upper bound rate results in Section \ref{sec:Minimax-UB}. Appendix \ref{sec:MainProofs_NL} provides proofs for Section \ref{sec:Nonlinear}. The remaining four sections cover the multiplier bootstrap choice of sieve dimension, analysis of quadratic submanifold integrals, additional upper bound rate results using a kernel nonparametric regression in the first stage, and a comparison of the bias-aware confidence interval with the undersmoothing approach used in the main text.

\section{\label{subsec:Decomp}Decomposition of Hausdorff Integrals on Submanifolds via Partition of Unity}

In this section, we explain how we can decompose a Hausdorff integral
on a regular submanifold into a sum of lower-dimensional Lebesgue
integrals, which can be analyzed more straightforwardly based on the
existing theory on the semiparametric estimation of Lebesgue integral
functionals.

This is done using a combination of standard mathematical tools from
differential manifold theory and geometric measure theory: see, for
example, \citet{munkres1991analysis} and \citet*{lang2002introduction}
for textbook treatments of differentiable manifolds, as well as \citet*{federer1996geometric}
and \citet*{evans2015measure} for textbook treatments of geometric measure
theory. See also Chapter 4 and Chapter 9 of \cite{delfour2001shapes} for a unified treatment. Specifically, the key idea is as follows:
\begin{itemize}
\item[(i)] Construct a finite open cover of the manifold $\mathcal{M}$ in question.
\item[(ii)] On each piece of the open cover in (i), apply the implicit function
theorem to obtain a parametrization (i.e., a coordinate chart) of
the $m$-dimensional manifold by an open subset of an $m$-dimensional
Euclidean space.
\item[(iii)] On each piece of the open cover in (i), use the ``change of variable''
formula in geometric measure theory to convert Hausdorff integral
on the manifold into an $m$-dimensional Lebesgue integral via the
parametrization in (ii).
\item[(iv)] Use the ``partition of unity'' method to combine the results in
(iii) across different pieces of the open cover in (i).
\end{itemize}
We now proceed through (i)-(iv) formally with more details.

By the regularity of ${\cal M}$ (Assumption \ref{assu:RegLevelSet}),
for every $\ol x\in{\cal M}$, $\Dif_{x}g\left(\ol x\right)$ has
rank $d-m$. Hence, we can decompose $\ol x$, potentially with a
permutation of coordinate indexes, as $\left(\ol x_{\left(m\right)},\ol x_{-\left(m\right)}\right)$,
where $\ol x_{\left(m\right)}$ is an $m$-dimensional subvector and
$\ol x_{-\left(m\right)}$ is the remaining $\left(d-m\right)$-dimensional
subvector, and correspondingly decompose $\Dif_{x}g\left(\ol x\right)$
into
\[
\Dif_{x}g\left(\ol x\right)=\left[\underset{m\times\left(d-m\right)}{\underbrace{\Dif_{x_{\left(m\right)}}g\left(\ol x\right)}},\underset{\left(d-m\right)\times\left(d-m\right)}{\underbrace{\Dif_{x_{-\left(m\right)}}g\left(\ol x\right)}}\right]
\]
so that $\text{rank}\left(\Dif_{x_{-\left(m\right)}}g\left(\ol x\right)\right)=d-m$.
Then, by the Implicit Function Theorem (see, e.g. Theorem 9.2 in \citealp{munkres1991analysis}),
there exists an open neighborhood ${\cal U}_{\ol x}$ around $\ol x_{\left(m\right)}$
and a unique $r$-times continuously differentiable function $\psi_{\ol x}:{\cal U}_{\ol x}\to\R^{-\left(m\right)}$
such that
\[
x_{-\left(m\right)}=\psi_{\ol x}\left(x_{\left(m\right)}\right),\quad\forall x_{\left(m\right)}\in{\cal U}_{\ol x}.
\]
Then, define $\varphi_{\ol x}:{\cal U}_{\ol x}\to{\cal M}\subseteq\R^{d}$
by
\[
\varphi_{\ol x}\left(x_{\left(m\right)}\right):=\left(x_{\left(m\right)},\psi_{\ol x}\left(x_{\left(m\right)}\right)\right).
\]
Then $\varphi_{\ol x}$ is a diffeomorphism ($r$-times differentiable
bijection) between ${\cal U}_{\ol x}$ and $\varphi_{\ol x}\left({\cal U}_{\ol x}\right)\subseteq{\cal M}$:
$\left({\cal U}_{\ol x},\varphi_{\ol x}\right)$ is thus a local coordinate
chart of ${\cal M}$ at $\ol x$, which parameterizes each point $x$
in $\varphi_{\ol x}\left({\cal U}_{\ol x}\right)$ on the manifold
with a vector $x_{\left(m\right)} \in {\cal U}_{\ol x}$, where ${\cal U}_{\ol x}$ is
an open set in an $m$-dimensional Euclidean space $\R^m$.

Clearly $\left\{ \varphi_{\ol x}\left({\cal U}_{\ol x}\right)\right\} _{\ol x\in{\cal M}}$
is an open cover for ${\cal M}$. Since ${\cal M}$ is compact, there
exists a finite sub-cover $\left\{ \varphi_{j}\left({\cal U}_{j}\right):j=1,...,\ol j\right\} $,
where, for each $j$, $\left({\cal U}_{j},\varphi_{j}\right)=\left({\cal U}_{\ol x^{\left(j\right)}},\varphi_{\ol x^{\left(j\right)}}\right)$
for some point $\ol x^{\left(j\right)}\in{\cal M}$. From now on,
we also write $\psi_{j}$ as the function $\psi_{\ol x^{\left(j\right)}}$
associated with $\left({\cal U}_{\ol x^{\left(j\right)}},\varphi_{\ol x^{\left(j\right)}}\right)$,
and we write $x_{\left(j,m\right)}$ for a generic point in ${\cal U}_{\ol x^{\left(j\right)}}$.
The ``$j$'' in the subscript ``$\left(j,m\right)$'' emphasizes
that, while $x_{\left(j,m\right)}$ is always an $m$-dimensional
vector that corresponds to $m$ different coordinates in $\R^{d}$,
the exact sequence of the $m$ coordinates may differ across different
$j$'s.

In general, the open cover $\left\{ \varphi_{j}\left({\cal U}_{j}\right):j=1,...,\ol j\right\} $
may have nonempty intersections. The standard mathematical tool to
avoid ``double counting'' under potentially overlapping covers is
the so-called ``partition of unity'', i.e., a collection of smooth
real-valued functions $\left\{ \rho_{j}:j=1,...,\ol j\right\} $ on
$\R^{d}$ such that (i) each $\rho_{j}$ is nonnegative (ii) $\sum_{j=1}^{J}\rho_{j}\left(x\right)=1$
for all $x\in{\cal M}$, and (iii) $\rho_{j}\left(x\right)=0$ for
all $x\notin\varphi_{j}\left({\cal U}_{j}\right)$. Such a ``partition
of unity'' $\left(\rho_{j}\right)$ is guaranteed to exist, say,
by Lemma 25.2 in \citet*{munkres1991analysis}.

Given $\left({\cal U}_{j},\varphi_{j},\rho_{j}\right)_{j=1}^{\ol j}$,
we may then decompose an integral on the manifold ${\cal M}$ into
a sum of integrals on $\varphi_{j}\left({\cal U}_{j}\right)$: for any map $\omega:{\cal X}\mapsto \R$,
\begin{equation}
\int_{{\cal M}}\omega\left(x\right)d{\cal H}^{m}\left(x\right)=\sum_{j=1}^{\ol j}\int_{\varphi_{j}\left({\cal U}_{j}\right)}\rho_{j}\left(x\right)\omega \left(x\right)d{\cal H}^{m}\left(x\right)\label{eq:PartUnity}
\end{equation}
Then, since $\varphi_{j}$ is a bijection between ${\cal U}_{j}$
and $\varphi_{j}\left({\cal U}_{j}\right)$ by construction, each
$x\in\varphi_{j}\left({\cal U}_{j}\right)$ can be expressed as $x=\varphi_{j}\left(x_{\left(j,m\right)}\right)$ for $x_{\left(j,m\right)} \in \R^m$.

By the ``change-of-variable'' formula in geometric measure theory,
e.g., Theorem 3.9 in \citet*{evans2015measure}, each of the $\ol j$
integrals on $\varphi_{j}\left({\cal U}_{j}\right)$ can be evaluated
as Lebesgue integrals on ${\cal U}_{j}\subset \R^m$ through the parametrization
$\varphi_{j}$, i.e.,
\begin{equation}
\int_{\varphi_{j}\left({\cal U}_{j}\right)}\rho_{j}\left(x\right)w\left(x\right)d{\cal H}^{m}\left(x\right)=\int_{{\cal U}_{j}}\rho_{j}\left(\varphi_{j}\left(x_{\left(j,m\right)}\right)\right)w\left(\varphi_{j}\left(x_{\left(j,m\right)}\right)\right)\mathcal{J}\varphi_{j}\left(x_{\left(j,m\right)}\right)dx_{\left(j,m\right)},\label{eq:HausToLeb}
\end{equation}
where $\mathcal{J}\varphi_{j}:=\sqrt{\text{det}\left(\varphi_{j}^{'}\varphi_{j}\right)}$
denotes the Jacobian of $\varphi_{j}$.
Combining \eqref{eq:PartUnity} and \eqref{eq:HausToLeb}
we obtain:
\begin{equation}
\int_{{\cal M}}\omega\left(x\right)d{\cal H}^{m}\left(x\right)=\sum_{j=1}^{\ol j}\int_{{\cal U}_{j}}\rho_{j}\left(\varphi_{j}\left(x_{\left(j,m\right)}\right)\right)
\omega\left(\varphi_{j}\left(x_{\left(j,m\right)}\right)\right)\mathcal{J}\varphi_{j}\left(x_{\left(j,m\right)}\right)dx_{\left(j,m\right)}.\label{eq:PieceLebInt}
\end{equation}

The decomposition in \eqref{eq:PieceLebInt} converts a Hausdorff
integral on a $m$-dimensional manifold ${\cal M}$ into a finite sum of $m$-dimensional Lebesgue integrals on $\R^m$,
the latter of which are easier to analyze using existing results on
semi/nonparametric estimation.
Below we show how we use \eqref{eq:PieceLebInt}
to establish the convergence rate and asymptotic normality of the
semiparametric estimator when the first stage nonparametric function $h_0$  could be learned/estimated using any machine learning/AI algorithms. The key is to use a sieve Riesz representation theory to establish the asymptotic local influence function representation for $\theta=\Gamma (h)$.

\section{\label{sec:MainProofs-LB}Proofs of Theoretical Results in Section \ref{sec:Minimax-LB}}

\subsection{Proof of Results in Subsection \ref{subsec:Minimax_Reg-LB}}

To establish the lower bound for the convergence rate in Theorem \ref{thm:LB-reg}, we use Le Cam's two-point comparison approach based on KL divergence.
\begin{lem}[Le Cam's Minimax Rate Bounds based on KL divergence]\label{lem:LeCam}
 Suppose that there exist $P_{0},P_{1}$ such that $KL\left(P_{0},P_{1}\right)\leq\frac{\log2}{n}$.
Then
\[
R_{n}:=\inf_{\tilde{\t}}\sup_{P}\E_{P}\left[d\left(\tilde{\t},\t\left(P\right)\right)\right]\geq\frac{1}{16}d\left(\t\left(P_{0}\right),\t\left(P_{1}\right)\right).
\]
\end{lem}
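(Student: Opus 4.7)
The plan is to apply Le Cam's classical two-point reduction to minimax estimation. Write $\t_j := \t(P_j)$ for $j\in\{0,1\}$ and set $\delta := d(\t_0, \t_1)$. The argument decomposes into three elementary steps.

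First, I would restrict the supremum to the two-element subfamily $\{P_0, P_1\}$ and invoke the trivial fact that a maximum is at least the average: for any estimator $\hat{\t}$,
\[
R_n \geq \max\bigl\{\E_{P_0}[d(\hat{\t},\t_0)],\ \E_{P_1}[d(\hat{\t},\t_1)]\bigr\} \geq \tfrac{1}{2}\bigl(\E_{P_0}[d(\hat{\t},\t_0)] + \E_{P_1}[d(\hat{\t},\t_1)]\bigr).
\]

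Next, I would convert the estimation problem into a binary hypothesis test. Define the minimum-distance test $\psi := \ind\{d(\hat{\t},\t_1) \leq d(\hat{\t},\t_0)\}$. By the triangle inequality, $\psi = 1$ forces $d(\hat{\t},\t_0) \geq \delta/2$, and symmetrically $\psi = 0$ forces $d(\hat{\t},\t_1) \geq \delta/2$. Markov's inequality then yields $\E_{P_0}[d(\hat{\t},\t_0)] \geq (\delta/2)\,P_0(\psi=1)$ and $\E_{P_1}[d(\hat{\t},\t_1)] \geq (\delta/2)\,P_1(\psi=0)$, which combined with the previous display gives
\[
R_n \geq \tfrac{\delta}{4}\,\bigl[P_0(\psi=1) + P_1(\psi=0)\bigr].
\]

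Finally, I would lower bound the sum of test error probabilities using the KL hypothesis. Reading $P_0, P_1$ as per-observation distributions so that the sample law is the $n$-fold product, tensorization yields $KL(P_0^{\otimes n},P_1^{\otimes n}) = n\,KL(P_0, P_1) \leq \log 2$. The Bretagnolle--Huber inequality, $P_0(A) + P_1(A^c) \geq \tfrac{1}{2}\exp(-KL(P_0, P_1))$ for every measurable event $A$, applied to $A = \{\psi = 1\}$ yields $P_0(\psi=1) + P_1(\psi=0) \geq \tfrac{1}{2}e^{-\log 2} = 1/4$, whence $R_n \geq \delta/16$, as claimed. The only real design choice — rather than a genuine obstacle, since the result is standard — is the sharp information-theoretic inequality used in Step~3: Bretagnolle--Huber is calibrated precisely to produce the displayed constant $1/16$ under the hypothesis $KL \leq \log 2/n$, which is exactly why the hypothesis is stated in that form; Pinsker's inequality would give a comparable but slightly different numerical factor.
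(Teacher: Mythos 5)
Your proof is correct, and it is exactly the standard Le Cam two-point argument that the paper itself invokes without proof (the lemma is stated in the appendix as a known result, so there is no in-paper derivation to compare against). The constant bookkeeping checks out: the averaging step gives $\tfrac{1}{2}$, the minimum-distance test plus triangle inequality gives $\tfrac{\delta}{2}$, and Bretagnolle--Huber with $n\,KL(P_0,P_1)\leq\log 2$ gives $\tfrac{1}{4}$, for a product of $\tfrac{\delta}{16}$. The only point worth flagging is that your Step 2 silently requires $d$ to satisfy the triangle inequality (i.e., to be a semimetric), which the lemma's statement does not impose but which holds for the loss $d(\t,\t')=|\t-\t'|$ actually used in the paper's application to Theorem \ref{thm:LB-reg}.
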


\begin{proof}[\textbf{Proof of Theorem \ref{thm:LB-reg}}]
In the lower bound proof, we can assume that the submanifold function $g\in \L^{s_1}({\cal X})$ for some $s_1 \geq \max\{1,s\}$. We first prove the result for the nonparametric regression case $h_0(x) = \E[\rest{Y_i}X_i=x]$.
Given any $h_{0}\in\Lambda_{c}^{s}(\mathcal{X})$, let $b_{n}$ be
a small positive number and define
\[
h_{1}\left(x\right):=h_{0}\left(x\right)+b_{n}^{s}K_{d-m}\left(\frac{g\left(x\right)}{b_{n}}\right)
\]
where $g(x):=\left( g_1(x),...,g_{d-m}(x)\right)$ and $K_{d-m}\left((t_1,...,t_{d-m})\right):=\prod_{\ell=1}^{d-m}K\left(t_{\ell}\right)$
and
\begin{equation}
K\left(t\right):=a\exp\left(-\frac{1}{1-t^{2}}\right)\ind\left\{ \left|t\right|\leq1\right\} \label{eq:K_bump-1}
\end{equation}
for some sufficiently small $a>0$ such that $h_{1}$ stays in H\"older
class of smoothness order $s$. To see this, notice that $K$ and
$K_{d-m}$ are infinitely differentiable with uniformly bounded derivatives,
and for any $k\leq s$ we have:
\[
\frac{\p^{k}}{\p x_{j}^{k}}h_{1}(x)=\frac{\p^{k}}{\p x_{j}}h_{0}(x)+b_{n}^{s-k}\left(K_{d-m}^{\left(1\right)}\left(\cd\right)\frac{\p^{k}}{\p x_{j}^{k}}g\left(x\right)+...+K_{d-m}^{\left(k\right)}\left(\cd\right)\frac{\p}{\p x_{j}}g\left(x\right)\right).
\]
Since $h_0,g\in \L^s_c({\cal X})$, all derivatives of $K$ are uniformly bounded, and $b_n$ is small, $\frac{\p^{k}}{\p x_{j}^{k}}h_{1}(x)$ is uniformly bounded as well. The H\"older condition for the fractional exponent $s-[s]$ can also be similarly verified.

Note that, to derive a lower bound for the minimax rate for a class of estimation problems, it suffices to establish the lower bound under one admissible example of the problem. Specifically, we consider the example setting where $X_{i}$ is uniformly distributed on $\left[0,1\right]^{d}$, and $w(x) \equiv 1$. In addition, let $P_0$ be the joint distribution (with density $p_0$) of $\left(X_{i},Y_{i0}\right)_{i=1}^{n}$ with
\[
Y_{i0}=h_{0}\left(X_{i}\right)+\e_{i},\quad\e_{i}\sim_{i.i.d.}\cN\left(0,1\right).
\]
and let $P_{1}$ be the joint distribution (with density $p_1$) of $\left(X_{i},Y_{i1}\right)_{i=1}^{n}$
with
\[
Y_{i1}=h_{1}\left(X_{i}\right)+\e_{i},\quad\e_{i}\sim_{i.i.d.}\cN\left(0,1\right).
\]

Then, the KL divergence between $P_{0}^{n}$ and $P_{1}^{n}$ is given
by:
\begin{align}
KL\left(P_{0}^{n},P_{1}^{n}\right) & =nKL\left(P_{0},P_{1}\right)\nonumber \\
 & =n\int_{\left[0,1\right]^{d}}\int p_{0}\left(x,y\right)\log\left(\frac{p_{0}\left(x,y\right)}{p_{1}\left(x,y\right)}\right)dydx\nonumber \\
 & =n\int_{\left[0,1\right]^{d}}\int p_{0}\left(\rest yx\right)\log\left(\frac{p_{0}\left(\rest yx\right)}{p_{1}\left(\rest yx\right)}\right)dydx\nonumber \\
 & =n\int_{\left[0,1\right]^{d}}\int\phi\left(y\right)\log\left(\frac{\phi\left(y-h_{0}\left(x\right)\right)}{\phi\left(y-h_{1}\left(x\right)\right)}\right)dydx\nonumber \\
 & =n\int_{\left[0,1\right]^{d}}\int KL\left(\cN\left(h_{0}\left(x\right),1\right),{\cal N}\left(h_{1}\left(x\right),1\right)\right)dydx\nonumber \\
 & =n\frac{1}{2}\int_{\left[0,1\right]^{d}}\left(h_{1}\left(x\right)-h_{0}\left(x\right)\right)^{2}dx\nonumber \\
 & =n\frac{1}{2}\int_{\left[0,1\right]^{d}}b_{n}^{2s}K_{d-m}^{2}\left(\frac{g\left(x\right)}{b_{n}}\right)dx\nonumber \\
 & =n\frac{1}{2}b_{n}^{2s}\int_{\left[0,1\right]^{d}}K_{d-m}^{2}\left(\frac{g\left(x\right)}{b_{n}}\right)dx\nonumber \\
 & \leq const.\times n b_{n}^{2s}\int_{\left[0,1\right]^{d}}\prod_{\ell=1}^{d-m}\ind\left\{ \frac{\left|g_{\ell}\left(x\right)\right|}{b_{n}}\leq1\right\} dx\nonumber \\
 & \leq Cnb_{n}^{2s}\P\left(\norm{g\left(X_{i}\right)}\leq b_{n}\right) \leq C'nb_{n}^{2s+d-m}\label{eq:KL_rate}
\end{align}
provided that
\begin{equation}
\P\left(\norm{g\left(X_{i}\right)}\leq b_{n}\right)
\leq Mb_{n}^{d-m}.\label{eq:SphereRate}
\end{equation}

We now show that \eqref{eq:SphereRate} holds. Let $W_{i}:=g\left(X_{i}\right) \in \R^{d-m}$,
note that the density of $W_{i}$
\begin{align*}
p_{W}\left(w\right) & :=\int_{\left\{ g\left(x\right)=w\right\} }p_{X}\left(x\right)\frac{1}{\mathcal{J}_g\left(x\right)}d{\cal H}^{m}\left(x\right)\leq M,
\end{align*}
is uniformly bounded since $p_{X}\left(x\right)\equiv1$ and $\mathcal{J}_g\left(x\right)$
is uniformly bounded away from zero as below by Assumption \ref{assu:Jacob}(i).
Consequently,
\begin{align*}
\P\left(\norm{g\left(X_{i}\right)}\leq b_{n}\right) & =\int_{\ind\left\{ \left|w\right|\leq b_{n}\right\} }p_{W}\left(w\right)dw \leq M\int_{\ind\left\{ \left|w\right|\leq b_{n}\right\} }dw \leq Mb_{n}^{d-m}
\end{align*}
verifying condition \eqref{eq:SphereRate}.

Hence, we can set $b_{n} =  \left(\frac{\log 2}{C'n}\right)^{\frac{1}{2s+d-m}} \asymp n^{-\frac{1}{2s+d-m}}$ so that
\[
KL\left(P_{0},P_{1}\right) \leq C'b_{n}^{2s+d-m} = \frac{\log2}{n}
\]
as required in Lemma \ref{lem:LeCam}.

In the meanwhile, notice that
\begin{align*}
\left|L\left(h_{1}\right)-L\left(h_{0}\right)\right| & =\int_{\left\{x\in\mathcal{X}: g(x)=\textbf{0}\right\} }\left|h_{1}\left(x\right)-h_{0}\left(x\right)\right|d{\cal H}^{m}\left(x\right)
 =C''b_{n}^{s}K\left(0\right)
\end{align*}

Hence, define $\theta(P) := L(\E_P[Y|X=\cd])$ and $d(\t,\t') = |\t-\t'|$, we deduce from Lemma \ref{lem:LeCam} that the minimax rate
\begin{align*}
R_{n} & \geq \frac{1}{16}d(\t(P_{0}),\t(P_{1})) \\
& = \frac{1}{16}|L\left(h_{1}\right)-L\left(h_{0}\right)|\\
& =\frac{1}{16}\int_{\left\{x\in\mathcal{X}: g(x)=\textbf{0}\right\} }\left|h_{1}\left(x\right)-h_{0}\left(x\right)\right|d{\cal H}^{m}\left(x\right)\\
&=C''b_{n}^{s}|K\left(0\right)| \asymp b_{n}^{s} \asymp   n^{-\frac{s}{2s+d-m}}.
\end{align*}
\end{proof}


\begin{proof}[\textbf{Proof of Corollary \ref{cor:density}}]
We now prove that the same rate $r^*_n$ as in Theorem \ref{thm:LB-reg} also applies in the nonparametric density case, i.e., $h_0(x) \equiv p_0(x)$ is the probability density function of $X_i$.

We set $h_{0}\left(x\right)\equiv p_{0}\left(x\right)\equiv1$ on
$\left[0,1\right]^{d}$, and set
\[
p_{1}\left(x\right):=p_{0}\left(x\right)+b_{n}^{s}\tilde{K}\left(\frac{\norm{g\left(x\right)}}{b_{n}}\right)
\]
where
\[
\tilde{K}_{n}\left(t\right):=a\left[\exp\left(-\frac{1}{1-t^{2}}\right)-c_{n}\right]\ind\left\{ \left|t\right|\leq1\right\}
\]
with
\[
c_{n}:=\frac{\int\exp\left(-\frac{1}{1-\norm{g\left(x\right)}^2/b_{n}^2}\right)\ind\left\{ \norm{g\left(x\right)}\leq b_{n}\right\} dx}{\int\ind\left\{ \norm{g\left(x\right)}\leq b_{n}\right\} dx}
\]
so that $\tilde{K}_{n}\left(t\right)=0$ whenever $\left|t\right|>1$
and furthermore
\[
\int\tilde{K}_{n}\left(\frac{\norm{g\left(x\right)}}{b_{n}}\right)dx=0.
\]
This ensures that
\begin{align*}
\int p_{1}\left(x\right)dx & =\int p_{0}\left(x\right)dx+b_{n}^{s}\int\tilde{K}_{n}\left(\frac{\norm{g\left(x\right)}}{b_{n}}\right)dx=1+0=1.
\end{align*}
Furthermore, $a>0$ can be set sufficiently small to ensure that
\[
p_{1}\left(x\right)\geq0,
\]
so that $p_{1}$ remains a valid density function.

Then, using the inequality that $ KL(P_1,P_0) \leq \chi
^2(P_1,P_0)$ we have
\begin{align}
KL\left(P_{1},P_{0}\right)\leq\chi_{2}\left(P_{1},P_{0}\right) & :=\int_{\left[0,1\right]^{d}}\left(\frac{p_{1}\left(x\right)}{p_{0}\left(x\right)}-1\right)^{2}p_{0}\left(x\right)dx\nonumber \\
 & =\int_{\left[0,1\right]^{d}}\left(b_{n}^{s}\tilde{K}\left(\frac{\norm{g\left(x\right)}}{b_{n}}\right)\right)^{2}dx\nonumber \\
 & =b_{n}^{2s}\int_{\left[0,1\right]^{d}}\tilde{K}\left(\frac{\norm{g\left(x\right)}}{b_{n}}\right)^{2}dx\nonumber \\
 & \leq b_{n}^{2s}\int_{\left[0,1\right]^{d}}\ind\left\{ \frac{\norm{g\left(X_{i}\right)}}{b_{n}}\leq1\right\} dx\nonumber \\
 & \leq Cb_{n}^{2s}\P\left\{ \norm{g\left(X_{i}\right)}\leq b_{n}\right\} \nonumber \\
 & \leq C^{2}b_{n}^{2s+d-m}\label{eq:Chi2_Rate}
\end{align}
which coincides with the rate in \eqref{eq:KL_rate} under \eqref{eq:SphereRate}
for the nonparametric regression case. The rest of the proof is the
same as in the nonparametric regression case.
\end{proof}


\begin{proof}[\textbf{Proof of Corollary \ref{cor:NL_rate_LB}}]
The proof is adapted from the proof of Theorem \ref{thm:LB-reg}, combined with the linearization in Assumption \ref{assu:DGamma_linear}. As in the proof of Theorem \ref{thm:LB-reg}, fix any $P_0$ with regression function $h_0$ and associated $(g,w)$. For a given small $b_n>0$ small, define
\[
h_1(x) := h_0(x) + b_n^s K_{d-m}\!\left(\frac{g(x)}{b_n}\right),
\]
where $K_{d-m}$ is defined in the proof of Theorem \ref{thm:LB-reg}. Let $P_0$ denote the joint distribution of $(X_i,Y_i)$ with $X_i$ uniform on $[0,1]^d$ and
$Y_i = h_0(X_i)+\epsilon_i$ with $\epsilon_i\sim_{i.i.d.} \cN(0,1)$ and let $P_1$ be the analogous distribution obtained by replacing $h_0$ with $h_1$. Then, exactly as in the proof of Theorem \ref{thm:LB-reg}, we have
\[
\mathrm{KL}(P_0^n,P_1^n)
= \frac{n}{2}\int_{[0,1]^d}\big(h_1(x)-h_0(x)\big)^2\,dx
\;\lesssim\; n\,b_n^{2s+d-m}.
\]
and thus $b_n \asymp n^{-\frac{1}{2s+d-m}}$ ensures that $\mathrm{KL}(P_0^n,P_1^n)\le\log 2$ for all $n$ large enough. Now, by Assumption \ref{assu:DGamma_linear} and the definition of $h_1$, we have
\begin{align*}
D\Gamma(h_0)[h_1-h_0]
&= \int_{\cM} \big(h_1(x)-h_0(x)\big)w(x)\,d\mathcal{H}^m(x) = O(b_n^s) 
\end{align*}
following the same argument in proof of Theorem \ref{thm:LB-reg}, just with $D\G(h_0)[h_1-h_0]$ in place of $L(h_1-h_0)$.
\begin{equation}
\big|D\Gamma(h_0)[h_1-h_0]\big|
\;\asymp\; b_n^s.
\label{eq:linear-separation}
\end{equation}
Then, writing $R_n := \G(h_1)-\G(h_0)-D\Gamma(h_0)[h_1-h_0]$
\begin{align*}
\abs{\Gamma(h_1)-\Gamma(h_0)}
\ & =\  \abs{D\Gamma(h_0)[h_1-h_0] + R_n} \\
& \geq\  \abs{\abs{D\Gamma(h_0)[h_1-h_0]} - \abs{R_n}}\\
& \geq\  (1-C_r)\abs{D\Gamma(h_0)[h_1-h_0]} \asymp  O(b_n^s)
\end{align*}
where the last inequality is due to Assumption \ref{assu:DGamma_linear}(ii).
Hence, by Lemma \ref{lem:LeCam}, we have the conclusion.
\end{proof}

\subsection{Proof of Results in Subsection \ref{subsec:Minimax_NPIV-LB}}

\begin{proof}[\textbf{Proof of Theorem \ref{thm:LB_NPIV}}]
Let $\Pr_{h}$ denote the joint distribution of $(Y,X,Z)$ satisfying $Y= T h+u$ with a known conditional expectation operator $T$ and  $u|Z\sim \mathcal N(0,\sigma^2)$, the so-called reduced-form nonparametric indirection regression (NPIR) model as in \cite{chen2011rate} with a fixed variance $\sigma^2>0$. As pointed out in \cite{chen2011rate} it is enough to consider the NPIR model to establish the lower bound for the NPIV model, specifically, a lower bound for the NPIR model is also a lower bound for the NPIV model.

Fix any joint law of $(X,Z)$ that satisfies Assumption \ref{a-NPIV-data}(i)--(ii) and yields a conditional
expectation operator $T h(z)=\E[h(X)\mid Z=z]$ satisfying Assumption \ref{ass:link}.
Consider the following \emph{submodel} of the NPIV model \eqref{eq:npreg}:
\begin{equation}\label{eq:NPIR-submodel}
Y = (Th)(Z) + u,\qquad u\mid Z \sim \mathcal N(0,\underline\sigma^2),\qquad u \perp (X,Z),
\end{equation}
with $\underline\sigma^2$ as in Assumption \ref{a-NPIV-data}(iii).  For this submodel, writing
$\epsilon := Y-h(X) = (Th)(Z)-h(X)+u$, we have
$\E[\epsilon\mid Z]=Th(Z)-\E[h(X)\mid Z]+\E[u\mid Z]=0$, so \eqref{eq:npreg} holds.
Moreover, $\E[\epsilon^2\mid Z]\geq \E[u^2\mid Z]=\underline\sigma^2$, so Assumption
\ref{a-NPIV-data}(iii) is satisfied.  Since Theorem \ref{thm:LB_NPIV} is a minimax \emph{lower bound}
(over a larger class of NPIV DGPs), it suffices to lower bound the minimax risk on the restricted
submodel \eqref{eq:NPIR-submodel}.

We again apply Le Cam two-point testing argument.
Let $h_0\equiv 0$. We now use the CDV wavelet to construct another function $h_1$.
Fix a wavelet type $\tilde G_0$ (one of the finitely many tensor-product CDV wavelet types) such that
the corresponding mother wavelet is continuous and not identically zero. Such a type exists by
construction of the CDV basis. For each resolution level $j\in\mathbb N$, let
$\{\psi_{j,k,\tilde G_0}:k\in\mathcal K_j\}$ denote the corresponding wavelets at level $j$.
By compact support and continuity, there exist constants $c_\psi>0$ and $r_\psi\in(0,1)$ such that for
every sufficiently large $j$ there is a collection of indices $k$ for which
\begin{equation}\label{eq:wavelet-lower}
|\psi_{j,k,\tilde G_0}(x)| \;\geq\; c_\psi\,2^{jd/2}
\quad\text{for all }x\in B(x_{j,k},\,r_\psi 2^{-j}),
\end{equation}
for some $x_{j,k}$ in the interior of $\supp(\psi_{j,k,\tilde G_0})$; moreover the balls
$B(x_{j,k},r_\psi 2^{-j})$ can be chosen to be contained in $\supp(\psi_{j,k,\tilde G_0})$.

Next, we pick a set of wavelets whose supports lie along the submanifold.
Assume $L(\cdot)$ is nontrivial, i.e.
$\int_{\mathcal M} |w(x)|\,d\mathcal H^m(x)>0$; otherwise $L(h)\equiv 0$ and the lower bound is trivial.
Choose $\underline w>0$ such that the set
$\mathcal M_w:=\{x\in\mathcal M:|w(x)|\geq \underline w\}$ has $\mathcal H^m(\mathcal M_w)>0$.

Under Assumption \ref{assu:RegLevelSet}, $\mathcal M$ is a $C^1$ compact $m$-dimensional submanifold. Hence,
there exists $c_M>0$ such that for all $x\in\mathcal M$ and all small enough
$r>0$,
\begin{equation}\label{eq:ahlfors}
\mathcal H^m(\mathcal M\cap B(x,r)) \;\ge\; c_M r^m.
\end{equation}
Consequently, for each large $j$ we can select a set of indices $\mathcal I_j\subset \mathcal K_j$ with
cardinality $|\mathcal I_j|\asymp 2^{jm}$ such that the balls
$\{B(x_{j,k},r_\psi 2^{-j})\}_{k\in\mathcal I_j}$ are pairwise disjoint and satisfy
$x_{j,k}\in \mathcal M_w$ for all $k\in\mathcal I_j$.

For this fixed $j$, define the perturbation
\begin{equation}\label{eq:h1-def}
h_1(x) \;:=\; \delta\,2^{-j(s+d/2)}\sum_{k\in\mathcal I_j} a_{j,k}\,\psi_{j,k,\tilde G_0}(x),
\qquad
a_{j,k}:=\sgn\!\Big(\int_{\mathcal M}\psi_{j,k,\tilde G_0}(x)w(x)\,d\mathcal H^m(x)\Big),
\end{equation}
with a constant $\delta>0$ to be chosen sufficiently small.

By the standard CDV-wavelet characterization of Holder/Besov classes
(e.g. Appendix A of \cite{chen2018optimal}), there exists a constant $\delta_0>0$ such that
if $\sup_{j,k,\tilde G} 2^{j(s+d/2)}|\langle h,\psi_{j,k,\tilde G}\rangle_{L^2(\mathcal X)}|\leq \delta_0$,
then $h\in \Lambda_c^s$ (for a possibly different radius constant $c$).
Since $h_1$ has wavelet coefficients equal to $\delta\,2^{-j(s+d/2)}$ at level $j$ and $0$ elsewhere,
choosing $\delta\le\delta_0$ ensures $h_1\in\Lambda_c^s(\mathcal X)$.

Because $a_{j,k}$ aligns the sign of each contribution,
\[
|L(h_1)-L(h_0)|
=|L(h_1)|
=\delta\,2^{-j(s+d/2)}\sum_{k\in\mathcal I_j}\Big|\int_{\mathcal M}\psi_{j,k,\tilde G_0}(x)w(x)\,d\mathcal H^m(x)\Big|.
\]
For each $k\in\mathcal I_j$, using \eqref{eq:wavelet-lower}, $|w|\ge \underline w$ on $\mathcal M_w$,
and the lower volume bound \eqref{eq:ahlfors} with $r=r_\psi 2^{-j}$, we obtain
\[
\Big|\int_{\mathcal M}\psi_{j,k,\tilde G_0}(x)w(x)\,d\mathcal H^m(x)\Big|
\;\ge\;
\underline w\int_{\mathcal M\cap B(x_{j,k},r_\psi 2^{-j})} |\psi_{j,k,\tilde G_0}(x)|\,d\mathcal H^m(x)
\;\gtrsim\;
2^{jd/2}\cdot 2^{-jm}
=
2^{j(d/2-m)}.
\]
Therefore,
\[
|L(h_1)-L(h_0)|
\;\gtrsim\;
\delta\,2^{-j(s+d/2)}\cdot |\mathcal I_j|\cdot 2^{j(d/2-m)}
\;\asymp\;
\delta\,2^{-j(s+d/2)}\cdot 2^{jm}\cdot 2^{j(d/2-m)}
=
\delta\,2^{-js}.
\]
Hence, the parameter separation is of order $2^{-js}$.

Next we compute the KL divergence under the NPIR model.
Consider the distributions $P_0,P_1$ induced by \eqref{eq:NPIR-submodel} with $h=h_0$ and $h=h_1$,
and the same joint law of $(X,Z)$ in both cases. Since only the conditional mean of $Y\mid Z$ changes,
standard Gaussian KL calculations yield
\begin{equation}\label{eq:KL-NPIR}
KL(P_0^n,P_1^n)=n\,KL(P_0,P_1)
=\frac{n}{2\underline\sigma^2}\,\|T(h_1-h_0)\|_{L^2(\mathcal Z)}^2
=\frac{n}{2\underline\sigma^2}\,\|Th_1\|_{L^2(\mathcal Z)}^2.
\end{equation}
By the link condition \eqref{eq:link-condition} and the fact that $h_1$ has nonzero wavelet coefficients
only at the single level $j$ and type $\tilde G_0$, we have
\[
\|Th_1\|_{L^2(\mathcal Z)}^2
\;\lesssim\;
\nu(2^{j})^2 \sum_{k\in\mathcal I_j}\bigl\langle h_1,\psi_{j,k,\tilde G_0}\bigr\rangle_{L^2(\mathcal X)}^2
=
\nu(2^{j})^2\cdot |\mathcal I_j|\cdot \delta^2\,2^{-2j(s+d/2)}
\;\asymp\;
\delta^2\,\nu(2^{j})^2\,2^{-j(2s+d-m)}.
\]
Plugging this into \eqref{eq:KL-NPIR} gives
\begin{equation}\label{eq:KL-bound}
KL(P_0^n,P_1^n)\;\lesssim\; n\,\delta^2\,\nu(2^{j})^2\,2^{-j(2s+d-m)}.
\end{equation}

Choose $\delta>0$ small enough (fixed) so that the implicit constant in \eqref{eq:KL-bound} is absorbed.

\emph{Mildly ill-posed case:} $\nu(t)=t^{-\varsigma}$.
Then \eqref{eq:KL-bound} becomes $KL(P_0^n,P_1^n)\lesssim n\,2^{-j(2(s+\varsigma)+d-m)}$.
Choose $j=j_n$ such that $2^{j_n}\asymp n^{1/(2(s+\varsigma)+d-m)}$, which yields
$KL(P_0^n,P_1^n)\le (\log 2)$ for all large $n$.
By Step 3, the separation satisfies $|L(h_1)-L(h_0)|\gtrsim 2^{-j_n s}\asymp n^{-s/(2(s+\varsigma)+d-m)}$.

\emph{Severely ill-posed case:} $\nu(t)=\exp(-\tfrac12 t^\varsigma)$.
Then \eqref{eq:KL-bound} becomes
$KL(P_0^n,P_1^n)\lesssim n\,\exp(-c\,2^{\varsigma j})\,2^{-j(2s+d-m)}$ for some $c>0$.
Choose $j=j_n$ such that $2^{\varsigma j_n}\asymp \log n$, which implies $KL(P_0^n,P_1^n)\to 0$ and hence
$KL(P_0^n,P_1^n)\le (\log 2)$ for all large $n$.
Step 3 gives $|L(h_1)-L(h_0)|\gtrsim 2^{-j_n s}\asymp (\log n)^{-s/\varsigma}$.

Finally, apply Lemma \ref{lem:LeCam} with the metric $d(\theta,\theta')=|\theta-\theta'|$ and
$\theta(P_h)=L(h)$.
Since $KL(P_0^n,P_1^n)\le (\log 2)$ for the above choices, Lemma \ref{lem:LeCam} yields
\[
\inf_{\tilde\theta_n}\sup_{h\in\Lambda_c^s}\E_{P_h}\bigl[(\tilde\theta_n-L(h))^2\bigr]
\;\gtrsim\;
|L(h_1)-L(h_0)|^2
\;\gtrsim\;
r_{NPIV,n}^2,
\]
with $r_{NPIV,n}$ as stated in Theorem \ref{thm:LB_NPIV}.
This proves the claim.
\end{proof}

\begin{proof}[\textbf{Proof of Corollary \ref{cor:NL_NPIV_LB}}]

Since the minimax risk over $\Lambda_c^s(\mathcal{X})$ dominates the risk over any subset, it suffices to
construct $h_0,h_1\in\Lambda_c^s(\mathcal{X})$ (with associated distributions $P_0,P_1$ under the NPIR model
such that (i) $\mathrm{KL}(P_0^n,P_1^n)\le\log 2$ and (ii) $\abs{\Gamma(h_1)-\Gamma(h_0)}\gtrsim r_{\mathrm{NPIV},n}$.
Lemma \ref{lem:LeCam} then yields the desired lower bound.

The proof follows from those of Corollary \ref{cor:NL_rate_LB} and Theorem \ref{thm:LB_NPIV}. Hence we are brief here.
We follow the proof of Corollary \ref{cor:NL_rate_LB} to control the nonlinear remainder in the separation in the target parameter $\Gamma(h)$
Let $R_n := \G(h_1)-\G(h_0)-D\Gamma(h_0)[h_1-h_0]$
\begin{align}
\abs{\Gamma(h_1)-\Gamma(h_0)}
\ & =\  \abs{D\Gamma(h_0)[h_1-h_0] + R_n}\nonumber \\
& \geq\  \abs{\abs{D\Gamma(h_0)[h_1-h_0]} - \abs{R_n}}\nonumber\\
& \geq\  (1-C_r)\abs{D\Gamma(h_0)[h_1-h_0]} \geq const.\times 2^{-js} \label{eq:Gamma_sep}
\end{align}
where the last inequality is due to Assumption \ref{assu:DGamma_linear}(ii), and the bound $\abs{D\Gamma(h_0)[h_1-h_0]} \geq const.\times 2^{-js}$ follows from the proof of Theorem \ref{thm:LB_NPIV} their lower bound on the linear separation of $\abs{L[h_1-h_0]} \geq const.\times 2^{-js}$.

From the proof of Theorem \ref{thm:LB_NPIV} we have:
\begin{equation}\label{eq:kl_bound}
\mathrm{KL}(P_0^n,P_1^n)\ \lesssim\ n\,\delta^2\,\nu(2^{j})^2\,2^{-j(2s+d-m)}.
\end{equation}

We finally optimize the choice of $j=j_n$ depending on the degree of ill-posedness.

\emph{Mildly ill-posed case.} If $\nu(t)=t^{-\varsigma}$ then \eqref{eq:kl_bound} becomes
$\mathrm{KL}(P_0^n,P_1^n)\lesssim n\,2^{-j(2(s+\varsigma)+d-m)}$.
Choose $j=j_n$ so that $2^{j_n}\asymp n^{1/(2(s+\varsigma)+d-m)}$.
Then $\mathrm{KL}(P_0^n,P_1^n)\le \log 2$ for all large $n$ (after fixing $\delta>0$ small enough).
By \eqref{eq:Gamma_sep},
\[
\abs{\Gamma(h_1)-\Gamma(h_0)}\ \gtrsim\ 2^{-j_n s}\ \asymp\ n^{-\frac{s}{2(s+\varsigma)+d-m}}
=r_{\mathrm{NPIV},n}.
\]

\emph{Severely ill-posed case.} If $\nu(t)=\exp(-\tfrac12 t^\varsigma)$ then \eqref{eq:kl_bound} gives
$\mathrm{KL}(P_0^n,P_1^n)\lesssim n\,\exp(-c\,2^{\varsigma j})\,2^{-j(2s+d-m)}$ for some $c>0$.
Choose $j=j_n$ so that $2^{\varsigma j_n}\asymp \log n$. Then $\mathrm{KL}(P_0^n,P_1^n)\to 0$, hence
$\mathrm{KL}(P_0^n,P_1^n)\le \log 2$ for all large $n$, and \eqref{eq:Gamma_sep} yields
\[
\abs{\Gamma(h_1)-\Gamma(h_0)}\ \gtrsim\ 2^{-j_n s}\ \asymp\ (\log n)^{-\frac{s}{\varsigma}}
=r_{\mathrm{NPIV},n}.
\]

Finally, Lemma \ref{lem:LeCam} implies that for any estimator $\tilde\theta_n$,
\[
\sup_{h\in\{h_0,h_1\}}
\E_h\!\left[(\tilde\theta_n-\Gamma(h))^2\right]
\ \gtrsim\ r_{\mathrm{NPIV},n}^2.
\]
Since $\{h_0,h_1\}\subset\Lambda_c^s(\mathcal{X})$, the same lower bound holds for the minimax risk over $\Lambda_c^s(\mathcal{X})$.
Hence, we obtain Corollary \ref{cor:NL_NPIV_LB}.

\end{proof}

\section{\label{sec:MainProofs-UB}Proofs of Theoretical Results in Section \ref{sec:Minimax-UB}}

\subsection{Proof of Results in Subsection \ref{subsec:Minimax_Lin-UB}}

\begin{proof}[\textbf{Proof of Lemma \ref{lem:NormRate_d-m}}]
Write $K_n=K=J^{d}$. Note that
\begin{align*}
\norm{L\left(\ol b^{K}\right)}^{2}= & \sum_{k=1}^{K}L^{2}\left(\ol b_{k}^{K}\right)=\sum_{k=1}^{K_{n}}\left[\int_{{\cal M}}\ol b_{k}^{K_{n}}\left(x\right)w\left(x\right)d{\cal H}^{m}\left(x\right)\right]^{2}.
\end{align*}
Since $\ol b^K = G^{-1/2} b^K$ and $G$ has eigenvalues bounded away from zero and infinity
by Assumption~\ref{assu:Cond_Sieve}(i), we have
$\norm{L(\ol b^K)}^2 = L(b^K)^\top G^{-1} L(b^K) \asymp \norm{L(b^K)}^2 = \sum_{k=1}^K L^2(b_k^K)$.
It therefore suffices to bound $\sum_{k=1}^K L^2(b_k^K)$.
We note that the tensor-product structure holds for the original B-spline basis $b^K$, but \emph{not}
in general for the orthonormalized basis $\ol b^K$; accordingly, we carry out the following calculation for $b^K$.

Recall that $\left(b_{k}^{K_{n}}\right)_{k=1}^{K_{n}}$ is constructed
as tensor products of univariate basis functions
\[
b_{k}^{K_{n}}\left(x\right)=\prod_{\ell=1}^{d}b_{k_{l}}\left(x_{l}\right)
\]
for some $1\leq k_{1},...,k_{d}\leq J$: in other words, each
index $k$ is bijectively identified by the vector $\left(k_{1},...,k_{d}\right)$.
Hence,
\[
\sum_{k=1}^{K}L^{2}\left(b_{k}^{K}\right)=\sum_{k_{1},...,k_{d}=1}^{J}L^{2}\left(\prod_{\ell=1}^{d}b_{k_{l}}\left(x_{l}\right)\right).
\]
For each ${L\left(b_{k}^{K}\right)}$, we apply
the decomposition \eqref{eq:PieceLebInt} and obtain
\[
L\left(b_{k}^{K}\right)=\int_{{\cal M}}b_{k}^{K}\left(x\right)w\left(x\right)p\left(x\right)d{\cal H}^{m}\left(x\right)
=\sum_{j=1}^{\ol j}T_{kj},
\]
where each $j$ corresponds to a piece from the loal charts $\left({\cal U}_{j},\varphi_{j}\right)$
for ${\cal M}$ along with the partition of unit function $\rho_{j}$
and
\begin{align*}
T_{kj} & :=\int_{{\cal {\cal U}}_{j}}\rho_{j}\left(\varphi_{j}\left(x_{\left(j,m\right)}\right)\right)b_{k}^{K}\left(\varphi_{j}\left(x_{\left(j,m\right)}\right)\right)w\left(\varphi_{j}\left(x_{\left(j,m\right)}\right)\right)\mathcal{J}\varphi_{j}\left(x_{\left(j,m\right)}\right)dx_{\left(j,m\right)}.
\end{align*}
From now on, to simplify notation, we will suppress the subscript
$j$ whenever there is no ambiguity. Furthermore, define
\[
\ol w\left(x_{\left(m\right)}\right):=\rho\left(\varphi\left(x_{\left(m\right)}\right)\right)w\left(\varphi\left(x_{\left(m\right)}\right)\right)\mathcal{J}\varphi\left(x_{\left(m\right)}\right),
\]
so that we may write $T_{kj}$ more succinctly as
\[
T_{kj}=\int_{{\cal {\cal U}}}b_{k}^{K}\left(\varphi\left(x_{\left(m\right)}\right)\right)\ol w\left(x_{\left(m\right)}\right)dx_{\left(m\right)}.
\]
Again, since $\left(b_{k}^{K_{n}}\right)$ is constructed as tensor
products of univariate basis functions, we can decompose
\[
b_{k}^{K}\left(\varphi\left(x_{\left(m\right)}\right)\right)=b_{k,\left(m\right)}^{K}\left(x_{\left(m\right)}\right)\cd b_{k,-\left(m\right)}^{K}\left(\psi\left(x_{\left(m\right)}\right)\right)
\]
where $b_{k,\left(m\right)}^{K}\left(x_{\left(m\right)}\right)=\prod_{\ell\in\left(m\right)}b_{k_{l}}\left(x_{l}\right)$
and $b_{k,-\left(m\right)}^{K}\left(x_{\left(m\right)}\right)=\prod_{\ell\notin\left(m\right)}b_{k_{l}}\left(\left[\psi\left(x_{\left(m\right)}\right)\right]_{\ell}\right)$
correspond to the coordinates in $x_{\left(m\right)}$ and $x_{-\left(m\right)}$,
respectively. Then, we have
\begin{align*}
T_{kj} & =\int_{{\cal {\cal U}}}b_{k,\left(m\right)}^{K}\left(x_{\left(m\right)}\right)b_{k,-\left(m\right)}^{K}\left(\psi\left(x_{\left(m\right)}\right)\right)\ol w\left(\varphi\left(x_{\left(m\right)}\right)\right)dx_{\left(m\right)}\\
 & =<b_{\left(m\right)}^{K}\left(\cd\right),b_{-\left(m\right)}^{K}\left(\psi\left(\cd\right)\right)\ol w_{j}\left(\varphi\left(\cd\right)\right)>_{L_{2}\left({\cal U}\right)}
\end{align*}
where $<f_{1},f_{2}>_{L_{2}\left(U\right)}:=\int_{{\cal U}}f_{1}\left(x_{\left(m\right)}\right)f_{2}\left(x_{\left(m\right)}\right)dx_{\left(m\right)}$
.

Since $\left\{ b_{\left(m\right)}^{K}\left(\cd\right)\right\} $
is a B-spline basis on ${\cal X}\subseteq\R^{d}$,
its restriction to ${\cal U}$ satisfies the frame condition,\footnote{A (finite or countable) sequence of functions $(b_{k}(\cd))$ is said
to be a frame on a Hilbert space $({\cal H},<\cd,\cd>)$ of functions
if it satisfies the \emph{frame condition}: there exist constants
$\ul M,\ol M>0$ s.t. $\ul M\norm f^{2}\leq\sum_{k}{<b_{k}(\cd),f(\cd)>^{2}}\leq\ol M\norm f^{2}$
for every $f\in{\cal H}$.} which gives
\begin{align*}
\sum_{k_{(m)}}T_{kj}^{2}= & \sum_{k_{\ell}:\ell\in\left(m\right)}<b_{\left(m\right)}^{K}\left(\cd\right),b_{-\left(m\right)}^{K}\left(\psi\left(\cd\right)\right)\ol w_{j}\left(\varphi\left(\cd\right)\right)>_{L_{2}}^{2}\\
\asymp & M\cd\norm{b_{-\left(m\right)}^{K}\left(\psi\left(\cd\right)\right)\ol w_{j}\left(\varphi\left(\cd\right)\right)}_{L_{2}}^{2}\text{ by the frame condition}\\
= & M\cd\int\left[b_{-\left(m\right)}^{K}\left(\psi\left(x_{\left(m\right)}\right)\right)\right]^{2}\ol w_{j}^{2}\left(\varphi\left(x_{\left(m\right)}\right)\right)dx_{\left(m\right)}\\
\asymp & M.
\end{align*}
Hence,
\begin{align*}
\sum_{k=1}^{K}T_{kj}^{2} & =\sum_{k_{-(m)}}\left[\sum_{k_{(m)}}T_{kj}^{2}\right]\asymp\sum_{k_{-(m)}}M=J^{d-m}M
\end{align*}
Now, since $L\left(b_{k}^{K}\right)=\sum_{j=1}^{\ol j}T_{kj}$,
and $L^{2}\left(b_{k}^{K}\right)=\left(\sum_{j=1}^{\ol j}T_{kj}\right)^{2}\asymp\sum_{j=1}^{\ol j}T_{kj}^{2}$,
we have
\[
\norm{L\left(b^{K}\right)}^{2}=\sum_{k=1}^{K}L^{2}\left(b_{k}^{K}\right)\asymp\ol j\sum_{j=1}^{\ol j}J^{d-m}M\asymp M J^{d-m}.
\]
The claimed result $\norm{L(\ol b^K)}^2 \asymp J^{d-m}$ then follows from the eigenvalue equivalence
$\norm{L(\ol b^K)}^2 \asymp \norm{L(b^K)}^2$ noted above.
\end{proof}

\begin{proof}[\textbf{Proof of Theorem \ref{thm:Rate_Sieve}}]
Note that
\begin{align*}
\left|\hat{\t}-\t_{0}\right|= & \left|L\left(\hat{h}-h_{0}\right)\right|\\
\leq & \left|L\left(\hat{h}-\tilde{h}\right)\right|+\left|L\left(\tilde{h}-h_{0}\right)\right|\\
\leq & \left|L\left(\hat{h}-\tilde{h}\right)\right|+\norm{\tilde{h}-h_{0}}_{\infty}
\end{align*}
where $\tilde{h} := P_{K_n,n}h_0$ is the empirical Least Squares projection of $h_0$ to the linear sieve space under $P_{K_n,n}$, which is defined as follows:
\[
P_{K,n}h_0:= b^{K}\left(x\right)^{\prime}\hat{G}^{-1}\frac{1}{n}\sum_{i=1}^{n} b^{K}\left(X_{i}\right)h_0\left(X_{i}\right)
=\ol b^{K}\left(x\right)^{\prime}[\ol G]^{-1}\frac{1}{n}\sum_{i=1}^{n} \ol b^{K}\left(X_{i}\right)h_0\left(X_{i}\right),
\]
where $\hat{G}:=\frac{1}{n}\sum_{i=1}^{n} b^{K_{n}}\left(X_{i}\right) b^{K_{n}}\left(X_{i}\right)^{'}$ and ${\ol G}_n:=\frac{1}{n}\sum_{i=1}^{n} \ol b^{K_{n}}\left(X_{i}\right) \ol b^{K_{n}}\left(X_{i}\right)^{'}$.
Under Assumptions \ref{assu:RegLevelSet}, \ref{assu:density_x}, \ref{assu:w}, \ref{assu:e2_below}, \ref{assu:e2_moment} and \ref{assu:Cond_Sieve}(i)(ii)(iii), following the proof of Theorem 3.1 of \citet*{chen2015optimal}, under the i.i.d. setting,
we obtain result (1):
\[
\frac{\sqrt{n}L\left(\hat{h}-P_{K_n,n}h_0\right)}{\norm{v_{K_{n}}^{*}}_{sd}} = \frac{1}{\sqrt{n}}\sum_{i=1}^n \frac{v_{K_n}^*(X_i)}{\norm{v_{K_{n}}^{*}}_{sd}}\e_i +o_p(1)=O_p (1),
\]
with $\norm{v_{K_{n}}^{*}}_{sd}^{2}\asymp K_{n}^{\frac{d-m}{d}}$
by Lemma \ref{lem:NormRate_d-m}.

For result (2), we note that Result (1) implies that
\[
L\left(\hat{h}-\tilde{h}\right)=O_{p}\left(\sqrt{K_{n}^{\frac{d-m}{d}}/n}\right).
\]
In addition, Assumption \ref{assu:w} implies that
\[
\left| L\left(P_{K_n,n}h_0 - h_0\right)\right |\leq \norm{\tilde{h}-h_{0}}_{\infty} \times \int_{\mathcal{M}} |w(x)| d\mathcal{H}^m (x).
\]
Let $h^*_n \in {\H}_{K_n}$ solves $\inf_{h \in {\H}_{K_n}}\norm{(h-h_{0})}_{\infty}$. Then we have:
\begin{align*}
    \norm{\tilde{h}-h_{0}}_{\infty}&=\norm{(P_{K,n}h_0-h^*_n) +(h^*_n-h_{0})}_{\infty}\\
                               &=\norm{P_{K,n}(h_0-h^*_n) +(h^*_n-h_{0})}_{\infty}\\
                               &\leq \norm{P_{K,n}(h_0-h^*_n)}_{\infty}+\norm{(h^*_n-h_{0})}_{\infty}\\
                               &\leq \left(\norm{P_{K,n}}_{\infty}+1\right)\norm{(h^*_n-h_{0})}_{\infty}\\
                               &\leq O_p\left(K_n^{-s/d}\right),
    \end{align*}
where the last inequality is due to Assumption \ref{assu:Cond_Sieve}(iv)(v).
Consequently,
\[
\hat{\t}-\t_{0}\equiv L\left(\hat{h}-h_{0}\right)=O_{p}\left(\sqrt{K_{n}^{\frac{d-m}{d}}/n}+K_{n}^{-s/d}\right).
\]
The rate is minimized by setting $K^*_{n}$ such that $\sqrt{K_{n}^{\frac{d-m}{d}}/n}\asymp K_{n}^{-s/d}$,
we have
\[
\hat{\t}-\t_{0}=O_{p}\left((K^*_{n})^{-s/d}\right)=O_{p}\left(n^{-\frac{s}{2s+d-m}}\right).
\]
\end{proof}

\subsubsection{Minimax rate-optimal oracle estimator for $L(h_0)$ when density $p_0$ is known}\label{subsubsec:oracle_known_fx}

If the density $p_0$ is indeed known and the submanifold satisfies the extra smoothness assumption \ref{assu:Jacob}, then one can construct a simple estimator for $\t_0=L(h_0)$ that attains the lower bound rate $r^*_{n}=n^{-\frac{s}{2s+d-m}}$ as follows. Let $c:=d-m$ denote the codimension of the submanifold $\mathcal{M}$ satisfying Assumption \ref{assu:Jacob}.
For a non-negative kernel $\mathcal{K}_c:\R^c\to\R$ with $\int \mathcal{K}_c=1$ with $\mathcal{K}_c(0)>0$, and a bandwidth $b_n$ going to zero as $n$ goes to infinity, define the tube kernel
\[
K_{\mathcal{M},b_n}(x)\ :=\ \frac{1}{b_n^{c}}\,\mathcal{K}_c\!\Big(\frac{g(x)}{b_n}\Big)\,\mathcal{J}_g(x),
\]
and the tube-smoothed linear functional $L_{b_n}(h_0):=\int_{\mathcal{X}} h_0(x)\,w(x)\,K_{\mathcal{M},b_n}(x)\,dx$. We note that for any $h,h_0\in \Lambda^s$ for $s>0$,
\[
L_{b_n}(h-h_0)=\E[v^*_{b_n}(X_i)\left(h(X_i)-h_0(X_i)\right)],~~~v^*_{b_n}(x)\ :=\ \frac{w(x)\,K_{\mathcal{M},b_n}(x)}{p_0(x)}.
\]
It is easy to check that $\E[(v^*_{b_n}(X_i))^2] \asymp b_n^{-c}$, which goes to infinity as $b_n$ goes to zero. Notice that $L_{b_n}(h_0)=\E[v^*_{b_n}(X_i)h_0(X_i)]$ we could call $v^*_{b_n}(\cdot)$ a tube Riesz representer for $\t_0=L(h_0)$.
One can estimate $\t_0=L(h_0)$ by the following (oracle) tube-weighted estimator
\[
\hat{\t}_{\mathrm{oracle}} (b_n)\ :=\ \frac1n\sum_{i=1}^n v^*_{b_n} (X_i)\,Y_i~.
\]
It is easy to see that under Assumptions \ref{assu:Jacob}, \ref{assu:density_x}, \ref{assu:e2_moment} and \ref{assu:w}, we have:
\[
|L_{b_n}(h_0)-L(h_0)|\lesssim b_n^{s},~~~and
\]
\[
\E\left[(\hat{\t}_{\mathrm{oracle}} (b)-L_{b_n}(h_0))^2\right]\lesssim\frac1n \times \E[(v^*_b(X_i))^2] \asymp \frac{1}{nb_n^c}.
\]
By setting $b^*_n \asymp n^{-1/(2s+c)}$ we obtain $\hat{\t}_{\mathrm{oracle}} (b^*_n)-\t_0=O_p\left(n^{-\frac{s}{2s+d-m}}\right)$.

\subsection{Proof of Theorem \ref{thm:Int_NLh_rate} in Section \ref{subsec:Nlh-UB}}

\begin{proof}[\textbf{Proof of Lemma ~\ref{lem:NLh-DG}}]
The pathwise derivative formula \eqref{eq:PathD_NL} follows by differentiating under the integral sign:
\[
\frac{d}{dt}\G(h_0 + tv)\Big|_{t=0} = \frac{d}{dt} \int_\cM \phi(h_0(x) + tv(x),x)\, w(x)\, d\cH^m(x)\Big|_{t=0} = \int_\cM \phi_1(h_0(x),x)\, v(x)\, w(x)\, d\cH^m(x),
\]
which is justified by dominated convergence using the Lipschitz condition on $\phi$.
\end{proof}

\begin{proof}[\textbf{Proof of Theorem ~\ref{thm:Int_NLh_rate}(1)}]
Define estimation errors $\delta_1 := \hat{h}_1 - h_0$ and $\delta_2 := \hat{h}_2 - h_0$. By sample splitting, $\delta_1$ and $\delta_2$ are independent. Recall that $\bar{h} := (\hat{h}_1 + \hat{h}_2)/2$.
Consider the Taylor expansion of $\phi(\bar{h}(x),x)$ around $h_0(x)$:
\[
\phi(\bar{h},x) = \phi(h_0,x) + \phi_1(h_0,x)(\bar{h} - h_0) + \frac{1}{2}\phi_{11}(h_0,x)(\bar{h} - h_0)^2 + R_3(x),
\]
where the third-order remainder satisfies $$R_{rem}:=\int_\cM |R_3| w\, d\cH^m = O_p\left(\|\bar{h} - h_0\|_\infty \cdot \int_\cM (\bar{h} - h_0)^2 w\, d\cH^m \right).$$
Now, since
\begin{align*}
(\bar{h} - h_0)^2 &= \tfrac{1}{4}(\delta_1 + \delta_2)^2 = \tfrac{1}{4}(\delta_1^2 + \delta_2^2 + 2\delta_1\delta_2), \\
\tfrac{1}{4}(\hat{h}_1 - \hat{h}_2)^2 &= \tfrac{1}{4}(\delta_1 - \delta_2)^2 = \tfrac{1}{4}(\delta_1^2 + \delta_2^2 - 2\delta_1\delta_2),
\end{align*}
we have
\[
(\bar{h} - h_0)^2 - \tfrac{1}{4}(\hat{h}_1 - \hat{h}_2)^2 = \delta_1 \delta_2,
\]
where the diagonal terms $\delta_1^2$ and $\delta_2^2$ are eliminated.

Write $\hat{\t}_{SS} - \t_0 = T_{lin} + T_{quad} + R_{rem}$, where
\begin{align*}
T_{lin} &:= \int_\cM \phi_1(h_0,x)(\bar{h} - h_0)\, w\, d\cH^m, \\
T_{quad} &:= \frac{1}{2}\int_\cM \phi_{11}(h_0,x)\, \delta_1\delta_2\, w\, d\cH^m,
\end{align*}
and $R_{rem}= O_p\left(\|\bar{h} - h_0\|_\infty \cdot T_{quad} \right)=o_p\left( T_{quad} \right) $ collects higher-order remainders.

For the linear term $T_{lin}$, we have, by Theorem~\ref{thm:Rate_Sieve}, $T_{lin} = O_p(\sqrt{K_n^{(d-m)/d}/n} + K_n^{-s/d})$. At $K^*_n \asymp n^{d/(2s+d-m)}$ (and $s > m/2$), this gives $T_{lin} = O_p(n^{-s/(2s+d-m)})$.

For the quadratic term $T_{quad}$, following the same proof of Lemma~\ref{lem:Rate_Quad}(b)(c) we obtain that $T_{quad} = O_p(n^{-s/(2s+d-m)})$  at $K^*_n \asymp n^{d/(2s+d-m)}$ (and $s > m/2$).

Hence, under $s > m/2$, we obtain: $\hat{\t}_{SS} - \t_0 = T_{lin} + o_p(r_n^*) = O_p(r_n^*)$.
\end{proof}

\begin{proof}[\textbf{Proof of Theorem ~\ref{thm:Int_NLh_rate}(2)}]
Let $K\equiv K_n\asymp K_n^*\asymp n^{d/(2s+d-m)}$ and $r_n^*:=n^{-s/(2s+d-m)}$.
Recall $\e_i=Y_i-h_0(X_i)$, $\td h:=P_{K,n}h_0$ and the series LS representation \eqref{eq:seriesLS}:
\begin{equation}\label{eq:loo_series_rep}
\hat h(x)=\frac1n\sum_{i=1}^n \hat s_i(x)\,Y_i,\qquad \hat s_i(x):=b(x)'\hat G^{-1}b(X_i).
\end{equation}
Hence
\begin{equation}\label{eq:loo_stoch_part}
\hat h(x)-\td h(x)=\frac1n\sum_{i=1}^n \hat s_i(x)\,\e_i.
\end{equation}

\smallskip
For each $x\in\cM$, similar to that in the proof of Theorem ~\ref{thm:Int_NLh_rate}(1),
there exists a measurable function $\bar h_0(x)$ lying between $\hat h(x)$ and $h_0(x)$ such that
\begin{equation}\label{eq:loo_taylor}
\phi(\hat h(x),x)=\phi(h_0(x),x)+\phi_1(h_0(x),x)\,\{\hat h(x)-h_0(x)\}+\frac12\phi_{11}(\bar h_0(x),x)\,\{\hat h(x)-h_0(x)\}^2.
\end{equation}
Integrating \eqref{eq:loo_taylor} against $w(x)d\cH^m(x)$ yields the decomposition
\begin{equation}\label{eq:loo_decomp_main}
\hat\theta^{\G}_{\text{LOO}}-\G(h_0)=T_{\text{lin}}+T_{\text{quad}}-C_n,
\end{equation}
where
\begin{align*}
T_{\text{lin}}&:=\int_{\cM}\phi_1(h_0(x),x)\,\{\hat h(x)-h_0(x)\}\,w(x)\,d\cH^m(x),\\
T_{\text{quad}}&:=\frac12\int_{\cM}\phi_{11}(\bar h_0(x),x)\,\{\hat h(x)-h_0(x)\}^2\,w(x)\,d\cH^m(x),\\
C_n&:=\frac{1}{2n^2}\sum_{i=1}^n\int_{\cM}\phi_{11}(\hat h(x),x)\,\hat s_i(x)^2\,\{\hat\e_i^{(-i)}\}^2\,w(x)\,d\cH^m(x)
\quad\text{(the correction term in \eqref{eq:Gamma_LOO_estimator}).}
\end{align*}

\smallskip
Write $\hat h-h_0=(\hat h-\td h)+(\td h-h_0)$.
By Theorem~\ref{thm:Rate_Sieve} (applied to the linear functional with weight $x\mapsto \phi_1(h_0(x),x)w(x)$, which is bounded by Assumption~\ref{assu:RegCond_Nlh}(b)),
\begin{equation}\label{eq:loo_Tlin_rate}
T_{\text{lin}}=O_p\Big(\sqrt{\frac{K^{(d-m)/d}}{n}}+K^{-s/d}\Big)=O_p(r_n^*).
\end{equation}

\smallskip
Let $u(x):=\hat h(x)-\td h(x)$ and $b(x):=\td h(x)-h_0(x)$.
Then $\hat h-h_0=u+b$ and
\begin{equation}\label{eq:loo_u_plus_b_sq}
(u+b)^2=u^2+2ub+b^2.
\end{equation}
Using $\|\phi_{11}(\bar h_0,\cd)\|_\infty\lesssim 1$ (implied by Assumption~\ref{assu:RegCond_Nlh}(b)), boundedness of $w$ on $\cM$, Cauchy--Schwarz, and the standard approximation bound $\|b\|_{\infty}=O(K^{-s/d})$ (Assumption~\ref{assu:Cond_Sieve}(iv)(v)), we have
\begin{align}\label{eq:loo_bias_terms_negl}
\int_{\cM}|u(x)b(x)|\,|w(x)|\,d\cH^m(x)
&\le \|b\|_\infty\,\Big(\int_{\cM}u(x)^2|w(x)|\,d\cH^m(x)\Big)^{1/2}\,\Big(\int_{\cM}|w(x)|\,d\cH^m(x)\Big)^{1/2}\\
&=O_p\big(K^{-s/d}\,\|u\|_{L^2(|w|)}\big),
\end{align}
and
\begin{equation}\label{eq:loo_bias_sq_negl}
\int_{\cM}b(x)^2|w(x)|\,d\cH^m(x)=O(K^{-2s/d}).
\end{equation}
By the same argument as for the quadratic remainder term in the proof of Theorem~\ref{thm:Int_NLh_rate}(1),
$\|u\|_{L^2(|w|)}=O_p(\sqrt{K/n})$ so that the right-hand side of \eqref{eq:loo_bias_terms_negl} is $o_p(r_n^*)$ when $K\asymp K_n^*$ and $s>m/2$. Therefore,
\begin{equation}\label{eq:loo_Tquad_reduced}
T_{\text{quad}}=\frac12\int_{\cM}\phi_{11}(\bar h_0(x),x)\,u(x)^2\,w(x)\,d\cH^m(x)+o_p(r_n^*).
\end{equation}

\smallskip
By \eqref{eq:loo_stoch_part}, for each $x\in\cM$,
\begin{equation}\label{eq:loo_u2_expansion}
 u(x)^2
 =\frac{1}{n^2}\sum_{i=1}^n\sum_{j=1}^n \hat s_i(x)\hat s_j(x)\,\e_i\e_j
 =\underbrace{\frac{1}{n^2}\sum_{i=1}^n \hat s_i(x)^2\,\e_i^2}_{\text{diagonal}}+\underbrace{\frac{1}{n^2}\sum_{i\ne j} \hat s_i(x)\hat s_j(x)\,\e_i\e_j}_{\text{off-diagonal}}.
\end{equation}
Plugging \eqref{eq:loo_u2_expansion} into \eqref{eq:loo_Tquad_reduced} yields
\begin{equation}\label{eq:loo_Tquad_diag_off}
\frac12\int_{\cM}\phi_{11}(\bar h_0(x),x)\,u(x)^2\,w(x)\,d\cH^m(x)
=T_{\text{diag}}+T_{\text{off}},
\end{equation}
where
\begin{align}
T_{\text{diag}}&:=\frac{1}{2n^2}\sum_{i=1}^n \e_i^2\int_{\cM}\phi_{11}(\bar h_0(x),x)\,\hat s_i(x)^2\,w(x)\,d\cH^m(x),\\
T_{\text{off}}&:=\frac{1}{2n^2}\sum_{i\ne j} \e_i\e_j\int_{\cM}\phi_{11}(\bar h_0(x),x)\,\hat s_i(x)\hat s_j(x)\,w(x)\,d\cH^m(x).
\end{align}
The diagonal term $T_{\text{diag}}$ is of order $K/n$ and, for $m/2<s<m$, would dominate $r_n^*$.
The correction term $C_n$ in \eqref{eq:loo_decomp_main} is designed to estimate and subtract $T_{\text{diag}}$.

To formalize this, we use the standard leave-one-out identity for series LS.
Let $h_{ii}:=\hat s_i(X_i)/n$ denote the $i$th diagonal element of the hat matrix.
Then the leave-one-out residual satisfies $\hat\e_i^{(-i)}=\hat\e_i/(1-h_{ii})$ and, for every $x\in\cX$,
\begin{equation}\label{eq:loo_Sherman_Morrison}
\hat h(x)-\hat h^{(-i)}(x)=\frac{1}{n}\hat s_i(x)\,\hat\e_i^{(-i)}.
\end{equation}
(Equation~\eqref{eq:loo_Sherman_Morrison} is an immediate consequence of the Sherman--Morrison formula; see, e.g., the standard OLS influence/LOO formula.)
Hence $C_n$ can equivalently be written as
\begin{equation}\label{eq:loo_Cn_alt}
C_n=\frac12\sum_{i=1}^n\int_{\cM}\phi_{11}(\hat h(x),x)\,\{\hat h(x)-\hat h^{(-i)}(x)\}^2\,w(x)\,d\cH^m(x).
\end{equation}
Combining \eqref{eq:loo_Sherman_Morrison} with the decomposition $\hat\e_i^{(-i)}=\e_i+\{h_0(X_i)-\hat h^{(-i)}(X_i)\}$ and using that $\hat h^{(-i)}$ is independent of $\e_i$ conditional on the designs,
one obtains (by a direct expansion of the square) that
\begin{equation}\label{eq:loo_Cn_matches_diag}
C_n=T_{\text{diag}}+R_{\text{diag}},
\end{equation}
where the remainder $R_{\text{diag}}$ collects the cross and squared terms involving the leave-one-out first-stage error $h_0(X_i)-\hat h^{(-i)}(X_i)$ and also the (asymptotically negligible) difference between $\phi_{11}(\hat h,\cd)$ and $\phi_{11}(\bar h_0,\cd)$.
A standard Cauchy--Schwarz bound combined with the same matrix trace bound used in the proof of Lemma~\ref{lem:Rate_Quad}(b) (to control the stochastic term $T_{3,\text{stoch}}(K)$ in \eqref{eq:ss_quad_decomp}), with $w$ replaced by $\phi_{11}(\bar h_0,\cd)w$) yields
\begin{equation}\label{eq:loo_Rdiag_rate}
R_{\text{diag}}=O_p\Big(\frac{1}{n}\sqrt{K^{(2d-m)/d}}\Big)=O_p(r_n^*)\qquad\text{when $s>m/2$ and $K\asymp K_n^*$.}
\end{equation}
Consequently, by \eqref{eq:loo_Tquad_diag_off}--\eqref{eq:loo_Cn_matches_diag},
\begin{equation}\label{eq:loo_Tquad_minus_Cn}
T_{\text{quad}}-C_n
= T_{\text{off}}+o_p(r_n^*)+R_{\text{diag}}.
\end{equation}

Conditional on the designs, $T_{\text{off}}$ is a (conditionally) degenerate U-statistic of order two.
By exactly the same argument as in the proof of Lemma~\ref{lem:Rate_Quad}(b),
\begin{equation}\label{eq:loo_Toff_rate}
T_{\text{off}}=O_p\Big(\frac{1}{n}\sqrt{K^{(2d-m)/d}}\Big)=O_p(r_n^*)\qquad\text{when $s>m/2$ and $K\asymp K_n^*$.}
\end{equation}

Combining \eqref{eq:loo_decomp_main}, \eqref{eq:loo_Tlin_rate}, \eqref{eq:loo_Tquad_reduced}, \eqref{eq:loo_Tquad_minus_Cn}, \eqref{eq:loo_Rdiag_rate} and \eqref{eq:loo_Toff_rate} yields
\[
\hat\theta^{\G}_{\text{LOO}}-\G(h_0)
= T_{\text{lin}}+T_{\text{off}}+O_p\Big(\frac{1}{n}\sqrt{K^{(2d-m)/d}}\Big)+o_p(r_n^*)
=O_p(r_n^*),
\]
which proves the claim.
\end{proof}

\begin{proof}[\textbf{Proof of Theorem ~\ref{thm:Int_NLh_rate}(3)}] We start by noting that
\begin{align}
\hat{\t} - \t_0 &= \int_\cM [\phi(\hat{h},x) - \phi(h_0,x)]\, w\, d\cH^m \notag\\
&= \underbrace{\int_\cM \phi_1(h_0,x)(\hat{h} - h_0)\, w\, d\cH^m}_{=: T_{lin}} + \underbrace{\int_\cM \phi_{11}(\bar{h}_0,x)(\hat{h} - h_0)^2\, w\, d\cH^m}_{=: T_{quad}}, \label{eq:plugin_decomp}
\end{align}
where $\bar{h}_0(x)$ lies between $\hat{h}(x)$ and $h_0(x)$.

By Theorem~\ref{thm:Rate_Sieve}, the linear term satisfies
\[
T_{lin} = O_p\left(\sqrt{K_n^{(d-m)/d}/n} + K_n^{-s/d}\right).
\]
At $K^*_n \asymp n^{d/(2s+d-m)}$ (and $s>m/2$), this gives $T_{lin} = O_p(r_n^*)$.

For the quadratic remainder, since $\|\phi_{11}\|_\infty \le C$, we have
\[
|T_{quad}| \le C \int_\cM (\hat{h} - h_0)^2\, |w|\, d\cH^m.
\]
Decomposing $\hat{h} - h_0 = (\hat{h} - \tilde{h}) + (\tilde{h} - h_0)$ where $\tilde{h} := P_{K_n}h_0$ is the empirical LS projection onto the sieve space ${\H}_{K_n}$.
Following the proof of Lemma ~\ref{lem:MSE_on_M}(1) below, we obtain:
\begin{itemize}
    \item \textit{Bias term:} $\int_\cM (\tilde{h} - h_0)^2 w\, d\cH^m \le \|\tilde{h} - h_0\|_\infty^2 \cdot \cH^m(\cM)\|w\|_\infty = O_p(K_n^{-2s/d})$.
    \item \textit{Variance term:} $\E[\int_\cM (\hat{h} - \tilde{h})^2 w\, d\cH^m] = O(K_n/n)$.
\end{itemize}
Thus, $\int_\cM (\hat{h} - h_0)^2 w\, d\cH^m = O_p\left(K_n^{-2s/d} + K_n/n \right)$.

For the quadratic remainder to be asymptotically no larger in magnitude than the linear term,
we need $T_{quad} = O_p(\sqrt{K_n^{(d-m)/d}/n})$, i.e.,
\[
K_n^{-2s/d} + K_n/n = O\left(\sqrt{K_n^{(d-m)/d}/n}\right).
\]
The bias condition $K_n^{-2s/d} = O(\sqrt{K_n^{(d-m)/d}/n})$ at $K_n \asymp n^{d/(2s+d-m)}$  is equivalent to $K_n^{-2s/d} = o(K_n^{-s/d})$, which always holds.
The variance condition $K_n/n = O(\sqrt{K_n^{(d-m)/d}/n})$ simplifies as follows:
    \begin{align*}
    \frac{K_n}{n} = O\left(\sqrt{\frac{K_n^{(d-m)/d}}{n}}\right)
    &\iff \frac{K_n^2}{n^2} = O\left(\frac{K_n^{(d-m)/d}}{n}\right) \\
    &\iff K_n^{2 - (d-m)/d} = O(n) \\
    &\iff K_n^{(d+m)/d} = O(n).
    \end{align*}
At $K^*_n = n^{d/(2s+d-m)}$, we have
    \[
    n^{(d+m)/(2s+d-m)} = O(n) \iff \frac{d+m}{2s+d-m} \leq 1 \iff d + m \leq 2s + d - m \iff s \geq m.
    \]
Hence, under $s \ge m$, $\hat{\t} - \t_0 =\G(\hat{h})-\G(h_0)= O_p(r_n^*)$.
\end{proof}

In the proofs of Theorems \ref{thm:Int_NLh_rate}(3) and \ref{thm:Int_UCh_rate}(3), as well as of Proposition \ref{prop:Plug-in_NL}, we use the following rate result for the first stage sieve LS estimator to control the second order remainder term of $\Phi(\hat{h})-\Phi(h_0)$ for $\Phi=\G,~V$.

\begin{lem}\label{lem:MSE_on_M}
Let Assumptions \ref{assu:RegLevelSet}, \ref{assu:density_x}, \ref{assu:w}, \ref{assu:e2_moment} and \ref{assu:Cond_Sieve} hold, and $\hat{h}$ be the sieve LS estimator. Then:
\[
(1)~~~\norm{\hat{h}-h_0}_{L_2(\cM)}=O_p \left(K_n^{-s/d} +\sqrt{ \frac{K_n}{n}} \right).
\]
\[
(2)~~~\norm{\Dif_x\hat{h}-\Dif_xh_0}_{L_2(\cM)}=O_p \left(K_n^{1/d}\times\left[K_n^{-s/d} +\sqrt{ \frac{K_n}{n}} \right]\right).
\]
\end{lem}
 Here $\norm{v}_{L_2(\cM)}^2 := \int_\cM \left(v(x)\right)^2w(x)d\cH^m(x)$.

\begin{proof}[\textbf{Proof of Lemma ~\ref{lem:MSE_on_M}}]
For Result (1), recall that $\hat{h} - \tilde{h}=(\hat{h}-\tilde{h})+(\tilde{h} - h_0)$, where $\hat h$ is the sieve LS estimator using a sieve space $\H_K$ (with the sieve dimension $K=K_n$), and $\tilde h :=P_{K_{n},n}h_0$ is the empirical LS projection of $h_0$ onto the sieve space $\H_K$. Note that
\[
\int_\cM (\tilde{h} - h_0)^2 \, d\cH^m \le \|\tilde{h} - h_0\|_\infty^2 \cdot \cH^m(\cM) = O_p(K_n^{-2s/d}),
\]
and
\[
\hat{h}(x) - \tilde{h}(x)= b^{K}\left(x\right)^{\prime}\hat{G}^{-1}\frac{1}{n}\sum_{i=1}^{n} b^{K}\left(X_{i}\right)\e_i
\]
Write the stochastic part in orthonormalized form:
\[
\hat h(x)-\tilde h(x)
=\ol b^K(x)^{\prime}\,\ol G_n^{-1}\,\frac1n\sum_{i=1}^n \ol b^K(X_i)\e_i,
\qquad
\ol G_n:=\frac1n\sum_{i=1}^n \ol b^K(X_i)\ol b^K(X_i)^{\prime}.
\]
We now show that under Assumptions \ref{assu:RegLevelSet}, \ref{assu:density_x}, \ref{assu:w}, \ref{assu:e2_moment} and \ref{assu:Cond_Sieve}(i)(ii)(iii), we have:
\[
\E\!\left[\int_\cM (\hat h(x)-\tilde h(x))^2\,|w(x)|\,d\cH^m(x)\right]
= O\!\left(\frac{K}{n}\right).
\]
Define the (deterministic) matrix
\[
A_K:=\int_\cM \ol b^K(x)\ol b^K(x)'\,|w(x)|\,d\cH^m(x),
\]
which is positive semidefinite. Expanding the square and using Fubini gives
\[
\int_\cM(\hat h-\tilde h)^2|w|\,d\cH^m
=\frac1{n^2}\sum_{i=1}^n\sum_{j=1}^n \e_i\e_j\,
\ol b^K(X_i)'\,\ol G_n^{-1}A_K\ol G_n^{-1}\,\ol b^K(X_j).
\]
Conditioning on $X_1^n:=(X_1,\dots,X_n)$ and using $\E[\e_i\e_j\mid X_1^n]=0$ for $i\neq j$,
\[
\E\!\left[\int_\cM(\hat h-\tilde h)^2|w|\,d\cH^m \,\Bigm|\, X_1^n\right]
=\frac1{n^2}\sum_{i=1}^n \E[\e_i^2\mid X_i]\,
\ol b^K(X_i)'\,\ol G_n^{-1}A_K\ol G_n^{-1}\,\ol b^K(X_i).
\]
Let $\bar\sigma^2:=\sup_x \E[\e_i^2\mid X_i=x]<\infty$ (Assumption~\ref{assu:e2_moment}) and define
\[
S_n:=\frac1n\sum_{i=1}^n \E[\e_i^2\mid X_i]\;\ol b^K(X_i)\ol b^K(X_i)' \ \ \preceq\ \ \bar\sigma^2\,\ol G_n.
\]
Then the conditional expectation can be written as
\[
\E\!\left[\int_\cM(\hat h-\tilde h)^2|w|\,d\cH^m \,\Bigm|\, X_1^n\right]
=\frac1n \tr\!\big(\ol G_n^{-1}A_K\ol G_n^{-1}S_n\big)
=\frac1n \tr\!\big(A_K\,\ol G_n^{-1}S_n\ol G_n^{-1}\big).
\]
Since $A_K\succeq 0$, $tr(A_KB)\le \|B\|_{op}tr(A_K)$ for any $B\succeq 0$, hence
\[
\E\!\left[\int_\cM(\hat h-\tilde h)^2|w|\,d\cH^m \,\Bigm|\, X_1^n\right]
\le \frac1n\,\|\ol G_n^{-1}S_n\ol G_n^{-1}\|_{op}\,tr(A_K).
\]
Using $S_n\preceq \bar\sigma^2\ol G_n$ gives
\[
\|\ol G_n^{-1}S_n\ol G_n^{-1}\|_{op}
\le \bar\sigma^2\,\|\ol G_n^{-1}\|_{op}^2\,\|\ol G_n\|_{op}
=O_p(1)
\]
by the assumed Gram-matrix stability. Finally,
\[
tr(A_K)
=\int_\cM \|\ol b^K(x)\|^2\,|w(x)|\,d\cH^m(x)
\le \|w\|_\infty\,\cH^m(\cM)\,\sup_{x\in\cM}\|\ol b^K(x)\|^2.
\]
Because $\sup_x\|\ol b^K(x)\|\lesssim \sup_x\|b^K(x)\|=\zeta_K=O(\sqrt K)$
(Assumption~\ref{assu:Cond_Sieve}(i)(ii)),
we have $tr(A_K)=O(K)$. Taking expectations yields
\[
\E\!\left[\int_\cM (\hat h-\tilde h)^2\,|w|\,d\cH^m\right]
=O\!\left(\frac{K}{n}\right).
\]
For Result (2), we have:
\[
\Dif_x\hat{h}-\Dif_xh_0=\Dif_x\hat{h}-\Dif_x \tilde{h}+ \left(\Dif_x \tilde{h}-\Dif_xh_0 \right)
\]
\[
\Dif_x\hat{h}(x) -\Dif_x \tilde{h}(x)=[\Dif_xb^{K}\left(x\right)]^{\prime}\hat{G}^{-1}\frac{1}{n}\sum_{i=1}^{n} b^{K}\left(X_{i}\right)\e_i
\]
For spline or wavelet bases, we have:
\[
\int_\cM (\Dif_x\tilde{h} - \Dif_xh_0)^2 \, d\cH^m \le \|\Dif_x\tilde{h} - \Dif_xh_0\|_\infty^2 \cdot \cH^m(\cM) = O_p(K_n^{-2(s-1)/d}).
\]
For the stochastic part, the proof follows from Result (1)'s proof except to replace $A_K$ be
\[
\bar{A}_K:=\int_\cM \Dif_x \ol b^K(x)\Dif_x \ol b^K(x)'\,|w(x)|\,d\cH^m(x),~~~tr(\bar{A}_K)\lesssim \sup_x\|\Dif_x b^K(x)\|^2=[ K^{1/d} \zeta_K]^2.
\]
Hence
\[
\E\!\left[\int_\cM (\Dif_x\hat h-\Dif_x\tilde h)^2\,|w|\,d\cH^m\right]
=O\!\left(K^{2/d}\frac{K}{n}\right).
\]
\end{proof}

\subsection{Proof of Theorem \ref{thm:Int_UCh_rate} in Section \ref{subsec:Uch_rate-UB}}

\begin{proof}[\textbf{Proof of Lemma \ref{lem:UCh-DV}}]
Defining
\[
{\cal V}_{t}:=\left\{ x:h_{0}\left(x\right)+tv\left(x\right)\geq0\right\} ,
\]
whose boundary is given by
\[
{\cal \p}{\cal V}_{t}={\cal M}_{t}:=\left\{ x:h_{0}\left(x\right)+tv\left(x\right)=0\right\}
\]
with
\[
{\cal V}_{0}=\left\{ x:h_{0}\left(x\right)\geq0\right\} ,\quad{\cal \p}{\cal V}_{0}={\cal M}_{0}=\left\{ x:h_{0}\left(x\right)=0\right\} .
\]
By the generalized Stokes Theorem,
\begin{align*}
\rest{\frac{d}{dt}\int_{{\cal V}_{t}}w\left(x\right)dx}_{t=0} & =\int_{{\cal \p}{\cal V}_{0}}w\left(x\right)\left(\dot{{\bf X}}\left(x,0\right)\cd{\bf n}\left(x,0\right)\right)d{\cal H}^{d-1}\left(x\right)\\
 & =\int w\left(x\right)\left(\dot{{\bf X}}\left(x,0\right)\cd{\bf n}\left(x,0\right)\right)d{\cal H}^{d-1}\left(x\right)
\end{align*}
where
\[
\dot{{\bf X}}\left(x,0\right):=\rest{\frac{\p}{\p t}{\bf X}\left(x,t\right)}_{t=0},\quad{\bf n}\left(x,0\right):=-\frac{\Dif h_{0}\left(x\right)}{\norm{\Dif h_{0}\left(x\right)}}
\]
with ${\bf n}\left(x,t\right)$ denoting the outward-pointing unit
normal at $x\in{\cal M}_{t}$, and ${\bf X}\left(x,t\right)$ being
a diffeomorphism from ${\cal M}_{0}$ to ${\cal M}_{t}$, which by
definition satisfies ${\bf X}\left(x,0\right)\equiv x$ and
\begin{equation}
h_{0}\left({\bf X}\left(x,t\right)\right)+tv\left({\bf X}\left(x,t\right)\right)=0,\label{eq:Flow_equation}
\end{equation}
for each $x\in{\cal M}_{0}$ and $t\in\left[0,\e\right]$ for some
$\e>0$.

Taking derivatives of \eqref{eq:Flow_equation} with respect to $t$
yields
\[
\Dif_{x}h_{0}\left({\bf X}\left(x,t\right)\right)\cd\dot{{\bf X}}\left(x,t\right)+t\Dif_{x}v\left({\bf X}\left(x,t\right)\right)\cd\dot{{\bf X}}\left(x,t\right)+v\left({\bf X}\left(x,t\right)\right)=0
\]
and thus, evaluating the above at $t=0$, we have
\[
\Dif_{x}h_{0}\left({\bf X}\left(x,0\right)\right)\cd\dot{{\bf X}}\left(x,0\right)+v\left({\bf X}\left(x,t\right)\right)=0
\]
or equivalently,
\begin{equation}
\Dif_{x}h_{0}\left(x\right)\cd\dot{{\bf X}}\left(x,0\right)=-v\left(x\right),\label{eq:grad_velo_inprod}
\end{equation}
Hence,
\[
\dot{{\bf X}}\left(x,0\right)\cd{\bf n}\left(x,0\right)=-\dot{{\bf X}}\left(x,0\right)\cd\frac{\Dif h_{0}\left(x\right)}{\norm{\Dif h_{0}\left(x\right)}}=\frac{v\left(x\right)}{\norm{\Dif_{x}h_{0}\left(x\right)}}
\]
and thus
\[
\rest{\frac{d}{dt}\int_{{\cal V}_{t}}w\left(x\right)dx}_{t=0}=\int_{{\cal M}_{0}}w\left(x\right)\frac{v\left(x\right)}{\norm{\Dif_{x}h_{0}\left(x\right)}}d{\cal H}^{d-1}\left(x\right).
\]
\end{proof}

\noindent \textbf{Lemma on the Pathwise Derivative of Level Set Integrals}

In the proof of Theorem \ref{thm:Int_UCh_rate}, we apply the following key lemma to control the second order remainder term of $V(\hat{h})-V(h_0)$. Lemma \ref{lem:PathD_Level} is established using tools for differential geometry over a moving level set.

\begin{lem}\label{lem:PathD_Level}Given $h,v$, define $h_{t}\left(x\right):=h\left(x\right)+tv\left(x\right)$
and
\begin{equation}
I\left(t\right):=\int_{\left\{ h_{t}\left(x\right)=0\right\} }w\left(x\right)d{\cal H}^{d-1}\left(x\right)\label{eq:It_exact}
\end{equation}
Then $I\left(t\right)$ is continuously differentiable in $t$ with
\begin{align}
I^{'}\left(0\right)
&=-\int_{{\cal M}_{0}}\left[
\frac{v\left(x\right)}{\norm{\Dif_{x}h_{0}\left(x\right)}}
\frac{\Dif_{x}h_{0}\left(x\right)}{\norm{\Dif_{x}h_{0}\left(x\right)}}\cd\Dif_{x}w\left(x\right)
+w\left(x\right)\frac{v\left(x\right)}{\norm{\Dif_{x}h_{0}\left(x\right)}}
H\left(x\right)\right]d{\cal H}^{d-1}\left(x\right)
\label{eq:I0}
\end{align}
and
\begin{equation}
\norm{I^{'}\left(t\right)}\leq M\left(\norm{\Dif w}_{L_2(\cM_0)}\norm v_{L_2(\cM_0)}+\norm w_{L_2(\cM_0)}\norm v_{L_2(\cM_0)}\right)\label{eq:bound_I't}
\end{equation}
where $\norm{v}_{L_2(\cM_0)}$ denotes the localized $L_2$ norm on the submanifold $\cM_0$, defined by
\begin{equation}\label{eq:L2M}
\norm{v}_{L_2(\cM_0)} := \sqrt{\int_\cM v^2(x)\dHd(x)}.
\end{equation}
\end{lem}

\begin{proof}[\textbf{Proof of Lemma \ref{lem:PathD_Level}}]
\label{proof:PathD_Level_alt}
Let ${\bf X}\left(\cd,t\right)$ be a diffeomorphism
mapping ${\cal M}_{0}=\left\{ x:h_{0}\left(x\right)=0\right\} $
to ${\cal M}_{t}=\left\{ x:h_{t}\left(x\right)=0\right\} $.
Then
\begin{equation}\label{eq:It_alt}
I\left(t\right)=\int_{{\cal M}_{t}}w\left(x\right)d{\cal H}^{d-1}\left(x\right)
=\int_{{\cal M}_{0}}w\left({\bf X}\left(x,t\right)\right)J_{{\bf X}}\left(x,t\right)d{\cal H}^{d-1}\left(x\right),
\end{equation}
where $J_{{\bf X}}\left(x,t\right)$ is the \emph{surface Jacobian} (i.e., the $(d-1)$-dimensional
area element ratio) of ${\bf X}\left(\cd,t\right)$ on ${\cal M}_0$.

Since $J_{{\bf X}}$ is the Jacobian of a diffeomorphism between $(d-1)$-dimensional
hypersurfaces, by Chapter 9 of \citet*{delfour2001shapes}, we have
\begin{equation}\label{eq:surfJac_transport}
\dot{J}_{{\bf X}}\left(x,t\right)=J_{{\bf X}}\left(x,t\right)\,
\text{div}_{{\cal M}_{t}}\!\left(\dot{{\bf X}}\left(x,t\right)\right),
\end{equation}
where $\text{div}_{{\cal M}_{t}}$ denotes the \emph{surface (tangential) divergence} on ${\cal M}_t$,
defined for a vector field ${\bf V}$ by
$\text{div}_{{\cal M}_{t}}\!\left({\bf V}\right)
:=\text{div}\!\left({\bf V}\right)-{\bf n}\cd\left(D{\bf V}\right){\bf n}$,
and ${\bf n}$ is the unit normal to ${\cal M}_t$.

It is without loss of generality to take $\dot{{\bf X}}$ to be
normal:
\[
\dot{{\bf X}}\left(x,t\right)=-\ol u\left(x,t\right)\,{\bf n}\left(x,t\right),
\quad \ol u\left(x,t\right)=\frac{v\left(x\right)}{\norm{\Dif_{x}h_{t}\left(x\right)}}.
\]
which results in a simplification of the surface divergence\footnote{Indeed, $\text{div}(F{\bf n})=F\,\text{div}({\bf n})+{\bf n}\cd\Dif F$
and ${\bf n}\cd(D(F{\bf n})){\bf n}={\bf n}\cd((\Dif F){\bf n}'+F\,D{\bf n}\,{\bf n})={\bf n}\cd\Dif F$,
since $D{\bf n}\,{\bf n}=0$ for a unit normal; subtracting gives $F\,\text{div}({\bf n})=F\,H_t$.} to
\begin{equation}\label{eq:divM_normal}
\text{div}_{{\cal M}_{t}}\!\left(F\,{\bf n}\right)=F\,H_{t},
\end{equation}
where $H_t:=\text{div}\left({\bf n}\right)=\text{div}\!\left(\frac{\Dif_{x}h_{t}}{\norm{\Dif_{x}h_{t}}}\right)$
is the mean curvature of ${\cal M}_t$.

Differentiating \eqref{eq:It_alt} and using \eqref{eq:surfJac_transport}:
\begin{align}
I^{'}\left(t\right)
&=\int_{{\cal M}_{0}}\left[\Dif_{x}w\left({\bf X}\right)\cd\dot{{\bf X}}
  +w\left({\bf X}\right)\,\text{div}_{{\cal M}_{t}}\!\left(\dot{{\bf X}}\right)\right]
  J_{{\bf X}}\,d{\cal H}^{d-1}\left(x\right) \nonumber\\
&=\int_{{\cal M}_{0}}\left[
  -\frac{v}{\norm{\Dif_{x}h_{t}}}\,\frac{\Dif_{x}h_{t}}{\norm{\Dif_{x}h_{t}}}\cd\Dif_{x}w\left({\bf X}\right)
  -w\left({\bf X}\right)\frac{v}{\norm{\Dif_{x}h_{t}}}\,H_t
  \right]J_{{\bf X}}\,d{\cal H}^{d-1}\left(x\right),
\label{eq:I't_alt}
\end{align}
where in the second line we substituted $\dot{{\bf X}}=-\ol u\,{\bf n}$
and applied \eqref{eq:divM_normal} with $F=-\ol u$.

Then, at $t=0$, ${\bf X}(x,0)=x$ and $J_{{\bf X}}(x,0)=1$:
\begin{align}
I^{'}\left(0\right)
&=-\int_{{\cal M}_{0}}\left[
\frac{v\left(x\right)}{\norm{\Dif_{x}h_{0}\left(x\right)}}
\frac{\Dif_{x}h_{0}\left(x\right)}{\norm{\Dif_{x}h_{0}\left(x\right)}}\cd\Dif_{x}w\left(x\right)
+w\left(x\right)\frac{v\left(x\right)}{\norm{\Dif_{x}h_{0}\left(x\right)}}
H\left(x\right)\right]d{\cal H}^{d-1}\left(x\right).
\label{eq:I0_alt}
\end{align}

Based on \eqref{eq:I't_alt}, by Cauchy--Schwarz with respect to $\int_{{\cal M}_0}\cd\,d{\cal H}^{d-1}$, we have
\begin{align*}
\left|I^{'}\left(t\right)\right|
&\leq \norm{\Dif_{x}w}_{L_2(\cM_0)}\norm{\dot{{\bf X}}}_{L_2(\cM_0)}
  +\norm{w}_{L_2(\cM_0)}\norm{\ol u\,H_t}_{L_2(\cM_0)}.
\end{align*}
Since $\|\dot{{\bf X}}\|_{L_2(\cM_0)}\leq M\|v\|_{L_2(\cM_0)}$
and
\[
\norm{\ol u\,H_t}_{L_2(\cM_0)}
=\norm{\frac{v}{\norm{\Dif_{x}h_t}}\,H_t}_{L_2(\cM_0)}
\leq \frac{\|H_t\|_\infty}{\inf_{x\in{\cal M}_0}\norm{\Dif_{x}h_t(x)}}\,\norm{v}_{L_2(\cM_0)}
\leq M\,\norm{v}_{L_2(\cM_0)},
\]
where the last inequality uses the bounded curvature assumption and
$\norm{\Dif_{x}h_t}\geq c>0$ on ${\cal M}_t$ (Assumption~\ref{assu:RegCond_Uch}), we obtain:
\begin{equation}\label{eq:bound_I't_alt}
\left|I^{'}\left(t\right)\right|
\leq M\left(\norm{\Dif_{x}w}_{L_2(\cM_0)}\norm{v}_{L_2(\cM_0)}
+\norm{w}_{L_2(\cM_0)}\norm{v}_{L_2(\cM_0)}\right).
\end{equation}

\end{proof}

~

\begin{proof}[\textbf{Proof of Theorem~\ref{thm:Int_UCh_rate}(1)}]
We show that under $s \ge m/2 + 1 = (d+1)/2$, the split-sample debiased sieve estimator $\hat{\theta}_{SS}$ achieves the optimal rate $r_n^* = n^{-s/(2s+1)}$.

Let $\delta_1 := \hat{h}_1 - h_0$, $\delta_2 := \hat{h}_2 - h_0$, and $\bar{h} = (\hat{h}_1 + \hat{h}_2)/2 = h_0 + (\delta_1 + \delta_2)/2$. Expanding $V(\bar{h})$ around $V(h_0)$, we have
\begin{align}
V(\bar{h}) - V(h_0) &= DV(h_0)[\bar{h} - h_0] + \frac{1}{2}D^2V(h_0)[\bar{h} - h_0, \bar{h} - h_0] + R_3, \label{eq:V_taylor}
\end{align}
where $R_3$ is the third-order remainder term.

Applying the shape calculus transport formula to differentiate $DV(h_0+tv_2)[v_1]=\int_{{\cal M}_{h_0+tv_2}}\frac{w\,v_1}{\norm{\Dif_{x}(h_0+tv_2)}}d{\cal H}^{d-1}$ with respect to $t$ at $t=0$, we obtain the second functional derivative $D^2V(h)[v_1,v_2]$ as
\begin{align}\label{eq:D2V_alt_alt}
D^2V(h)[v_1,v_2]&=-\int_{{\cal M}_h}
\frac{v_1(x)\,v_2(x)}{\norm{\Dif_{x}h(x)}^2}
\left[\frac{\Dif_{x}h}{\norm{\Dif_{x}h}}\cd\Dif_{x}\!\left(\frac{w}{\norm{\Dif_{x}h}}\right)
+\frac{w}{\norm{\Dif_{x}h}}\,H(x)\right]d{\cal H}^{d-1}(x) \notag\\
&\quad - \int_{{\cal M}_h} \frac{w(x)}{\norm{\Dif_{x}h(x)}^2} \left[\frac{\Dif_{x}h}{\norm{\Dif_{x}h}} \cd \Dif_{x} v_1 \cd v_2 + \frac{\Dif_{x}h}{\norm{\Dif_{x}h}} \cd \Dif_{x} v_2 \cd v_1\right] d{\cal H}^{d-1}(x).
\end{align}
The first integral involves only pointwise values of $v_1,v_2$ (with a bounded integrand under Assumption~\ref{assu:RegCond_Uch}), while the second integral involves normal derivatives ${\bf n}\cd\Dif_x v_1$ and ${\bf n}\cd\Dif_x v_2$. Since $D^2V(h)[v_1, v_2]$ is bilinear, we have
\[
\frac{1}{4}D^2V(h_0)[\hat{h}_1 - \hat{h}_2, \hat{h}_1 - \hat{h}_2] = \frac{1}{4}\left(D^2V(h_0)[\delta_1, \delta_1] - 2D^2V(h_0)[\delta_1, \delta_2] + D^2V(h_0)[\delta_2, \delta_2]\right).
\]
Taking difference of the above removes the diagonal term:
\[
D^2V(h_0)[\bar{h} - h_0, \bar{h} - h_0] - \frac{1}{4}D^2V(h_0)[\hat{h}_1 - \hat{h}_2, \hat{h}_1 - \hat{h}_2] = D^2V(h_0)[\delta_1, \delta_2].
\]

Under Assumption~\ref{assu:RegCond_Uch}, the coefficient functions in \eqref{eq:D2V_alt_alt} are bounded,
and therefore by Cauchy--Schwarz:
\begin{equation}\label{eq:D2V_alt_alt_bound_alt}
\big|D^2V(\bar h_0)[v_1,v_2]\big|
\leq C\,\norm{v_1}_{L^2({\cal M}_0)}\left(\norm{v_2}_{L^2({\cal M}_0)}+\norm{\Dif_x v_2}_{L^2({\cal M}_0)}\right)
+C\,\norm{v_2}_{L^2({\cal M}_0)}\norm{\Dif_x v_1}_{L^2({\cal M}_0)}.
\end{equation}

Given the above, we have
\begin{equation}
\hat{\theta}_{SS} - \theta_0 = \underbrace{DV(h_0)[\bar{h} - h_0]}_{T_{lin}} + \underbrace{\frac{1}{2}D^2V(h_0)[\delta_1, \delta_2]}_{T_{cross}} + R_{rem}. \label{eq:V_SS_decomp}
\end{equation}

Again, the linear term is a $m=(d-1)$-dimensional submanifold integral (over $\cM_0$) with weight $w/\|\nabla_x h_0\|$. By Theorem~\ref{thm:Rate_Sieve} with $m = (d-1)$ we obtain:
\[
T_{lin} = O_p\left(\sqrt{\frac{K_n^{1/d}}{n}} + K_n^{-s/d}\right).
\]
By setting $K^*_n \asymp n^{d/(2s+1)}$ (noting $d - m = 1$), we get $T_{lin} = O_p(n^{-s/(2s+1)}) = O_p(r_n^*)$.

The cross-term $T_{cross}=\frac{1}{2}D^2V(h_0)[\delta_1, \delta_2]$ involves integrals of the forms:
\[
(i)~\int_{\cM_0} \frac{\delta_1(x)\delta_2(x)}{\|\nabla_x h_0\|^2} \tilde{w}(x)\,d\cH^{d-1}(x), \quad (ii)~\int_{\cM_0} \frac{(\nabla_x\delta_1 \cdot \mathbf{n})\delta_2 + (\nabla_x\delta_2 \cdot \mathbf{n})\delta_1}{\|\nabla_x h_0\|^2} w\,d\cH^{d-1}(x).
\]
By the same proof of Lemma \ref{lem:Rate_Quad}(b) Term $T_3 = O_p\left(n^{-1}K_n^{(2d-m)/(2d)}\right)$, we obtain, term (i) is of the order $O_p\left(n^{-1}K_n^{(d+1)/(2d)}\right)$ with $m=d-1$
Notice that term (i) goes to zero strictly faster than term (ii), we now focus on bounding term (ii).
Applying Lemma \ref{lem:Uch_SS-cross}(1) below, we have, with $K^*_n \asymp n^{d/(2s+1)}$:
\[
\text{Term (ii)} =O_p(n^{-s/(2s+1)})=O_p(r^*_n)~~\text{when}~~ s\geq (d+1)/2.
\]
Consequently, when $s\geq (d+1)/2$, at $K^*_n \asymp n^{d/(2s+1)}$, the quadratic cross term satisfies
$$T_{cross} = O_p(r_n^*).$$

Finally, since $\norm{\hat{h}-h_0}_{\infty}=o_p(1)$, the third-order remainder term $R_3=o_p(T_{cross})$.

In summary, under $s \ge (d+1)/2$, at $K^*_n = n^{d/(2s+1)}$, we obtain: $\hat{\theta}_{SS} - \theta_0 = O_p(r_n^*).$
\end{proof}
~

\begin{lem}\label{lem:Uch_SS-cross}
Let $\hat{h}_1,\hat{h}_2$ be split-sample sieve LS estimators of $h_0$ computed on two independent subsamples
$I_1,I_2$ of sizes $n_1,n_2$ with $n_1\asymp n_2\asymp n$, using sieve dimension $K\equiv K_n$.
Let $\td h_r:=P_{K,r}h_0$ denote the (subsample-$r$) empirical LS projection of $h_0$ onto ${\H}_{K}$.
Assume Assumptions \ref{assu:RegCond_Uch}, \ref{assu:density_x}, \ref{assu:w}, \ref{assu:e2_moment}, and \ref{assu:Cond_Sieve} hold,
and assume in addition that $h_0\in\Lambda^s(\mathcal X)$ with $s>1$ so that
$\|\nabla_x(\td h_r-h_0)\|_\infty=O_p(K^{-(s-1)/d})$ for $r=1,2$ (e.g. for spline/wavelet tensor-product sieves).
Define
\[
T_{\mathrm{mix-cross}}
:=\int_{\cM_0}(\hat h_1-h_0)\,\nabla_x(\hat h_2-h_0)\,w'\,d\cH^{d-1},
\]
where $w'$ is a bounded (row-)vector-valued weight on $\cM_0$ and the product $\nabla_x(\hat h_2-h_0)\,w'$ is interpreted as the usual inner product.
Then
\[
T_{\mathrm{mix-cross}}=O_p \left(K_n^{1/d}\times\left[K_n^{-2s/d} +K_n^{-s/d}\sqrt{ \frac{K_n^{1/d}}{n}} +\frac{1}{n} K^{\frac{d+1}{2d}}\right]\right).
\]
Hence, with $K^*_n \asymp n^{d/(2s+1)}$, we obtain:
\begin{itemize}
\item[(1)] $T_{\mathrm{mix-cross}}=O_p(n^{-s/(2s+1)})=O_p(r^*_n)$ when $s\geq (d+1)/2$.
\item[(2)]  $T_{\mathrm{mix-cross}}=o_p(n^{-s/(2s+1)})=o_p(r^*_n)$ when $s> (d+1)/2$.
\end{itemize}
\end{lem}

\begin{proof}[\textbf{Proof of Lemma \ref{lem:Uch_SS-cross}}]
Write $K:=K_n$ for brevity.
Decompose each split-sample error as
\[
\hat h_r-h_0=(\hat h_r-\td h_r)+(\td h_r-h_0),\qquad r\in\{1,2\}.
\]
Expanding $T_{\mathrm{mix-cross}}$ yields
\begin{align}
T_{\mathrm{mix-cross}}
&=\int_{\cM_0}(\hat h_1-\td h_1)\,\nabla_x(\hat h_2-\td h_2)\,w'\,d\cH^{d-1} \notag\\
&\quad+\int_{\cM_0}(\td h_1-h_0)\,\nabla_x(\hat h_2-\td h_2)\,w'\,d\cH^{d-1} \notag\\
&\quad+\int_{\cM_0}(\hat h_1-\td h_1)\,\nabla_x(\td h_2-h_0)\,w'\,d\cH^{d-1} \notag\\
&\quad+\int_{\cM_0}(\td h_1-h_0)\,\nabla_x(\td h_2-h_0)\,w'\,d\cH^{d-1} \notag\\
&=:T_{\mathrm{stoch}}+T_{1}+T_{2}+T_{\mathrm{bias}}.
\label{eq:Uch_cross_split}
\end{align}

By Assumption~\ref{assu:Cond_Sieve}(iv)(v),
$\|\td h_1-h_0\|_\infty=O_p(K^{-s/d})$ and, by the stated $s>1$ smoothness/approximation,
$\|\nabla_x(\td h_2-h_0)\|_\infty=O_p(K^{-(s-1)/d})$.
Since $\|w'\|_\infty<\infty$ and $\cH^{d-1}(\cM_0)<\infty$ under Assumption~\ref{assu:RegCond_Uch}$($i$)$,
\[
|T_{\mathrm{bias}}|
\le \|w'\|_{\infty}\,\cH^{d-1}(\cM_0)\,\|\td h_1-h_0\|_{\infty}\,\|\nabla_x(\td h_2-h_0)\|_{\infty}
=O_p\!\left(K^{-\frac{2s-1}{d}}\right).
\]

\smallskip
Let $A_{1,n}:=\|\td h_1-h_0\|_\infty$ and, on $\{A_{1,n}>0\}$, define the (random but bounded) weight
\[
w'_{1,n}(x):=w'(x)\,\frac{\td h_1(x)-h_0(x)}{A_{1,n}},\qquad x\in\cM_0,
\]
and set $w'_{1,n}\equiv 0$ on $\{A_{1,n}=0\}$.
Then $\|w'_{1,n}\|_\infty\le \|w'\|_\infty$ and $w'_{1,n}$ is measurable w.r.t.\ the first split and hence independent of $\hat h_2$.
With this normalization,
\begin{equation}\label{eq:T1_normalized}
T_{1}
=A_{1,n}\int_{\cM_0}\nabla_x(\hat h_2-\td h_2)(x)\,w'_{1,n}(x)\,d\cH^{d-1}(x)
\equiv A_{1,n}\,L^{\nabla}_{w'_{1,n}}(\hat h_2-\td h_2),
\end{equation}
where $L^{\nabla}_{\omega'}(f):=\int_{\cM_0}\nabla_x f(x)\,\omega'(x)\,d\cH^{d-1}(x)$ is a linear functional on $\H_K$.

Exactly as in the proof of Lemma~\ref{lem:Rate_Quad}(b),
Assumption~\ref{assu:Cond_Sieve}(iv)(v) implies
\begin{equation}\label{eq:A1n_rate_cross}
A_{1,n}=\|\td h_1-h_0\|_\infty=O_p(K^{-s/d}).
\end{equation}

To bound $L^{\nabla}_{w'_{1,n}}(\hat h_2-\td h_2)$.
Condition on the first split, so $w'_{1,n}$ is fixed and bounded.
For a split-sample sieve LS estimator, the centered part admits the orthonormalized linear representation
\[
\hat h_2(x)-\td h_2(x)
=\ol b^K(x)^{\prime}\,\ol G_2^{-1}\,\frac1{n_2}\sum_{i\in I_2}\ol b^K(X_i)\e_i,
\qquad
\ol G_2:=\frac1{n_2}\sum_{i\in I_2}\ol b^K(X_i)\ol b^K(X_i)^{\prime},
\]
hence
\[
L^{\nabla}_{w'_{1,n}}(\hat h_2-\td h_2)
=
\left(L^{\nabla}_{w'_{1,n}}(\ol b^K)\right)^{\prime}\,
\ol G_2^{-1}\,\eta_2,
\qquad
\eta_2:=\frac1{n_2}\sum_{i\in I_2}\ol b^K(X_i)\e_i\in\R^K,
\]
where $L^{\nabla}_{w'_{1,n}}(\ol b^K):=\big(L^{\nabla}_{w'_{1,n}}(\ol b_1^K),\dots,L^{\nabla}_{w'_{1,n}}(\ol b_K^K)\big)^{\prime}$.
Since $\E[\e_i\mid X_i]=0$ and the second split is i.i.d., this quantity is centered conditional on the second-split design.

We now bound its conditional variance.
Let $\bar\s^2:=\sup_x \E[\e_i^2\mid X_i=x]<\infty$ (Assumption~\ref{assu:e2_moment}).
As in the proof of Lemma~\ref{lem:Rate_Quad}(b),
\[
\E[\eta_2\eta_2'\mid X_1^n]
=\frac1{n_2^2}\sum_{i\in I_2}\E[\e_i^2\mid X_i]\;\ol b^K(X_i)\ol b^K(X_i)'
\preceq \frac{\bar\s^2}{n_2}\,\ol G_2.
\]
Therefore, conditional on the first split,
\begin{align}
\E\!\left[\left(L^{\nabla}_{w'_{1,n}}(\hat h_2-\td h_2)\right)^2\Bigm| X_1^n\right]
&\le \frac{\bar\s^2}{n_2}\,
\left(L^{\nabla}_{w'_{1,n}}(\ol b^K)\right)^{\prime}\,
\ol G_2^{-1}\,
L^{\nabla}_{w'_{1,n}}(\ol b^K) \notag\\
&\le \frac{\bar\s^2}{n_2}\,\|\ol G_2^{-1}\|_{op}\,\big\|L^{\nabla}_{w'_{1,n}}(\ol b^K)\big\|^2 .
\label{eq:Var_Lgrad}
\end{align}
Under Assumption~\ref{assu:Cond_Sieve}(i)(iii), $\|\ol G_2^{-1}\|_{op}=O_p(1)$.

\smallskip
\noindent\emph{Claim:} uniformly over bounded weights $\omega'$ on $\cM_0$,
\begin{equation}\label{eq:Lgrad_growth_claim}
\big\|L^{\nabla}_{\omega'}(\ol b^K)\big\|^2
=
\sum_{k=1}^K \Big(\int_{\cM_0}\nabla_x \ol b_k^K(x)\,\omega'(x)\,d\cH^{d-1}(x)\Big)^2
\ \lesssim\ \|\omega'\|_\infty^2\,K^{3/d}.
\end{equation}

We now check the claim. Write $K=J^{d}$ as in Lemma~\ref{lem:NormRate_d-m} and write $\omega'(x)=(\omega_1(x),\dots,\omega_d(x))$.
Using $(\sum_{\ell=1}^d a_\ell)^2\le d\sum_{\ell=1}^d a_\ell^2$, it suffices to bound, for each fixed $\ell\in\{1,\dots,d\}$,
\[
S_\ell
:=\sum_{k=1}^K \Big(\int_{\cM_0}\partial_{x_\ell}\ol b_k^K(x)\,\omega_\ell(x)\,d\cH^{d-1}(x)\Big)^2.
\]
We show $S_\ell\lesssim \|\omega'\|_\infty^2\,K^{3/d}$ uniformly in $\ell$.

Apply the chart decomposition \eqref{eq:PieceLebInt} for $\cM_0$ (dimension $m=d-1$) to write the integral as a finite sum over charts.
By Cauchy--Schwarz over finitely many charts, it suffices to bound the contribution of a fixed chart uniformly.
Fix a chart $(\mathcal U_j,\varphi_j)$ and suppress the chart index.
Let $\psi(\cd)$ denote the implicit-function map for the remaining coordinate and define the bounded chart weight
\[
\tilde w(u):=\rho_j(\varphi_j(u))\,\omega_\ell(\varphi_j(u))\,Jac_{\varphi_j}(u),
\]
absorbing bounded factors into constants.

Each basis index $k\leftrightarrow (k_1,\dots,k_d)$ corresponds to the tensor-product basis function
$\ol b_k^K(x)=\prod_{r=1}^d b_{k_r}(x_r)$.
Differentiating in $x_\ell$ replaces the $\ell$th univariate factor by its derivative, so
\[
\partial_{x_\ell}\ol b_k^K(\varphi_j(u))
=\ol b^{K,(\ell)}_{k,(m)}(u)\cd\ol b_{k,-(m)}^{K}(\psi(u)),
\]
where $\ol b^{K,(\ell)}_{k,(m)}(\cd)$ is an $m$-variate tensor-product basis on the free coordinates (with one univariate factor differentiated if $\ell\in(m)$),
and $\ol b_{k,-(m)}^{K}(\cd)$ is the $(d-m)=1$-variate factor on the remaining coordinate (differentiated instead if $\ell\notin(m)$).
For spline/wavelet tensor-product bases at resolution $J$, the collection $\{J^{-1}\ol b^{K,(\ell)}_{k,(m)}\}_{k_{(m)}}$ restricted to $\mathcal U_j$
is again a frame with constants independent of $J$ (this is the same frame argument used in Lemma~\ref{lem:NormRate_d-m}, but with one derivative and hence the scaling $J$).
Therefore, for each fixed $k_{-(m)}$, the frame inequality yields
\[
\sum_{k_{(m)}}\Big(\int_{\mathcal U_j}\ol b^{K,(\ell)}_{k,(m)}(u)\,\ol b_{k,-(m)}^{K}(\psi(u))\,\tilde w(u)\,du\Big)^2
\ \lesssim\ J^2\int_{\mathcal U_j}\Big(\ol b_{k,-(m)}^{K}(\psi(u))\Big)^2\,\tilde w(u)^2\,du.
\]
Summing over $k_{-(m)}$ and using the tensor-product identity
$\sum_{k_{-(m)}}(\ol b_{k,-(m)}^{K}(\psi(u)))^2\lesssim J^{d-m}=J$ pointwise in $u$
(as in the proof of Lemma~\ref{lem:NormRate_d-m}) gives
\[
S_\ell
\ \lesssim\ J^2\cdot J\int_{\mathcal U_j}\tilde w(u)^2\,du
\ \lesssim\ \|\omega'\|_\infty^2\,J^3
=\|\omega'\|_\infty^2\,K^{3/d}.
\]
This proves \eqref{eq:Lgrad_growth_claim}.

Combining \eqref{eq:Var_Lgrad} and \eqref{eq:Lgrad_growth_claim} gives
\[
L^{\nabla}_{w'_{1,n}}(\hat h_2-\td h_2)
=O_p\!\left(\sqrt{\frac{K^{3/d}}{n}}\right)
=O_p\!\left(K^{1/d}\sqrt{\frac{K^{1/d}}{n}}\right).
\]
Together with \eqref{eq:T1_normalized}--\eqref{eq:A1n_rate_cross},
\begin{equation}\label{eq:T1_rate}
T_{1}
=O_p\!\left(K^{-s/d}\cdot K^{1/d}\sqrt{\frac{K^{1/d}}{n}}\right)
=O_p\!\left(K^{-(s-1)/d}\sqrt{\frac{K^{1/d}}{n}}\right).
\end{equation}

\smallskip
By symmetry, swapping the roles of the two splits yields the same rate.
For completeness, let $B_{2,n}:=\|\nabla_x(\td h_2-h_0)\|_\infty=O_p(K^{-(s-1)/d})$ and define
\[
w_{2,n}(x):=\frac{\nabla_x(\td h_2-h_0)(x)\,w'(x)}{B_{2,n}},\qquad x\in\cM_0,
\]
with $w_{2,n}\equiv 0$ on $\{B_{2,n}=0\}$.
Then $\|w_{2,n}\|_\infty\le \|w'\|_\infty$ and $w_{2,n}$ is measurable w.r.t.\ the second split, hence independent of $\hat h_1$.
We can write
\[
T_{2}=B_{2,n}\int_{\cM_0}(\hat h_1-\td h_1)(x)\,w_{2,n}(x)\,d\cH^{d-1}(x)
\equiv B_{2,n}\,L_{w_{2,n}}(\hat h_1-\td h_1),
\]
where $L_{w}(f):=\int_{\cM_0} f(x)w(x)\,d\cH^{d-1}(x)$ is the usual (non-derivative) linear submanifold functional.
By Lemma~\ref{lem:NormRate_d-m} with $m=d-1$ and the same centered-linear-functional argument as in Lemma~\ref{lem:Rate_Quad}(b),
\[
L_{w_{2,n}}(\hat h_1-\td h_1)=O_p\!\left(\sqrt{\frac{K^{1/d}}{n}}\right),
\]
and hence
\begin{equation}\label{eq:T2_rate}
T_{2}=O_p\!\left(K^{-(s-1)/d}\sqrt{\frac{K^{1/d}}{n}}\right).
\end{equation}

\smallskip
Using the orthonormalized linear representations on each split, define
\[
\eta_r:=\frac1{n_r}\sum_{i\in I_r}\ol b^K(X_i)\e_i,\qquad
\ol G_r:=\frac1{n_r}\sum_{i\in I_r}\ol b^K(X_i)\ol b^K(X_i)^{\prime},\qquad r\in\{1,2\}.
\]
Define the deterministic $K\times K$ matrix
\[
C_K
:=\int_{\cM_0}\ol b^K(x)\,\Big(\nabla_x \ol b^K(x)\,w'\Big)'\,d\cH^{d-1}(x),
\]
(where $\nabla_x \ol b^K(x)$ is the $K\times d$ Jacobian and $\nabla_x \ol b^K(x)\,w'$ is the resulting $K$-vector).
Then, up to sign (replace $w'$ by $|w'|$ if desired),
\begin{equation}\label{eq:Tstoch_matrix}
T_{\mathrm{stoch}}=\eta_1'\,\ol G_1^{-1}\,C_K\,\ol G_2^{-1}\,\eta_2.
\end{equation}
Conditioning on the full design $X_1^n$ and using split-sample independence, $\E[T_{\mathrm{stoch}}\mid X_1^n]=0$.
Moreover, as before,
\[
\E[\eta_r\eta_r'\mid X_1^n]\preceq \frac{\bar\s^2}{n_r}\,\ol G_r,\qquad r=1,2.
\]
Therefore, by the same trace argument as in Lemma~\ref{lem:Rate_Quad}(b) (Step~4),
\begin{align}
\E[T_{\mathrm{stoch}}^2\mid X_1^n]
&\le \frac{\bar\s^4}{n_1n_2}\,\tr\!\big(\ol G_1^{-1}C_K\ol G_2^{-1}C_K'\big)\notag\\
&\le \frac{\bar\s^4}{n_1n_2}\,\|\ol G_1^{-1}\|_{op}\,\|\ol G_2^{-1}\|_{op}\,\tr(C_KC_K').
\label{eq:Var_Tstoch}
\end{align}
Under Assumption~\ref{assu:Cond_Sieve}(i)(iii), $\|\ol G_r^{-1}\|_{op}=O_p(1)$.
It remains to control $\tr(C_KC_K')=\|C_K\|_F^2$.

\smallskip
Now, decompose $\cM_0$ into finitely many charts as in \eqref{eq:PieceLebInt} and write $C_K=\sum_{j=1}^{\bar j} C_{K,j}$ with
\[
C_{K,j}:=\int_{\mathcal U_j}\ol b^K(\varphi_j(u))\Big(\nabla_x \ol b^K(\varphi_j(u))\,w'\Big)'\,\tilde w_j(u)\,du,
\]
where $\tilde w_j(u)$ is bounded (it collects the partition of unity and the Jacobian).
Since $\|C_K\|_F^2\le \bar j\sum_{j=1}^{\bar j}\|C_{K,j}\|_F^2$, it suffices to show $\|C_{K,j}\|_F^2\lesssim K^{(d+3)/d}$ uniformly in $j$.

Fix $j$ and suppress the chart index.
Write $K=J^d$ and note $m=d-1$ (so $d-m=1$).
For each $k,k'$, the $(k,k')$ entry equals
\[
(C_{K,j})_{kk'}
=\int_{\mathcal U_j}\ol b_k^K(\varphi_j(u))\Big(\nabla_x \ol b_{k'}^K(\varphi_j(u))\,w'\Big)\,\tilde w_j(u)\,du.
\]
Using the tensor-product decomposition as in Lemma~\ref{lem:NormRate_d-m},
$\ol b_{k}^K(\varphi_j(u))=\ol b_{k,(m)}^K(u)\cd\ol b_{k,-(m)}^K(\psi(u))$.
Moreover, $\nabla_x \ol b_{k'}^K(\varphi_j(u))\,w'$ is a finite sum (over $\ell=1,\dots,d$) of tensor-product terms with one univariate factor differentiated.
As in the proof of \eqref{eq:Lgrad_growth_claim}, squaring and summing over the $m$-indices yields an additional factor $J^2$ from this one derivative.
Hence, for fixed $k$ and fixed $k'_{-(m)}$, the same frame-inequality argument as in the proof of Lemma~\ref{lem:Rate_Quad}(b) gives
\[
\sum_{k'_{(m)}} (C_{K,j})_{kk'}^2
\ \lesssim\ J^2\int_{\mathcal U_j}\Big(\ol b_k^K(\varphi_j(u))\Big)^2\Big(\ol b_{k',-(m)}^K(\psi(u))\Big)^2\,\tilde w_j(u)^2\,du.
\]
Summing over $k'_{-(m)}$ and using $\sum_{k'_{-(m)}}(\ol b_{k',-(m)}^K(\psi(u)))^2\lesssim J^{d-m}=J$ pointwise in $u$ yields
\[
\sum_{k'=1}^{K} (C_{K,j})_{kk'}^2
\ \lesssim\ J^3\int_{\mathcal U_j}\Big(\ol b_k^K(\varphi_j(u))\Big)^2\,\tilde w_j(u)^2\,du.
\]
Finally, summing over $k$ and using $\sum_{k=1}^{K}(\ol b_k^K(x))^2=\|\ol b^K(x)\|^2\lesssim K$ pointwise (Assumption~\ref{assu:Cond_Sieve}(ii) and $\lambda_{\min}(G)>0$) yields
\[
\|C_{K,j}\|_F^2
=\sum_{k=1}^{K}\sum_{k'=1}^{K}(C_{K,j})_{kk'}^2
\ \lesssim\ J^3\int_{\mathcal U_j}\Big(\sum_{k=1}^{K}(\ol b_k^K(\varphi_j(u)))^2\Big)\tilde w_j(u)^2\,du
\ \lesssim\ J^3\cd K
=J^{d+3}
=K^{(d+3)/d}.
\]
This proves the claim.

Substituting into \eqref{eq:Var_Tstoch} and using $n_1\asymp n_2\asymp n$ gives
\[
\E[T_{\mathrm{stoch}}^2]=O\!\left(\frac{K^{(d+3)/d}}{n^2}\right),
\qquad\text{hence}\qquad
T_{\mathrm{stoch}}=O_p\!\left(\frac1n\sqrt{K^{\frac{d+3}{d}}}\right)
=O_p\!\left(\frac1n K^{\frac{d+3}{2d}}\right).
\]
Finally, combining \eqref{eq:Uch_cross_split}, the bounds for $T_{\mathrm{bias}}$, \eqref{eq:T1_rate}, \eqref{eq:T2_rate}, and the bound for $T_{\mathrm{stoch}}$ yields
\[
T_{\mathrm{mix-cross}}
=O_p\!\left(
K^{-\frac{2s-1}{d}}
+
K^{-\frac{s-1}{d}}\sqrt{\frac{K^{1/d}}{n}}
+
\frac1n\sqrt{K^{\frac{d+3}{d}}}
\right),
\]
as claimed.
\end{proof}

\begin{proof}[\textbf{Proof of Theorem~\ref{thm:Int_UCh_rate}(2)}]
The proof is very similar to the proofs of Theorem~\ref{thm:Int_NLh_rate}(2) and Theorem~\ref{thm:Int_UCh_rate}(1), hence we will be brief here.
Let $K\equiv K_n\asymp K_n^*\asymp n^{\frac{d}{2s+1}}$ and $r_n^*:=n^{-\frac{s}{2s+1}}$.
Recall $\td h:=P_{K,n}h_0$ and the series LS representation \eqref{eq:seriesLS}):
\begin{equation}\label{eq:LOO_V_series_rep}
\hat h(x)=b^K(x)'\hat G^{-1}\Big(\frac1n\sum_{i=1}^n b^K(X_i)Y_i\Big)=\frac1n\sum_{i=1}^n \hat s_i(x)\,Y_i,\qquad \hat s_i(x):=b^K(x)'\hat G^{-1}b^K(X_i),
\end{equation}
so that
\begin{equation}\label{eq:LOO_V_stoch_part}
\hat h(x)-\td h(x)=\frac1n\sum_{i=1}^n \hat s_i(x)\,\e_i.
\end{equation}

\smallskip
Exactly as in the mean-value expansion used in \eqref{eq:V_plugin}, there exists a (random) function $\bar h_0$ that lies on the line segment between $\hat h$ and $h_0$ such that
\begin{equation}\label{eq:LOO_V_taylor}
V(\hat h)-V(h_0)=DV(h_0)[\hat h-h_0]+\frac12 D^2V(\bar h_0)[\hat h-h_0,\hat h-h_0].
\end{equation}
Consequently, with $\hat\t_{\mathrm{LOO}}^{V}$ defined in \eqref{eq:V_LOO_estimator},
\begin{equation}\label{eq:LOO_V_main_decomp}
\hat\t_{\mathrm{LOO}}^{V}-V(h_0)
=T_{\mathrm{lin}}+T_{\mathrm{quad}}-C_n,
\end{equation}
where
\begin{align*}
T_{\mathrm{lin}}&:=DV(h_0)[\hat h-h_0],\\
T_{\mathrm{quad}}&:=\frac12 D^2V(\bar h_0)[\hat h-h_0,\hat h-h_0],\\
C_n&:=\frac{1}{2n^2}\sum_{i=1}^n D^2V(\hat h)[\hat s_i,\hat s_i]\cdot (\hat\e_i^{(-i)})^2\qquad\text{(the LOO correction term in \eqref{eq:V_LOO_estimator}).}
\end{align*}

\smallskip
By \eqref{eq:PathD_Contour}, $T_{\mathrm{lin}}$ is an $m=(d-1)$-dimensional submanifold integral on $\cM_0$ with bounded weight $w/\|\nabla_x h_0\|$.
Applying Theorem~\ref{thm:Rate_Sieve} with $m=d-1$ gives
\begin{equation}\label{eq:LOO_V_Tlin_rate}
T_{\mathrm{lin}}=O_p\Big(\sqrt{\tfrac{K^{1/d}}{n}}+K^{-s/d}\Big)=O_p(r_n^*),
\end{equation}
at $K\asymp K_n^*\asymp n^{d/(2s+1)}$.

\smallskip
Now, let $u:=\hat h-\td h$ and $b:=\td h-h_0$, so that $\hat h-h_0=u+b$.
By bilinearity of $D^2V(\bar h_0)[\cdot,\cdot]$,
\begin{equation}\label{eq:LOO_V_u_plus_b}
T_{\mathrm{quad}}=\frac12 D^2V(\bar h_0)[u,u]\; +\; D^2V(\bar h_0)[u,b]\; +\; \frac12 D^2V(\bar h_0)[b,b].
\end{equation}
From the explicit representation of $D^2V$ (cf. \eqref{eq:D2V_alt_alt}) and Assumption~\ref{assu:RegCond_Uch} (bounded $w$, $\nabla w$, bounded curvature, and $\|\nabla h_0\|$ bounded away from zero on $\cM_0$), there exists a finite constant $C>0$ (uniform in $n$) such that, on the event $\|\hat h-h_0\|_{\infty}<\ul\e$,
\begin{align}\label{eq:LOO_V_D2_bound}
&\big|D^2V(\bar h_0)[v_1,v_2]\big|\\\nonumber
\le \ & C\,\|v_1\|_{L^2(\cM_0)}\Big(\|v_2\|_{L^2(\cM_0)}+\|\nabla_x v_2\|_{L^2(\cM_0)}\Big)
+C\,\|v_2\|_{L^2(\cM_0)}\,\|\nabla_x v_1\|_{L^2(\cM_0)}.
\end{align}
Now use \eqref{eq:LOO_V_stoch_part} and Lemma~\ref{lem:MSE_on_M} (applied to $\hat h$ and to $\td h$) to obtain
\begin{equation}\label{eq:LOO_V_u_rates}
\|u\|_{L^2(\cM_0)}=O_p\Big(\sqrt{\tfrac{K}{n}}\Big),\qquad
\|\nabla_x u\|_{L^2(\cM_0)}=O_p\Big(K^{1/d}\sqrt{\tfrac{K}{n}}\Big).
\end{equation}
Moreover, by the standard approximation property of spline/wavelet tensor-product sieves for $h_0\in\L^s(\cX)$ with $s>1$ (cf. the discussion in Lemma~\ref{lem:Uch_SS-cross}),
\begin{equation}\label{eq:LOO_V_b_rates}
\|b\|_{\infty}=O_p(K^{-s/d}),\qquad
\|\nabla_x b\|_{\infty}=O_p(K^{-(s-1)/d}),
\end{equation}
and hence $\|b\|_{L^2(\cM_0)}=O_p(K^{-s/d})$ and $\|\nabla_x b\|_{L^2(\cM_0)}=O_p(K^{-(s-1)/d})$.
Plugging \eqref{eq:LOO_V_u_rates}--\eqref{eq:LOO_V_b_rates} into \eqref{eq:LOO_V_D2_bound} yields
\begin{equation}\label{eq:LOO_V_cross_bias_orders}
D^2V(\bar h_0)[u,b]=O_p\Big(\sqrt{\tfrac{K}{n}}\,K^{-(s-1)/d} + K^{-s/d}\,K^{1/d}\sqrt{\tfrac{K}{n}}\Big)=O_p(r_n^*)\qquad\text{if }s\ge (d+1)/2,
\end{equation}
and
\begin{equation}\label{eq:LOO_V_bb_order}
D^2V(\bar h_0)[b,b]=O_p\big(K^{-(2s-1)/d}\big)=O_p(r_n^*),
\end{equation}
at $K\asymp K_n^*\asymp n^{d/(2s+1)}$.
Therefore,
\begin{equation}\label{eq:LOO_V_Tquad_reduced}
T_{\mathrm{quad}}=\frac12 D^2V(\bar h_0)[u,u]+O_p(r_n^*).
\end{equation}

\smallskip
By \eqref{eq:LOO_V_stoch_part} and bilinearity of $D^2V(\bar h_0)[\cdot,\cdot]$,
\begin{align}\label{eq:LOO_V_u2_D2_expand}
D^2V(\bar h_0)[u,u]
&=\frac{1}{n^2}\sum_{i=1}^n\sum_{j=1}^n \e_i\e_j\,D^2V(\bar h_0)[\hat s_i,\hat s_j]\\\nonumber
&=\underbrace{\frac{1}{n^2}\sum_{i=1}^n \e_i^2\,D^2V(\bar h_0)[\hat s_i,\hat s_i]}_{=:\,T_{\mathrm{diag}}}
+\underbrace{\frac{1}{n^2}\sum_{i\ne j} \e_i\e_j\,D^2V(\bar h_0)[\hat s_i,\hat s_j]}_{=:\,T_{\mathrm{off}}}.
\end{align}
Without debiasing, $T_{\mathrm{diag}}$ contains the ``diagonal'' variance contribution of order $K^{1+1/d}/n$ (cf. \eqref{D2V}), which is too large when $(d+1)/2\le s<d$.
The correction term $C_n$ is designed to estimate and subtract $T_{\mathrm{diag}}$.

Similar to the proof of Theorem~\ref{thm:Int_NLh_rate}(2), we may rewrite
\begin{equation}\label{eq:LOO_V_Cn_alt}
C_n
=\frac12\sum_{i=1}^n D^2V(\hat h)\big[\hat h-\hat h^{(-i)},\hat h-\hat h^{(-i)}\big]=T_{\mathrm{diag}}+R_{\mathrm{diag}},
\end{equation}
where $R_{\mathrm{diag}}$ collects:
(i) cross and squared terms involving the leave-one-out first-stage error $h_0(X_i)-\hat h^{(-i)}(X_i)$, and
(ii) the (asymptotically negligible) difference between $D^2V(\hat h)$ and $D^2V(\bar h_0)$.
A standard Cauchy--Schwarz bound combined with the same matrix-trace argument used to control the main stochastic term in Lemma~\ref{lem:Uch_SS-cross} (replace its bounded weight $w'$ by the bounded weights that appear in $D^2V$; cf. \eqref{eq:D2V_alt_alt}) yields
\begin{equation}\label{eq:LOO_V_Rdiag_rate}
R_{\mathrm{diag}}=O_p\Big(\frac1n\sqrt{K^{\frac{d+3}{d}}}\Big).
\end{equation}
Consequently, combining \eqref{eq:LOO_V_u2_D2_expand} and \eqref{eq:LOO_V_Cn_alt},
\begin{equation}\label{eq:LOO_V_Tquad_minus_Cn}
\frac12 D^2V(\bar h_0)[u,u]-C_n
=\frac12\,T_{\mathrm{off}}+R_{\mathrm{diag}}.
\end{equation}

We now consider the off-diagonal term $T_{\mathrm{off}}$. 
Conditional on the designs, $T_{\mathrm{off}}$ is a (conditionally) degenerate U-statistic of order two.
As in Lemma~\ref{lem:Uch_SS-cross} (main stochastic term), its conditional variance is bounded by
\[
\Var(T_{\mathrm{off}}\mid X_1^n)
\;\lesssim\;\frac{1}{n^2}\,\tr(C_KC_K'),
\]
where $C_K$ is a $K\times K$ matrix of the form
$\int_{\cM_0}\ol b^K(x)\big(\nabla_x \ol b^K(x)\,\omega(x)\big)'d\cH^{d-1}(x)$ for a bounded vector-valued weight $\omega$ (coming from \eqref{eq:D2V_alt_alt}).
By the claim proved in Lemma~\ref{lem:Uch_SS-cross}, $\tr(C_KC_K')\lesssim K^{(d+3)/d}$.
Therefore,
\begin{equation}\label{eq:LOO_V_Toff_rate}
T_{\mathrm{off}}=O_p\Big(\frac1n\sqrt{K^{\frac{d+3}{d}}}\Big).
\end{equation}

Finally, combining \eqref{eq:LOO_V_main_decomp}, \eqref{eq:LOO_V_Tlin_rate}, \eqref{eq:LOO_V_Tquad_reduced}, \eqref{eq:LOO_V_Tquad_minus_Cn}, \eqref{eq:LOO_V_Rdiag_rate} and \eqref{eq:LOO_V_Toff_rate} yields
\[
\hat\t_{\mathrm{LOO}}^{V}-V(h_0)
= T_{\mathrm{lin}}+O_p(r_n^*)+O_p\Big(\frac1n\sqrt{K^{\frac{d+3}{d}}}\Big).
\]
At $K\asymp K_n^*\asymp n^{d/(2s+1)}$ and $s\ge (d+1)/2$,
\[
\frac1n\sqrt{K^{\frac{d+3}{d}}}
=\frac1n K^{\frac{d+3}{2d}}
=n^{-1+\frac{d+3}{2(2s+1)}}
=O\big(n^{-\frac{s}{2s+1}}\big)=O(r_n^*),
\]
so $\hat\t_{\mathrm{LOO}}^{V}-V(h_0)=O_p(r_n^*)$.
\end{proof}

\begin{proof}[\textbf{Proof of Theorem~\ref{thm:Int_UCh_rate}(3)}] We write
\begin{equation}
\hat{\theta} - \theta_0 = V(\hat{h}) - V(h_0) = DV(h_0)[\hat{h} - h_0] + \frac{1}{2}D^2V(\bar{h}_0)[\hat{h} - h_0, \hat{h} - h_0], \label{eq:V_plugin}
\end{equation}
where $\bar{h}_0$ lies between $\hat{h}$ and $h_0$.

As before, the linear term $DV(h_0)[\hat{h} - h_0] = O_p(r_n^*)$ at optimal $K_n$. The second-order term $D^2V$ involves:
\begin{itemize}
    \item Terms of the form $\int_{\cM_0} (\hat{h} - h_0)^2 \tilde{w}\,d\cH^{d-1}$;
    \item Terms involving $\int_{\cM_0}(\hat{h} - h_0) \cdot (\nabla(\hat{h} - h_0) \cdot \mathbf{n})\tilde{w}\,d\cH^{d-1}$.
\end{itemize}
By Lemma~\ref{lem:PathD_Level} we obtain:
\[
|D^2V(\bar{h}_0)[\hat{h} - h_0, \hat{h} - h_0]| \le C\|\hat{h} - h_0\|_{L^2(\cM_0)} \left(\|\nabla_x(\hat{h} - h_0)\|_{L^2(\cM_0)} + \|\hat{h} - h_0\|_{L^2(\cM_0)}\right).
\]
By Lemma \ref{lem:MSE_on_M}(1) we have $\|\hat{h} - h_0\|_{L^2(\cM_0)}^2 = O_p(K_n^{-2s/d} + K_n/n)$. By Lemma \ref{lem:MSE_on_M}(2), we also have:
\[
\|\nabla_x(\hat{h} - h_0)\|_{L^2(\cM_0)}^2 = O_p\left(K_n^{2/d}\times \left[K_n^{-2s/d} + K_n/n\right]\right).
\]
Hence
\begin{equation}\label{D2V}
|D^2V(\bar{h}_0)[\hat{h} - h_0, \hat{h} - h_0]| = O_p\left(K_n^{1/d}\times \left[K_n^{-2s/d} + K_n/n\right]\right).
\end{equation}
We need $D^2V = O_p(r_n^*)$ at $K_n=K^*_n \asymp n^{d/(2s+1)}$, where the dominant contribution is again from the variance part. The condition we require is then
\[
K_n^{1/d} \times K_n/n  = O(r_n^*) = O(n^{-s/(2s+1)})~~~\text{at}~~K_n=K^*_n \asymp n^{d/(2s+1)},
\]
which is equivalent to
\[
n^{(d+1)/(2s+1) - 1} = O(n^{-s/(2s+1)})  \iff d \leq s.
\]
Hence, under $s \ge d = m + 1$, the plug-in sieve estimator $\hat{\t}=V(\hat{h})$ for $\t_0=V(h_0)$ achieves the minimax optimal rate of $\hat{\theta} - \theta_0 = O_p(n^{-s/(2s+1)})$, which is the same optimal rate of $r^*_n$ for the linear $m=(d-1)$-submanifold integrals.
\end{proof}

\section{\label{sec:MainProofs_NL}Proofs of Theoretical Results in Section \ref{sec:Nonlinear}}

We first recall the following two high level conditions from \citet*{chen2014sieve} that are applicable to any linear and nonlinear irregular functionals for sieve M estimation problems.

\begin{assumption}
\label{assu:LinRem_Small} (i) $\hat{h},~P_{K,n}h_0 \in \H_{K_{n}}$ wpa1, and
\begin{equation}
\sup_{h\in\H_{K_{n}}}\frac{\sqrt{n}\left|\Phi\left(h\right)-\Phi\left(h_{0}\right)-D\Phi\left(h_{0}\right)\left[h-h_{0}\right]\right|}{\norm{v^*_{K_n}}_2}=o_{p}\left(1\right).\label{eq:LinRemSmall}
\end{equation}
(ii) $\frac{\sqrt{n}\left|D\Phi( h_{0})\left[P_{K,n}h_0 -h_{0}\right]\right|}{\norm{v^*_{K_n}}_2}=o_{p}\left(1\right)$.
\end{assumption}
\begin{assumption}[CLT of Sieve Influence Function]
\label{assu:sieveIF}$\frac{1}{\sqrt{n}}\sum_{i=1}^n \frac{v_{K_n}^*(X_i)}{\norm{v_{K_{n}}^{*}}_{sd}}\e_i \dto\cN\left(0,1\right)$.
\end{assumption}

The following Remark specializes Theorem 3.1 of \citet*{chen2014sieve} (or Theorem 3.1 of \citet*{chen2015optimal})
to our irregular functionals of a nonparametric regression model with i.i.d. data.

\begin{rem} Let $\hat{\t}=\Phi(\hat{h})$ be the plug-in sieve estimator of $\t_0=\Phi(h_0)$. Let Assumptions \ref{assu:e2_below}, \ref{assu:e2_moment}, \ref{assu:LinRem_Small} and \ref{assu:sieveIF}
hold. Then:
\[
\frac{\sqrt{n}\left(\hat{\t}-\t_{0}\right)}{\norm{v_{K_{n}}^{*}}_{sd}} = \frac{1}{\sqrt{n}}\sum_{i=1}^n \frac{v_{K_n}^*(X_i)}{\norm{v_{K_{n}}^{*}}_{sd}}\e_i +o_p(1)\dto\cN\left(0,1\right).
\]
\end{rem}

When $\Phi(h)=L(h)$ is linear, Assumption \ref{assu:LinRem_Small}(i) is trivially satisfied and its sieve variance grows at the order $\norm{v_{K_{n}}^{*}}_{sd}^2\asymp {K_{n}^{\frac{d-m}{d}}}$ by Lemma \ref{lem:NormRate_d-m}. For a general nonlinear irregular functional such as $\Phi=\G,~V$, we have the following result controlling the growth rate of the sieve Riesz representer and the sieve variance as well

\begin{proof}[\textbf{Proof of Proposition \ref{prop:Plug-in_NL}}]
The asymptotic normality result follows from Theorem 3.1 of \cite{chen2015optimal}[CC15 for short] with its stated assumptions and additional conditions verified (for i.i.d. setting). Specifically, Assumptions 1(i) and 2(i) in CC15 are satisfied under the i.i.d. setting with ${\cal X} = [0,1]^d$, Assumption 2(ii)(iv)(v) in CC15 is satisfied by our Assumptions \ref{assu:e2_moment}, \ref{assu:e2_below}, and \ref{assu:Lindeberg}. Their Assumption 4(iii) is implied by our Assumption \ref{assu:Cond_Sieve}(i). Assumption 5 in CC15 is satisfied by our Assumption \ref{assu:Cond_Sieve}(i)(ii)(iii). Assumptions 9(i)(ii) in CC15 are trivially satisfied when $\Phi=L$, and are satisfied by our Assumptions \ref{assu:DGamma_linear} and \ref{assu:LinRem_Small}(i) when $\Phi=\G,~V$. Their Assumption 9(iii) is our Assumption \ref{assu:LinRem_Small}(ii), which is in turn satisfied by our Assumption \ref{assu:Cond_Sieve}(iv)(v) with the ``undersmoothing'' choice of $K_n$ s.t. $$\abs{D\G(h_0)[\tilde{h}-h_0]}\leq \norm{\tilde{h}-h_0}_\infty = O_p(K_n^{-s/d}) = o_p\left(\sqrt{\frac{K_n^\frac{d-m}{d}}{n}}\right),$$
where the first inequality follows from Assumption \ref{assu:DGamma_linear} and the uniform boundedness of $\ol{w}$. Finally, Lemma \ref{lem:NormRate_d-m} and Remark \ref{rem:NormRate_nonlinear} imply that growth rate of $\norm{v^*_{K_n}}_{sd} \asymp K_n^\frac{d-m}{d}$.
It remains to check that our Assumption \ref{assu:LinRem_Small}(i) is satisfied under Case 2 for $\Phi=\G$ and under Case 3 for $\Phi=V$ below.

\medskip

\noindent\textbf{(1) $\hat{\t}_{SS}$ in Case 2 and Case 3:}

Under \textbf{Case 2} for $\Phi=\G$, from the proof of Theorem~\ref{thm:Int_NLh_rate}(1) and the bound on the second order remainder term and Lemma \ref{lem:Rate_Quad}(b), with $s>m/2$, $K_{n}\asymp K^*_n [\log(n)]^{\kappa}$ for some $\kappa \in (0,1]$, $K^*_n=n^{d/(2s+d-m)}$ and $r^*_n=n^{-s/(2s+d-m)}$, we have:
\begin{align*}
\hat{\t}_{SS}-\G(h_0)=\ & \frac{1}{2}\int_\cM \phi_1(h_0,x)(\hat{h}_1 - h_0)\, w\, d\cH^m \\ &+\frac{1}{2}\int_\cM \phi_1(h_0,x)(\hat{h}_2 - h_0)\, w\, d\cH^m  +o_p\left( \sqrt{\frac{K_n^{(d-m)/d}}{n}}\right),    
\end{align*}
hence Assumption \ref{assu:LinRem_Small}(i) is satisfied.

Under \textbf{Case 3} for $\Phi=V$ with $m=d-1$, from the proof of Theorem~\ref{thm:Int_UCh_rate}(1) and the bound on the second order remainder term \eqref{D2V} and Lemma \ref{lem:Uch_SS-cross}(2), with $s>(m/2)+1=(d+1)/2$,  $K_{n}\asymp K^*_n [\log(n)]^{\kappa}$ for some $\kappa \in (0,1]$, $K^*_n=n^{d/(2s+1)}$ and $r^*_n=n^{-s/(2s+1)}$, we have:
\[
\hat{\t}_{SS}-V(h_0)=\frac{1}{2}DV(h_0)[\hat{h}_1 - h_0]+\frac{1}{2}DV(h_0)[\hat{h}_2 - h_0]+o_p\left( \sqrt{\frac{K_n^{1/d}}{n}}\right),
\]
hence Assumption \ref{assu:LinRem_Small}(i) is satisfied.

~

\noindent\textbf{(2) $\hat{\t}_{LOO}$ in Case 2 and Case 3:}

Under \textbf{Case 2} for $\Phi=\G$, from the proof of Theorem~\ref{thm:Int_NLh_rate}(2)
we have:
\[
\hat{\t}_{LOO}-\G(h_0)=\int_\cM \phi_1(h_0,x)[\hat{h} - h_0]\, w\, d\cH^m +o_p\left( \sqrt{\frac{K_n^{(d-m)/d}}{n}}\right),
\]
hence Assumption \ref{assu:LinRem_Small}(i) is satisfied.

Under \textbf{Case 3} for $\Phi=V$ with $m=d-1$, from the proof of Theorem~\ref{thm:Int_UCh_rate}(2), with $s>(m/2)+1=(d+1)/2$, $K_{n}\asymp K^*_n [\log(n)]^{\kappa}$ for some $\kappa \in (0,1]$, $K^*_n=n^{d/(2s+1)}$ and $r^*_n=n^{-s/(2s+1)}$, we have:
\[
\hat{\t}_{LOO}-V(h_0)=DV(h_0)[\hat{h} - h_0]+o_p\left( \sqrt{\frac{K_n^{1/d}}{n}}\right),
\]
hence Assumption \ref{assu:LinRem_Small}(i) is satisfied.

~

\noindent\textbf{(3) Plug-in $\hat{\t}=\Phi(\hat{h})$ in Case 2 and Case 3:}

Under \textbf{Case 2} for $\Phi=\G$, from the proof of Theorem~\ref{thm:Int_NLh_rate}(3) and the bound on the second order remainder term, we see that Assumption \ref{assu:LinRem_Small}(i) is satisfied provided that
\[
|D^2\G(\bar{h}_0)[\hat{h} - h_0, \hat{h} - h_0]| = O_p\left(K_n^{-2s/d} + K_n/n\right)
=o_p\left( \sqrt{\frac{K_n^{(d-m)/d}}{n}}\right),
\]
which is satisfied provided that $K_n^{(d+m)/d}=o(n)$ and $K^*_n=n^{d/(2s+d-m)}=o(K_n)$, which is in turn satisfied with $s>m$.

Under \textbf{Case 3} for $\Phi=V$ with $m=d-1$, from the proof of Theorem~\ref{thm:Int_UCh_rate}(3) and the bound on the second order remainder term \eqref{D2V}, we see that Assumption \ref{assu:LinRem_Small}(i) is satisfied provided that
\[
|D^2V(\bar{h}_0)[\hat{h} - h_0, \hat{h} - h_0]| = O_p\left(K_n^{1/d}\times \left[K_n^{-2s/d} + K_n/n\right]\right)
=o_p\left( \sqrt{\frac{K_n^{1/d}}{n}}\right),
\]
which is satisfied provided that $K_n^{(2d+1)/d}=o(n)$ and $K^*_n=n^{d/(2s+1)}=o(K_n)$, which is in turn satisfied with $s>m+1=d$.

\end{proof}

The following Assumption \ref{assu:VarConsis} corresponds to Assumption 3.1(iii) in \cite*{chen2014sieve} or Assumption 10 in \cite{chen2015optimal}.

\begin{assumption}\label{assu:VarConsis}
    $\norm{v_{K_{n}}^{*}}_2^{-1}\norm{D\Phi(h)[\ol {\psi}^{K_n}] -  D\Phi(h_0)[\ol {\psi}^{K_n}]} = o(1)$ uniformly in $h\in \H_{K_{n}}$.
\end{assumption}

\begin{prop}[Consistent Variance]\label{prop:VarConsis-CC}
Let Assumptions \ref{assu:density_x}, \ref{assu:e2_below}, \ref{assu:e2_moment}, \ref{assu:Cond_Sieve}(i)(ii), \ref{assu:VarConsis} and \ref{assu:e2d_moment} hold with $K_{n}^{(2+\d)/\d}\log K_{n}/n=o\left(1\right)$. Then:
\[
(1)~~~~\abs{\frac{\widehat{\sigma}_{*,K_{n}}}{\norm{v_{K_{n}}^{*}}_{sd} } - 1} =o_p(1);
\]
(2) Further, let Assumptions \ref{assu:Cond_Sieve}(iv)(v) and \ref{assu:Lindeberg} hold with  $K_{n}\asymp K^*_n [\log(n)]^{\kappa}$ for some $\kappa \in (0,1)$ where $K_{n}^{*}\asymp n^{\frac{d}{2s+d-m}}$. Then:
\[
\frac{\sqrt{n}\left(\Phi(\hat{h})-\Phi(h_0)\right)}{\widehat{\sigma}_{*,K_{n}}} = \frac{1}{\sqrt{n}}\sum_{i=1}^n \frac{v_{K_n}^*(X_i)}{\norm{v_{K_{n}}^{*}}_{sd}}\e_i +o_p(1)\dto\cN\left(0,1\right),
\]
under the additional mild conditions stated for Cases 1, 2 and 3 of Proposition \ref{prop:Plug-in_NL}
\end{prop}

\begin{proof}[\textbf{Proof of Proposition \ref{prop:VarConsis-CC}}] Result (1) follows from Lemmas 3.1 and 3.2 of \cite{chen2015optimal} for i.i.d. setting. Result (2) follows from Proposition 1 and Result (1).
%
\end{proof}

\begin{proof}[\textbf{Proof of Proposition \ref{prop:VarConsis}}]
Proposition \ref{prop:VarConsis} follows from Proposition \ref{prop:VarConsis-CC} by noting that Assumption \ref{assu:VarConsis} is automatically satisfied when $\Phi=L$ (linear). It is also satisfied by our conditions imposed to control for second order remainder terms in the convergence rate Theorems \ref{thm:Int_NLh_rate}(b) and \ref{thm:Int_UCh_rate}(b) for nonlinear functionals $\Phi=\G,~V$ with $K_{n}\asymp K^*_n [\log(n)]^{\kappa}$ for some $\kappa \in (0,1]$ where $K_{n}^{*}\asymp n^{\frac{d}{2s+d-m}}$. In particular, under Assumption \ref{assu:e2d_moment} (with $\d>2d/(2s-m)$, the choice of $K_{n}\asymp K^*_n [\log(n)]^{\kappa}$ satisfies $K_{n}^{(2+\d)/\d}\log K_{n}/n=o\left(1\right)$.
\end{proof}

\section{Multiplier Bootstrap Choice of Sieve Dimension}
\label{subsec:MB_K_LOO}

This section proposes a \emph{multiplier bootstrap}  and a \emph{data-driven} choice of sieve
dimension $K$ for the various sieve estimators $\tilde{\Phi}(K)$ for $\t_0^{\Phi}=\Phi(h_0)$, where $\Phi \in\{L,\G,V\}$.
The selection rule is adapted from the bootstrap-Lepski procedure in
\citet*[Procedure~1]{chen2025adaptive}, but specialized to our \emph{scalar}
functionals.
The purpose of the procedure is to: (i) select a \emph{rate-adaptive} ``estimation'' dimension $\hat K_{\Phi}$,
and (ii) convert it into an \emph{undersmoothed} ``inference'' dimension $\tilde K_{\Phi}$, so that the
sieve bias is negligible relative to the standard deviation and the sieve $t$ statistic is asymptotically
centered.

\subsubsection*{Estimated influence function and sieve $t$ statistic}

Fix $\Phi \in\{L,\G,V\}$ and a sieve dimension $K$.
Let $\hat v_{K,\Phi}^*(x)$ denote the \emph{sieve Riesz representer estimator} of
$v_{K,\Phi}^*(x)$ (cf.\ Proposition \ref{prop:VarConsis}), i.e.
\begin{equation}
  \hat v_{K,\Phi}^*(x)
  := \ol b^{K}(x)^\prime\Big(\frac1n\sum_{i=1}^n \ol b^{K}(X_i)\ol b^{K}(X_i)^\prime\Big)^{-1}
  D\Phi (\hat{h}_K)\!\left[\ol b^{K}\right]=b^{K}(x)^\prime \hat G^{-1}\,\widehat{D\Phi}\left[ b^{K}\right],
  \label{eq:MB_vhat}
\end{equation}
where $\ol b^{K}:=G^{-1/2}b^{K}$, $\hat G := \frac{1}{n}\sum_{i=1}^n b^{K}(X_i)\,b^{K}(X_i)^\prime$ and $\widehat{D\Phi}\left[ b^{K}\right]:= D\Phi (\hat{h}_K)\!\left[ b^{K}\right]$.
Let $\hat\e_i:=Y_i-\hat h_K (X_i)$ and define the estimated sieve influence function as
\begin{equation}
  \hat\psi_{i,\Phi}(K):=\hat v_{K,\Phi}^*(X_i)\,\hat\e_i,
  \qquad
  \hat\sigma_{\Phi}^2(K):=\frac1n\sum_{i=1}^n \left(\hat\psi_{i,\Phi}(K)\right)^2.
  \label{eq:MB_psihat}
\end{equation}
The (studentized) sieve $t$ statistic based on the sieve estimator $\tilde{\Phi}(K)$ for $\t_0^{\Phi}=\Phi(h_0)$ is
\begin{equation}
  T_{n,\Phi}(K)
  :=\frac{\sqrt n\big(\tilde{\Phi}(K)-\t_0^{\Phi}\big)}{\hat\sigma_{\Phi}(K)}.
  \label{eq:MB_tstat}
\end{equation}

\subsubsection*{Multiplier bootstrap for a fixed sieve dimension}

Let $(\xi_i)_{i=1}^n$ be i.i.d.\ multipliers, independent of the data, with
$\E[\xi_i]=0$, $\E[\xi_i^2]=1$, and $\E[\abs{\xi_i}^{2+\d_\xi}]<\infty$ for some $\d_\xi>0$
(e.g.\ $\xi_i\sim N(0,1)$ or Rademacher).
Define the \emph{multiplier bootstrap} analogue of \eqref{eq:MB_tstat} by
\begin{equation}
  T_{n,\Phi}^*(K)
  :=\frac{1}{\sqrt n}\sum_{i=1}^n \xi_i\,\frac{\hat\psi_{i,\Phi}(K)}{\hat\sigma_{\Phi}(K)}.
  \label{eq:MB_tstat_star}
\end{equation}
Let $c_{1-\a}^*(K)$ denote the conditional $(1-\a)$ quantile of $\abs{T_{n,\Phi}^*(K)}$ given the sample.
Then a two-sided $(1-\a)$ MB confidence interval for $\t_0^{\Phi}$ at a \emph{fixed} $K$ is
\begin{equation}
  \mathrm{CI}_{n,\Phi}(K)
  :=\Big[\ \tilde{\Phi}(K)\ \pm\ c_{1-\a}^*(K)\,\frac{\hat\sigma_{\Phi}(K)}{\sqrt n}\ \Big].
  \label{eq:MB_CI_fixedK}
\end{equation}

\subsubsection*{Bootstrap-Lepski choice of sieve dimension $K$}

Let $\mathcal K_n=\{K_1<\cdots<K_{M_n}\}$ be a dyadic (or geometric) grid of sieve dimensions,
with $M_n\lesssim\log n$ as in \citet*{chen2025adaptive}.
Let $K_{\max}:=K_{M_n}$ satisfy the usual series regularity
\begin{equation}
  K_{\max}\log K_{\max}/n=o(1),
  \label{eq:MB_Kmax_growth}
\end{equation}
so that all sieve estimators and the matrix inverses involved are well-defined wpa1.

For $K,K'\in\mathcal K_n$, define the empirical s.d. of the \emph{contrast}
$\tilde{\Phi}(K)-\tilde{\Phi}(K')$ by the influence-score analogue of
\citet*[][eq.\ (9)]{chen2025adaptive}:
\begin{equation}
  \hat\sigma_{\Phi}^2(K,K')
  :=\frac1n\sum_{i=1}^n \big(\hat\psi_{i,\Phi}(K)-\hat\psi_{i,\Phi}(K')\big)^2.
  \label{eq:MB_sigma_contrast}
\end{equation}
The multiplier bootstrap analogue of the contrast is
\begin{equation}
  T_{n,\Phi}^*(K,K')
  :=\frac{1}{\sqrt n}\sum_{i=1}^n \xi_i\,\frac{\hat\psi_{i,\Phi}(K)-\hat\psi_{i,\Phi}(K')}{\hat\sigma_{\Phi}(K,K')}.
  \label{eq:MB_contrast_star}
\end{equation}
Let $ \hat\vartheta_{1-\hat\a_n}^*$ be the conditional $(1-\hat\a_n)$ quantile of
  $\max_{K<K'\in\mathcal K_n}\abs{T_{n,\Phi}^*(K,K')}$ and
\begin{equation}\label{eq:MB_theta_quantile}
  \hat\a_n:=\min\left\{0.5,\ \sqrt{\frac{\log K_{\max}}{K_{\max}}}\right\},
\end{equation}
The choice of $\hat\a_n$ is the scalar analogue of
\citet*[][Step~2 of Procedure~1]{chen2025adaptive} and ensures that
$\hat\vartheta_{1-\hat\a_n}^*$ diverges slowly.

Then, define the data-driven sieve dimension for estimation
\begin{equation}
  \hat K_{\Phi}
  :=\min\Big\{K\in\mathcal K_n:\ \max_{K'>K,\ K'\in\mathcal K_n}
    \frac{\abs{\tilde{\Phi}(K)-\tilde{\Phi}(K')}}{\hat\sigma_{\Phi}(K,K')}
  \le 1.1\,\hat\vartheta_{1-\hat\a_n}^*\Big\}.
  \label{eq:MB_Khat}
\end{equation}
This is the scalar version of \citet*[][Step~3 of Procedure~1]{chen2025adaptive}.
Intuitively, $\hat K_{\Phi}$ is the smallest sieve dimension such that increasing the dimension further does not change the functional estimator beyond its estimated sampling noise.

For inference, to ensure the sieve bias is negligible,
we set an ``inference'' dimension $\tilde K_{\Phi}$ as a mild enlargement of $\hat K_{\Phi}$:
\begin{equation}
  \tilde K_{\Phi}
  :=\min\Big\{K\in\mathcal K_n:\ K\ge \hat K_{\Phi}(\log n)^{\kappa}\Big\}\wedge K_{\max},\
  \text{ with } \kappa\in (0,1) \text{ fixed.}
  \label{eq:MB_Ktilde}
\end{equation}
The logarithmic inflation is a standard ``undersmoothing'' device in series inference:
if $\hat K_{\Phi}$ is of the same order as the MSE-optimal dimension, then $\tilde K_{\Phi}$ makes
the sieve approximation bias $K^{-s/d}$ smaller than the standard deviation by a factor that vanishes polynomially in $\log n$. Given $\tilde K_{\Phi}$, the final CI is $\mathrm{CI}_{n,\Phi}(\tilde K_{\Phi})$ as in \eqref{eq:MB_CI_fixedK}.

The theoretical justification of the inference procedure is similar to CCK and hence is omitted.
    
\begin{rem}[Practical implementation]\label{rem:MB_practice}
  In practice, the suprema/maxima in \eqref{eq:MB_theta_quantile} and \eqref{eq:MB_Khat} are taken over the finite
  grid $\mathcal K_n$, and the quantile $\hat\vartheta_{1-\hat\a_n}^*$ is approximated using a large number $B$ of
  independent MB draws of $(\xi_i)_{i=1}^n$.
  Following \citet{chen2025adaptive}, a dyadic grid (so $M_n\lesssim \log n$) is recommended for stability. We take $\kappa =1/2$ in the undersmoothing CI construction in the Monte Carlo studies.
\end{rem}

\section{Analysis of Quadratic Submanifold Integrals}

In this section, we provided some detailed analysis of the minimax convergence rate for the estimation of quadratic submanifold integrals, which features a fascinating ``elbow phenomenon''.

\begin{lem}[Characterization of Minimax Rates for Quadratic Submanifold Integral Functionals]\label{lem:Rate_Quad} Let the linear functional $DQ(h_0)[v] = 2\int_\cM h_0(x)v(x)w(x)d\cH^m(x)$ satisfies Assumption \ref{assu:w} with weight $2h_0(x)w(x)$ and $\int_\cM h_0(x)w(x)d\cH^m(x) \neq 0$. Then:
\begin{itemize}
    \item[(a)] Under Assumptions \ref{assu:RegLevelSet}, \ref{assu:Jacob}, \ref{assu:density_x}, \ref{assu:e2_below} and $h_0\in\Lambda^s_c (\mathcal{X})$, a lower bound for the minimax rate is given by
    \begin{equation}\label{eq:quad_lb}
    \ul{r}_{Quad,n} := \max\left\{n^{-\frac{s}{2s + d - m}}, n^{-\frac{4s}{4s + 2d - m}}\right\}
    \end{equation}
    \item[(b)]  Under Assumptions \ref{assu:RegLevelSet}, \ref{assu:density_x}, \ref{assu:e2_below}, \ref{assu:e2_moment} and \ref{assu:Cond_Sieve}, an upper bound for the minimax rate is given by
    \begin{equation}\label{eq:quad_ub}
    \ol{r}_{Quad,n} := \max\left\{n^{-\frac{s}{2s + d - m}}, n^{-\frac{2s}{2s + 2d - m}}\right\}
    \end{equation}
    \item[(c)] Under Assumptions for part(b) and $s> m/2$, the minimax optimal rate for the estimation of $Q(h_0)$ is given by $$r_{Quad,n} := n^{-\frac{s}{2s + d - m}} =\ul{r}_{Quad,n} = \ol{r}_{Quad,n}.$$ In addition, this rate can be attained by the split-sample sieve estimator
\[
\hat{\t}_{SS}:=\int_{{\cal M}}\hat{h}_{1}\left(x\right)\hat{h}_{2}\left(x\right)w\left(x\right)d{\cal H}^{m}\left(x\right),
\]
where $\hat{h}_{1}$ and $\hat{h}_{2}$ are two first-stage linear
sieve estimators of $h_{0}$ estimated from two split subsamples, each with sample size $n/2$ and sieve dimension $K^*_{n}\asymp n^{d/(2s+d-m)}$.
\end{itemize}
\end{lem}

\begin{rem}\label{rem:Quad_tworegions}
The lower bound in part~(a) is a \emph{global} minimax lower bound over the full H\"older ball $\Lambda_c^s(\cX)$:
the first term $n^{-s/(2s+d-m)}$ is derived via the linearization at a fixed $h_0$ with $\int_\cM h_0 w\,d\cH^m\neq 0$
(so that $DQ(h_0)\neq 0$), while the second term $n^{-4s/(4s+2d-m)}$ is obtained from a multi-hypothesis
construction centered at $h\equiv 0$ (where $DQ(0)=0$ and the quadratic structure drives the rate).
The overall lower bound is the maximum of these two contributions, taken over different regions of the parameter space.
Note that for $s\le m/2$, the nonlinear exponents in the lower bound ($4s/(4s+2d-m)$) and the upper bound ($2s/(2s+2d-m)$) do not coincide, so in that regime the precise minimax rate for $Q(h_0)$ is not fully pinned down by parts~(a)--(b); part~(c) resolves this only under $s>m/2$.
\end{rem}

\begin{proof}[\textbf{Proof of Lemma \ref{lem:Rate_Quad}(a)}]
By Corollary \ref{cor:NL_rate_LB}, the linear rate $n^{-\frac{s}{2s+d-m}}$ is a valid
lower bound with the linearization remainder $R_{n}=O\left(b_{n}^{2s}\right)$, which can be shown explicitly using the two-point construction
$h_{0}\equiv1$ and $h_{1}(x)=h_{0}(x)+b_{n}^{s}K_{d-m}\left(\frac{g\left(x\right)}{b_{n}}\right)$, which implies:
\begin{align*}
\G\left(h_{1}\right)-\G\left(h_{0}\right) & =\int_{{\cal M}}\left(\left(1+b_{n}^{s}K_{d-m}\left(0\right)\right)^{2}-1^{2}\right)w\left(x\right)d{\cal H}^{m}\left(x\right)\\
 & =\int_{{\cal M}}\left(2b_{n}^{s}K_{d-m}\left(0\right)+b_{n}^{2s}K_{d-m}^{2}\left(0\right)\right)w\left(x\right)d{\cal H}^{m}\left(x\right)\\
 & =\underset{D_{h}\left(h_{0}\right)\left[h_{1}-h_{0}\right]}{\underbrace{2b_{n}^{s}\int_{{\cal M}}K_{d-m}\left(0\right)w\left(x\right)d{\cal H}^{m}\left(x\right)}}+\underset{R_{n}}{\underbrace{b_{n}^{2s}\int_{{\cal M}}K_{d-m}^{2}\left(0\right)w\left(x\right)d{\cal H}^{m}\left(x\right)}}\\
 & =2b_{n}^{s}\int_{{\cal M}}K_{d-m}\left(0\right)w\left(x\right)d{\cal H}^{m}\left(x\right)+O\left(b_{n}^{2s}\right)\\
 & \asymp b_{n}^{s}
\end{align*}

We now seek to establish the other lower bound $n^{-\frac{4s}{4s+2d-m}}$,
based on the multi-hypothesis construction in Chapter 2.7.5 of \cite{Tsybakov2009}. Specifically, let $\mu_{0}$ be the Dirac measure concentrated
on $h_{0}\left(x\right)\equiv0$ and let $\mu_{1}$ be a discrete
measure supported on the finite set of functions
\[
h_{\o}\left(x\right):=\sum_{k=1}^{M}\o_{k}\varphi_{k}\left(x\right),
\]
where:
\begin{itemize}
\item $M_{n}:=b_{n}^{-m}$
\item $\varphi_{k}$$\left(x\right):=b_{n}^{s}K\left(\frac{x-x_{k}}{b_{n}}\right)$
with and $\left(x_{k}\right)$ is a mesh grid on ${\cal M}=\left\{ x:g\left(x\right)=0\right\} $
\item $\o_{k}\sim_{i.i.d.}2Bernoulli\left(0.5\right)-1$, i.e., taking values
$\pm1$ with equal probabilities 0.5.
\end{itemize}
Under the above construction, we have
\begin{align}
\G\left(h_{\o}\right) & =\int_{\left\{ g\left(x\right)=0\right\} }\left(\sum_{k=1}^{M_{n}}\o_{k}\varphi_{k}\left(x\right)\right)^{2}d{\cal H}^{m}\left(x\right)\nonumber \\
 & =\sum_{k=1}^{M_{n}}\int_{\left\{ g\left(x\right)=0\right\} }\varphi_{k}^{2}\left(x\right)d{\cal H}^{m}\left(x\right)\nonumber \\
 & \asymp M_{n}b_{n}^{2s+m}\nonumber \\
 & \asymp b_{n}^{-m}b_{n}^{2s+m}\nonumber \\
 & \asymp b_{n}^{2s}\label{eq:quad_multi_b2s}
\end{align}
For $k=0,1$, let $P_{k}^{n}$ denote the joint probability distribution of the random sample $\{(X_{i},Y_{i})\}_{i=1}^{n}$ from the nonparametric
regression model with $h_{k}$ specified above. Then the $\chi^2$-divergence
can be derived similar to those in Chapter 2.7.5 of \cite{Tsybakov2009}, with
\begin{align*}
\chi^{2}\left(P_{1}^{n},P_{0}^{n}\right) & \leq\exp\left(n^{2}b_{n}^{2\left(2s+d\right)}M_{n}\right)\\
 & =\exp\left(n^{2}b_{n}^{2\left(2s+d\right)}b_{n}^{-m}\right)\\
 & =\exp\left(n^{2}b_{n}^{4s+2d-m}\right)
\end{align*}
To ensure $\chi^{2}\left(P_{1}^{n},P_{0}^{n}\right)\leq c$, we set
\begin{equation}
   b_{n}\asymp n^{-\frac{2}{4s+2d-m}}. \label{eq:quad_multi_chi2}
\end{equation}
Hence, by Theorem 2.15 of \cite{Tsybakov2009}, we obtain the following lower bound based on \eqref{eq:quad_multi_b2s} and \eqref{eq:quad_multi_chi2}:
\[
b_{n}^{2s}\asymp n^{-\frac{4s}{4s+2d-m}}
\]

Combining the above rate with the previous $n^{-\frac{s}{2s+d-m}}$ above, we obtain a valid lower bound
\[
\ul r_{Quad,n}:=\max\left\{n^{-\frac{s}{2s+d-m}},n^{-\frac{4s}{4s+2d-m}}\right\}.
\]
\end{proof}

\begin{proof}[\textbf{Proof of Lemma \ref{lem:Rate_Quad}(b)}]
Recall the split-sample sieve estimator
\[
\hat{\t}:=\int_{{\cal M}}\hat{h}_{1}(x)\,\hat{h}_{2}(x)\,w(x)\,d{\cal H}^{m}(x),
\]
where $\hat h_{r}$ is the sieve LS estimator computed on subsample $I_r$ of size $n_r:=|I_r|$ ($r=1,2$), with $n_1\asymp n_2\asymp n$ and sieve dimension $K_n$ (write $K:=K_n$ for brevity).
Let $\td h_r:=P_{K,r}h_0$ denote the (subsample-$r$) empirical LS projection of $h_0$ onto ${\H}_{K}$, i.e.,
\[
\td h_r(x)=\ol b^{K}(x)'\,\ol G_r^{-1}\,\frac1{n_r}\sum_{i\in I_r}\ol b^{K}(X_i)\,h_0(X_i),
\qquad
\ol G_r:=\frac1{n_r}\sum_{i\in I_r}\ol b^{K}(X_i)\ol b^{K}(X_i)'.
\]
Then the stochastic part admits the orthonormalized linear representation
\begin{equation}
\label{eq:ss_stoch_rep}
\hat h_r(x)-\td h_r(x)=\ol b^{K}(x)'\,\ol G_r^{-1}\,\frac1{n_r}\sum_{i\in I_r}\ol b^{K}(X_i)\e_i,
\qquad r\in\{1,2\}.
\end{equation}

Write $\D_r:=\hat h_r-h_0$. Then
\begin{align}
\hat\t-\t_0
&=\int_{\cM}(\hat h_1\hat h_2-h_0^2)w\,d\cH^m \notag\\
&=\underbrace{\int_{\cM}(\hat h_1-h_0)h_0w\,d\cH^m}_{=:T_{1}}
+\underbrace{\int_{\cM}(\hat h_2-h_0)h_0w\,d\cH^m}_{=:T_{2}}
+\underbrace{\int_{\cM}(\hat h_1-h_0)(\hat h_2-h_0)w\,d\cH^m}_{=:T_{3}}.
\label{eq:ss_quad_decomp}
\end{align}

\noindent\textbf{Linear terms $T_1,T_2$.}
Each $T_r$ is a linear submanifold functional of $\hat h_r-h_0$ with bounded weight $h_0w$.
Applying Theorem~\ref{thm:Rate_Sieve} on each subsample (size $n_r\asymp n$) yields
\begin{equation}
\label{eq:ss_lin_terms}
T_{1}=O_p\!\left(K^{-\frac{s}{d}}+\sqrt{\frac{K^{\frac{d-m}{d}}}{n}}\right),
\qquad
T_{2}=O_p\!\left(K^{-\frac{s}{d}}+\sqrt{\frac{K^{\frac{d-m}{d}}}{n}}\right).
\end{equation}

\noindent\textbf{Quadratic term $T_3$: bias/cross/stochastic pieces.}
Decompose $\D_r=(\hat h_r-\td h_r)+(\td h_r-h_0)$ and expand
\begin{align}
T_3
&=\int_{\cM}(\hat h_1-\td h_1)(\hat h_2-\td h_2)w\,d\cH^m
+\int_{\cM}(\td h_1-h_0)(\hat h_2-\td h_2)w\,d\cH^m \notag\\
&\quad+\int_{\cM}(\hat h_1-\td h_1)(\td h_2-h_0)w\,d\cH^m
+\int_{\cM}(\td h_1-h_0)(\td h_2-h_0)w\,d\cH^m \notag\\
&=:T_{3,\mathrm{stoch}}+T_{3,1}+T_{3,2}+T_{3,\mathrm{bias}}.
\label{eq:ss_T3_split}
\end{align}

\emph{Bias part.} By Assumption~\ref{assu:Cond_Sieve}(iv)(v),
$\|\td h_r-h_0\|_{\infty}=O_p(K^{-s/d})$.
Hence
\[
|T_{3,\mathrm{bias}}|
\le \|w\|_{\infty}\,\cH^{m}(\cM)\,\|\td h_1-h_0\|_{\infty}\,\|\td h_2-h_0\|_{\infty}
=O_p\!\left(K^{-\frac{2s}{d}}\right).
\]

\emph{Cross parts.}
Conditional on the subsample designs, $\hat h_r-\td h_r$ is centered and has no approximation bias.
Moreover, the weights $(\td h_{\bar r}-h_0)w$ are bounded by $O_p(K^{-s/d})$.
Applying Theorem~\ref{thm:Rate_Sieve} to the centered linear functional on each subsample gives
\[
T_{3,1}=O_p\!\left(K^{-\frac{s}{d}}\sqrt{\frac{K^{\frac{d-m}{d}}}{n}}\right),
\qquad
T_{3,2}=O_p\!\left(K^{-\frac{s}{d}}\sqrt{\frac{K^{\frac{d-m}{d}}}{n}}\right).
\]
These cross terms are of smaller order than $\max\{K^{-s/d},\sqrt{K^{(d-m)/d}/n}\}$ and can be absorbed.

\smallskip
\noindent\emph{Cross terms $T_{3,1}$ and $T_{3,2}$.}
Recall
\[
T_{3,1}:=\int_{\cM}(\tilde h_1-h_0)(\hat h_2-\tilde h_2)\,w\,d\cH^m,
\qquad
T_{3,2}:=\int_{\cM}(\hat h_1-\tilde h_1)(\tilde h_2-h_0)\,w\,d\cH^m .
\]

We treat $T_{3,1}$; the bound for $T_{3,2}$ is identical by symmetry (swap the roles of the two splits).
Let
\[
A_{1,n}:=\|\tilde h_1-h_0\|_\infty .
\]
On the event $\{A_{1,n}>0\}$ define the \emph{random but bounded} weight
\[
w_{1,n}(x):=w(x)\frac{\tilde h_1(x)-h_0(x)}{A_{1,n}},
\qquad x\in\cM,
\]
and set $w_{1,n}\equiv 0$ on $\{A_{1,n}=0\}$.
Then $\|w_{1,n}\|_\infty\le \|w\|_\infty$ and, \emph{crucially}, $w_{1,n}$ is measurable w.r.t.\ the first split and hence independent of the second-split estimator $\hat h_2$.
With this normalization,
\begin{equation}\label{eq:T31_normalized}
T_{3,1}=A_{1,n}\int_{\cM}(\hat h_2-\tilde h_2)(x)\,w_{1,n}(x)\,d\cH^m(x)
\equiv A_{1,n}\,L_{w_{1,n}}(\hat h_2-\tilde h_2),
\end{equation}
where $L_{w_{1,n}}(f):=\int_{\cM} f(x) w_{1,n}(x)\,d\cH^m(x)$ is a linear submanifold functional.

\smallskip
\noindent\emph{Step 1: bound $A_{1,n}=\|\tilde h_1-h_0\|_\infty$.}
By Assumption~\ref{assu:Cond_Sieve}(v), there exists $h_{0,K}\in\H_K$ with $\|h_{0,K}-h_0\|_\infty\lesssim K^{-s/d}$.
Since $\tilde h_1=P_{K,1}h_0$ is the empirical LS projection onto $\H_K$ based on split~1,
\[
\|\tilde h_1-h_0\|_\infty
\le \|P_{K,1}(h_0-h_{0,K})\|_\infty+\|h_{0,K}-h_0\|_\infty
\le \|P_{K,1}\|_\infty\|h_0-h_{0,K}\|_\infty+\|h_{0,K}-h_0\|_\infty.
\]
Assumption~\ref{assu:Cond_Sieve}(iv) gives $\|P_{K,1}\|_\infty=O_p(1)$, so
\begin{equation}\label{eq:A1n_rate}
A_{1,n}=\|\tilde h_1-h_0\|_\infty=O_p(K^{-s/d}).
\end{equation}

\smallskip
\noindent\emph{Step 2: bound the centered linear functional $L_{w_{1,n}}(\hat h_2-\tilde h_2)$.}
Condition on the first split (so that $w_{1,n}$ is fixed) and apply Theorem~\ref{thm:Rate_Sieve}(1) to the second-split sieve LS estimator and the (now deterministic) weight $w_{1,n}$:
\[
\sqrt{n_2}\,L_{w_{1,n}}(\hat h_2-\tilde h_2)
=
\frac{1}{\sqrt{n_2}}\sum_{i\in I_2} v^*_{K,2}(w_{1,n})(X_i)\,\e_i + o_p\!\left(\|v^*_{K,2}(w_{1,n})\|_{sd}\right),
\]
and hence
\begin{equation}\label{eq:Lw1n_centered_rate}
L_{w_{1,n}}(\hat h_2-\tilde h_2)=O_p\!\left(\frac{\|v^*_{K,2}(w_{1,n})\|_{sd}}{\sqrt{n}}\right).
\end{equation}
By the growth-rate lemma for sieve Riesz representers (Lemma~\ref{lem:NormRate_d-m}), and using $\|w_{1,n}\|_\infty\le \|w\|_\infty$, we have the \emph{uniform} upper bound
\begin{equation}\label{eq:vk_uniform_bound}
\|v^*_{K,2}(w_{1,n})\|_{sd}\ \lesssim\ \|w_{1,n}\|_\infty\,K^{(d-m)/(2d)}
\ \le\ \|w\|_\infty\,K^{(d-m)/(2d)} .
\end{equation}
Combining \eqref{eq:Lw1n_centered_rate} and \eqref{eq:vk_uniform_bound} yields
\begin{equation}\label{eq:Lw1n_final}
L_{w_{1,n}}(\hat h_2-\tilde h_2)=O_p\!\left(\sqrt{\frac{K^{(d-m)/d}}{n}}\right).
\end{equation}

\smallskip
\noindent\emph{Step 3: conclude for $T_{3,1}$ (and $T_{3,2}$).}
Combining \eqref{eq:T31_normalized}, \eqref{eq:A1n_rate}, and \eqref{eq:Lw1n_final},
\[
T_{3,1}=O_p\!\left(K^{-s/d}\sqrt{\frac{K^{(d-m)/d}}{n}}\right).
\]
The same argument with indices swapped gives
\[
T_{3,2}=O_p\!\left(K^{-s/d}\sqrt{\frac{K^{(d-m)/d}}{n}}\right).
\]

\smallskip
\noindent Step 4 (main stochastic term $T_{3,\mathrm{stoch}}$).
Define the deterministic $K\times K$ matrix
\[
A_K:=\int_{\cM}\ol b^K(x)\ol b^K(x)'\,|w(x)|\,d\cH^m(x).
\]
Using \eqref{eq:ss_stoch_rep} and Cauchy--Schwarz, we may bound $|T_{3,\mathrm{stoch}}|$ by replacing $w$ with $|w|$.
Let
\[
\eta_r:=\frac1{n_r}\sum_{i\in I_r}\ol b^K(X_i)\e_i\in\R^{K}.
\]
Then
\begin{equation}
\label{eq:ss_T3stoch_matrix}
T_{3,\mathrm{stoch}}=\eta_1'\,\ol G_1^{-1}\,A_K\,\ol G_2^{-1}\,\eta_2\quad\text{(up to sign)}.
\end{equation}
Conditioning on $X_1^n$ and using the independence between the two subsamples (and $\E[\e_i\mid X_i]=0$), we have $\E[T_{3,\mathrm{stoch}}\mid X_1^n]=0$.
Moreover,
\[
\E[\eta_r\eta_r'\mid X_1^n]
=\frac1{n_r^2}\sum_{i\in I_r}\E[\e_i^2\mid X_i]\,\ol b^K(X_i)\ol b^K(X_i)'
\ \preceq\ \frac{\bar\s^2}{n_r}\,\ol G_r,
\]
where $\bar\s^2:=\sup_x\E[\e_i^2\mid X_i=x]<\infty$ by Assumption~\ref{assu:e2_moment}.
Therefore, by the tower property and cyclicity of trace,
\begin{align*}
\E[T_{3,\mathrm{stoch}}^2\mid X_1^n]
&\le \frac{\bar\s^4}{n_1n_2}\,\tr\!\big(\ol G_1^{-1}A_K\ol G_2^{-1}A_K\big)\\
&\le \frac{\bar\s^4}{n_1n_2}\,\|\ol G_1^{-1}\|_{op}\,\|\ol G_2^{-1}\|_{op}\,\tr(A_K^2).
\end{align*}
Under Assumption~\ref{assu:Cond_Sieve}(i)(iii), $\|\ol G_r^{-1}\|_{op}=O_p(1)$.
It remains to bound $\tr(A_K^2)=\|A_K\|_F^2$.

\smallskip
\noindent\emph{Claim:} $\tr(A_K^2)\lesssim K^{\frac{2d-m}{d}}$.

\smallskip
\noindent\emph{Justification of the claim.}
Apply the manifold decomposition \eqref{eq:PieceLebInt} (as in the proof of Lemma~\ref{lem:NormRate_d-m}) to write
$A_K=\sum_{j=1}^{\ol j}A_{K,j}$,
where $\ol j<\infty$ is the number of charts and
\[
A_{K,j}:=\int_{{\cal U}_j}\ol b^K(\varphi_j(u))\ol b^K(\varphi_j(u))'\,\ol w_j(u)\,du,
\]
with bounded $\ol w_j(u)$.
Using $\|\sum_{j}A_{K,j}\|_F^2\le \ol j\sum_j\|A_{K,j}\|_F^2$, it suffices to show $\|A_{K,j}\|_F^2\lesssim K^{(2d-m)/d}$ uniformly in $j$.
Fix $j$.
For each $k,k'$, write
\[
(A_{K,j})_{kk'}=\int_{{\cal U}_j}\ol b_k^K(\varphi_j(u))\,\ol b_{k'}^K(\varphi_j(u))\,\ol w_j(u)\,du.
\]
Using the same tensor-product decomposition as in the proof of Lemma~\ref{lem:NormRate_d-m},
$\ol b_{k}^K(\varphi_j(u))=\ol b_{k,(m)}^K(u)\cd\ol b_{k,-(m)}^K(\psi_j(u))$.
Then, for fixed $k$ and fixed $k'_{-(m)}$, the frame inequality for $\{\ol b_{k',(m)}^K\}$ on ${\cal U}_j$ yields
\[
\sum_{k'_{(m)}} (A_{K,j})_{kk'}^2
\ \lesssim\ \int_{{\cal U}_j}\Big(\ol b_k^K(\varphi_j(u))\Big)^2\Big(\ol b_{k',-(m)}^K(\psi_j(u))\Big)^2\,\ol w_j(u)^2\,du.
\]
Summing over $k'_{-(m)}$ and using the tensor-product structure gives
$\sum_{k'_{-(m)}}(\ol b_{k',-(m)}^K(\psi_j(u)))^2\lesssim J^{d-m}=K^{(d-m)/d}$ pointwise in $u$.
Hence,
\[
\sum_{k'=1}^{K}(A_{K,j})_{kk'}^2\lesssim K^{\frac{d-m}{d}}\int_{{\cal U}_j}\Big(\ol b_k^K(\varphi_j(u))\Big)^2\,\ol w_j(u)^2\,du.
\]
Finally, summing over $k$ and using $\sum_{k=1}^{K}(\ol b_k^K(x))^2=\|\ol b^K(x)\|^2\lesssim K$ (Assumption~\ref{assu:Cond_Sieve}(ii) and $\l_{min}(G)>0$) yields
\[
\|A_{K,j}\|_F^2
=\sum_{k=1}^K\sum_{k'=1}^K (A_{K,j})_{kk'}^2
\lesssim K^{\frac{d-m}{d}}\int_{{\cal U}_j}\Big(\sum_{k=1}^{K}(\ol b_k^K(\varphi_j(u)))^2\Big)\ol w_j(u)^2\,du
\lesssim K^{\frac{d-m}{d}}\cd K=K^{\frac{2d-m}{d}}.
\]
This proves the claim.

\smallskip
\noindent Returning to $T_{3,\mathrm{stoch}}$, we thus have
$\E[T_{3,\mathrm{stoch}}^2]=O(K^{(2d-m)/d}/n^2)$ and hence
\[
T_{3,\mathrm{stoch}}=O_p\!\left(\frac1n\sqrt{K^{\frac{2d-m}{d}}}\right).
\]

\smallskip
\noindent Step 5 (collect rates and optimize over $K$).
Combining \eqref{eq:ss_quad_decomp}, \eqref{eq:ss_lin_terms}, \eqref{eq:ss_T3_split} and the bounds above gives
\[
\hat\t-\t_0
=O_p\!\left(
K^{-\frac{s}{d}}
+\sqrt{\frac{K^{\frac{d-m}{d}}}{n}}
+\frac1n\sqrt{K^{\frac{2d-m}{d}}}
\right).
\]
We now choose $K$ to minimize the maximal order among
\[
T_{1}(K):=K^{-s/d},\qquad
T_{2}(K):=\sqrt{\frac{K^{\frac{d-m}{d}}}{n}},\qquad
T_{3}(K):=\frac1n\sqrt{K^{\frac{2d-m}{d}}}.
\]
Balancing $T_{1}$ and $T_{2}$ yields $K^{1/d}\asymp n^{1/(2s+d-m)}$ and hence the ``linear'' rate
\begin{equation}
K^{-\frac{s}{d}}\asymp\sqrt{\frac{K^{\frac{d-m}{d}}}{n}}\asymp n^{-\frac{s}{2s+d-m}}.\label{eq:quad_SS_lin}
\end{equation}
Balancing $T_{1}$ and $T_{3}$ yields $K^{1/d}\asymp n^{2/(2s+2d-m)}$ and hence
\begin{equation}
K^{-\frac{s}{d}}\asymp\frac1n\sqrt{K^{\frac{2d-m}{d}}}\asymp n^{-\frac{2s}{2s+2d-m}}.\label{eq:quad_SS_2s}
\end{equation}
Combining \eqref{eq:quad_SS_lin} and \eqref{eq:quad_SS_2s}, we obtain
\begin{align}
\ol r_{Quad,n}
&:=\max\left\{ n^{-\frac{s}{2s+d-m}},\ n^{-\frac{2s}{2s+2d-m}}\right\}
\equiv\begin{cases}
 n^{-\frac{s}{2s+d-m}}, & \text{if }s\geq\frac{m}{2},\\
 n^{-\frac{2s}{2s+2d-m}}, & \text{if }s<\frac{m}{2}.
\end{cases}\label{eq:quad_SS_rate}
\end{align}
Finally, to verify the regime split: if $K^{1/d}\asymp n^{1/(2s+d-m)}$, then $T_3(K)=O(T_2(K))$ is equivalent to $K\lesssim n$, i.e. $n^{d/(2s+d-m)}\lesssim n$, which holds iff $s\ge m/2$.
If $K^{1/d}\asymp n^{2/(2s+2d-m)}$, then $T_2(K)=O(T_3(K))$ is equivalent to $n\lesssim K$, i.e. $n\lesssim n^{2d/(2s+2d-m)}$, which holds iff $s\le m/2$.
This matches \eqref{eq:quad_SS_rate}.
\end{proof}

\begin{proof}[\textbf{Proof of Lemma \ref{lem:Rate_Quad}(c)}]
Recalling that
\begin{align*}
\ol r_{Quad,n} & :=\max\left\{ n^{-\frac{s}{2s+d-m}},n^{-\frac{2s}{2s+2d-m}}\right\} \equiv\begin{cases}
n^{-\frac{s}{2s+d-m}}, & \text{if }s\geq\frac{m}{2},\\
n^{-\frac{2s}{2s+2d-m}} & \text{if }s\leq\frac{m}{2}.
\end{cases}
\end{align*}
and
\[
\ul r_{Quad,n}:=\max\left\{ n^{-\frac{s}{2s+d-m}},n^{-\frac{4s}{4s+2d-m}}\right\} =\begin{cases}
n^{-\frac{s}{2s+d-m}}, & \text{if }s\geq\frac{3m-2d}{4},\\
n^{-\frac{4s}{4s+2d-m}}, & \text{if }s\leq\frac{3m-2d}{4}.
\end{cases}
\]
Note that
\[
\frac{m}{2}>\frac{m}{4}-\frac{d-m}{2}=\frac{3m-2d}{4}
\]
Hence, whenever $s\geq\frac{m}{2}$, we have
\[
\ol r_{Quad,n}=\ul r_{Quad,n}=n^{-\frac{s}{2s+d-m}}
\]
based on which we may conclude that the minimax optimal rate $n^{-\frac{s}{2s+d-m}}$
is attained by the split-sample sieve estimator.
\end{proof}

\section{Rate Analysis for Nadaraya-Watson First Stage}\label{subsec:Kernel}

In this section, we show that the key rate-acceleration result
in the previous subsection, as one may expect, also holds for kernel-based
nonparametric methods. To illustrate this point, we consider the case
where $h_{0}$ is a given by a conditional expectation and $\hat{h}$
is given by the Nadaraya-Watson kernel estimator and show that
our key result continues to hold.

Formally, let $h_{0}\left(x\right)=\E\left[\rest{Y_{i}}X_{i}=x\right]$
and
\[
\hat{h}\left(x\right)=\frac{\frac{1}{nb_{n}^{d}}\sum_{i=1}^{n}K_{d}\left(\frac{X_{i}-x}{b_{n}}\right)Y_{i}}{\frac{1}{nb_{n}^{d}}\sum_{i=1}^{n}K_{d}\left(\frac{X_{i}-x}{b_{n}}\right)}
\]
where $b_{n}\downto0$ is a bandwidth parameter and $K_{d}$ is a
multivariate kernel of smoothness order $s$.

We consider the convergence of the plug-in estimator $$\hat{\t}:=\int_{{\cal M}}\hat{h}\left(x\right)w\left(x\right)p\left(x\right)d{\cal H}^{m}\left(x\right)$$ to $\t_{0}=\int_{{\cal M}}h_{0}\left(x\right)w\left(x\right)p_{0}\left(x\right)d{\cal H}^{m}\left(x\right),$ treating $p(x)$ in the integrand as known. Under this simplification, $\hat{\t} = L(\hat{h})$ is a plug-in estimator of a linear functional as in Section \ref{subsec:Minimax_Lin-UB}.

Our results can also be extended to analyze the composite estimator $\G(\hat{h},\hat{p})=\int_{{\cal M}}\hat{h}\left(x\right)w\left(x\right)\hat{p}\left(x\right)d{\cal H}^{m}\left(x\right)$ for $\G(h_0,p_0)$, treated as a nonlinear integral functional $\G(h_0,p_0)$ of the vector-valued first stage $(h_0,p_0)$. This can be easily handled using the similar kernel arguments as below in combination of the linearization arguments in Section \ref{subsec:Nlh-UB}, and the incorporation of the density estimator does not affect the rate of the estimator.

\begin{assumption}[Kernel Smoothness]
\label{assu:ker_Ksmooth} $K_{d}$ is a $d$-dimensional product
kernel
\[
K_{d}\left(x\right)=\prod_{j=1}^{d}K\left(x_{j}\right),
\]
where $K$ is a univariate kernel of smoothness order $s$, i.e.,
(i) $K\left(u\right)=K\left(-u\right)$, (ii) $\int K\left(u\right)du=1$,
(iii)$\left|K\left(u\right)\right|\leq M<\infty$, (iv) $\int u^{j}k\left(u\right)du=0$
for $j=1,...,s-1$, and (v) $\kappa_{s}:=\int x_{j}^{s}K\left(x\right)dx\in\left(0,\infty\right)$.
\end{assumption}
\begin{thm}[\label{thm:KernANorm}Kernel Nadaraya-Watson First Stage]
 Let $h_{0}\left(\cdot \right)=\E\left[\rest{Y_{i}}X_{i}=\cdot \right] \in \Lambda^s (\mathcal{X})$. Under Assumptions \ref{assu:RegLevelSet} and \ref{assu:ker_Ksmooth}, we have
\[
\text{Bias}\left(\hat{\t}\right)=O\left(b_{n}^{s}\right),\quad\text{Var}\left(\hat{\t}\right)=O\left(\frac{1}{nb_{n}^{d-m}}\right)
\]
and thus
\[
\norm{\hat{\t}-\t_{0}}=O_{p}\left(n^{-\frac{s}{2s+d-m}}\right).
\]
\end{thm}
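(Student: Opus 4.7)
The plan is to exploit the cancellation in the ratio form of the Nadaraya--Watson estimator: since $\hat{h}(x)\hat{p}(x)$ equals the kernel numerator $\tfrac{1}{nb_n^d}\sum_i K_d((X_i-x)/b_n)Y_i$, the plug-in statistic becomes a simple i.i.d.\ sample mean
\[
\hat{\theta} \;=\; \frac{1}{n}\sum_{i=1}^{n} Y_i\,\psi_n(X_i), \qquad \psi_n(x') := \int_{\cM}\frac{1}{b_n^d}\,K_d\!\left(\frac{x'-x}{b_n}\right) w(x)\, d{\cal H}^m(x),
\]
which sidesteps any linearization of the ratio. Bias and variance can then be analyzed separately using the i.i.d.\ structure.

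For the bias, Fubini's theorem gives
\[
\E[\hat{\theta}] \;=\; \int_{\cM} w(x)\left[\int \frac{1}{b_n^d}\,K_d\!\left(\frac{x'-x}{b_n}\right) h_0(x')p_0(x')\, dx'\right] d{\cal H}^m(x),
\]
and the inner convolution equals $h_0(x)p_0(x)+O(b_n^s)$ uniformly in $x$ by the standard order-$s$ kernel expansion applied to the $s$-smooth product $h_0 p_0$ (with $p_0$ assumed $s$-smooth). Since $w$ is uniformly bounded and $\cM$ has finite ${\cal H}^m$-measure, integrating over $\cM$ preserves this rate, yielding $\text{Bias}(\hat{\theta}) = O(b_n^s)$.

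For the variance, the i.i.d.\ structure gives $\V(\hat{\theta}) \leq \tfrac{1}{n}\,\E[Y_i^2 \psi_n(X_i)^2]$, so the task reduces to bounding $\E[\psi_n(X_i)^2]$. Two facts about $\psi_n$ drive the rate: by compact support of $K_d$, $\psi_n(x')$ vanishes unless $d(x',\cM)\leq C b_n$, and on this tubular neighborhood $|\psi_n(x')| \lesssim b_n^{m-d}$. The pointwise bound comes from a tangent--normal decomposition: near $x'$, parametrize the nearby piece of $\cM$ over its tangent space using the partition-of-unity construction of Appendix A, and rescale by $u=(x-x')/b_n$; the $d{\cal H}^m$ Jacobian contributes $b_n^m$ while the prefactor $1/b_n^d$ combines to $b_n^{m-d}$ after integrating a bounded kernel against the $m$-dimensional tangent slice. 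Combined with the fact that the tubular neighborhood $\{x':d(x',\cM)\leq Cb_n\}$ has ambient Lebesgue volume $O(b_n^{d-m})$ (via the coarea formula together with Assumption \ref{assu:Jacob}), this gives $\E[\psi_n(X_i)^2] \lesssim (b_n^{m-d})^2 \cdot b_n^{d-m} = b_n^{m-d}$, hence $\V(\hat{\theta}) = O\bigl(1/(nb_n^{d-m})\bigr)$.

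Combining bias and variance yields $\E[(\hat\theta-\theta_0)^2] \lesssim b_n^{2s} + (nb_n^{d-m})^{-1}$, and balancing the two terms at $b_n \asymp n^{-1/(2s+d-m)}$ produces the stated convergence rate. The main technical obstacle is the uniform $O(b_n^{m-d})$ bound on $\psi_n$ inside the tubular neighborhood, since this is precisely where the submanifold dimension $m$ replaces the ambient dimension $d$ in the effective variance scaling. Making this rigorous is where the manifold geometry enters in an essential way and requires the local parametrization along the lines of Appendix A; once that is in hand, the remaining steps reduce to routine kernel arithmetic.
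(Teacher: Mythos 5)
Your proof is correct, and it takes a genuinely different route from the paper's. The paper linearizes the Nadaraya--Watson ratio, writing $\hat h=\hat a/\hat p$ with $\hat a$ the kernel numerator, expanding $\hat a/\hat p-a/p$ to first order, and carrying a remainder $R_1$; the leading term is then centered at $h_0(x)$, decomposed chart-by-chart via the partition of unity of Appendix A, and exhibited after a tangential change of variables as a $(d-m)$-dimensional kernel statistic $T_{2ij}$ whose bias and variance are computed by a second (normal-direction) change of variables. You instead observe the exact algebraic identity $\hat h(x)\hat p(x)=\hat a(x)$, so that $\hat\theta=\frac{1}{n}\sum_i Y_i\psi_n(X_i)$ with no linearization error at all, and then read off the bias from a Fubini/convolution argument and the variance from the pointwise bound $|\psi_n|\lesssim b_n^{m-d}$ on a tubular neighborhood of $\cM$ of Lebesgue volume $O(b_n^{d-m})$. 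What your approach buys is the elimination of the ratio-expansion remainder (which the paper asserts is negligible without proof) and a cleaner i.i.d.\ sample-mean structure; what the paper's approach buys is an explicit representation of the leading term as a $(d-m)$-dimensional kernel regression statistic, which makes the "effective dimension reduction" interpretation visible and would more directly support an asymptotic normality argument. The two proofs rely on the same geometric input --- the tangential/normal splitting of the kernel near $\cM$ under Assumptions \ref{assu:RegLevelSet} and \ref{assu:Jacob} --- just packaged differently (your tube-volume bound versus their iterated change of variables). Two small caveats, both shared with the paper's own proof rather than specific to yours: the $O(b_n^s)$ bias requires smoothness of $p_0$ (your convolution expansion needs $h_0p_0\in\Lambda^s$ near $\cM$, just as the paper's bias lemma Taylor-expands the conditional density in the normal direction), and the variance bound needs $\E[Y_i^2\mid X_i]$ uniformly bounded; neither is listed among the two assumptions invoked in the theorem statement, so you are right to flag the former explicitly.
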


\begin{proof}[\textbf{Proof of Theorem \ref{thm:KernANorm}}]
Write $a\left(x\right):=h_{0}\left(x\right)p\left(x\right)$, we have

\begin{align}
\hat{\t}-\t_{0} & =\int_{{\cal M}}\left[\hat{h}\left(x\right)-h_{0}\left(x\right)\right]w\left(x\right)p\left(x\right)d{\cal H}^{m}\left(x\right)\nonumber \\
 & =\int_{{\cal M}}\left[\frac{\hat{a}\left(x\right)}{\hat{p}\left(x\right)}-\frac{a\left(x\right)}{p\left(x\right)}\right]w\left(x\right)p\left(x\right)d{\cal H}^{m}\left(x\right)\nonumber \\
 & =\int_{{\cal M}}\left[\frac{\hat{a}\left(x\right)-a\left(x\right)}{p\left(x\right)}-\frac{a\left(x\right)}{p^{2}\left(x\right)}\left(\hat{p}\left(x\right)-p\left(x\right)\right)\right]w\left(x\right)p\left(x\right)d{\cal H}^{m}\left(x\right)+R_{1}\nonumber \\
 & =\int_{{\cal M}}\left[\hat{a}\left(x\right)-a\left(x\right)-h_{0}\left(x\right)\left(\hat{p}\left(x\right)-p\left(x\right)\right)\right]w\left(x\right)d{\cal H}^{m}\left(x\right)+R_{1}\nonumber \\
 & =\int_{{\cal M}}\left[\hat{a}\left(x\right)-h_{0}\left(x\right)\hat{p}\left(x\right)\right]w\left(x\right)d{\cal H}^{m}\left(x\right)+R_{1}\nonumber \\
 & =\int_{{\cal M}}\frac{1}{nb_{n}^{d}}\sum_{i=1}^{n}K\left(\frac{x-X_{i}}{b_{n}}\right)\left(Y_{i}-h_{0}\left(x\right)\right)w\left(x\right)d{\cal H}^{m}\left(x\right)+R_{1}\nonumber \\
 & =\underset{T_{1}}{\underbrace{\frac{1}{nb_{n}^{d}}\sum_{i=1}^{n}\int_{{\cal M}}K\left(\frac{x-X_{i}}{b_{n}}\right)\left(Y_{i}-h_{0}\left(x\right)\right)w\left(x\right)d{\cal H}^{m}\left(x\right)}}+R_{1},\label{eq:Kern_Ldh}
\end{align}
where the remainder term $R_{1}$ is asymptotically negligible.

By \eqref{eq:PieceLebInt}, we may write
\[
T_{1}=\sum_{j=1}^{\ol j}\frac{1}{n}\sum_{i=1}^{n}T_{1ij}
\]
with
\[
T_{1ij}:=\frac{1}{b_{n}^{d}}\int_{\varphi_{j}\left({\cal U}_{j}\right)}K\left(\frac{\varphi_{j}\left(x_{\left(j,m\right)}\right)-X_{i}}{b_{n}}\right)\left(Y_{i}-h_{0}\left(\varphi_{j}\left(x_{\left(j,m\right)}\right)\right)\right)\ol w_{j}\left(Y_{i},\varphi_{j}\left(x_{\left(j,m\right)}\right)\right)dx_{\left(j,m\right)}
\]
where
\[
\ol w_{j}\left(Y_{i},x\right):=\rho_{j}\left(x\right)w\left(x\right)\mathcal{J}\varphi_{j}\left(x\right)
\]
Subsequently, we consider each $T_{1ij}$ separately and suppress
the subscript $j$ for simpler notation.

We carry out the kernel change of variables from $x_{\left(m\right)}$
to $u$ by setting
\[
u:=\frac{x_{\left(m\right)}-X_{i,\left(m\right)}}{b_{n}},\quad\iff\quad x_{\left(m\right)}=X_{i,\left(m\right)}+b_{n}u.
\]
We write ${\cal V}_{u}:=\frac{{\cal V}-X_{i,\left(m\right)}}{b_{n}}$.
\begin{align*}
T_{1ij}=\  & \frac{1}{b_{n}^{d}}\int_{{\cal V}_{u}}K\left(u,\frac{\psi\left(X_{i,\left(m\right)}+b_{n}u\right)-X_{i,-\left(m\right)}}{b_{n}}\right)\left(Y_{i}-h_{0}\left(\varphi\left(X_{i,\left(m\right)}+b_{n}u\right)\right)\right)\\
& \quad \ol w\left(Y_{i},\varphi\left(X_{i,\left(m\right)}+b_{n}u\right)\right)d\left(X_{i,\left(m\right)}+b_{n}u\right)\\
=\  & \frac{b_{n}^{m}}{b_{n}^{d}}\int_{{\cal V}_{u}}K_{\left(m\right)}\left(u\right)K_{-\left(m\right)}\left(\frac{\psi\left(X_{i,\left(m\right)}+b_{n}u\right)-X_{i,-\left(m\right)}}{b_{n}}\right)\left(Y_{i}-h_{0}\left(\varphi\left(X_{i,\left(m\right)}+b_{n}u\right)\right)\right)\\
&\quad \ol w\left(\varphi\left(X_{i,\left(m\right)}+b_{n}u\right)\right)du\\
\end{align*}
Since $h_{0}\left(\varphi\left(X_{i,\left(m\right)}+b_{n}u\right)\right)=h_{0}\left(\varphi\left(X_{i,\left(m\right)}\right)\right)+O\left(b_{n}\right)$
and $\ol w\left(\varphi\left(X_{i,\left(m\right)}+b_{n}u\right)\right)=\ol w\left(\varphi\left(X_{i,\left(m\right)}\right)\right)+O\left(b_{n}\right)$
on the support of $K_{\left(m\right)}$, these smooth terms may be evaluated at $u=0$
with $O\left(b_{n}\right)$ corrections absorbed into $R_{2}$.
However, the transverse kernel argument
\[
\frac{\psi\left(X_{i,\left(m\right)}+b_{n}u\right)-X_{i,-\left(m\right)}}{b_{n}}
=\frac{\psi\left(X_{i,\left(m\right)}\right)-X_{i,-\left(m\right)}}{b_{n}}+\nabla\psi\left(X_{i,\left(m\right)}\right)u+O\left(b_{n}\right)
\]
varies at $O\left(1\right)$ in $u$, so $K_{-\left(m\right)}$ cannot simply be evaluated at $u=0$.
Define the \emph{effective transverse kernel}
\begin{equation}\label{eq:eff_kernel}
\tilde{K}_{-\left(m\right)}\left(\xi;\,x_{\left(m\right)}\right)
:=\int K_{\left(m\right)}\left(u\right)K_{-\left(m\right)}\left(\xi+\nabla\psi\left(x_{\left(m\right)}\right)u\right)du.
\end{equation}
Then
\[
T_{1ij}=\underset{T_{2,ij}}{\underbrace{\frac{1}{b_{n}^{d-m}}\tilde{K}_{-\left(m\right)}\left(\frac{\psi\left(X_{i,\left(m\right)}\right)-X_{i,-\left(m\right)}}{b_{n}};\,X_{i,\left(m\right)}\right)\left(Y_{i}-h_{0}\left(\varphi\left(X_{i,\left(m\right)}\right)\right)\right)\ol w\left(\varphi\left(X_{i,\left(m\right)}\right)\right)}}+R_{2},
\]
where $R_{2}$ collects the $O\left(b_{n}/b_{n}^{d-m}\right)$ corrections from the smooth-term
evaluations and the $O\left(b_{n}\right)$ Taylor remainder in $\psi$.

We claim that $\tilde{K}_{-\left(m\right)}$ is a valid $\left(d-m\right)$-dimensional
kernel of the same order as $K_{-\left(m\right)}$. First, by a change of variable
$\eta=\xi+\nabla\psi\cd u$,
\[
\int\tilde{K}_{-\left(m\right)}\left(\xi;\,x_{\left(m\right)}\right)d\xi
=\int K_{\left(m\right)}\left(u\right)\underbrace{\int K_{-\left(m\right)}\left(\eta\right)d\eta}_{=1}du=1.
\]
Second, for any multi-index $\alpha$ with $\left|\alpha\right|\leq s-1$,
the same change of variable gives
$\int\xi^{\alpha}\tilde{K}_{-\left(m\right)}\left(\xi\right)d\xi
=\sum_{\beta\leq\alpha}\binom{\alpha}{\beta}\left(-1\right)^{\left|\alpha-\beta\right|}
\left[\int K_{\left(m\right)}\left(u\right)\left(\nabla\psi\cd u\right)^{\alpha-\beta}du\right]
\left[\int K_{-\left(m\right)}\left(\eta\right)\eta^{\beta}d\eta\right]$.
Since $K_{-\left(m\right)}$ has vanishing moments up to order $s-1$ and
$K_{\left(m\right)}$ is a symmetric product kernel (so $\int u^{j}K_{\left(m\right)}\left(u\right)du=0$
for odd $j$), each term in the sum vanishes. Hence $\tilde{K}_{-\left(m\right)}$
has the same moment conditions as $K_{-\left(m\right)}$, i.e., it is a kernel of order $s$.
Third, $\int\tilde{K}_{-\left(m\right)}^{2}\left(\xi\right)d\xi<\infty$ since $K_{\left(m\right)}$
and $K_{-\left(m\right)}$ are bounded and compactly supported.

The term $\frac{1}{n}\sum_{i}T_{2,ij}$ is therefore a $\left(d-m\right)$-dimensional
kernel estimator (with the effective kernel $\tilde{K}_{-\left(m\right)}$
in place of $K_{-\left(m\right)}$) and shares the same asymptotic bias and variance rates.
We formalize this via Lemmas~\ref{k-bias} and~\ref{k-var} below,
whose proofs apply with $\tilde{K}_{-\left(m\right)}$ replacing $K_{-\left(m\right)}$ throughout,
yielding $\text{Bias}=O\left(b_{n}^{s}\right)$ and
$\text{Var}=O\left(1/\left(nb_{n}^{d-m}\right)\right)$.
\end{proof}

\begin{lem}[Bias]\label{k-bias}
$\E\left[\frac{1}{n}\sum_{i}T_{2,ij}\right] =  \E\left[T_{2,ij}\right]=O\left(b_{n}^{s}\right)$.
\end{lem}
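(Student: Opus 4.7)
The plan is to compute $\E[T_{2,ij}]$ by iterated expectations followed by a standard higher-order-kernel bias expansion adapted to the $(d-m)$-dimensional off-manifold direction. First I would condition on $X_i$ to replace $Y_i$ by $h_0(X_i)$, giving
\[
\E[T_{2,ij}] = \E\!\left[\tfrac{1}{b_n^{d-m}} K_{-(m)}\!\bigl(\tfrac{\psi(X_{i,(m)}) - X_{i,-(m)}}{b_n}\bigr) \bigl(h_0(X_i) - h_0(\varphi(X_{i,(m)}))\bigr) \ol{w}(\varphi(X_{i,(m)}))\right].
\]
Writing the expectation as an integral against the joint density $p$ of $X_i$ on the chart piece $\varphi(\mathcal{U}_j)$, and performing the linear change of variable $u = (\psi(x_{(m)}) - x_{-(m)})/b_n$ in the off-manifold coordinates (whose Jacobian $b_n^{d-m}$ cancels the prefactor), the expectation reduces to an outer integral over $x_{(m)}\in\mathcal{U}_j$ with inner integrand
\[
K_{-(m)}(u)\,\bigl[h_0(x_{(m)}, \psi(x_{(m)})-b_n u) - h_0(x_{(m)}, \psi(x_{(m)}))\bigr]\, p(x_{(m)}, \psi(x_{(m)}) - b_n u)\, \ol{w}(\varphi(x_{(m)})).
\]

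I would then Taylor-expand the bracketed $h_0$-difference in its second argument about $\psi(x_{(m)})$, absorbing the smooth factor $p(\cdot)$ into one H\"older-smooth function of $(-b_n u)$. Because the bracket vanishes at $u=0$, the expansion contributes only powers $u^\gamma$ with multi-indices satisfying $1 \leq |\gamma| \leq s-1$, plus an order-$s$ H\"older remainder bounded by $C b_n^s|u|^s$ (using $h_0 \in \Lambda_c^s(\mathcal{X})$). The product-kernel structure $K_{-(m)}(u) = \prod_{j=1}^{d-m} K(u_j)$ combined with Assumption \ref{assu:ker_Ksmooth}(iv) then yields
\[
\int u^\gamma K_{-(m)}(u)\, du = \prod_{j=1}^{d-m} \int u_j^{\gamma_j} K(u_j)\, du_j = 0, \quad 1 \leq |\gamma| \leq s-1,
\]
since any such $\gamma$ must have at least one component $\gamma_j \in \{1,\ldots,s-1\}$, whose univariate moment vanishes (the remaining components contribute either the mass $1$ or $0$). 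Every polynomial term of the Taylor expansion therefore cancels after integration against $K_{-(m)}$, leaving only the $O(b_n^s)$ remainder. Integrability $\int |u|^s |K_{-(m)}(u)|\, du < \infty$ follows from Assumption \ref{assu:ker_Ksmooth}(v) (together with the compact support or rapid decay of $K$), and uniform boundedness of $\ol{w}$ and $p$ on compact $\mathcal{X}$ bounds the outer $x_{(m)}$-integral. Combining gives $|\E[T_{2,ij}]| \leq C b_n^s$, as claimed.

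The main obstacle will be articulating the multivariate moment-vanishing property: Assumption \ref{assu:ker_Ksmooth}(iv) is stated only for the \emph{univariate} kernel $K$, so one must observe that because $K_{-(m)}$ is built as a \emph{product} kernel, the univariate vanishing moments transfer to all mixed multi-indices of total degree less than $s$. A secondary concern is to record the smoothness of the chart map $\psi$ (which follows from the Implicit Function Theorem applied to $g \in \Lambda_c^s(\mathcal{X})$) and of the density $p$, both standard side conditions in this kernel bias calculation.
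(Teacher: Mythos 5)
Your proposal follows essentially the same route as the paper's proof: condition on $X_i$ to replace $Y_i$ by $h_0(X_i)$, change variables in the off-manifold coordinates so the $b_n^{d-m}$ Jacobian cancels the prefactor, and then run the standard higher-order-kernel bias expansion in which the Taylor terms of degree $1,\dots,s-1$ are annihilated by the vanishing moments of the product kernel, leaving the H\"older remainder of order $b_n^s$. Your explicit treatment of the multivariate moment-vanishing via the product structure, and your flagging of the smoothness requirements on $\psi$ and on the density $p$, are points the paper handles only implicitly, but the argument is the same.
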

\begin{proof}
Clearly,
\begin{align*}
&\E\left[T_{2ij}\right] \\
=\ & \frac{1}{b_{n}^{d-m}}\int\left[K_{-\left(m\right)}\left(\frac{\psi\left(z_{\left(m\right)}\right)-z_{-\left(m\right)}}{b_{n}}\right)\left(h_{0}\left(z\right)-h_{0}\left(\varphi\left(z_{\left(m\right)}\right)\right)\right)\ol w\left(\varphi\left(z_{\left(m\right)}\right)\right)\right]p\left(z\right)dz\\
=\  & \frac{1}{b_{n}^{d-m}}\int\int\left[K_{-\left(m\right)}\left(\frac{\psi\left(z_{\left(m\right)}\right)-z_{-\left(m\right)}}{b_{n}}\right)\left(h_{0}\left(z_{\left(m\right)},z_{-\left(m\right)}\right)-h_{0}\left(\varphi\left(z_{\left(m\right)}\right)\right)\right)\ol w\left(\varphi\left(z_{\left(m\right)}\right)\right)\right]\\
&\quad p\left(\rest{z_{-\left(m\right)}}z_{\left(m\right)}\right)dz_{-\left(m\right)} p\left(z_{\left(m\right)}\right)dz_{\left(m\right)}
\end{align*}
Define the change of variable from $z_{-\left(m\right)}$ to $\zeta$
as
\[
\zeta:=\frac{z_{-\left(m\right)}-\psi_{j}\left(z_{\left(m\right)}\right)}{b_{n}},\quad z_{-\left(m\right)}=\psi_{j}\left(z_{\left(m\right)}\right)+b_{n}\zeta.
\]
We have
\begin{align*}
\E\left[T_{2ij}\right]=\  & \frac{1}{b_{n}^{d-m}}\int\int\left[K_{-\left(m\right)}\left(\zeta\right)\left(h_{0}\left(z_{\left(m\right)},\psi_{j}\left(z_{\left(m\right)}\right)+b_{n}\zeta\right)-h_{0}\left(\varphi\left(z_{\left(m\right)}\right)\right)\right)\right]\\
 & \quad\quad\quad p\left(\rest{\psi_{j}\left(z_{\left(m\right)}\right)+b_{n}\zeta_{-\left(m\right)}}z_{\left(m\right)}\right)b_{n}^{d-m}d\zeta p\left(z_{\left(m\right)}\right)\ol w\left(\varphi\left(z_{\left(m\right)}\right)\right)dz_{\left(m\right)}\\
=\  & \int\left[\int K_{-\left(m\right)}\left(\zeta\right)\left(h_{0}\left(z_{\left(m\right)},\psi_{j}\left(z_{\left(m\right)}\right)+b_{n}\zeta\right)-h_{0}\left(\varphi\left(z_{\left(m\right)}\right)\right)\right)\left(\rest{\psi_{j}\left(z_{\left(m\right)}\right)+b_{n}\zeta_{-\left(m\right)}}z_{\left(m\right)}\right)d\zeta\right]\\
 & \quad\quad\quad p\left(z_{\left(m\right)}\right)\ol w\left(\varphi\left(z_{\left(m\right)}\right)\right)dz_{\left(m\right)}\\
=\  & \int\left[\int K_{-\left(m\right)}\left(\zeta\right)\left(B\left(z_{\left(m\right)}\right)b_{n}^{s}\zeta^{s}+o\left(b_{n}^{s}\right)\right)d\zeta\right]p\left(z_{\left(m\right)}\right)\ol w\left(\varphi\left(z_{\left(m\right)}\right)\right)dz_{\left(m\right)}\\
=\  & b_{n}^{s}\cd\int K_{-\left(m\right)}\left(\zeta\right)\zeta^{s}d\zeta\cd\int B\left(z_{\left(m\right)}\right)p\left(z_{\left(m\right)}\right)\ol w\left(\varphi\left(z_{\left(m\right)}\right)\right)dz_{\left(m\right)}+o\left(b_{n}^{s}\right)\\
=\  & O\left(b_{n}^{s}\right)
\end{align*}
Hence, we also have $\E\left[\frac{1}{n}\sum_{i}T_{2,ij}\right] =  \E\left[T_{2,ij}\right]=O\left(b_{n}^{s}\right)$.
\end{proof}

\begin{lem}[Variance]\label{k-var}
$\text{Var}\left[\frac{1}{n}\sum_{i}T_{2,ij}\right] =O\left(\frac{1}{nb_{n}^{d-m}}\right)$
\end{lem}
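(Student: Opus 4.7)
The plan is to bound the variance by the second moment of $T_{2,ij}$ and then mimic the change-of-variable argument already performed in the bias computation, except that $K_{-(m)}$ is replaced by $K_{-(m)}^2$, which gives an extra factor of $b_n^{-(d-m)}$ after accounting for the Jacobian. Under the i.i.d.\ sampling, the variance of the sum $\frac{1}{n}\sum_i T_{2,ij}$ is then obtained by dividing by $n$.

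First I would write $\text{Var}[T_{2,ij}] \leq \E[T_{2,ij}^2]$, and expand
\[
T_{2,ij}^{2} = \frac{1}{b_{n}^{2(d-m)}}K_{-(m)}^{2}\!\left(\frac{\psi(X_{i,(m)})-X_{i,-(m)}}{b_{n}}\right)(Y_{i}-h_{0}(\varphi(X_{i,(m)})))^{2}\ol w^{2}(\varphi(X_{i,(m)})).
\]
Taking expectations and writing the joint density of $X_i$ as the product of marginal and conditional, then using the law of total expectation to replace $(Y_i - h_0(\varphi(z_{(m)})))^2$ by its conditional mean given $X_i=z$ (which is uniformly bounded because $\sigma_\e^2(\cdot)$ is bounded and $h_0$ is bounded on the compact $\mathcal{X}$), gives
\[
\E[T_{2,ij}^{2}] \leq \frac{M}{b_{n}^{2(d-m)}}\int\!\!\int K_{-(m)}^{2}\!\left(\tfrac{\psi(z_{(m)})-z_{-(m)}}{b_{n}}\right)\ol w^{2}(\varphi(z_{(m)}))p(z_{-(m)}\mid z_{(m)})\,dz_{-(m)}\,p(z_{(m)})\,dz_{(m)}.
\]

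Next I apply exactly the same substitution used in Lemma \ref{k-bias}, $\zeta := (z_{-(m)} - \psi(z_{(m)}))/b_{n}$, whose Jacobian produces the factor $b_{n}^{d-m}$. Since $p(\cdot\mid z_{(m)})$ is bounded (by Assumption~\ref{assu:density_x}), $\ol w$ is bounded (by construction from $w$, $\rho_j$, and $\mathcal{J}\varphi_j$), $p(z_{(m)})$ integrates to a finite constant over the compact domain, and $\int K_{-(m)}^{2}(\zeta)\,d\zeta < \infty$ by the kernel assumption, the inner integral is $O(b_n^{d-m})$ uniformly in $z_{(m)}$, yielding
\[
\E[T_{2,ij}^{2}] = O\!\left(b_{n}^{-2(d-m)} \cdot b_{n}^{d-m}\right) = O\!\left(b_{n}^{-(d-m)}\right).
\]
Since $(\E[T_{2,ij}])^{2} = O(b_{n}^{2s}) = o(b_n^{-(d-m)})$ by Lemma~\ref{k-bias}, we obtain $\text{Var}[T_{2,ij}] = O(b_{n}^{-(d-m)})$, and by independence
\[
\text{Var}\!\left[\tfrac{1}{n}\sum_{i=1}^{n} T_{2,ij}\right] = \tfrac{1}{n}\text{Var}[T_{2,ij}] = O\!\left(\tfrac{1}{n b_{n}^{d-m}}\right).
\]

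The only mildly delicate step is handling the stochastic remainder $o(b_n^{m-d})$ that arose in the proof of Theorem~\ref{thm:KernANorm} when approximating $\int_{\mathcal{V}_u} K_{(m)}(u)\,du$ by $1$; this term enters each $T_{1,ij}$ additively and, after averaging and squaring, contributes at an even faster rate than the leading $O(1/(n b_n^{d-m}))$, so it is absorbed. I expect the routine bookkeeping to be straightforward; the main (minor) obstacle is keeping track of constants through the partition-of-unity decomposition across the finitely many charts $j=1,\ldots,\ol j$, which simply multiplies the constants by $\ol j$ without altering rates. Optimizing $b_n$ by balancing bias$^2 = O(b_n^{2s})$ against variance $O(1/(nb_n^{d-m}))$ yields $b_n \asymp n^{-1/(2s+d-m)}$ and the final MSE rate $n^{-2s/(2s+d-m)}$, completing the proof of Theorem~\ref{thm:KernANorm}.
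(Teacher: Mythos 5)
Your proposal is correct and follows essentially the same route as the paper: bound the variance by the second moment, condition on $X_i$ so the squared residual term is replaced by $\sigma_\e^2(z)+(h_0(z)-h_0(\varphi(z_{(m)})))^2$ (which you simply bound uniformly rather than tracking the leading constant as the paper does), apply the change of variables $\zeta=(z_{-(m)}-\psi(z_{(m)}))/b_n$ to extract the factor $b_n^{d-m}$ against the $b_n^{-2(d-m)}$ prefactor, and divide by $n$ for the i.i.d.\ average across the finitely many charts. The only cosmetic difference is that the paper computes the leading constant of $\E[T_{2,ij}^2]$ explicitly, which is unnecessary for the stated $O(\cdot)$ bound.
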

\begin{proof}
Next,
\begin{align*}
& \E\left(T_{2ij}^{2}\right) \\
=\ & \frac{1}{b_{n}^{2\left(d-m\right)}}\int K_{-\left(m\right)}^{2}\left(\frac{\psi_{j}\left(z_{\left(m\right)}\right)-z_{-\left(m\right)}}{b_{n}}\right)\left(Y_{i}-h_{0}\left(\varphi\left(z_{\left(m\right)}\right)\right)\right)^{2}\ol w^{2}\left(\varphi\left(z_{\left(m\right)}\right)\right)\\
&\quad p\left(z_{\left(m\right)},z_{-\left(m\right)}\right)dz_{-\left(m\right)}dz_{\left(m\right)}\\
 =\ & \frac{1}{b_{n}^{2\left(d-m\right)}}\int K_{-\left(m\right)}^{2}\left(\frac{\psi_{j}\left(z_{\left(m\right)}\right)-z_{-\left(m\right)}}{b_{n}}\right)\left[\s_{0}^{2}\left(z\right)+\left(h_{0}\left(z\right)-h_{0}\left(\varphi\left(z_{\left(m\right)}\right)\right)\right)^{2}\right]\ol w^{2}\left(\varphi\left(z_{\left(m\right)}\right)\right)\\
 &\quad p\left(z_{\left(m\right)},z_{-\left(m\right)}\right)dz_{-\left(m\right)}dz_{\left(m\right)}\\
 =\ & \frac{1}{b_{n}^{2\left(d-m\right)}}\int\left[\underset{T_{3j}}{\underbrace{\int K_{-\left(m\right)}^{2}\left(\frac{\psi_{j}\left(z_{\left(m\right)}\right)-z_{-\left(m\right)}}{b_{n}}\right)\left[\s_{0}^{2}\left(z\right)+\left(h_{0}\left(z\right)-h_{0}\left(\varphi\left(z_{\left(m\right)}\right)\right)\right)^{2}\right] p\left(\rest{z_{-\left(m\right)}}z_{\left(m\right)}\right)dz_{-\left(m\right)}}}\right]\\
 &\quad \ol w^{2}\left(\varphi\left(z_{\left(m\right)}\right)\right)p\left(z_{\left(m\right)}\right)dz_{\left(m\right)}
\end{align*}
For $T_{3j}$, we can carry out the change of variable from $z_{-\left(m\right)}$
to $v$ as below:
\[
v:=\frac{z_{-\left(m\right)}-\psi_{j}\left(z_{\left(m\right)}\right)}{b_{n}}\quad\iff\quad z_{-\left(m\right)}=\psi_{j}\left(z_{\left(m\right)}\right)+b_{n}v
\]
and thus
\begin{align*}
T_{3j} =\ & \int K_{-\left(m\right)}^{2}\left(v\right)\left[\s_{0}^{2}\left(z_{\left(m\right)},\psi_{j}\left(z_{\left(m\right)}\right)+b_{n}v\right)+\left(h_{0}\left(z_{\left(m\right)},\psi_{j}\left(z_{\left(m\right)}\right)+b_{n}v\right)-h_{0}\left(\varphi\left(z_{\left(m\right)}\right)\right)\right)^{2}\right]\\
&\quad p\left(\rest{\psi_{j}\left(z_{\left(m\right)}\right)+b_{n}v}z_{\left(m\right)}\right)b_{n}^{d-m}dv\\
 =\ & b_{n}^{d-m}\int K_{-\left(m\right)}^{2}\left(v\right)\left[\s_{0}^{2}\left(z_{\left(m\right)},\psi_{j}\left(z_{\left(m\right)}\right)\right)+\left(h_{0}\left(z_{\left(m\right)},\psi_{j}\left(z_{\left(m\right)}\right)\right)-h_{0}\left(\varphi\left(z_{\left(m\right)}\right)\right)\right)^{2}+O\left(b_{n}\right)\right]\\
 &\quad \left[p\left(\rest{\psi_{j}\left(z_{\left(m\right)}\right)}z_{\left(m\right)}\right)+O\left(b_{n}\right)\right]dv\\
 =\ & b_{n}^{d-m}\int K_{-\left(m\right)}^{2}\left(v\right)dv\cd\s_{0}^{2}\left(z_{\left(m\right)},\psi_{j}\left(z_{\left(m\right)}\right)\right)p\left(\rest{\psi_{j}\left(z_{\left(m\right)}\right)}z_{\left(m\right)}\right)+o\left(b_{n}^{d-m}\right)\\
 & =b_{n}^{d-m}\cd C\cd\s_{0}^{2}\left(z_{\left(m\right)},\psi_{j}\left(z_{\left(m\right)}\right)\right)p\left(\rest{\psi_{j}\left(z_{\left(m\right)}\right)}z_{\left(m\right)}\right)+o\left(b_{n}^{d-m}\right)
\end{align*}
Substituting $T_{3j}$ back into the expression for $\E(T_{2ij}^2)$, we obtain
\begin{align*}
&\E\left(T_{2ij}^{2}\right) \\
=\ & \frac{1}{b_{n}^{2\left(d-m\right)}}\int\left[b_{n}^{d-m}\cd C\cd\s_{0}^{2}\left(z_{\left(m\right)},\psi_{j}\left(z_{\left(m\right)}\right)\right)p\left(\rest{\psi_{j}\left(z_{\left(m\right)}\right)}z_{\left(m\right)}\right)+o\left(b_{n}^{d-m}\right)\right]\ol w^{2}\left(\varphi\left(z_{\left(m\right)}\right)\right)p\left(z_{\left(m\right)}\right)dz_{\left(m\right)}\\
 =\ & \frac{1}{b_{n}^{d-m}}C\cd\int\s_{0}^{2}\left(z_{\left(m\right)},\psi_{j}\left(z_{\left(m\right)}\right)\right)p\left(\rest{\psi_{j}\left(z_{\left(m\right)}\right)}z_{\left(m\right)}\right) \ol w^{2}\left(\varphi\left(z_{\left(m\right)}\right)\right)p\left(z_{\left(m\right)}\right)dz_{\left(m\right)}+o\left(\frac{1}{b_{n}^{d-m}}\right)\\
 =\ & O\left(\frac{1}{b_{n}^{d-m}}\right)
\end{align*}
Hence,
\begin{align*}
\text{Var}\left[\frac{1}{n}\sum_{i}T_{2,ij}\right]  & \leq\frac{C{}}{n^2}\sum_{i=1}^{n}O\left(\frac{1}{b_{n}^{d-m}}\right)+R_{3}=O\left(\frac{1}{nb_{n}^{d-m}}\right).
\end{align*}
\end{proof}

\section{Bias-Aware versus Undersmoothing Confidence Intervals}\label{sec:BiasAware}

In the Monte Carlo simulations of Section~\ref{sec:Simulation}, the confidence intervals are constructed via the undersmoothing approach: the bootstrap-Lepski procedure selects the rate-optimal sieve dimension $\hat{K}$, which is then inflated to $\tilde{K} = \lceil\sqrt{\log n}\cdot\hat{K}\rceil$, and the CI is constructed at $\tilde{K}$ using the standard normal critical value $z_{0.975}=1.96$. In this appendix, we compare this approach with the bias-aware confidence interval construction of \citet*{chen2025adaptive}, which retains the rate-optimal $\hat{K}$ and instead inflates the critical value with an explicit bias buffer:
\[
\text{CI}_{\text{BA}} := \left[\hat{\theta}(\hat{K}) \pm \left(z_{0.975} + \hat{A}\cdot \vartheta^*_{1-\hat{\alpha}}\right) \cdot \widehat{se}(\hat{K})\right],
\]
where $\hat{A} = \log\log \hat{J}$, $\vartheta^*_{1-\hat{\alpha}}$ is the $(1-\hat{\alpha})$-quantile of the bootstrap-Lepski maximal contrast statistic (see Appendix \ref{subsec:MB_K_LOO}), and $\hat{\alpha} = \min(0.5,\, \sqrt{\log K_{\max}/K_{\max}})$. The bias-aware construction builds on the honest confidence interval framework of \citet*{armstrong2018optimal,armstrong2020simple} and was developed by \citet*{chen2025adaptive} for linear functionals of the nonparametric estimator; no analogous construction for the general class of nonlinear functionals has been developed in the literature. We examine both designs from Section~\ref{sec:Simulation} to compare the two approaches.

\paragraph{Design 1: Linear integral on the unit circle.}
Table~\ref{tab:BiasAware_Sim1} compares the undersmoothing (US) and bias-aware (BA) confidence intervals for the linear integral functional considered in Section~\ref{subsec:Sim1_Circle}. The bias-aware CI is shorter than the undersmoothing CI at every sample size, with the reduction ranging from 3.2\% at $n=500$ to 4.5\% at $n=1000$. At the same time, the bias-aware CI achieves valid coverage at all sample sizes, though somewhat conservative (98.8\%--99.2\%) relative to the 95\% nominal level. This is consistent with the fact that for this linear functional, the sieve bias at the rate-optimal $\hat{K}\approx 17$ is already of negligible order relative to the standard deviation, so the bias buffer adds an unnecessary but harmless inflation to the critical value. Notably, the rate-optimal $\hat{K}\approx 17$ is substantially smaller than the undersmoothed $\tilde{K}\approx 51$, which means the bias-aware estimator operates with lower variance, contributing to the shorter CI length.

\begin{table}[!h]
\centering
\caption{Design 1 (Linear Integral): Undersmoothing vs.\ Bias-Aware CI}
\label{tab:BiasAware_Sim1}
\begin{tabular}{r|cc|cc|cc}
\toprule
 & \multicolumn{2}{c|}{CI Length (U--L)} & \multicolumn{2}{c|}{Coverage} & \multicolumn{2}{c}{Mean Sieve Dim.} \\
\midrule
$n$ & US & BA & US & BA & $\bar{\tilde{K}}$ & $\bar{\hat{K}}$ \\
\midrule
  500  & 2.3465 & 2.2704 & 93.8\% & 98.8\% & 50.6 & 17.1 \\
 1000  & 1.6589 & 1.5847 & 95.2\% & 98.9\% & 50.7 & 16.8 \\
 2000  & 1.1796 & 1.1295 & 95.0\% & 98.9\% & 51.4 & 17.1 \\
 4000  & 0.8338 & 0.7980 & 94.4\% & 99.2\% & 51.4 & 17.2 \\
 8000  & 0.5901 & 0.5657 & 95.8\% & 99.2\% & 51.8 & 17.4 \\
\bottomrule
\end{tabular}

\vspace{0.5em}
\parbox{\textwidth}{\footnotesize
``US'': undersmoothing with $\tilde{K} = \lceil\sqrt{\log n}\cdot\hat{K}\rceil$ and normal CV $z_{0.975}=1.96$.
``BA'': bias-aware CI at rate-optimal $\hat{K}$ with inflated CV $z_{0.975} + \hat{A}\cdot\vartheta^*_{1-\hat{\alpha}}$.
}
\end{table}

\paragraph{Design 2: Nonlinear integral on the estimated unit disk.}
Table~\ref{tab:BiasAware_Sim2} presents the analogous comparison for the nonlinear upper contour set integral considered in Section~\ref{subsec:Sim2_Disk}. Although the bias-aware CI construction of \citet*{chen2025adaptive} was not designed for nonlinear functionals, we apply it here to examine its empirical behavior outside the scope of its theoretical guarantees.

In contrast to Design~1, the bias-aware CI is \emph{wider} than the undersmoothing CI at all sample sizes. At $n=500$, the difference is modest (7--8\%), but the gap grows substantially with $n$, reaching approximately 25\% at $n=8000$. This occurs because the nonlinear functional exhibits substantial sieve bias at the rate-optimal $\hat{K}\approx 17$--$36$: the bias is 25--60\% of the standard deviation, far from negligible. The bias buffer $\hat{A}\cdot\vartheta^*_{1-\hat{\alpha}}$ inflates the standard error at $\hat{K}$, which is itself already larger than the standard error at the undersmoothed $\tilde{K}\approx 51$--$94$. While the bias-aware CI maintains valid coverage (95.5\%--96.4\%), the undersmoothing approach achieves comparable or better coverage with shorter intervals, because it directly reduces the bias by increasing the sieve dimension.

That the bias-aware approach performs less well for the nonlinear functional is expected to some extent, given that the construction was designed for linear functionals. However, the precise magnitudes---the 7--25\% increase in CI length relative to undersmoothing, growing with $n$---were not obvious a priori. This exercise highlights the robustness of the undersmoothing approach: it achieves valid and efficient confidence intervals for both linear and nonlinear functionals, whereas the bias-aware approach, while theoretically well-founded and effective for linear functionals, produces wider confidence intervals when applied to nonlinear functionals where substantial sieve bias persists at the rate-optimal sieve dimension.

\begin{table}[!h]
\centering
\caption{Design 2 (Nonlinear Integral): Undersmoothing vs.\ Bias-Aware CI}
\label{tab:BiasAware_Sim2}
\begin{tabular}{r|cc|cc|cc}
\toprule
 & \multicolumn{2}{c|}{CI Length (U--L)} & \multicolumn{2}{c|}{Coverage} & \multicolumn{2}{c}{Mean Sieve Dim.} \\
\midrule
$n$ & US & BA & US & BA & $\bar{\tilde{K}}$ & $\bar{\hat{K}}$ \\
\midrule
\multicolumn{7}{c}{(A) Plug-In} \\
\midrule
  500  & 0.2659 & 0.2852 & 94.1\% & 96.2\% & 51.1 & 17.3 \\
 1000  & 0.1902 & 0.2044 & 95.2\% & 95.5\% & 55.3 & 18.0 \\
 2000  & 0.1370 & 0.1507 & 95.0\% & 96.0\% & 65.7 & 21.7 \\
 4000  & 0.0991 & 0.1148 & 95.4\% & 96.4\% & 81.9 & 28.2 \\
 8000  & 0.0721 & 0.0901 & 95.5\% & 96.2\% & 94.4 & 35.9 \\
\midrule
\multicolumn{7}{c}{(B) Leave-One-Out} \\
\midrule
  500  & 0.3314 & 0.3584 & 94.8\% & 96.0\% & 50.5 & 17.1 \\
 1000  & 0.1966 & 0.2055 & 96.2\% & 96.0\% & 54.9 & 17.8 \\
 2000  & 0.1394 & 0.1510 & 95.3\% & 95.7\% & 64.9 & 21.4 \\
 4000  & 0.1004 & 0.1150 & 95.4\% & 96.4\% & 82.0 & 28.3 \\
 8000  & 0.0726 & 0.0906 & 95.6\% & 96.3\% & 94.4 & 35.9 \\
\bottomrule
\end{tabular}

\vspace{0.5em}
\parbox{\textwidth}{\footnotesize
``US'': undersmoothing with $\tilde{K} = \lceil\sqrt{\log n}\cdot\hat{K}\rceil$ and normal CV $z_{0.975}=1.96$.
``BA'': bias-aware CI at rate-optimal $\hat{K}$ with inflated CV $z_{0.975} + \hat{A}\cdot\vartheta^*_{1-\hat{\alpha}}$.
}
\end{table}

\end{document}